\newcommand{\setvarphi}{\mathscr{B}}
\newcommand{\objweight}{w}
\newcommand*\diff{\mathop{}\!\mathrm{d}}
\newcommand{\indicate}[1]{\mathds{1}\small{[#1]}}
\newtheorem{theorem}{Theorem}
\newtheorem{lemma}{Lemma}
\newtheorem{assumptions}{Assumption}
\newtheorem{proposition}{Proposition}
\newtheorem{corollary}{Corollary}
\newtheorem{observation}{Observation}
\newtheorem{example}{Example}
\newtheorem*{examplectd}{Example~\ref*{example:recurred}}
\newtheorem{definition}{Definition}
\newcommand{\support}[1]{\aslv@support#1\@nil}
\def\aslv@support#1\\#2|#3\@nil{%
  \mleft(
  \left( \genfrac..{0pt}{}{#1}{#2}
  \mskip\aslv@supportskip \right)
  \middle|
  \mskip\aslv@supportskip
  #3
  \mright)
}
\newcommand{\pushright}[1]{\ifmeasuring@#1\else\omit\hfill$\displaystyle#1$\fi\ignorespaces}
\pgfplotsset{compat=1.15}
\newbox\LT@firstfoot
\def\endfirstfoot{\LT@end@hd@ft\LT@firstfoot}
\newdimen\LT@footdiff
\def\LT@start{%
  \let\LT@start\endgraf
  \endgraf\penalty\z@
  \vskip\LTpre
  \LT@footdiff-\ht\LT@foot
  \advance\LT@footdiff\ht\LT@firstfoot
  \dimen@\pagetotal
  \advance\dimen@ \ht\ifvoid\LT@firsthead\LT@head\else\LT@firsthead\fi
  \advance\dimen@ \dp\ifvoid\LT@firsthead\LT@head\else\LT@firsthead\fi
  \advance\dimen@ \ht\ifvoid\LT@firstfoot\LT@foot\else\LT@firstfoot\fi
  \dimen@ii\vfuzz
  \vfuzz\maxdimen
  \setbox\tw@\copy\z@
  \setbox\tw@\vsplit\tw@ to \ht\@arstrutbox
  \setbox\tw@\vbox{\unvbox\tw@}%
  \vfuzz\dimen@ii
  \advance\dimen@ \ht
      \ifdim\ht\@arstrutbox>\ht\tw@\@arstrutbox\else\tw@\fi
  \advance\dimen@\dp
      \ifdim\dp\@arstrutbox>\dp\tw@\@arstrutbox\else\tw@\fi
  \advance\dimen@ -\pagegoal
  \ifdim \dimen@>\z@\vfil\break\fi
  \global\@colroom\@colht
  \ifvoid\LT@firstfoot
    \ifvoid\LT@foot
    \else
      \advance\vsize-\ht\LT@foot
      \global\advance\@colroom-\ht\LT@foot
      \dimen@\pagegoal\advance\dimen@-\ht\LT@foot\pagegoal\dimen@
      \maxdepth\z@
    \fi
  \else
    \advance\vsize-\ht\LT@firstfoot
    \global\advance\@colroom-\ht\LT@firstfoot
    \dimen@\pagegoal\advance\dimen@-\ht\LT@firstfoot\pagegoal\dimen@
    \maxdepth\z@
  \fi
  \ifvoid\LT@firsthead\copy\LT@head\else\box\LT@firsthead\fi\nobreak
  \output{\LT@output}%
}
\def\LT@output{%
  \ifnum\outputpenalty <-\@Mi
    \ifnum\outputpenalty > -\LT@end@pen
      \LT@err{floats and marginpars not allowed in a longtable}\@ehc
    \else
      \setbox\z@\vbox{\unvbox\@cclv}%
      \ifdim \ht\LT@lastfoot>\ht\LT@foot
        \dimen@\pagegoal
        \advance\dimen@-\ht\LT@lastfoot
        \ifdim\dimen@<\ht\z@
          \setbox\@cclv\vbox{\unvbox\z@\copy\LT@foot\vss}%
          \@makecol
          \@outputpage
          \setbox\z@\vbox{\box\LT@head}%
        \fi
      \fi  
      \global\@colroom\@colht
      \global\vsize\@colht   
      \vbox
        {\unvbox\z@\box\ifvoid\LT@lastfoot\LT@foot\else\LT@lastfoot\fi}%
    \fi
  \else
    \ifvoid\LT@firstfoot
      \setbox\@cclv\vbox{\unvbox\@cclv\copy\LT@foot\vss}%
      \@makecol
      \@outputpage
      \global\vsize\@colroom
    \else
      \setbox\@cclv\vbox{\unvbox\@cclv\box\LT@firstfoot\vss}%
      \@makecol
      \@outputpage
      \global\advance\@colroom\LT@footdiff
      \global\vsize\@colroom
    \fi
    \copy\LT@head\nobreak
  \fi
}
\newcommand \Template[2]{
\usepackage{fancyhdr}
\pagestyle{fancy}
\fancyhf{}
\fancyhead[RO]{#1}
\fancyhead[LO]{#2}
\fancyfoot[CO]{\thepage}

\renewcommand{\headrulewidth}{1pt}
\renewcommand{\footrulewidth}{1pt}}
\newcommand \linedabstract[1]{
  \renewcommand{\maketitlehookd}{%
    \mbox{}\medskip\par
    \centering
    \hrule\medskip
    \begin{minipage}{0.9\textwidth} 
    {\small\textbf{Abstract}\\ #1}
    \end{minipage}\medskip\hrule
    }      
}
\newcommand \ACKNOWLEDGMENT[1]{
	\section*{Acknowledgments}
	#1
}
\definecolor{darkred}{rgb}{0.8,0,0}
\definecolor{airforceblue}{rgb}{0.36, 0.54, 0.66}
\definecolor{darkbrown}{rgb}{0.4, 0.26, 0.13}
\title{\textbf{Differential Privacy via \\ Distributionally Robust Optimization}}
\author[1]{Aras Selvi\thanks{Corresponding author: \href{mailto:a.selvi19@imperial.ac.uk}{a.selvi19@imperial.ac.uk}}} 
\affil[1]{\small Imperial College Business School, Imperial College London, United Kingdom}
\author[2]{Huikang Liu}
\affil[2]{\small \textcolor{black}{Antai College of Economics and Management, Shanghai Jiao Tong University, China}}
\author[1]{Wolfram Wiesemann}
\begin{document}

\maketitle

\newpage

\section{Introduction}\label{sec:intro}
When organizations collect personal data about individuals, it is their responsibility to protect that data when they share information about it with third parties. Data anonymization, which aims to alter the data so that individuals are no longer identifiable, is often insufficient in that regard as it is prone to, among others, reconstruction attacks \citep{dinur2003revealing} and de-identification attacks \citep{sweeney1997weaving,pro2014privacy}. To address this issue, manifold definitions of data privacy have been proposed, including $k$-map \citep[\S 4.3]{sweeney2001computational}, $k$-anonymity \citep{sweeney2002k}, $\ell$-diversity \citep{machanavajjhala2007diversity} and $\delta$-presence \citep{nergiz2007hiding}; see also the review of \citet[\S 2.1]{desfontaines2020lowering}. Among those, \textit{differential privacy} (DP), first proposed by \cite{dwork2006calibrating}, has arguably received the most attention among researchers and practitioners.

DP considers databases $D \in {\color{black}\mathcal{D}}${\color{black}, where $\mathcal{D} := \mathbb{U}^n$ denotes the set of databases} with $n$ individuals (or rows), each of which stems from a data universe ${\color{black}\mathbb{U}}$ that characterizes the admissible attribute vectors (\emph{e.g.}, the possible values of the predictors and the output for a supervised learner). For any $\varepsilon, \delta \geq 0$, a randomized algorithm $\mathcal{A}$ mapping databases $D \in {\color{black}\mathcal{D}}$ to random outputs $\omega \in \Omega$ is \emph{$(\varepsilon, \delta)$-differentially private} if
\begin{equation*}
    \mathbb{P} [\mathcal{A}(D) \in A] \leq e^\varepsilon \cdot \mathbb{P} [\mathcal{A}(D') \in A] + \delta
    \qquad \forall (D,D') \in \mathcal{N}, \; \forall A \in \mathcal{F}, 
\end{equation*}
where $(\Omega, \mathcal{F}, \mathbb{P})$ is a probability space and
\begin{equation*}
    \mathcal{N} = \{(D, D') \in {\color{black}\mathcal{D}} \times {\color{black}\mathcal{D}} \ : \ D' = (D_{-k}, d) \text{ for some } k=1,\ldots,n \text{ and } d \in{\color{black}\mathbb{U}} \}
\end{equation*}
denotes the (symmetric) set of neighboring databases $(D, D')$, where $D'$ emerges from $D$ by replacing its $k$-th element with any $d \in {\color{black}\mathbb{U}}$ \citep{dwork2006our, dwork2006calibrating}. Intuitively, under DP with $\delta = 0$, an adversary cannot confidently estimate any single row of the database $D$ from a single sample $\omega \sim \mathcal{A} (D)$ even if she knows all other rows of $D$ and the implementation of $\mathcal{A}$ \citep{Tu13}. In the general case where $\delta > 0$, $(\varepsilon, \delta)$-DP is a sufficient (but not necessary) condition for the aforementioned $(\varepsilon, 0)$-DP to hold with a probability of at least $1 - \delta$ \citep{dinur2003revealing, meiser2018approximate, canonne2020discrete}. Viewed through a Bayesian lens, $(\varepsilon, \delta)$-DP allows the adversary to update her prior on $D$ by at most an amount that is bounded by a function of $\varepsilon$ and $\delta$ upon seeing a realization of $\mathcal{A} (D)$, see \citet[\S 1.6]{vadhan2017complexity}.

Compared to other notions of privacy, DP enjoys several desirable features. The composition theorem \citep[Theorem 3.16]{dwork2014algorithmic}, for example, implies that sharing $k$ different statistics, where statistic $i$ has been generated by a $(\varepsilon_i, \delta_i)$-DP mechanism, $i = 1, \ldots, k$, satisfies $(\sum_{i=1}^k \varepsilon_i, \sum_{i=1}^k \delta_i)$-DP. Likewise, the post processing property asserts that any analysis derived from the output of a differentially private mechanism remains differentially private with the same privacy guarantees \citep[Proposition 2.1]{dwork2014algorithmic}. Due to these and other features, DP has found manifold recent applications in statistics and machine learning \citep{chaudhuri2009logit, friedman2010data, chaudhuri2011differentially, abadi2016deep, cai2019econometrics}, optimization \citep{mangasarian2011privacy, hsu2014privately, han2016differentially, hsu2016jointly}, mechanism design \citep{mcsherry2007mechanism} and revenue management \citep{chen2021privacy,chen2023differential,lei2024privacy}. DP has also been widely applied in industry, ranging from emoji recommender systems that learn from user behavior \citep{appleprivacy}, databases that publish user interactions on Facebook \citep{metapaper} and COVID-19 vaccination search insights at Google \citep{googlevaccination} to insights from LinkedIn's Economic Graph \citep{rogers2020members}, the U.S.~Broadband Coverage dataset posted by Microsoft \citep{microsoft} and the earnings distribution published by the U.S.~Census Bureau \citep{foote2019releasing}.

In this paper, we study algorithms that perturb the output of a scalar \textit{query function} $f : {\color{black}\mathcal{D}} \mapsto \mathbb{R}$ so as to guarantee $(\varepsilon, \delta)$-DP. The query function $f$ could be a simple statistical query, such as the average, the median or a quantile of a real-valued attribute across all rows, a count of the rows satisfying a user-specified condition, or it could be part of a machine learning model that is trained on the database. {\color{black}We illustrate our setting with the following motivating example.

\begin{example}\label{example:recurred}
    The popular Kaggle \emph{salary dataset}\footnote{URL: \url{https://www.kaggle.com/datasets/mohithsairamreddy/salary-data/data}} 
    reports the monthly salaries (in thousands of Indian Rupees; INR) of $6,704$ individuals whose features include education level and job title. Consider the query function $f$ that computes the average salary of PhD graduates working in research. Across the $194$ PhD researchers in the database, the salary varies between $120k$ INR and $190k$ INR, and the average salary amounts to $165.65k$ INR. Returning this average would violate DP, however, since an adversary with knowledge of the salaries of the first $193$ PhD researchers could readily compute the salary of the $194$th PhD researcher from the query result.
\end{example}
}

The Laplace mechanism \citep{dwork2006calibrating} achieves $(\varepsilon, 0)$-DP, also referred to as \textit{pure} differential privacy, by returning $f(D) + \tilde{X}$, where the random variable $\tilde{X}$ follows a zero-mean Laplace distribution with scale parameter $\Delta f / \varepsilon$ and $\Delta f := \sup_{(D, D') \in \mathcal{N}} |f(D) - f(D')|$ denoting the sensitivity of the query $f$.

{\color{black}
\begin{examplectd}[cont'd]
Assume that the salaries of PhD researchers vary between $120k$ INR and $190k$ INR, which are the minimum and maximum PhD researcher salaries recorded in the \emph{salary dataset}. Thus, the sensitivity of the average PhD researcher salary query is $\Delta f = \frac{190k-120k}{194} \text{ INR} \approxeq 0.36k \text{ INR}$, which is attained by any two databases $(D, D') \in \mathcal{N}$ that differ in the salary of one PhD researcher, with one database recording a salary of $120k$ INR and the other reporting a salary of $190k$ INR. To achieve $(1, 0)$-DP, the Laplace mechanism returns as average salary the sum of $165.65k$ INR and the realization of a zero-mean Laplace distribution with scale parameter $\Delta f / \varepsilon \approxeq 0.36k \text{ INR}$. Assume that this sum amounts to $165.5k$ INR. Equipped with her knowledge of the salaries of the first $193$ PhD researchers, the parameters $\Delta f$ and $\varepsilon$ of the Laplace mechanism as well as the query response of $165.5k$ INR, the adversary could try to deduce the salary of the $194$th PhD researcher via maximum likelihood estimation. The likelihood of the $194$th PhD researcher's salary being $s$, given the above information, is $L (s) \approxeq p (165.5k \, \vert \, [31{,}966.10k + s] / 194, 0.36k)$, where $p (\cdot \, \vert \, \mu, b)$ is the density function of a Laplace distribution with location parameter $\mu$ and scale parameter $b$ and $31{,}966.10k$ INR is the sum of the first $193$ PhD researcher salaries (that are known to the adversary). The maximum likelihood estimator is $s^\star = 140.9k$ INR, which is exactly the value that makes the average across the first $193$ PhD researcher salaries and the unknown last salary equal to the observed query output of $165.5k$ INR. One readily computes that $L (s^\star) \approxeq 1.389$ while $\min \{ L(s) \, : \, s \in [120k, 190k] \} \approxeq 0.688$. The ratio between the maximum and the minimum likelihood is $1.389 / 0.688 \approxeq \exp(0.7)$, and DP guarantees that this ratio is bounded from above by $\exp(\varepsilon)$. In other words, even if the adversary knew all but one of the PhD researcher salaries, she could not confidently estimate the unknown salary from a single observation of the query output.
\end{examplectd}
}

Pure differential privacy bounds the probability ratio of outputs within any measurable set $A \in \mathcal{F}$, no matter how small the involved probabilities are. This significantly restricts the design of admissible algorithms $\mathcal{A}$, particularly in their tail behavior, and it has spurred research into other DP notions that relax the privacy requirement for unlikely events \citep{desfontaines2019sok}. The most prominent notion is $(\varepsilon, \delta)$-DP, also known as \textit{approximate} differential privacy. The Gaussian mechanism \citep[Appendix A]{dwork2014algorithmic}, for example, achieves $(\varepsilon, \delta)$-DP for any $\varepsilon, \delta \in (0,1)$ by returning $f(D) + \tilde{Y}$, where the random variable $\tilde{Y}$ follows a zero-mean Gaussian distribution with variance $2 \ln(1.25 / \delta) (\Delta f / \varepsilon)^2$. {\color{black} In Example~\ref{example:recurred}, the $(1, 0)$-DP Laplace mechanism adds a noise with standard deviation $510.28$ INR to the query result, whereas the $(1, 0.2)$-DP Gaussian mechanism---despite satisfying a relaxed notion of DP---increases the standard deviation of the noise term to $689.21$ INR.}


The Laplace and Gaussian mechanisms are \emph{data independent additive noise mechanisms} as their additive noises $\tilde{X}$ and $\tilde{Y}$ do not depend on the database $D$. In contrast, the noise of a \emph{data dependent mechanism} may depend on the database $D$. One of the earliest data dependent mechanisms is the exponential mechanism~\citep{mcsherry2007mechanism}, which achieves $(\varepsilon, 0)$-DP for query functions with nominal outputs. Instead of adding data independent noise to the query output, the exponential mechanism ensures that the output is {\color{black}in the range of nominal outputs} by randomly selecting one of finitely many outputs according to some score function. For query functions with real-valued outputs, smooth sensitivity mechanisms \citep{nissim2007smooth} achieve $(\varepsilon, \delta)$-DP at a higher accuracy than their data independent counterparts by adding noise whose variance is smaller for databases in neighborhoods with a low variation of the query outputs.

Algorithms with stronger privacy guarantees tend to offer less utility from data \citep{alvim2011differential}. This is clearly seen for the Laplace and Gaussian mechanisms, whose variances increase with smaller values of $\varepsilon$ and $\delta$. It is therefore natural to study whether those mechanisms are optimal, that is, whether they minimize a pre-specified loss function among the respective classes of $(\varepsilon, 0)$-DP and $(\varepsilon, \delta)$-DP algorithms. {\color{black} In particular, since the $(1, 0.2)$-DP Gaussian mechanism results in a larger standard deviation than the $(1, 0)$-DP Laplace mechanism in Example~\ref{example:recurred}, we conclude that Gaussian mechanisms are not optimal for the $\ell_2$-loss function in general.} The early work on optimal mechanisms has focused on specific query functions (such as count queries), and is reviewed, among others, by \citet[\S1]{geng2014optimal} and \citet[\S4.6]{sommer2021fighting}. Among the first papers that investigate optimal mechanisms for generic query functions is the work of \cite{soria2013optimal}, who show that for large classes of $(\varepsilon, 0)$-DP data independent additive noise mechanisms and loss functions, a necessary optimality condition is that no probability mass in the distribution of the random noise can be moved towards zero without violating the privacy guarantee. The Laplace mechanism violates this condition and is thus not optimal, whereas the condition is satisfied by piecewise constant `staircase distributions' that move the probability mass of the Laplace distribution closer to zero. \cite{geng2014optimal} show that for classes of $\ell_1$- and $\ell_2$-loss functions, the Laplace mechanism is asymptotically optimal as $\varepsilon\rightarrow 0$, but that it can be significantly suboptimal for larger values of $\varepsilon$. They also propose an $(\varepsilon, 0)$-DP data independent additive noise mechanism based on piecewise constant staircase distributions that is optimal across all $(\varepsilon, 0)$-DP algorithms for a large class of symmetric and increasing loss functions. While determining the optimal staircase distribution generally requires the tuning of a single parameter, closed-form characterizations \mbox{are provided for the special cases of $\ell_1$- and $\ell_2$-loss functions.}

In contrast, the design of optimal mechanisms for $\delta \neq 0$ is much less well understood. For the special case where $\varepsilon = 0$ (also known as additive differential privacy), \cite{geng2019optimal} show that sampling the additive noise from the product of a uniform and a Bernoulli random variable is optimal for symmetric and increasing loss functions among the class of $(0, \delta)$-DP data independent additive noise mechanisms with decreasing noise distributions. Closed-form characterizations of the optimal distributions are provided for $\ell_p$-loss functions, whereas the general case requires the tuning of a single parameter. \cite{balle2018improving} show that the parameter choice of the aforementioned Gaussian mechanism is suboptimal. They propose the analytic Gaussian mechanism, which is optimal among the family of Gaussian mechanisms, by numerically computing the smallest variance that satisfies $(\varepsilon, \delta)$-DP. For the general class of $(\varepsilon, \delta)$-DP data independent additive noise mechanisms, \cite{geng2015optimal} show that the suboptimality of uniform and discretized Laplace distributions can be bounded by a multiplicative constant for integer-valued queries under $\ell_1$- and $\ell_2$-loss functions when $\varepsilon \rightarrow 0$ and $\delta \rightarrow 0$ simultaneously. Tighter suboptimality bounds have been derived by \cite{geng2020tight} for real-valued queries when the data independent additive noise is governed by a truncated Laplace distribution. In their analysis, the authors decompose the support of the distribution into a `body' that achieves $(\varepsilon, 0)$-DP and a `tail' that breaches privacy but is limited to a probability mass of $\delta$. The resulting mechanisms are $(\varepsilon, \delta)$-DP, and they are asymptotically optimal under $\ell_1$- and $\ell_2$-loss functions when $\varepsilon \rightarrow 0$ and $\delta \rightarrow 0$ simultaneously. The authors show that the truncated Laplace mechanism outperforms the aforementioned analytic Gaussian mechanism under $\ell_1$- and $\ell_2$-loss functions. To our best knowledge, the truncated Laplace mechanism provides the strongest optimality guarantees among the \mbox{currently known $(\varepsilon, \delta)$-DP mechanisms for generic queries.} {\color{black} In Example~\ref{example:recurred}, the noise added by the $(1, 0.2)$-DP analytic Gaussian mechanism has a standard deviation of $300.96$ INR, whereas the truncated Laplace mechanism reduces the standard deviation to $273.48$ INR.}

The definition of optimality is more involved for data dependent mechanisms since their expected losses vary with the database. Minimizing the worst-case expected loss across all potential databases $D \in {\color{black}\mathcal{D}}$ (which the \textit{minimax optimality} criterion attempts to achieve) is overly conservative since it implies that data independent mechanisms remain optimal under mild conditions \citep{geng2014optimal}. Instead, instance specific optimality criteria have been proposed that compare the expected loss for each database with a lower bound that is tailored to the database. Local minimax optimality \citep{asi2020instance, asi2020near}, for example, requires that a mechanism's expected loss for any database $D \in {\color{black}\mathcal{D}}$ is within a constant factor of the expected loss of any other DP mechanism for at least one database in a neighborhood of $D$. Local minimax optimality recovers the earlier notion of minimax optimality when the neighborhood contains all databases $D \in {\color{black}\mathcal{D}}$. \citet{asi2020instance, asi2020near} show that under the expected $\ell_1$-loss, the inverse sensitivity mechanism first proposed by \citet{johnson2013privacy} satisfies local minimax optimality for various query functions, and that it outperforms the Laplace and the smooth Laplace mechanisms under mild conditions.

The aforementioned contributions to the design of optimal $ (\varepsilon, \delta)$-DP mechanisms have in common that they limit their attention \emph{a priori} to specific classes of mechanisms (such as Gaussian; standard, truncated or discretized Laplace; staircase; uniform-Bernoulli product or uniform distributions) and subsequently prove either optimality among the mechanisms in their respective classes or asymptotic optimality among larger families of mechanisms as $\varepsilon \rightarrow 0$ and $\delta \rightarrow 0$ simultaneously. In this paper, we propose to formulate and solve the optimal $(\varepsilon, \delta)$-DP mechanism design problem as an infinite-dimensional distributionally robust optimization (DRO) problem \citep{DY10:distr_rob_opt, WKS14:DRCO, PME19:Watergate}. To this end, we minimize an expected loss function over all noise distributions, subject to the satisfaction of the DP constraints. In contrast to much of the existing literature, our formulation caters for generic loss functions (including asymmetric ones such as the pinball loss), and it can restrict the noise via support constraints. We show that our formulation affords a strong dual, and we develop hierarchies of finite-dimensional conservative approximations to the primal and dual formulations to derive converging upper and lower bounds on the optimal expected loss. Our upper bounds correspond to implementable perturbations whose optimality gaps can be certified by the lower bounds. Our bounding problems can be solved efficiently via cutting plane techniques that leverage the inherent problem structure. Our numerical experiments show that our optimal mechanisms can outperform the previously best results from the literature on artificial as well as two standard machine learning benchmark problems. 
{\color{black}

\begin{examplectd}[cont'd]
To guarantee $(1, 0.2)$-DP, our data independent additive noise algorithm from Section~\ref{sec:additive} adds a noise with standard deviation $257.68$ INR. As Figure~\ref{fig:distributions} demonstrates, our noise distribution does not appear to admit a simple analytical characterization.
\begin{figure}[!t]
    \centering
    \resizebox{1\columnwidth}{!}{
   \includegraphics[trim={0 4.5cm 0 3.75cm},clip,scale = 0.3]{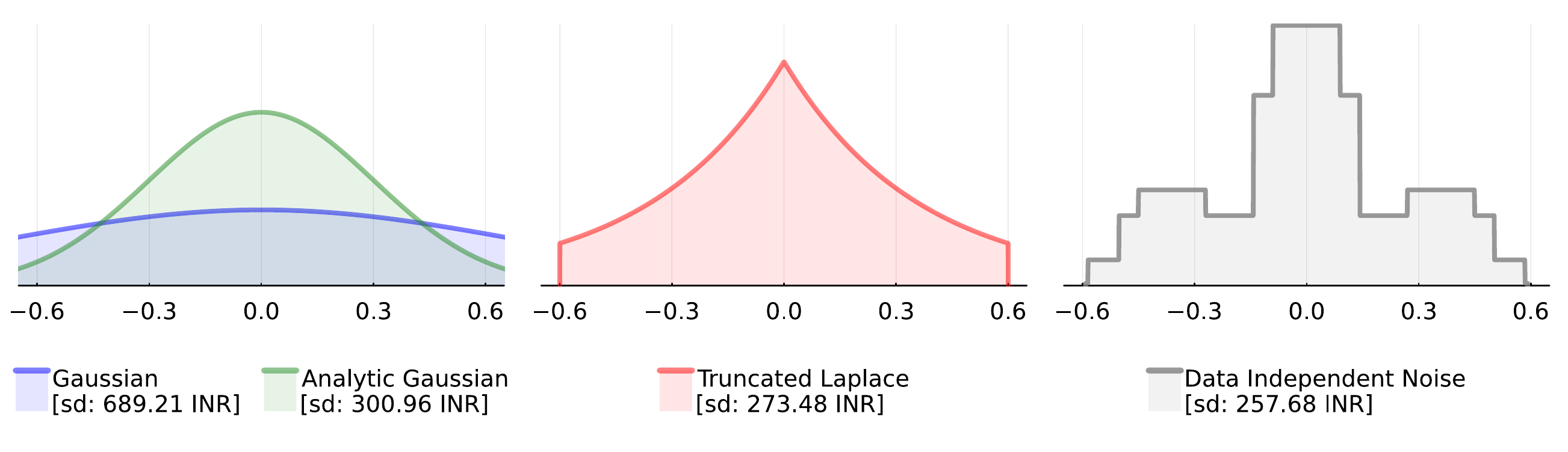}
    }
    \caption{\color{black}\textit{Different noise distributions that guarantee $(1,0.2)$-DP in Example~\ref{example:recurred}.}}
    \label{fig:distributions}
\end{figure}
\end{examplectd}
}

The contributions of this paper may be summarized as follows.
\begin{enumerate}
    \item[\emph{(i)}] We formulate the data independent additive noise problem as a DRO problem. Our formulation is flexible enough to cater for a large range of loss functions, and it extends to various problem variants such as the data dependent and the instance optimal problem.
    \item[\emph{(ii)}] We show that our primal and dual formulations can be bounded from above and below by converging hierarchies of large-scale linear programs that can be solved efficiently via tailored cutting plane techniques.
    \item[\emph{(iii)}] In contrast to the existing optimality results, which are either restricted to specific mechanisms or hold asymptotically, our formulation affords optimality guarantees that are non-asymptotic and that apply to any choice of $\varepsilon$ and $\delta$. Our numerical results showcase the advantages of our approach on a range of artificial as well as benchmark problems.
\end{enumerate}

In our view, the optimization perspective on DP put forward in this paper opens up several opportunities for future research. On one hand, our proposed hierarchy of primal and dual bounds appears to extend to other existing and new definitions of DP, it may allow for the development of approximation schemes for $(\varepsilon, \delta)$-DP with either \emph{a priori} or \emph{a posteriori} optimality guarantees, and it may generalize to multi-dimensional queries. On the other hand, the DP mechanism design problem gives rise to novel classes of DRO problems that have not been studied previously and that may find applications elsewhere. Most importantly, we believe that the design of optimal DP mechanisms should be viewed and addressed as a DRO problem, and the DRO community has developed a rich arsenal of techniques that can be leveraged beneficially to contribute with novel insights and algorithms.

The remainder of the paper unfolds as follows. Section~\ref{sec:additive} formulates the data independent additive noise problem as an infinite-dimensional DRO problem, it shows that the problem affords a strong dual, and it develops a hierarchy of converging finite-dimensional upper and lower bounding problems. {\color{black} Building upon this analysis, Section~\ref{sec:nonlinear} studies the data dependent additive noise problem, which optimizes over an uncountable family of noise distributions.} Section~\ref{sec:algo} develops a cutting plane algorithm to solve the bounding problems of Sections~\ref{sec:additive} and~\ref{sec:nonlinear}. We report numerical experiments in Section~\ref{sec:numerical} and offer concluding remarks in Section~\ref{sec:conclusions}, respectively. {\color{black} All proofs as well as some additional numerical results are relegated to the e-companion. Finally, the GitHub repository accompanying this paper contains the sourcecodes of all algorithms that we implemented as part of this work, the datasets used in our numerical experiments as well as additional numerical experiments and extended discussions of the related literature.\footnote{\url{https://github.com/selvi-aras/DP-via-DRO}}}

\textbf{Notation.}
Bold lower case letters denote vectors, while scalars are assigned standard lower case letters. The sets $\{1,\ldots, N\}$ and $\{-N, \ldots,N\}$ are abbreviated by $[N]$ and $[\pm N]$, respectively. For a set $\mathcal{S}$ and a scalar $a \in \mathbb{R}$, we let $\mathcal{S} + a  = \{s+a \ : \ s \in \mathcal{S}\}$ denote the Minkowski sum of the sets $\mathcal{S}$ and $\{a\}$; similarly, $\mathcal{S} - a$ abbreviates $\mathcal{S} + (-a)$. For a measurable interval $I \subseteq \mathbb{R}^n$, $|I| \in \mathbb{R} \cup \{+\infty \}$ denotes the Lebesgue measure of $I$. Unless otherwise stated, measures of real sets are defined on the corresponding Borel $\sigma$-algebras. The sets of sign-unrestricted and non-negative measures that are defined on the power-set $\sigma$-algebra of some set $A$ are denoted by $\mathcal{M}(A)$ and $\mathcal{M}_{+}(A)$, respectively. Finally, $\indicate{\mathcal{E}}$ is the indicator function taking value $1$ if the condition $\mathcal{E}$ is satisfied and $0$ otherwise.

\section{Data Independent Noise Optimization}\label{sec:additive}

We start with data independent additive noise mechanisms that perturb the query output $f(D)$ of a database $D \in {\color{black}\mathcal{D}}$ by adding a random noise $\tilde{X}$ independent of $D$ so as to minimize an expected loss while satisfying $(\varepsilon, \delta)$-DP. We formalize this problem in Section~\ref{sec:additive:formulation}, and Section~\ref{sec:additive:duality} derives a converging hierarchy of finite-dimensional upper and lower bounding problems.

\subsection{The Data Independent Optimization Problem}\label{sec:additive:formulation}

We study the problem
\begin{align}\label{problem:integral_main:preliminary_formulation}
    \begin{array}{cl}
    \underset{\gamma}{\text{minimize}}  & \displaystyle \int_{x \in \mathbb{R}} c(x) \diff \gamma(x) \\
     \text{subject to}  &   \gamma \in \mathcal{P}_0  \\
    &  \displaystyle \displaystyle \int_{x \in \mathbb{R}}\indicate{f(D) + x \in A} \diff \gamma(x)  \leq e^\varepsilon \cdot \displaystyle \int_{x \in \mathbb{R}} \indicate{f(D') + x \in A} \diff \gamma(x) + \delta \\
    & \pushright{\displaystyle \forall (D, D') \in \mathcal{N}, \ \forall A \in \mathcal{F},}
    \end{array}
\end{align}
where $(\mathbb{R}, \mathcal{F})$ is a measurable space with the Borel $\sigma$-algebra $\mathcal{F}$ on $\mathbb{R}$ and the set $\mathcal{P}_0$ of probability measures supported on $\mathbb{R}$. Problem~\eqref{problem:integral_main:preliminary_formulation} selects a probability measure $\gamma$ governing the random noise $\Tilde{X}$ so as to minimize the expected value of the Borel loss function $c: \mathbb{R}\mapsto \mathbb{R}_+$, subject to $(\varepsilon, \delta)$-DP for $\varepsilon, \delta > 0$. Note in particular that the integrals on both sides of the DP constraint evaluate the probabilities $\mathbb{P}[\mathcal{A}(D) \in A]$ and $\mathbb{P}[\mathcal{A}(D') \in A]$ for the randomized query outputs $\mathcal{A}(D) = f(D) + \tilde{X}$ and $\mathcal{A}(D') = f(D') + \tilde{X}$ with $\tilde{X} \sim \gamma$ as per our definition of $(\varepsilon, \delta)$-DP from the previous section. We assume that $c$ satisfies the following regularity conditions.

\begin{assumptions}[Loss Function]\label{assumptions_c}
The loss function $c : \mathbb{R} \mapsto \mathbb{R}_+$ satisfies the following properties: 
\begin{enumerate}[(a)]
    \item{\emph{Continuity.}} $c$ is continuous on $\mathbb{R}$.
    \item{\emph{Unboundedness.}} For any $r \in \mathbb{R}$ we have $c(x) \geq r$ for $|x|$ sufficiently large.
\end{enumerate}
\end{assumptions}

Assumption~\emph{(a)} enables us to construct discrete approximations to problem~\eqref{problem:integral_main:preliminary_formulation} and its dual that converge as we refine their granularity. 
{\color{black} Assumption~\emph{(b)} allows us to restrict these approximations to a bounded support of the involved measures without incurring an unbounded loss.} Loss functions typically used in the literature, such as the noise amplitude ($\ell_1$-loss with $c(x) = |x|$) and the noise power ($\ell_2$-loss with $c(x) = x^2$), satisfy Assumption~\ref{assumptions_c}.

Recall that $\Delta f:= \sup \{f(D') - f(D) \ : \ (D, D') \in \mathcal{N} \}$ is the global sensitivity of the query $f$ over the set of neighboring databases $\mathcal{N}$. We assume that $f$ is surjective in the following sense.

\begin{assumptions}[Query Function]\label{assumption_varphi}
For each $\varphi \in [-\Delta f, \Delta f]$, we have $f(D') - f(D) = \varphi$ for some $(D, D') \in \mathcal{N}$. 
\end{assumptions}

Assumption~\ref{assumption_varphi} is standard \citep{geng2014optimal,geng2020tight}, and it is satisfied by common descriptive statistics including the mean, median, minimum/maximum and standard deviation, as well as several popular machine learning algorithms (\textit{cf.}~Section~\ref{sec:numerical}), if the data is numeric. Our theory continues to apply if Assumption~\ref{assumption_varphi} is violated, but our reformulation of problem~\eqref{problem:integral_main:preliminary_formulation} will be conservative as it guarantees DP over the entire range $\varphi \in [-\Delta f, \Delta f]$, as opposed to the subset of query output differences that can actually be observed over $\mathcal{N}$.

{\color{black}
\begin{observation}\label{observation_primal}
Under Assumption~\ref{assumption_varphi}, the \emph{data independent noise optimization problem} is
\begin{align}\label{problem:integral_main}\tag{$\mathrm{P}$}
    \begin{array}{cl}
    \underset{\gamma}{\text{\emph{minimize}}}  & \displaystyle \int_{x \in \mathbb{R}} c(x) \diff \gamma(x) \\
     \text{\emph{subject to}}  &   \gamma \in \mathcal{P}_0  \\
    &  \displaystyle \int_{x \in \mathbb{R}}\indicate{x \in A} \diff \gamma(x)  \leq e^\varepsilon \cdot  \int_{x \in \mathbb{R}} \indicate{x + \varphi \in A} \diff \gamma(x) + \delta \quad \forall (\varphi, A) \in \mathcal{E},
    \end{array}
\end{align}
where $\mathcal{E} := [-\Delta f, \Delta f] \times \mathcal{F}$.
\end{observation}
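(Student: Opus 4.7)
The plan is to transform the DP constraints in \eqref{problem:integral_main:preliminary_formulation} by exploiting translation invariance to eliminate the dependence on the database pair $(D, D')$ and replace it with a single scalar parameter $\varphi = f(D') - f(D)$. The argument has two ingredients: a pointwise reformulation of each DP constraint via a shift of the measurable set $A$, and an identification of the set of realizable shifts $\varphi$ as $(D, D')$ ranges over $\mathcal{N}$.

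First, I would fix an arbitrary pair $(D, D') \in \mathcal{N}$ and perform the change of variables $A \mapsto A - f(D)$ on each DP constraint. Since the Borel $\sigma$-algebra $\mathcal{F}$ is translation invariant, this change is a bijection of $\mathcal{F}$ onto itself, so quantifying over $A \in \mathcal{F}$ before and after the shift yields the same collection of constraints. After substitution, the indicator on the left-hand side becomes $\indicate{x \in A}$ and the indicator on the right-hand side becomes $\indicate{x + \varphi \in A}$, where $\varphi = f(D') - f(D)$ depends only on the difference of the query values and not on $f(D)$ itself. Thus, for every fixed pair $(D, D')$, the family of constraints indexed by $A \in \mathcal{F}$ in \eqref{problem:integral_main:preliminary_formulation} reduces to the family of constraints indexed by $A \in \mathcal{F}$ in \eqref{problem:integral_main} at $\varphi = f(D') - f(D)$.

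Second, I would identify the set $\Phi := \{f(D') - f(D) : (D, D') \in \mathcal{N}\}$. Containment $\Phi \subseteq [-\Delta f, \Delta f]$ is immediate from the definition $\Delta f = \sup_{(D,D') \in \mathcal{N}} |f(D)-f(D')|$ together with the symmetry of the neighboring relation $\mathcal{N}$ (which gives $-\varphi \in \Phi$ whenever $\varphi \in \Phi$). The reverse containment $[-\Delta f, \Delta f] \subseteq \Phi$ is precisely the content of Assumption~\ref{assumption_varphi}. Combining these two steps, the family of constraints indexed by $(D, D', A) \in \mathcal{N} \times \mathcal{F}$ in \eqref{problem:integral_main:preliminary_formulation} coincides with the family indexed by $(\varphi, A) \in \mathcal{E}$ in \eqref{problem:integral_main}, so the two optimization problems have the same feasible set of measures $\gamma \in \mathcal{P}_0$ and identical objectives.

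The argument is essentially a change-of-variables exercise, so there is no deep technical obstacle. The step that deserves a little care is the justification that the pointwise reformulation sweeps out exactly the set $\mathcal{E}$: without Assumption~\ref{assumption_varphi}, only a subset of $[-\Delta f, \Delta f]$ would be realized, and the reformulation would become a conservative restriction of the original problem (as noted in the discussion following Assumption~\ref{assumption_varphi}). Under Assumption~\ref{assumption_varphi}, however, equivalence is exact.
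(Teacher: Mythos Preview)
Your proposal is correct and follows essentially the same approach as the paper's proof: a translation-invariance argument on the Borel $\sigma$-algebra to reduce the dependence on $(D,D')$ to the single scalar $\varphi = f(D')-f(D)$, followed by an application of Assumption~\ref{assumption_varphi} to identify the range of $\varphi$ with $[-\Delta f,\Delta f]$. Your exposition is slightly more detailed in that you make both containments $\Phi \subseteq [-\Delta f,\Delta f]$ and $[-\Delta f,\Delta f] \subseteq \Phi$ explicit, whereas the paper compresses this into a direct invocation of the assumption.
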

}
Problem~\ref{problem:integral_main} has uncountably many decision variables and constraints and thus appears to be challenging to solve. {\color{black} Despite the linearity of the problem, standard tools from finite-dimensional linear programming, such as strong duality, are typically not available without further assumptions \citep{anderson1987linear}. Problem~\ref{problem:integral_main}} is feasible since its constraints are satisfied, for example, by Laplace \citep{dwork2006calibrating} and Gaussian measures \citep[Theorem A.1]{dwork2014algorithmic}. \mbox{Convexity of the feasible region implies that mixtures of such measures are also feasible.}

Problem~\ref{problem:integral_main} can be interpreted as an uncertainty quantification problem from the distributionally robust optimization literature \citep{OSSMO13:OUQ, HTTOM15:COUQ, HRKW15:DRUQCC}. Under this view, {\color{black} the constraints of~\ref{problem:integral_main} correspond to an uncountable number of moment conditions that define an ambiguity set from which nature selects a distribution $\gamma$ that minimizes the expected profit of the decision maker's action.} Problem~\ref{problem:integral_main} differs from the uncertainty quantification problems typically studied in the literature in both the number and the structure of these moment constraints. {\color{black} The constraints of~\ref{problem:integral_main} can also be interpreted as robust constraints that have to be satisfied for all realizations $(\varphi, A) \in \mathcal{E}$ of the `uncertain parameters' $\varphi$ and $A$ \citep{ben2009robust,BdH22:rob_opt}. In contrast to the standard robust optimization literature, however, the uncertain parameter $A$ in our problem is infinite-dimensional.}

Problem~\ref{problem:integral_main} is also reminiscent of continuous linear programs \citep{anderson1987linear}, which comprise uncountably many decision variables and constraints as well. Owing to their continuous-time control heritage, however, the constraints in continuous linear programs are indexed by a single bounded real scalar $x \in [0, T]$, whereas our constraint indices additionally involve the set of all Borel sets $\mathcal{F}$. Moreover, the decision variable of a continuous linear program has a bounded support $[0, T]$ and is assumed to admit a density, whereas our decision variable $\gamma$ has an unbounded support $\mathbb{R}$ and may not admit a density. Both of these additional complications imply that we cannot directly use the theory of continuous linear programming and instead have to derive bounding problems and prove their convergence from first principles.

\subsection{A Hierarchy of Converging Bounding Problems}\label{sec:additive:duality}

To obtain a tractable upper bound on problem~\ref{problem:integral_main}, we first introduce a restriction of~\ref{problem:integral_main} that replaces the generic measure $\gamma$ with the piecewise constant function
\begin{align}\label{restriction}
    \gamma(A) = \sum_{i \in \mathbb{Z}} p(i) \cdot \frac{|A\cap I_i(\beta)|}{\beta} \quad \forall A \in \mathcal{F},
\end{align}
where $p : \mathbb{Z} \mapsto \mathbb{R}_+$ satisfies $\sum_{i \in \mathbb{Z}} p(i) = 1$, and $\{ I_i(\beta) \}_{i \in \mathbb{Z}}$ partitions $\mathbb{R}$ into disjoint intervals $I_i(\beta) := [i\cdot \beta, (i+1)\cdot \beta)$, $i \in \mathbb{Z}$, of some pre-selected length $\beta > 0$. To simplify the exposition, we assume that $\Delta f$ is divisible by $\beta$. Under restriction~\eqref{restriction}, most constraints in~\ref{problem:integral_main} become redundant.

\begin{lemma}\label{prop:finite}
Under restriction~\eqref{restriction}, \ref{problem:integral_main} has the same optimal value as
\begin{align}\label{problem:restricted_main_finite}\tag{$\mathrm{P}(\beta)$}
    \begin{array}{cll}
        \underset{p}{\mathrm{minimize}} & \displaystyle \sum\limits_{i \in \mathbb{Z}} c_i(\beta) \cdot p(i) & \\
        \mathrm{subject\; to} & p : \mathbb{Z} \mapsto \mathbb{R}_+, \ \displaystyle   \sum_{i \in \mathbb{Z}} p(i) = 1\\
        &\displaystyle \sum\limits_{i \in \mathbb{Z}} \indicate{I_i(\beta) \subseteq A} \cdot p(i) \leq e^\varepsilon \cdot \sum\limits_{i \in \mathbb{Z}} \indicate{I_i(\beta) + \varphi \subseteq A} \cdot p(i) + \delta & \forall (\varphi, A) \in \mathcal{E}(\beta),
    \end{array}\raisetag{3\baselineskip}
\end{align}
where $c_i(\beta):= \beta^{-1} \cdot \int_{x \in I_i(\beta)} c(x) \mathrm{d}x$ and $\mathcal{E}(\beta) := \setvarphi(\beta) \times \mathcal{F}(\beta)$ with $\setvarphi(\beta) := \{-\Delta f, -\Delta f+ \beta, \ldots, \Delta f \}$ and $ \mathcal{F}(\beta) := \big\{ \bigcup_{i \in \mathcal{I}}I_i(\beta) \ : \ \mathcal{I} \subseteq \mathbb{Z} \big\}$.
\end{lemma}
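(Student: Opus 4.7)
My plan is to substitute the piecewise-constant ansatz~\eqref{restriction} into~\ref{problem:integral_main} and show that the resulting infinite system on $p$ collapses to~\ref{problem:restricted_main_finite}. The key observation is that $\gamma$ in~\eqref{restriction} admits the step density $h(x):=\sum_{i\in\mathbb{Z}}(p(i)/\beta)\indicate{x\in I_i(\beta)}$ with respect to Lebesgue measure, so every quantity in the problem reduces to integrals of step functions that can be evaluated piecewise. I will carry out the reduction in three stages: first the normalization and the objective, then the Borel-set index $A\in\mathcal{F}$, and finally the continuous shift $\varphi\in[-\Delta f,\Delta f]$.

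The first stage is routine. Because $|I_i(\beta)|=\beta$, the condition $\gamma\in\mathcal{P}_0$ reads $p:\mathbb{Z}\mapsto\mathbb{R}_+$ with $\sum_i p(i)=1$, and $\int c\,\mathrm{d}\gamma=\sum_i p(i)\cdot\beta^{-1}\int_{I_i(\beta)}c(x)\mathrm{d}x=\sum_i c_i(\beta)\,p(i)$. The DP constraint for $(\varphi,A)\in\mathcal{E}$ rearranges to $\int_A[h(x)-e^\varepsilon h(x-\varphi)]\,\mathrm{d}x\leq\delta$, and since the integrand does not depend on $A$, the worst choice is $A^\star(\varphi):=\{x:h(x)>e^\varepsilon h(x-\varphi)\}$; fixing $\varphi$, the constraint reduces to $\int_\mathbb{R}[h(x)-e^\varepsilon h(x-\varphi)]^+\mathrm{d}x\leq\delta$. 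For $\varphi\in\setvarphi(\beta)$, both $h$ and $h(\cdot-\varphi)$ are constant on each $I_i(\beta)$, whence $A^\star(\varphi)\in\mathcal{F}(\beta)$; matching the inequality with the form in~\ref{problem:restricted_main_finite} then uses only the identity $\indicate{I_i(\beta)+k\beta\subseteq A}=\indicate{I_{i+k}(\beta)\subseteq A}$ for $A\in\mathcal{F}(\beta)$.

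The technical core---and the main obstacle---is verifying that shifts $\varphi\notin\setvarphi(\beta)$ produce no additional constraints. Writing $\varphi=k\beta+s$ with $k\in\mathbb{Z}$ and $s\in(0,\beta)$ and computing $h(x-\varphi)$ piecewise on $I_i(\beta)$ (it equals $p(i-k-1)/\beta$ on $[i\beta,i\beta+s)$ and $p(i-k)/\beta$ on $[i\beta+s,(i+1)\beta)$), integration followed by summation yields the identity
\begin{equation*}
\int_\mathbb{R}\bigl[h(x)-e^\varepsilon h(x-\varphi)\bigr]^+\mathrm{d}x
\;=\;\tfrac{s}{\beta}\,M_{k+1}+\Bigl(1-\tfrac{s}{\beta}\Bigr)\,M_k,
\end{equation*}
where $M_j:=\sum_i(p(i)-e^\varepsilon p(i-j))^+$ is the worst-case excess at the grid point $\varphi=j\beta$. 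Hence the continuous-$\varphi$ constraint is implied whenever $M_k,M_{k+1}\leq\delta$, so the grid $\setvarphi(\beta)$ suffices. Combined with the second stage, this shows that~\ref{problem:integral_main} under~\eqref{restriction} has the same feasible set in $p$ as~\ref{problem:restricted_main_finite}; together with the objective match from the first stage, the optimal values coincide. The only boundary subtlety is that $\varphi=\pm\Delta f$ must themselves lie in $\setvarphi(\beta)$, which is ensured by the assumption that $\Delta f$ is divisible by $\beta$ so that no relevant shift falls outside the grid's convex hull.
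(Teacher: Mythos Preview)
Your proof is correct and follows essentially the same route as the paper's: both reduce first to the worst-case event $A^\star(\varphi)$ (your positive-part integral $\int[h-e^\varepsilon h(\cdot-\varphi)]^+$ is precisely the content of the paper's Lemma~A.2, which identifies $A^\star(\varphi)$ as a union of the sub-intervals on which the step integrand is positive) and then establish that the resulting worst-case shortfall is piecewise affine in $\varphi$ with breakpoints on $\beta\mathbb{Z}$ (your convex-combination identity $\tfrac{s}{\beta}M_{k+1}+(1-\tfrac{s}{\beta})M_k$ is exactly Lemma~A.3). Your density formulation compresses the paper's explicit $I_i^1(\varphi,\beta),\,I_i^2(\varphi,\beta)$ partition into a single step-function computation, but the underlying argument is identical.
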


In contrast to~\ref{problem:integral_main}, problem~\ref{problem:restricted_main_finite} has countably many decision variables. By Cantor's theorem, however, it still comprises uncountably many constraints since $\mathcal{F}(\beta)$ is indexed by the power set of infinitely many intervals $\{I_i(\beta)\}_{i\in\mathbb{Z}}$. To bound~\ref{problem:restricted_main_finite} from above by a finite-dimensional linear optimization problem, we constrain~\ref{problem:restricted_main_finite} further by restricting the discrete probability measure $p$ to a bounded support. Formally, we impose that there is $L\in \mathbb{N}$ such that
\begin{align}\label{eq:bound_sup}
    p(i) = 0 \quad \forall i \in \mathbb{Z}\setminus [\pm L],
\end{align}
that is, we bound the overall support to $2L + 1$ intervals $I_i(\beta)$ centered at $0$. Restriction~\eqref{eq:bound_sup} allows us to remove from \ref{problem:restricted_main_finite} any privacy constraint that relates to intervals $I_i(\beta)$ and $I_j(\beta)$ whose indices $i$ and $j$ both lie outside the support $[\pm L]$.

\begin{proposition}\label{prop:bounded}
With the additional constraint~\eqref{eq:bound_sup}, \ref{problem:restricted_main_finite} has the same optimal value as
\begin{align}\label{L-UB}\tag{$\mathrm{P}(L,\beta)$}
    \begin{array}{cll}
        \underset{p}{\mathrm{minimize}} & \displaystyle \sum\limits_{i \in [\pm L]} c_i(\beta) \cdot p(i) & \\
        \mathrm{subject\; to} & p : [\pm L] \mapsto \mathbb{R}_+, \ \displaystyle   \sum_{i \in [\pm L]} p(i) = 1\\
        &\displaystyle \sum\limits_{i \in [\pm L]} \indicate{I_i(\beta) \subseteq A} \cdot p(i) \leq e^\varepsilon \cdot \sum\limits_{i \in [\pm L]} \indicate{I_i(\beta) + \varphi \subseteq A} \cdot p(i) + \delta & \forall (\varphi, A) \in \mathcal{E}(L,\beta),
    \end{array}\raisetag{3\baselineskip}
\end{align}
where $\mathcal{E}(L, \beta):= \setvarphi(\beta) \times \mathcal{F}(L, \beta)$ with $\mathcal{F}(L, \beta) := \big\{ \bigcup_{i \in \mathcal{L}} I_i(\beta) \ : \ \mathcal{L} \subseteq [\pm L] \big\}$.
\end{proposition}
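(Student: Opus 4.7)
The main observation is that once $p$ is forced to vanish outside $[\pm L]$, every quantity appearing in \ref{problem:restricted_main_finite} only depends on the restriction of $p$ to $[\pm L]$. In particular, the objective of \ref{problem:restricted_main_finite} collapses to $\sum_{i \in [\pm L]} c_i(\beta)\, p(i)$, matching that of~\ref{L-UB}, and the normalization constraint becomes $\sum_{i \in [\pm L]} p(i)=1$. The only nontrivial part is to check that dropping all privacy constraints indexed by $A \in \mathcal{F}(\beta) \setminus \mathcal{F}(L,\beta)$ does not change the feasible region.

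\textbf{Step 1: easy inclusion.} Since any $\mathcal{L}\subseteq [\pm L]$ also belongs to $\mathbb{Z}$, we have $\mathcal{F}(L,\beta)\subseteq \mathcal{F}(\beta)$, so $\mathcal{E}(L,\beta)\subseteq \mathcal{E}(\beta)$. Hence every feasible $p$ of~\ref{problem:restricted_main_finite} with $p(i)=0$ for $i\notin[\pm L]$, viewed as a map on $[\pm L]$, is feasible for~\ref{L-UB}, and the optimal value of~\ref{L-UB} is no larger than that of restricted \ref{problem:restricted_main_finite}.

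\textbf{Step 2: reverse inclusion via projection of $A$.} Fix any $p$ feasible for~\ref{L-UB} and any pair $(\varphi,A)\in\mathcal{E}(\beta)$ with $A=\bigcup_{i\in\mathcal{I}}I_i(\beta)$ for some $\mathcal{I}\subseteq\mathbb{Z}$. Define the truncated set
\begin{equation*}
    A^{(L)} := \bigcup_{i \in \mathcal{I}\cap [\pm L]} I_i(\beta) \;\in\; \mathcal{F}(L,\beta).
\end{equation*}
For the left-hand side of the privacy constraint in~\ref{problem:restricted_main_finite}, only indices $i\in[\pm L]$ contribute (because $p(i)=0$ elsewhere), and for such $i$ one has $I_i(\beta)\subseteq A \Leftrightarrow i\in\mathcal{I} \Leftrightarrow i\in\mathcal{I}\cap[\pm L] \Leftrightarrow I_i(\beta)\subseteq A^{(L)}$. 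Hence the two left-hand sides agree. For the right-hand side, since $\varphi\in\setvarphi(\beta)$ is an integer multiple of $\beta$, the shift satisfies $I_i(\beta)+\varphi = I_{i+\varphi/\beta}(\beta)$, so $I_i(\beta)+\varphi\subseteq A^{(L)}$ implies $I_i(\beta)+\varphi\subseteq A$ (while the converse can fail when $i+\varphi/\beta\notin[\pm L]$). Thus the right-hand side for $A^{(L)}$ is no larger than that for $A$, and the constraint at $(\varphi,A^{(L)})\in\mathcal{E}(L,\beta)$, which holds by assumption, implies the constraint at $(\varphi,A)$. Therefore $p$ extended by zeros outside $[\pm L]$ is feasible for restricted \ref{problem:restricted_main_finite}, and the two optimal values coincide.

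\textbf{Expected obstacle.} The argument is essentially bookkeeping; the one point that requires care is the asymmetric behaviour of the left- and right-hand sides under truncation of $A$, which relies crucially on $\varphi$ being a multiple of $\beta$ (so that $I_i(\beta)+\varphi$ is again one of the base intervals) and on the direction of the implication (the tighter constraint at $A^{(L)}$ implies the looser one at $A$). Once this monotonicity is nailed down, no further machinery is needed.
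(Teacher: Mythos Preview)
Your proof is correct and follows essentially the same approach as the paper. Both arguments truncate $A$ to $A^{(L)} = A \cap [-L\beta,(L+1)\beta) \in \mathcal{F}(L,\beta)$ and show that the DP constraint at $(\varphi,A^{(L)})$ dominates the one at $(\varphi,A)$: the left-hand sides coincide because $p$ vanishes outside $[\pm L]$, while the right-hand side can only shrink under truncation since $A^{(L)}\subseteq A$.
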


Problem~\ref{L-UB} constitutes a large-scale but finite-dimensional linear optimization problem that will serve as a building block to our cutting plane algorithm in Section~\ref{sec:algo}. {\color{black} We emphasize that the finiteness of the problem is solely due to our measure discretization~\eqref{restriction} as well as the restriction~\eqref{eq:bound_sup} to a bounded support, that is, we did not impose any additional assumptions on the structure of the worst-case events $A$ to arrive at a finite-dimensional model.} In terms of optimal values, we have the relationship $\mathrm{P}(L,\beta) \geq \mathrm{P}(\beta) \geq \mathrm{P}$ for all support sizes $L \in \mathbb{N}$ and interval lengths $\beta$. 

We next derive a lower bound on~\ref{problem:integral_main}. To this end, we employ a strategy widely adopted in distributionally robust optimization and first propose a dual to problem~\ref{problem:integral_main}:
\begin{align}\label{problem:integral_dual}\tag{$\mathrm{D}$}
    \begin{array}{cl}
     \underset{\theta, \psi}{\mathrm{maximize}} & \displaystyle \theta - \delta \int_{(\varphi, A) \in \mathcal{E}} \mathrm{d}\psi(\varphi, A) \\
    \textup{subject to} & \theta \in \mathbb{R}, \ \psi \in \mathcal{M}_{+}(\mathcal{E})\\
    & \displaystyle \theta \leq c(x) + \int_{(\varphi, A) \in \mathcal{E}} \indicate{x \in A}\mathrm{d}\psi(\varphi, A) - e^\varepsilon \cdot \int_{(\varphi, A) \in \mathcal{E}}\indicate{x+\varphi \in A} \mathrm{d}\psi(\varphi, A) \\
    & \pushright{\forall x \in \mathbb{R}.}
    \end{array}
\end{align}
The integrals in this problem are well-defined due to the domain of $\psi$ specified in the first constraint. Problem~\ref{problem:integral_dual} affords a natural interpretation: suppose that $\delta = 0$ and replace in the objective function the epigraphical variable $\theta$ with
\begin{equation*}
    \inf_{x \in \mathbb{R}} \; c(x) + \int_{(\varphi, A) \in \mathcal{E}} \indicate{x \in A}\mathrm{d}\psi(\varphi, A) - e^\varepsilon \cdot \int_{(\varphi, A) \in \mathcal{E}}\indicate{x+\varphi \in A} \mathrm{d}\psi(\varphi, A).
\end{equation*}
Problem~\ref{problem:integral_dual} then determines a conic combination of database-event pairs $(\varphi, A)$ that maximizes the sum of noise-related costs $c (x)$ and cumulative DP shortfall (\emph{i.e.}, the cumulative violation of all DP constraints) under the most benign realization $x$ of the random noise $\tilde{X}$.

We can readily establish weak duality between the problems~\ref{problem:integral_main} and~\ref{problem:integral_dual}.

\begin{proposition}[Weak Duality]\label{prop:weakdual}
For any $\gamma$ feasible in \ref{problem:integral_main} and $(\theta,\psi)$ feasible in \ref{problem:integral_dual}, we have
$$ \int_{x \in \mathbb{R}} c(x) \diff \gamma(x) \geq \displaystyle \theta - \delta \int_{(\varphi, A) \in \mathcal{E}} \mathrm{d}\psi(\varphi, A). $$
\end{proposition}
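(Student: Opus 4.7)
The plan is to follow the standard LP weak duality template, adapted to this infinite-dimensional setting: integrate the dual feasibility constraint against the primal measure $\gamma$, swap the order of integration via Tonelli, and then use the primal DP constraints together with the non-negativity of $\psi$ to cancel out the cross terms.

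Concretely, I would start by fixing any primal feasible $\gamma$ and any dual feasible $(\theta,\psi)$, and integrate the pointwise dual inequality
\begin{equation*}
\theta \;\leq\; c(x) + \int_{(\varphi,A)\in\mathcal{E}} \indicate{x \in A}\,\mathrm{d}\psi(\varphi,A) \;-\; e^\varepsilon \int_{(\varphi,A)\in\mathcal{E}} \indicate{x+\varphi \in A}\,\mathrm{d}\psi(\varphi,A)
\end{equation*}
with respect to $\gamma$ over $x\in\mathbb{R}$. Since $\gamma\in\mathcal{P}_0$ is a probability measure, the left-hand side integrates to $\theta$, while the right-hand side becomes the sum of $\int c(x)\,\mathrm{d}\gamma(x)$ plus two iterated integrals.

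The key step is then to apply Tonelli's theorem to interchange the order of integration in each of those two iterated integrals. Tonelli applies cleanly here because the integrands $\indicate{x\in A}$ and $\indicate{x+\varphi\in A}$ are non-negative and measurable (the measure $\psi$ is defined on the power-set $\sigma$-algebra of $\mathcal{E}$, so no joint measurability obstruction arises). After swapping, the inner integrals with respect to $\gamma$ produce exactly the two terms that appear in the primal DP constraint for each fixed $(\varphi,A)$, namely $\int \indicate{x\in A}\,\mathrm{d}\gamma(x)$ and $\int \indicate{x+\varphi\in A}\,\mathrm{d}\gamma(x)$.

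With this, primal feasibility of $\gamma$ implies that for every $(\varphi,A)\in\mathcal{E}$,
\begin{equation*}
\int_{x\in\mathbb{R}} \indicate{x\in A}\,\mathrm{d}\gamma(x) \;-\; e^\varepsilon \int_{x\in\mathbb{R}} \indicate{x+\varphi\in A}\,\mathrm{d}\gamma(x) \;\leq\; \delta.
\end{equation*}
Integrating this pointwise inequality against $\psi$ is permitted because $\psi\in\mathcal{M}_+(\mathcal{E})$, yielding an upper bound of $\delta \int_{\mathcal{E}}\mathrm{d}\psi(\varphi,A)$ on the combined cross terms. Substituting back into the integrated dual constraint delivers $\theta \leq \int c(x)\,\mathrm{d}\gamma(x) + \delta\int_{\mathcal{E}}\mathrm{d}\psi(\varphi,A)$, which is exactly the claim after rearrangement. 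I do not anticipate a serious obstacle; the only point that requires a small amount of care is the justification of Tonelli, but the non-negativity of the indicator integrands and the absence of a nontrivial measurability restriction on $\mathcal{E}$ make this routine.
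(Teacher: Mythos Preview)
Your proposal is correct and follows essentially the same route as the paper's proof: integrate the dual constraint against $\gamma$, swap the order of integration, and apply the primal DP constraint under the non-negative measure $\psi$. The only point the paper makes explicit that you leave implicit is that $\int_{\mathcal{E}}\mathrm{d}\psi<\infty$ (otherwise the dual objective is $-\infty$ and the inequality is trivial), which is what guarantees the $\sigma$-finiteness needed for Fubini/Tonelli.
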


{\color{black} It is tempting to conclude that strong duality should hold as well between \ref{problem:integral_main} and \ref{problem:integral_dual} due the linearity of both problems. However, strong duality does not typically hold in infinite-dimensional optimization without further assumptions \citep{anderson1987linear}. We therefore defer the discussion of strong duality between \ref{problem:integral_main} and \ref{problem:integral_dual} to the end of this section.}

Similar to problem~\ref{problem:integral_main}, problem~\ref{problem:integral_dual} appears challenging to solve since it comprises uncountably many decision variables and constraints. To construct a tractable lower bound on~\ref{problem:integral_dual}, we first remove all variables $\psi(\varphi, A)$ indexed by $(\varphi, A) \in \mathcal{E} \setminus \mathcal{E}(\beta)$, that is, we impose that
\begin{align}\label{constraints:extra}
    \int_{(\varphi, A) \in \mathcal{E}\setminus\mathcal{E}(\beta)} \diff \psi(\varphi, A) = 0.
\end{align}
Restriction~\eqref{constraints:extra} can be understood as the dual pendant to our discretization~\eqref{restriction}; in fact, the dual variables unaffected by~\eqref{constraints:extra} correspond precisely to the constraints in problem~\ref{problem:restricted_main_finite}. Under restriction~\eqref{constraints:extra}, most constraints of~\ref{problem:integral_dual} become redundant.

\begin{lemma}\label{prop:lb-beta}
With the additional constraint~\eqref{constraints:extra}, \ref{problem:integral_dual} has the same optimal value as
    \begin{align}\label{LB-beta}\tag{$\mathrm{D}(\beta)$}
    \mspace{-30mu}\begin{array}{cl}
    \underset{\theta, \psi}{\mathrm{maximize}} & \theta - \displaystyle \delta  \cdot \int_{(\varphi, A) \in \mathcal{E}(\beta)} \mathrm{d}\psi(\varphi, A)\\
    \textup{subject to} & \theta \in \mathbb{R}, \ \psi \in \mathcal{M}_{+}(\mathcal{E}(\beta))\\
    & \displaystyle  \theta  \leq  \underline{c}_i(\beta) + \int_{(\varphi, A) \in \mathcal{E}(\beta)} \indicate{I_i(\beta) \subseteq A}\mathrm{d}\psi(\varphi, A) - e^\varepsilon \cdot \int_{(\varphi, A) \in \mathcal{E}(\beta)}\indicate{I_i(\beta) + \varphi \subseteq A} \mathrm{d}\psi(\varphi, A) \\
    & \pushright{\forall i \in \mathbb{Z},}
    \end{array}\raisetag{3.5\baselineskip}
\end{align}
where $\underline{c}_i(\beta) := \inf \{ c(x) : x \in I_i(\beta) \}, \ i \in \mathbb{Z}$.
\end{lemma}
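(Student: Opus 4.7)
The plan is to show that imposing restriction~\eqref{constraints:extra} on \ref{problem:integral_dual} collapses the uncountable constraint family indexed by $x \in \mathbb{R}$ into the countable family indexed by $i \in \mathbb{Z}$ appearing in \ref{LB-beta}, while leaving the objective unchanged. First I would observe that any non-negative $\psi \in \mathcal{M}_+(\mathcal{E})$ satisfying $\int_{\mathcal{E}\setminus\mathcal{E}(\beta)}\mathrm{d}\psi=0$ can be equivalently regarded as an element of $\mathcal{M}_+(\mathcal{E}(\beta))$, and the objective $\theta - \delta \int_{\mathcal{E}} \mathrm{d}\psi$ becomes $\theta - \delta \int_{\mathcal{E}(\beta)} \mathrm{d}\psi$. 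Moreover all integrands $\indicate{x \in A}$ and $\indicate{x+\varphi \in A}$ in the constraints of \ref{problem:integral_dual} can now be integrated over $\mathcal{E}(\beta)$ instead of $\mathcal{E}$.

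The key step is a constancy observation: for every $(\varphi, A) \in \mathcal{E}(\beta)$ and every $i \in \mathbb{Z}$, the indicators $\indicate{x \in A}$ and $\indicate{x+\varphi \in A}$ are constant in $x$ on the interval $I_i(\beta)$. Indeed, by definition $A = \bigcup_{j \in \mathcal{I}} I_j(\beta)$ for some $\mathcal{I} \subseteq \mathbb{Z}$, so for any $x \in I_i(\beta)$ we have $x \in A$ if and only if $i \in \mathcal{I}$, which is in turn equivalent to $I_i(\beta) \subseteq A$. Similarly, since $\Delta f$ is divisible by $\beta$ and $\varphi \in \setvarphi(\beta)$, the shift $\varphi$ is an integer multiple of $\beta$, so $I_i(\beta) + \varphi$ coincides with some interval $I_{i + \varphi/\beta}(\beta)$ in the partition, and the same argument gives $\indicate{x+\varphi \in A} = \indicate{I_i(\beta) + \varphi \subseteq A}$ for all $x \in I_i(\beta)$. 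Consequently, the right-hand side of the constraint in \ref{problem:integral_dual} depends on $x \in I_i(\beta)$ only through $c(x)$.

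Finally, requiring the constraint of \ref{problem:integral_dual} to hold for every $x \in I_i(\beta)$ is therefore equivalent to requiring it with $c(x)$ replaced by $\underline{c}_i(\beta) = \inf\{c(x) : x \in I_i(\beta)\}$, which is finite (and in fact attained on $\overline{I_i(\beta)}$) by continuity of $c$ per Assumption~\ref{assumptions_c}(a). Taking the union over $i \in \mathbb{Z}$ covers all of $\mathbb{R}$, so the uncountable family $\{x \in \mathbb{R}\}$ is equivalent to the countable family $\{i \in \mathbb{Z}\}$ in \ref{LB-beta}. Matching objectives and constraints yields the claimed equality of optimal values. The main obstacle is the careful translation bookkeeping in the second paragraph — ensuring that the partition structure is preserved under every admissible shift $\varphi$ so that both indicators are genuinely constant on every $I_i(\beta)$; once this is secured, the reduction to \ref{LB-beta} is immediate.
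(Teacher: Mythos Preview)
Your proposal is correct and follows essentially the same approach as the paper's own proof: both reduce the integration domain from $\mathcal{E}$ to $\mathcal{E}(\beta)$ via restriction~\eqref{constraints:extra}, use the fact that for $A \in \mathcal{F}(\beta)$ and $\varphi \in \setvarphi(\beta)$ the indicators $\indicate{x\in A}$ and $\indicate{x+\varphi\in A}$ are constant on each $I_i(\beta)$ (equalling $\indicate{I_i(\beta)\subseteq A}$ and $\indicate{I_i(\beta)+\varphi\subseteq A}$ respectively), and then collapse the constraints over $x\in I_i(\beta)$ to a single constraint with $\underline{c}_i(\beta)$. Your explicit mention of the divisibility of $\varphi$ by $\beta$ to handle the shifted indicator is a nice touch that the paper leaves implicit.
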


In contrast to problem~\ref{problem:integral_dual}, which comprises uncountably many constraints, problem~\ref{LB-beta} has countably many constraints. However, the problem still contains infinitely many constraints as well as uncountably many variables. To bound~\ref{LB-beta} from below by
a finite-dimensional linear optimization problem, we set $\psi(\mathcal{E}(\beta) \setminus \mathcal{E}(L, \beta)) = 0$ for some $L \in \mathbb{N}$, that is, we impose that
\begin{align}\label{constraints:extratwo}
    \displaystyle \int_{(\varphi, A) \in \mathcal{E}(\beta) \setminus \mathcal{E}(L, \beta)} \diff \psi(\varphi, A) = 0.
\end{align}
Restriction~\eqref{constraints:extratwo} is the dual pendant to our support constraint~\eqref{eq:bound_sup}. It removes variables associated with events that contain intervals sufficiently far away from $0$ since $(\varphi, A) \in \mathcal{E}(\beta) \setminus \mathcal{E}(L ,\beta)$ implies that $A \not \subseteq \bigcup_{i \in [\pm L]} I_i(\beta)$.

\begin{proposition}\label{prop:lb-L-beta}
With the additional constraint~\eqref{constraints:extratwo}, \ref{LB-beta} has the same optimal value as
\begin{align}\label{lb-L-beta}\tag{$\mathrm{D}(L, \beta)$}
    \mspace{-25mu}\begin{array}{cl}
    \underset{\theta, \psi}{\mathrm{maximize}} & \displaystyle \theta - \delta \cdot \sum_{(\varphi, A) \in \mathcal{E}(L,\beta)} \psi(\varphi, A) \\
    \textup{subject to} & \theta \in \mathbb{R}, \ \psi: \mathcal{E}(L, \beta) \mapsto \mathbb{R}_{+} \\[1em]
    & \displaystyle \theta \leq \underline{c}_i(\beta) + \sum_{(\varphi, A) \in \mathcal{E}(L,\beta)} \indicate{I_i(\beta) \subseteq A} \cdot \psi(\varphi, A) - e^\varepsilon \cdot \sum_{(\varphi, A) \in \mathcal{E}(L,\beta)}\indicate{I_i(\beta) + \varphi \subseteq A} \cdot \psi(\varphi, A) \\
    & \pushright{\forall i \in [\pm (L + \Delta f / \beta )].}
    \end{array}\raisetag{3.8\baselineskip}
\end{align}
\end{proposition}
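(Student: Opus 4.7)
The plan is to show that restriction~\eqref{constraints:extratwo} collapses the infinite-dimensional dual~\ref{LB-beta} into the finite-dimensional linear program~\ref{lb-L-beta}, via two observations: (i) the support of $\psi$ becomes confined to a finite set, and (ii) the infinitely many constraints indexed by $|i| > L + \Delta f/\beta$ are redundant.

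First, I would note that $\mathcal{E}(L,\beta)$ is finite, since $\setvarphi(\beta)$ has cardinality $2\Delta f/\beta + 1$ and $\mathcal{F}(L,\beta)$ contains $2^{2L+1}$ subsets. Combined with~\eqref{constraints:extratwo}, which forces the nonnegative measure $\psi$ to vanish outside $\mathcal{E}(L,\beta)$, this allows us to identify $\psi$ with a nonnegative function on this finite set, so every integral over $\mathcal{E}(\beta)$ in the objective and in the constraints of~\ref{LB-beta} reduces to a finite sum over $\mathcal{E}(L,\beta)$. For the constraints indexed by $i \in [\pm(L+\Delta f/\beta)]$, this yields the expressions appearing in~\ref{lb-L-beta} verbatim.

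Second, for $|i| > L+\Delta f/\beta$, I would verify that both indicator terms in the constraint vanish. Since $A \subseteq \bigcup_{j \in [\pm L]} I_j(\beta)$ for every $A \in \mathcal{F}(L,\beta)$, we have $\indicate{I_i(\beta) \subseteq A} = 0$ whenever $|i| > L$; and because $\varphi \in \setvarphi(\beta)$ is an integer multiple of $\beta$ with $|\varphi/\beta| \leq \Delta f/\beta$, we have $I_i(\beta)+\varphi = I_{i+\varphi/\beta}(\beta)$, and the requirement $i+\varphi/\beta \in [\pm L]$ would force $|i| \leq L + \Delta f/\beta$, contradicting the index range under consideration. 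Hence each constraint for $|i| > L+\Delta f/\beta$ reduces to $\theta \leq \underline{c}_i(\beta)$.

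The hard part will be showing that these simplified constraints are redundant relative to those retained in~\ref{lb-L-beta}. I plan to exploit the fact that the constraints in~\ref{lb-L-beta} indexed by $L < |i| \leq L + \Delta f/\beta$ also have a vanishing first indicator by the same argument as above, so they collapse to
$$\theta \leq \underline{c}_i(\beta) - e^\varepsilon \sum_{(\varphi, A) \in \mathcal{E}(L,\beta)} \indicate{I_i(\beta)+\varphi \subseteq A} \cdot \psi(\varphi, A) \leq \underline{c}_i(\beta),$$
already implying $\theta^\star \leq \min_{L < |i| \leq L + \Delta f/\beta} \underline{c}_i(\beta)$ at any optimum. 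Invoking Assumption~\ref{assumptions_c}(b), which guarantees $\underline{c}_i(\beta) \to \infty$ as $|i| \to \infty$ so that $\inf_{|i| > L + \Delta f/\beta} \underline{c}_i(\beta)$ is attained at some finite index, I would then argue that this infimum is bounded below by the minimum displayed above, rendering the omitted constraints redundant. Carefully justifying this last implication for a general loss function satisfying Assumption~\ref{assumptions_c} is the most delicate part of the argument and will require analyzing how $\underline{c}_i(\beta)$ behaves just beyond the boundary index $L + \Delta f / \beta$ relative to its behavior further out in the tails.
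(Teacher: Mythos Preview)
Your approach matches the paper's: first collapse the integrals to finite sums via~\eqref{constraints:extratwo}, then observe that constraints indexed by $|i| > L+\Delta f/\beta$ reduce to $\theta \le \underline{c}_i(\beta)$ while those with $L < |i| \le L+\Delta f/\beta$ reduce to $\theta \le \underline{c}_i(\beta)$ minus a nonnegative $e^\varepsilon$-weighted sum---exactly your two observations.

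For the redundancy step you flag as delicate, the paper does not route through Assumption~\ref{assumptions_c}(b). It simply asserts that $\underline{c}_i(\beta)$ ``inherits monotonicity from $c_i(\beta)$'', i.e., $\underline{c}_i(\beta) \le \underline{c}_{i+1}(\beta)$ for $i \ge 0$ and symmetrically for $i \le 0$, so that the buffer constraints at $L < |i| \le L+\Delta f/\beta$ directly dominate those further out. Your hesitation here is well placed: monotonicity of $c_i(\beta)$ does not follow from Assumption~\ref{assumptions_c} as stated---a continuous, unbounded $c$ can have a local dip just beyond index $L+\Delta f/\beta$---so the paper is tacitly relying on $c$ being monotone in $|x|$, as holds for the $\ell_1$, $\ell_2$, and pinball losses it actually uses. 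Your unboundedness-only plan cannot close this gap without adding that hypothesis either; the cleanest fix is to state the monotonicity property explicitly and invoke it, as the paper effectively does.
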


Similar to~\ref{L-UB}, problem~\ref{lb-L-beta} constitutes a large-scale but finite-dimensional linear optimization problem that will serve as a building block to our cutting plane algorithm in Section~\ref{sec:algo}. In terms of optimal values, we have the relationship $\mathrm{D}(L,\beta) \leq \mathrm{D}(\beta) \leq \mathrm{D}$ for all support sizes $L \in \mathbb{N}$ and interval lengths $\beta$. In particular, \ref{problem:integral_main} and~\ref{problem:integral_dual} are sandwiched by the finite-dimensional linear optimization problems~\ref{L-UB} and~\ref{lb-L-beta}.

We close this section with an analysis of the convergence of the finite-dimensional linear optimization problems~\ref{L-UB} and~\ref{lb-L-beta}. To this end, recall that by our earlier assumption, $\beta$ divides $\Delta f$, which allows us to equivalently represent $\beta$ as $\Delta f / k$ for some $k \in \mathbb{N}$.

\begin{theorem}\label{thm:strong_duality}
    For any $\xi > 0$, there is $\Lambda' \in \mathbb{N}$ and $k' \in \mathbb{N}$ such that
    \begin{equation*}
        \text{\hyperref[{L-UB}]{$\mathrm{P}(\Lambda \cdot k, \Delta f / k)$}} - \text{\hyperref[{lb-L-beta}]{$\mathrm{D}(\Lambda \cdot k, \Delta f / k)$}} \leq \xi
        \qquad \forall \Lambda \geq \Lambda', \ \forall k \geq k'.
    \end{equation*}
\end{theorem}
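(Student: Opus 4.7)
The plan is to prove the theorem via a sandwich argument that exploits finite-dimensional LP duality applied to the program~\ref{L-UB}. By the hierarchy $\mathrm{P}(L,\beta) \geq \mathrm{P} \geq \mathrm{D} \geq \mathrm{D}(L,\beta)$, which follows from the restrictions established in Lemma~\ref{prop:finite}, Propositions~\ref{prop:bounded} and~\ref{prop:lb-L-beta}, Lemma~\ref{prop:lb-beta}, together with weak duality (Proposition~\ref{prop:weakdual}), it suffices to show that $\mathrm{P}(L,\beta) - \mathrm{D}(L,\beta) \leq \xi$; as a byproduct, this simultaneously establishes strong duality $\mathrm{P} = \mathrm{D}$.

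Let $\tilde{\mathrm{D}}(L,\beta)$ denote the LP dual of the finite linear program~\ref{L-UB}; classical finite-dimensional LP strong duality yields $\mathrm{P}(L,\beta) = \tilde{\mathrm{D}}(L,\beta)$. The program $\tilde{\mathrm{D}}(L,\beta)$ shares its decision variables and objective with $\mathrm{D}(L,\beta)$ but differs in two respects: (i) its right-hand sides involve the averages $c_i(\beta)$ rather than the infima $\underline{c}_i(\beta)$, and (ii) its constraint index set is $[\pm L]$ rather than $[\pm(L + \Delta f/\beta)]$. The approach is to take an optimal $(\alpha^\star, \psi^\star)$ of $\tilde{\mathrm{D}}(L,\beta)$ and build a feasible solution of $\mathrm{D}(L,\beta)$ by shifting $\alpha^\star$ down by a small perturbation $\omega > 0$ while keeping $\psi^\star$ unchanged. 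For $i \in [\pm L]$, the perturbed constraint in $\mathrm{D}(L,\beta)$ is feasible whenever $\omega \geq \max_{i \in [\pm L]} \{c_i(\beta) - \underline{c}_i(\beta)\}$, which is at most the modulus of continuity of $c$ on $[-(\Lambda+1)\Delta f, (\Lambda+1)\Delta f]$ at scale $\beta = \Delta f/k$; by Assumption~\ref{assumptions_c}(a) this modulus vanishes as $k \to \infty$. For the extra indices $i \in [\pm(L + \Delta f/\beta)] \setminus [\pm L]$, the indicator $\indicate{I_i(\beta) \subseteq A}$ vanishes for every $A \in \mathcal{F}(L,\beta)$, so the constraint reduces to $\alpha^\star - \omega + e^\varepsilon \sum_{(\varphi,A)} \indicate{I_i(\beta)+\varphi \subseteq A} \cdot \psi^\star(\varphi,A) \leq \underline{c}_i(\beta)$; since $|i| > L$ entails $\underline{c}_i(\beta)$ being evaluated at $|x| \geq \Lambda \Delta f$, Assumption~\ref{assumptions_c}(b) makes the right-hand side arbitrarily large for large $\Lambda$. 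The quantifier order in the theorem reflects this structure: we first choose $\Lambda^\prime$ large enough to absorb $\alpha^\star + e^\varepsilon \sum \psi^\star$, and then choose $k^\prime$ large enough to drive the continuity modulus below $\xi$. The resulting perturbed solution is feasible in $\mathrm{D}(L,\beta)$ and attains objective $\mathrm{P}(L,\beta) - \omega \geq \mathrm{P}(L,\beta) - \xi$.

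The principal obstacle is obtaining a uniform (in $L$ and $\beta$) upper bound on $\alpha^\star + e^\varepsilon \sum_{(\varphi,A)} \psi^\star(\varphi,A)$, which is needed to decouple the choice of $\Lambda^\prime$ from that of $k^\prime$. One route is through basic-feasible-solution analysis for the LP dual: since $(\alpha^\star, \psi^\star)$ can be taken as a vertex of the dual feasible polytope, $\psi^\star$ has at most $2L+2$ nonzero components, each bounded by a suitable basis-matrix-inverse norm combined with a uniform upper bound on $\mathrm{P}(L,\beta)$ obtained from the discretized Laplace mechanism (which is feasible in~\ref{L-UB} with objective bounded independently of $L$ and $\beta$). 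A more elegant route would exploit the $\{1, -e^\varepsilon\}$ combinatorial structure of the constraint matrix of~\ref{L-UB} to obtain an explicit, problem-agnostic bound on the dual multipliers. The edge case $\delta = 0$ (pure differential privacy) may require separately bounding the limit from $\delta \downarrow 0$.
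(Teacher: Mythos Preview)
Your overall sandwich strategy and your handling of difference~(i)—the $c_i$ versus $\underline{c}_i$ gap via uniform continuity—is correct and matches what the paper does for one half of its argument. The genuine gap lies in your handling of difference~(ii), and you have correctly flagged it yourself: you need a bound on $\alpha^\star + e^\varepsilon \sum_{(\varphi,A)} \psi^\star(\varphi,A)$ that is \emph{uniform in both $\Lambda$ and $k$}, and you do not establish one. Your suggested routes do not close this. A basic feasible solution of $\tilde{\mathrm{D}}(L,\beta)$ indeed has at most $2L+2$ nonzero components of $\psi^\star$, but the magnitudes are controlled by the inverse of a basis submatrix with entries in $\{0,1,-e^\varepsilon,1-e^\varepsilon\}$; for matrices of this type the inverse entries can grow exponentially in $L = \Lambda k$, so the bound degrades as $k\to\infty$ and cannot be absorbed by a fixed $\Lambda'$. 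The ``combinatorial structure'' remark is not a proof. Note also that from $\theta^\star - \delta\sum\psi^\star = \mathrm{P}(L,\beta)$ and the dual constraints you can only extract a \emph{lower} bound on $\sum\psi^\star$, not the upper bound you need.

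The paper sidesteps this obstacle entirely by never bounding dual multipliers. It introduces an auxiliary primal program $\mathrm{M}(L,\beta)$, which is simultaneously (a) the relaxation of $\mathrm{P}(L,\beta)$ obtained by enlarging the variable domain from $[\pm L]$ to $[\pm(L+\Delta f/\beta)]$, and (b) the LP dual of $\mathrm{D}(L,\beta)$ but with $c_i$ in place of $\underline{c}_i$. The gap $\mathrm{M}-\mathrm{D}$ is then closed exactly as you handle~(i). The gap $\mathrm{P}-\mathrm{M}$ is closed \emph{primally}: by Assumption~\ref{assumptions_c}(b), any optimal solution $p_{\mathrm{M}}^\star$ of $\mathrm{M}$ must place mass less than $\tau$ on the extra indices once $\Lambda$ is large (otherwise its objective would exceed the uniform upper bound on $\mathrm{P}$ furnished by a discretized feasible mechanism). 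Truncating $p_{\mathrm{M}}^\star$ to $[\pm L]$ yields a candidate for $\mathrm{P}(L,\beta)$ that violates the DP constraints by at most $(1+e^\varepsilon)\tau$; taking a convex combination with a solution that is feasible for the \emph{stricter} $(\varepsilon,\delta-\alpha)$-DP problem (for some small $\alpha>0$) restores feasibility while incurring an objective loss of order $\tau$. This primal mixing argument is the idea your proposal is missing, and it is what makes the choice of $\Lambda'$ independent of $k$.

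Your remark about $\delta=0$ is moot: the problem is posed for $\delta>0$ throughout.
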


Intuitively, Theorem~\ref{thm:strong_duality} states that both the discretization granularity $\Delta f / k$ needs to shrink \emph{and} the support $[- \Lambda \cdot \Delta f, \Lambda \cdot (\Lambda + 1/k)\cdot \Delta f)$ of the noise distribution needs to grow for the primal and dual approximations to converge. In particular, keeping the support fixed (which amounts to fixing $\Lambda$ in Theorem~\ref{thm:strong_duality}) and merely increasing $k$ is \emph{not} sufficient for convergence as the dual approximation provides a lower bound for \emph{all} noise distributions (of potentially unbounded support), as opposed to only the noise distributions that share the same support as the primal approximation. {\color{black} We elaborate further on this in our GitHub supplement, where we also derive conditions on $\Delta$ and $k$ that ensure feasibility of the upper bound~\ref{L-UB}.} Note that Theorem~\ref{thm:strong_duality} also implies strong duality of the two infinite-dimensional problems~\ref{problem:integral_main} and~\ref{problem:integral_dual}.

{\color{black}
We close this section by showing that the conditions of Assumption~\ref{assumptions_c} are in a sense minimal requirements to guarantee the correctness of Theorem~\ref{thm:strong_duality}.

\begin{proposition}\label{prop:impossibility}
    There are loss functions that satisfy one condition of Assumption~\ref{assumptions_c} (but not both) and for which $\text{\hyperref[{L-UB}]{$\mathrm{P}(L, \beta)$}}$ and $\text{\hyperref[{lb-L-beta}]{$\mathrm{D}(L, \beta)$}}$ do not converge for any $L \in \mathbb{N}$ and $\beta > 0$.
\end{proposition}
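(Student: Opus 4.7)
The plan is to prove Proposition~\ref{prop:impossibility} by constructing two counterexamples, each violating exactly one condition of Assumption~\ref{assumptions_c}, and showing that in each case the duality gap $\mathrm{P}(L,\beta) - \mathrm{D}(L,\beta)$ remains bounded away from zero uniformly in $L$ and $\beta$.

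For the case where continuity fails but unboundedness holds, I take $c(x) = x^{2} + 1$ for $x \neq 0$ and $c(0) = 0$: this loss is unbounded but exhibits a downward jump at the origin. Because $\{0\}$ is Lebesgue null, the integral average satisfies $c_i(\beta) = \beta^{-1}\int_{I_i(\beta)}(x^{2}+1)\,\mathrm{d}x \geq 1$ for every $i \in \mathbb{Z}$ and every $\beta > 0$, so $\mathrm{P}(L,\beta) \geq 1$ whenever the discretized primal is feasible. The infinite-dimensional primal $\mathrm{P}$, however, exploits the discontinuity: superposing a Dirac atom of mass $\delta$ at the origin (the largest permitted by $(\varepsilon,\delta)$-DP, with cost zero) with a Laplace noise of scale $\Delta f/\varepsilon$ carrying the remaining mass $1-\delta$ yields $\mathrm{P} \leq (1-\delta)(1 + 2(\Delta f/\varepsilon)^{2})$, where the DP-feasibility of this mixture follows from the pure DP of the Laplace component and the atom bound $\delta$. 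Weak duality $\mathrm{D}(L,\beta) \leq \mathrm{D} \leq \mathrm{P}$ then gives $\mathrm{P}(L,\beta) - \mathrm{D}(L,\beta) \geq \delta - 2(1-\delta)(\Delta f/\varepsilon)^{2}$, which exceeds $\delta/2$ uniformly in $L,\beta$ once $\varepsilon$ is sufficiently large relative to $\Delta f$ and $\delta$.

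For the case where unboundedness fails but continuity holds, the argument exploits the asymmetry between the primal's index range $[\pm L]$ and the dual's extended range $[\pm(L + \Delta f/\beta)]$ from Proposition~\ref{prop:lb-L-beta}. For unbounded $c$, the extra dual constraints $\theta \leq \underline{c}_i(\beta) + (\psi\text{-terms})$ at indices $|i| > L$ become vacuous since $\underline{c}_i(\beta) \to \infty$; for bounded $c$, they remain binding and cap $\mathrm{D}(L,\beta)$ near $c_{\infty} := \liminf_{|x|\to\infty} c(x)$. It therefore suffices to construct a continuous bounded $c$ for which the primal value $\mathrm{P}$ and its discretization $\mathrm{P}(L,\beta)$ stay strictly above $c_{\infty}$ uniformly in $L$ and $\beta$; the gap $\mathrm{P}(L,\beta) - c_{\infty}$ is then a uniform positive lower bound on the duality gap, since $\mathrm{D}(L,\beta) \leq c_{\infty}$ while $\mathrm{P}(L,\beta) \geq \mathrm{P} > c_{\infty}$.

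The main obstacle lies in the second construction. Simple bounded candidates such as $c(x) = e^{-x^{2}}$ admit DP-feasible approximations (e.g., uniform distributions on roughly $1/\delta$ adjacent intervals near the support boundary) that shift mass toward $\pm L\beta$, driving $\mathrm{P}(L,\beta)$ down to $c_{\infty} = 0$ and closing the gap. A viable counterexample must therefore combine a bounded $c$ with a geometric arrangement of its low-cost regions such that DP dispersion prevents any bounded-support distribution from reaching $c_{\infty}$—this requires a delicate matching between the DP parameters $(\varepsilon,\delta)$ and the decay structure of $c$ at infinity, for instance by spacing the low-cost regions so that the $1/\delta$ dispersion mandate of DP always straddles a high-cost region.
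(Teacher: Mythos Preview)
Your first counterexample (continuity violation) is correct in spirit but parameter-restricted, and it takes a different route from the paper. You bound $\mathrm{D}(L,\beta)$ from above via weak duality $\mathrm{D}(L,\beta)\le \mathrm{P}$ and then bound $\mathrm{P}$ by a Dirac--Laplace mixture; the resulting gap $\delta-2(1-\delta)(\Delta f/\varepsilon)^2$ is positive only when $\varepsilon^2>2(1-\delta)(\Delta f)^2/\delta$, so your construction fails for small $\varepsilon$. The paper instead fixes \emph{any} $\varepsilon,\delta>0$, makes $c$ constant $(=1)$ on the large plateau $[-2\lceil 1/(2\delta)\rceil\Delta f,\;2\lceil 1/(2\delta)\rceil\Delta f]$ (only growing outside it), and bounds $\mathrm{D}(L,\beta)$ directly by exhibiting a feasible solution to its LP dual: a uniform distribution over that plateau together with a $\delta/2$ atom at $i=0$, yielding $\mathrm{D}(L,\beta)\le 1-\delta/2$ with no dependence on $\varepsilon$. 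Your approach is conceptually simpler (no explicit dual construction), but as written it does not establish the proposition for arbitrary $\varepsilon,\delta$; a fix along your lines would be to replace $x^2+1$ by a loss that is flat on a $\delta$-dependent interval before growing.

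Your second counterexample (unboundedness violation) has a genuine gap: you do not construct $c$ or prove the bound. You correctly diagnose the mechanism---the extra dual constraint indices $i\in[\pm(L+\Delta f/\beta)]\setminus[\pm L]$ force $\theta\le\underline{c}_i(\beta)$ irrespective of $\psi$---and you see that $e^{-x^2}$ fails. But the heuristic ``$\mathrm{D}(L,\beta)\le c_\infty$'' is not valid in general: for arbitrary $L,\beta$ the extended dual window need not hit a low-cost region at all. The paper resolves this with a $\Delta f$-\emph{periodic} continuous loss: $c=1$ on half of each period and dips continuously to $0$ on the other half. Periodicity guarantees that \emph{every} interval of length $\Delta f$ meets a zero of $c$, so some $i^\star$ in the extended dual range has $\underline{c}_{i^\star}(\beta)=0$ for every $L,\beta$, whence $\mathrm{D}(L,\beta)\le 0$. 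On the primal side, the DP constraint with $\varphi=\Delta f/2$ (which swaps the two halves of each period) forces any feasible $\gamma$ to place mass at least $(1-\delta)/(1+e^\varepsilon)$ on the set $\{c=1\}$, giving $\mathrm{P}\ge(1-\delta)/(1+e^\varepsilon)>0$. Your sketch points toward the right structure but lacks both the periodic construction and the two quantitative bounds that close the argument.
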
}

Figure~\ref{figure:steps} summarizes the key results of this section.
\begin{figure}[tb]
\begin{center}
\begin{tikzpicture}
\node (primal) at (0,0) [draw,rounded corners, align=center, minimum width=2cm,minimum height=1.5cm] {\footnotesize{\underline{Primal}}\\\ref{problem:integral_main}};
\node (disc) at (6.5,0) [draw,rounded corners, align=center, minimum width=3cm,minimum height=1.5cm] {\footnotesize{\underline{Discretized Primal}}\\\ref{problem:restricted_main_finite}};
\node (finite) at (13,0) [draw,rounded corners, align=center, minimum width=3cm,minimum height=1.5cm] {\footnotesize{\underline{Finite Primal}}\\\ref{L-UB}};
\draw [->, dashed] (primal.east) --  (disc.west) node[above, midway, align= center] {\footnotesize{\textit{{Restriction to a piecewise}}}\\\footnotesize{\textit{{constant measure \eqref{restriction}}}}} node[below, midway] {\textcolor{blue}{\textit{\footnotesize{Lemma~\ref{prop:finite}}}}} ;
\draw [->, dashed] (disc.east) -- (finite.west) node[above, align = center, midway] {\footnotesize{\textit{{Restriction to a}}} \\ \footnotesize{\textit{{bounded support \eqref{eq:bound_sup}}}}} node[below, align = center, midway] {\textcolor{blue}{\footnotesize{\textit{Proposition~\ref{prop:bounded}}}}};
\node (dual) at (0,-3) [draw,rounded corners, align=center, minimum width=2cm,minimum height=1.5cm] {\footnotesize{\underline{Dual}}\\\ref{problem:integral_dual}};
\node (ddisc) at (6.5,-3) [draw,rounded corners, align=center, minimum width=3cm,minimum height=1.5cm] {\footnotesize{\underline{Discretized Dual}}\\\ref{LB-beta}};
\node (dfinite) at (13,-3) [draw,rounded corners, align=center, minimum width=3cm,minimum height=1.5cm] {\footnotesize{\underline{Finite Dual}}\\\ref{lb-L-beta}};
\draw [<-, dashed] (primal.south) -- (dual.north) node[above, midway, align= center] {\footnotesize{\textit{Weak duality}}} node[below, midway, align= center] {\textcolor{blue}{\footnotesize{\textit{Proposition~\ref{prop:weakdual}}}}};
\draw [<-, dashed] (dual.east) -- (ddisc.west) node[below, midway, align=center] {\textcolor{blue}{\textit{\footnotesize{ Lemma~\ref{prop:lb-beta}}  }}} node[above, midway, align=center] {\textit{\footnotesize{Removal of variables}  }\\\textit{\footnotesize{whose indices do not match}} \\ \textit{\footnotesize{the constraints in~\ref{problem:restricted_main_finite}} }};
\draw [<-, dashed] (ddisc.east) -- (dfinite.west) node[below, midway, align=center] {\textcolor{blue}{\textit{\footnotesize{ Proposition~\ref{prop:lb-L-beta} }}}}  node[above, midway, align=center] {\textit{\footnotesize{Removal of variables}} \\ \textit{\footnotesize{with sufficiently}} \\ \textit{\footnotesize{large indices}}};
\draw [<->, thick, dashed, draw=red] (finite.south) -- (dfinite.north) node[midway, align= center]
{\footnotesize{\textit{{Convergence}}}\\ \color{blue} \footnotesize{\textit{{Theorem~\ref{thm:strong_duality}}}}};
\end{tikzpicture}
\end{center}
    \caption{\textit{Summary of the results in Section~\ref{sec:additive}. Directed arrows $x \dashrightarrow y$ indicate upper bound relationships $x \leq y$, whereas the double arrow confirms the convergence of optimal values as $L$ increases and $\beta$ decreases.}}
    \label{figure:steps}
\end{figure} 

\section{Data Dependent Noise Optimization}\label{sec:nonlinear}

We next study data dependent noise mechanisms whose additive perturbation $\tilde{X}(f(D))$ of the query output $f(D)$ of a database $D \in {\color{black}\mathcal{D}}$ may depend on $f(D)$. Section~\ref{sec:general_form} formalizes this problem, and Section~\ref{sec:general_duality} develops finite-dimensional upper and lower bounding problems.

\subsection{The Data Dependent Optimization Problem}\label{sec:general_form}

We study the problem
\begin{align}\label{generalized_problem}
    \begin{array}{cl}
    \underset{\gamma}{\text{minimize}}  & \displaystyle \int_{\phi \in \Phi} \objweight(\phi) \cdot \left[ \int_{x \in \mathbb{R}} c(x) \diff \gamma(x \mid \phi) \right] \diff \phi \\
    \text{subject to}    &    \gamma \in \Gamma \\
    & \displaystyle \int_{x \in \mathbb{R}}\indicate{f(D) + x \in A} \diff \gamma(x\mid f(D))  \leq e^\varepsilon \cdot \displaystyle \int_{x \in \mathbb{R}} \indicate{f(D') + x \in A} \diff \gamma(x\mid f(D') ) + \delta \\
    & \pushright{\forall (D, D') \in \mathcal{N}, \ \forall A \in \mathcal{F},}
    \end{array}\raisetag{3.5\baselineskip}
\end{align}
where $\Phi := \{ f(D) : D \in {\color{black}\mathcal{D}} \}$ is the set of possible query outputs and $(\mathbb{R}, \mathcal{F})$ is again a measurable space with the Borel $\sigma$-algebra $\mathcal{F}$ on $\mathbb{R}$. Problem~\eqref{generalized_problem} selects a family $\{ \gamma (\cdot \mid \phi) \}_{\phi \in \Phi}$ of conditional probability measures governing the random noise $\tilde{X}(\cdot)$ so as to minimize an iterated expectation of the Borel loss function $c : \mathbb{R} \mapsto \mathbb{R}_+$ satisfying Assumption~\ref{assumptions_c}, subject to satisfaction of $(\varepsilon, \delta)$-DP. The inner expectation in the objective function evaluates the expected loss for a specific query output $\phi \in \Phi$, whereas the outer expectation weighs different query outputs according to the continuous probability density function $\objweight: \Phi \mapsto \mathbb{R}_+$. The domain of $\gamma$ is now defined as
\begin{align*}
    \Gamma := \left\{ \gamma \ : \ \begin{bmatrix} 
    \gamma(\cdot \mid \phi) \in \mathcal{P}_0, & \phi \in \Phi \\
    \phi \mapsto \gamma(A \mid \phi) \text{ measurable}, & A \in \mathcal{F} \\
    \end{bmatrix}
    \right\},
\end{align*}
where $\mathcal{P}_0$ is again the set of probability measures supported on $\mathbb{R}$. The domain $\Gamma$ restricts $\gamma$ to the set of Markov kernels with continuous state $\phi \in \Phi$. The first condition ensures that the inner expectation in the objective function is well-defined, while the second condition ensures that this expectation is measurable in the outer expectation. 

To ensure privacy, the weighting $\objweight$ must not reveal anything about the true database (which is ensured, for example, by choosing the uniform distribution over $\Phi$) or it must itself be kept private. In practice, the application domain often guides the choice of $\objweight$; we discuss this further in Section~\ref{sec:numerical}.
Similar to Assumption~\ref{assumption_varphi}, we impose the following surjectivity requirement on $f$.

\begin{assumptions}[Query Function]\label{assumptions_f}
$\Phi$ is a bounded interval, and for each $D \in {\color{black}\mathcal{D}}$ and $\varphi \in [-\Delta f, \Delta f] \cap (\Phi - f(D))$, we have $f(D') - f(D) = \varphi$ for some $(D, D') \in \mathcal{N}$.
\end{assumptions}

Assumption~\ref{assumptions_f} reduces to Assumption~\ref{assumption_varphi} if we set $\Phi = \mathbb{R}$. To simplify the exposition, however, we assume that $\Phi$ is bounded; otherwise, additional steps would have to be taken to restrict the states $\phi \in \Phi$ of $\gamma$ to bounded intervals without incurring a potentially unbounded loss.

{\color{black}
\begin{observation}\label{observation_primal_nonlinear}
Under Assumption~\ref{assumptions_f}, the \emph{data dependent noise optimization problem} is
\begin{align}\label{problem:integral_nonlinear}\tag{$\mathrm{P'}$}
    \begin{array}{cl}
    \underset{\gamma}{\text{\emph{minimize}}}  & \displaystyle \int_{\phi \in \Phi} \objweight(\phi) \cdot \left[ \int_{x \in \mathbb{R}} c(x)  \diff \gamma(x \mid \phi) \right] \diff \phi \\
    \text{\emph{subject to}}    &    \gamma \in \Gamma \\
    & \displaystyle \int_{x \in \mathbb{R}}\indicate{x \in A} \diff \gamma(x\mid\phi) \leq e^\varepsilon \cdot \displaystyle \int_{x \in \mathbb{R}} \indicate{x + \varphi \in A} \diff \gamma(x\mid \phi + \varphi) + \delta \\
    & \pushright{\quad \forall \phi \in \Phi,\ \forall (\varphi, A) \in \mathcal{E}'(\phi),}
    \end{array}
\end{align}
where $\mathcal{E}'(\phi):= [[-\Delta f, \Delta f] \cap (\Phi-\phi)] \times \mathcal{F}$ for $\phi \in \Phi$.
\end{observation}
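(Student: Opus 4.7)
The plan is to show that problems~\eqref{generalized_problem} and~\ref{problem:integral_nonlinear} have identical feasible sets; since their objective functions coincide verbatim, equivalence of the constraint systems yields equivalence of the problems. The argument mirrors the derivation of problem~\ref{problem:integral_main} from its precursor~\eqref{problem:integral_main:preliminary_formulation} in Observation~\ref{observation_primal}, with the extra bookkeeping that the decision variable is now a Markov kernel indexed by $\phi \in \Phi$ rather than a single measure.

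First, I would apply to each constraint of~\eqref{generalized_problem} the change of variable $A' := A - f(D)$. Since translation by $-f(D)$ is a bijection on the Borel $\sigma$-algebra $\mathcal{F}$, letting $A$ range over $\mathcal{F}$ is equivalent to letting $A'$ range over $\mathcal{F}$. Writing $\phi := f(D)$ and $\varphi := f(D') - f(D)$, the indicator $\indicate{f(D) + x \in A}$ becomes $\indicate{x \in A'}$ and $\indicate{f(D') + x \in A}$ becomes $\indicate{x + \varphi \in A'}$, so the constraint takes the form
\begin{equation*}
    \int_{x \in \mathbb{R}} \indicate{x \in A'} \diff\gamma(x \mid \phi) \leq e^\varepsilon \int_{x \in \mathbb{R}} \indicate{x + \varphi \in A'} \diff\gamma(x \mid \phi + \varphi) + \delta,
\end{equation*}
which depends on $(D, D')$ only through $(\phi, \varphi)$. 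Thus distinct pairs $(D, D') \in \mathcal{N}$ mapping to the same $(\phi, \varphi)$ produce identical inequalities, so the constraint system can be re-indexed by the image set $\mathcal{J} := \{ (\phi, \varphi) : \exists (D, D') \in \mathcal{N}, \ \phi = f(D), \ \varphi = f(D') - f(D) \}$.

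Next, I would identify $\mathcal{J}$ with the index set of~\ref{problem:integral_nonlinear}, namely $\mathcal{J}' := \{ (\phi, \varphi) : \phi \in \Phi,\ \varphi \in [-\Delta f, \Delta f] \cap (\Phi - \phi) \}$. The inclusion $\mathcal{J} \subseteq \mathcal{J}'$ is immediate: for $(D, D') \in \mathcal{N}$ one has $\phi = f(D) \in \Phi$, $|\varphi| \leq \Delta f$ by definition of the global sensitivity, and $\phi + \varphi = f(D') \in \Phi$, so $\varphi \in \Phi - \phi$. The reverse inclusion $\mathcal{J}' \subseteq \mathcal{J}$ is precisely where Assumption~\ref{assumptions_f} is used: given $(\phi, \varphi) \in \mathcal{J}'$, the definition of $\Phi$ supplies some $D \in \mathcal{D}$ with $f(D) = \phi$, and Assumption~\ref{assumptions_f} applied to this $D$ furnishes a neighbor $D'$ with $f(D') - f(D) = \varphi$, hence $(\phi, \varphi) \in \mathcal{J}$. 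Combining both inclusions yields $\mathcal{J} = \mathcal{J}'$, and relabeling $A'$ as $A$ recovers the constraints of~\ref{problem:integral_nonlinear}.

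I do not anticipate a substantial obstacle: the argument is essentially a translation-invariance observation combined with the surjectivity supplied by Assumption~\ref{assumptions_f}. The only subtlety worth checking is measurability, namely that after re-indexing by $(\phi, \varphi)$ the family of constraints is still described by the Markov kernel structure encoded in $\Gamma$, which holds because $\phi \mapsto \gamma(A \mid \phi)$ remains measurable for every $A \in \mathcal{F}$ and the re-indexing only reparametrizes the constraint domain without altering the measure-theoretic setup.
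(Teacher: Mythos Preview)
Your proposal is correct and follows essentially the same approach as the paper: a translation of the event $A$ by $-f(D)$ (using that $\mathcal{F}$ is translation-invariant), followed by a re-indexing of the constraint family via $(\phi,\varphi)=(f(D),f(D')-f(D))$, with Assumption~\ref{assumptions_f} supplying the surjectivity needed for the reverse inclusion $\mathcal{J}'\subseteq\mathcal{J}$. The paper's proof is terser but identical in substance.
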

}

Problem~\ref{problem:integral_nonlinear} is not a generalization of problem~\ref{problem:integral_main} from Section~\ref{sec:additive} \textit{per se}, but \ref{problem:integral_main} would be recovered from \ref{problem:integral_nonlinear} if we introduced the additional requirement that all $\gamma(\cdot | \phi)$, $\phi \in \Phi$, in \ref{problem:integral_nonlinear} must coincide. This argument implies that problem \ref{problem:integral_nonlinear} is guaranteed to be feasible. Note that \ref{problem:integral_nonlinear} does not decompose into separate problems for $\phi \in \Phi$ since the DP constraint couples the conditional measures of neighbouring databases. Similar to~\ref{problem:integral_main}, problem~\ref{problem:integral_nonlinear} contains uncountably many decision variables and constraints and thus appears challenging to solve. {\color{black} The data dependent noise optimization problem~\ref{problem:integral_nonlinear} is more involved than its data independent counterpart~\ref{problem:integral_main}, however, since it optimizes over an uncountable family of noise distributions. In the following, we will re-use the insights of Section~\ref{sec:additive} to reduce the variables and constraints relating to each individual noise distribution, which allows us to focus on the new challenge of uncountably many noise distributions that is unique to the data dependent setting.}

\subsection{A Hierarchy of Converging Bounding Problems}\label{sec:general_duality}

To obtain a tractable upper bound on the data \emph{independent} noise optimization problem, Section~\ref{sec:additive:duality} bounds problem~\ref{problem:integral_main} from above by finite-dimensional approximations that live on a partition $\{ I_i(\beta) \}_{i \in \mathbb{Z}}$ of the possible noise realizations into disjoint intervals $I_i(\beta) = [i\cdot \beta, (i+1)\cdot \beta)$ of length $\beta > 0$. In this section, we retain our earlier assumption that $\Delta f$ is divisible by $\beta$, and we additionally stipulate that $\Phi = \bigcup_{k \in [K]} \Phi_k(\beta)$ with $\Phi_k(\beta) := I_{t+k}(\beta)$ for some $t \in \mathbb{Z}$ and $K \in \mathbb{N}$. This will allow us to partition the set of possible query outputs $\Phi$ in the same way, using a single granularity parameter $\beta$.

To bound problem~\ref{problem:integral_nonlinear} from above, we first restrict the uncountable family $\{ \gamma (\cdot | \phi) \}_{\phi \in \Phi}$ of probability measures in~\ref{problem:integral_nonlinear} to a finite subset that is piecewise constant on the intervals $\Phi_k(\beta)$:
\begin{subequations}\label{additional_general_primal}
\begin{equation}\label{additional_measure}
\gamma(\cdot  \mid  \phi) = \gamma(\cdot \mid \phi') \quad \forall k \in [K], \ \forall \phi, \phi' \in \Phi_k(\beta).
\end{equation}
Under restriction~\eqref{additional_measure}, 
\ref{problem:integral_nonlinear} 
optimizes over finitely many probability measures $\gamma_k$, $k \in [K]$, but it still involves uncountably many decision variables and constraints. To further simplify the problem, we restrict each probability measure in~\ref{problem:integral_nonlinear} to a piecewise constant function via
\begin{equation}\label{restriction_general}
    \gamma_k(A) = \sum_{i \in \mathbb{Z}} p_{k}(i) \cdot \dfrac{|A \cap I_i(\beta)|}{\beta} \quad \forall k \in [K], \ \forall A \in \mathcal{F}
\end{equation}
for a family of probability measures $\{p_k: \mathbb{Z} \mapsto \mathbb{R}_+ \}_{k \in [K]}$ satisfying $\sum_{i \in \mathbb{Z}} p_{k}(i) = 1$ for all $k \in [K]$. We also restrict each probability measure $\gamma_k$ to a bounded support by imposing that 
\begin{equation}\label{eq:general_bound_sup}
    p_{k}(i) = 0 \quad \forall k \in [K],\ \forall i \in \mathbb{Z} \setminus [\pm L]
\end{equation}
for some $L \in \mathbb{N}$. The restrictions~\eqref{restriction_general} and~\eqref{eq:general_bound_sup} are akin to~\eqref{restriction} and~\eqref{eq:bound_sup} from Section~\ref{sec:additive}, respectively.
\end{subequations} 

\begin{proposition}\label{prop:general_bounded}
With the additional constraints~\eqref{additional_general_primal}, \ref{problem:integral_nonlinear} has the same optimal value as
\begin{align}\label{general_L-UB}\tag{$\mathrm{P'}(L,\beta)$}
    \begin{array}{cl}
        \underset{ p }{\mathrm{minimize}} & \displaystyle \beta\cdot  \sum\limits_{k \in [K]} \objweight_k(\beta) \cdot \Big[ \sum\limits_{i \in [\pm L]} c_i(\beta) \cdot p_{k}(i) \Big] \\
         \mathrm{subject\; to} &\displaystyle  p_{k}: [\pm L] \mapsto \mathbb{R}_{+}, \ \sum\limits_{i \in [\pm L]} p_{k}(i) = 1, \ k \in [K] \\
        &\displaystyle \sum\limits_{i \in [\pm L]} \indicate{I_i(\beta) \subseteq A} \cdot p_{k}(i) \leq e^\varepsilon \cdot \sum\limits_{i \in [\pm L]} \indicate{I_i(\beta) + \varphi \subseteq A} \cdot p_{m}(i) + \delta \\
        & \pushright{\forall k,m \in [K], \ \forall (\varphi, A) \in \mathcal{E}'_{km}(L,\beta)}
    \end{array}
\end{align}
where $\objweight_k(\beta) := \beta^{-1} \cdot \int_{\phi \in\Phi_k(\beta)} \objweight(\phi) \diff \phi$ and $\mathcal{E}'_{km}(L, \beta) := [\setvarphi(\beta) \cap \{(m-k-1)\cdot \beta,\ (m-k) \cdot \beta,\ (m-k+1)\cdot \beta\}]\times \mathcal{F}(L, \beta)$ with $\setvarphi(\beta)$ and $\mathcal{F}(L, \beta)$ defined as in Section~\ref{sec:additive}. 
\end{proposition}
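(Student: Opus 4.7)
My plan is to apply the three restrictions in~\eqref{additional_general_primal} sequentially and track how each transforms~\ref{problem:integral_nonlinear} into~\ref{general_L-UB}. The transformation of the objective is essentially mechanical. Under~\eqref{additional_measure}, the outer integral over $\Phi$ splits into $K$ integrals over $\Phi_k(\beta)$, $k \in [K]$, producing a weighted sum of the inner expectations $\int c \diff \gamma_k$ with weights $\int_{\Phi_k(\beta)} \objweight(\phi) \diff \phi = \beta \cdot \objweight_k(\beta)$. Applying~\eqref{restriction_general} and~\eqref{eq:general_bound_sup} afterwards reduces each inner expectation to $\sum_{i \in [\pm L]} c_i(\beta) p_k(i)$ via the same calculation as in Lemma~\ref{prop:finite}, recovering the objective of~\ref{general_L-UB}.

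For the DP constraints, the key step is to translate the continuum of conditions indexed by $(\phi, \varphi, A) \in \Phi \times [-\Delta f, \Delta f] \times \mathcal{F}$ into the finite family indexed by $\mathcal{E}'_{km}(L, \beta)$. I would re-use the arguments of Lemma~\ref{prop:finite} and Proposition~\ref{prop:bounded} to restrict $A$ to a union of basic intervals in $\mathcal{F}(L, \beta)$; under~\eqref{restriction_general} and~\eqref{eq:general_bound_sup}, this restriction is tight because only indicator terms taking values in $\{0,1\}$ are needed to form the sharpest constraint. With this in place, the left-hand side of the DP condition equals $\sum_{i \in [\pm L]} p_k(i) \indicate{I_i(\beta) \subseteq A}$ whenever $\phi \in \Phi_k(\beta)$, while the right-hand side equals $e^\varepsilon \sum_{i \in [\pm L]} p_m(i) \cdot |(A - \varphi) \cap I_i(\beta)| / \beta + \delta$ with $m \in [K]$ determined by $\phi + \varphi \in \Phi_m(\beta)$.

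The crucial technical step is to show that this continuous-$\varphi$ condition is equivalent to the discrete family indexed by $\mathcal{E}'_{km}(L, \beta)$. For a fixed pair $(k, m)$, the set of $\varphi$ admitting some $\phi \in \Phi_k(\beta)$ with $\phi + \varphi \in \Phi_m(\beta)$ is the open interval $((m-k-1)\beta, (m-k+1)\beta)$. A direct computation of the overlap lengths $|(A - \varphi) \cap I_i(\beta)|$ shows that on this range the map $\varphi \mapsto \sum_{i} p_m(i) \cdot |(A-\varphi) \cap I_i(\beta)| / \beta$ is piecewise linear in $\varphi$ with a single breakpoint at $\varphi = (m-k)\beta$ and equals, on each linear piece, a convex combination of its values at the endpoints of that piece. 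Hence the three discrete constraints comprising $\mathcal{E}'_{km}(L, \beta)$ imply the continuous-$\varphi$ constraint for pair $(k, m)$, yielding the $(\Leftarrow)$ direction. For the $(\Rightarrow)$ direction, the discrete constraint at $\varphi = (m-k)\beta$ is obtained by taking any $\phi$ in the interior of $\Phi_k(\beta)$; the two discrete constraints at $\varphi = (m-k\pm 1)\beta$ are obtained as limits of the continuous condition as $\varphi$ tends to the corresponding endpoint of $((m-k-1)\beta, (m-k+1)\beta)$ from within, using the continuity of the overlap lengths in $\varphi$ together with the fact that $\phi$ can be taken arbitrarily close to one of the endpoints of $\Phi_k(\beta)$.

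The main obstacle I anticipate is the joint case analysis over $(\phi, \varphi)$ required for the piecewise linear right-hand side. For any $\varphi$ in the interior of $((m-k-1)\beta, (m-k+1)\beta)$, the conditional measure $\gamma(\cdot \mid \phi + \varphi)$ may equal $\gamma_{m-1}$, $\gamma_m$, or $\gamma_{m+1}$ depending on where $\phi$ lies within $\Phi_k(\beta)$, and one must verify that the three discrete constraints of $\mathcal{E}'_{km}(L,\beta)$, taken together with those of the neighbouring pairs $\mathcal{E}'_{k, m-1}(L,\beta)$ and $\mathcal{E}'_{k, m+1}(L,\beta)$, jointly cover all of these sub-cases via convex combinations. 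The remaining steps --- measurability of $\gamma$ in $\phi$, the reduction to $A \in \mathcal{F}(L, \beta)$, and the bookkeeping at the half-open endpoints of $\Phi_k(\beta)$ --- closely mirror the proofs of Lemma~\ref{prop:finite} and Proposition~\ref{prop:bounded}.
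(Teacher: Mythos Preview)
Your proposal is correct and follows essentially the same three-step route as the paper: restriction~\eqref{additional_measure} reduces to finitely many measures and re-indexes the DP constraints by pairs $(k,m)$ with $\varphi$ ranging over $(\Phi_m(\beta)-\Phi_k(\beta))\cap[-\Delta f,\Delta f]$; restriction~\eqref{restriction_general} then permits the Lemma~\ref{prop:finite}-type reduction of $A$ to $\mathcal{F}(\beta)$ and, via piecewise affinity of the shortfall in $\varphi$, the reduction of $\varphi$ to the three closure points $\{(m-k-1)\beta,(m-k)\beta,(m-k+1)\beta\}\cap\setvarphi(\beta)$; and restriction~\eqref{eq:general_bound_sup} truncates exactly as in Proposition~\ref{prop:bounded}. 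Your anticipated obstacle dissolves once the re-indexing is done cleanly: each original constraint $(\phi,\varphi,A)$ determines a \emph{unique} pair $(k,m)$ via $\phi\in\Phi_k(\beta)$ and $\phi+\varphi\in\Phi_m(\beta)$, so the groups are disjoint and each is covered by its own family $\mathcal{E}'_{km}(L,\beta)$ without any need to borrow constraints from neighbouring pairs.
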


Problem~\ref{general_L-UB} constitutes a large-scale but finite-dimensional linear optimization problem that will serve as a building block to our cutting plane algorithm in Section~\ref{sec:algo}. In terms of optimal values, we have the relationship $\mathrm{P'}(L,\beta) \geq \mathrm{P'}$ for all $L \in \mathbb{N}$ and $\beta > 0$. 

To obtain a lower bound on~\ref{problem:integral_nonlinear},
we first propose the dual problem
\begin{align}\label{problem:dual_integral_nonlinear}\tag{$\mathrm{D'}$}
\mspace{-30mu}
    \begin{array}{cl}
    \underset{\theta, \psi}{\text{maximize}}  & \displaystyle \int_{\phi \in \Phi} \left[\theta(\phi) -  \delta \cdot \int_{(\varphi, A) \in \mathcal{E}'(\phi)}\diff \psi (\varphi, A \mid \phi) \right] \diff \phi \\
    \text{subject to}    & \theta : \Phi \mapsto \mathbb{R} \text{ measurable},\ \psi \in \Psi \\
    & \displaystyle \theta(\phi) \leq \int_{(\varphi, A)\in \mathcal{E}'(\phi)} \indicate{x \in A} \diff \psi( \varphi, A \mid \phi) - e^\varepsilon \cdot \int_{(-\varphi, A)\in \mathcal{E}'(\phi)} \indicate{x+\varphi \in A}\diff \psi(\varphi, A \mid \phi - \varphi) \\
    & \pushright{ + c(x) \cdot \objweight(\phi)  \quad \forall \phi \in \Phi, \ \forall x \in \mathbb{R},}
    \end{array}\raisetag{3.5\baselineskip}
\end{align}
where the dual measure is defined over the set
\begin{align*}
\mspace{-5mu}
    \Psi := \left\{
        \psi \ : \ 
            \begin{bmatrix}
                \psi(\cdot \mid \phi) \in \mathcal{M}_{+}(\mathcal{E}'(\phi)), \ \phi \in \Phi \\
                \exists \psi_0 \in \mathcal{M}(\mathcal{E}) \, : \,
                \psi(\cdot \mid \phi) \ll \psi_0,\ \phi \in \Phi  \text{ with }
                (\varphi, A, \phi) \mapsto \dfrac{\mathrm{d}\psi (\varphi, A \mid \phi)}{\mathrm{d}\psi_0(\varphi, A)} \text{ measurable}
            \end{bmatrix}
    \right\}.
\end{align*}
Here, the first condition resembles the domain that we imposed on $\psi$ in problem~\ref{problem:integral_dual} from Section~\ref{sec:additive:duality}. The second condition is new, and it ensures that the integral on the right-hand side of the DP constraint in~\ref{problem:dual_integral_nonlinear}, which varies with $\varphi$, is well-defined.

We first establish weak duality between the problems~\ref{problem:integral_nonlinear} and~\ref{problem:dual_integral_nonlinear}.

\begin{proposition}[Weak Duality]\label{prop:weakdual_general}
\mbox{For any $\gamma$ feasible in \ref{problem:integral_nonlinear} and $(\theta, \psi)$ feasible in \ref{problem:dual_integral_nonlinear}, we have}
$$ \displaystyle \int_{\phi \in \Phi} \objweight(\phi) \cdot \left[ \int_{x \in \mathbb{R}} c(x)  \diff \gamma(x \mid \phi) \right] \diff \phi \geq \displaystyle \int_{\phi \in \Phi} \left[\theta(\phi) -  \delta \cdot \int_{(\varphi, A) \in \mathcal{E}'(\phi)}\diff \psi (\varphi, A \mid \phi) \right] \diff \phi.$$
\end{proposition}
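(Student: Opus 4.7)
The plan is to establish weak duality via a standard Lagrangian argument tailored to the Markov-kernel structure of $\Psi$. Fix any $\gamma$ feasible in \ref{problem:integral_nonlinear} and any $(\theta, \psi)$ feasible in \ref{problem:dual_integral_nonlinear}. For each $\phi \in \Phi$, I would integrate the pointwise dual constraint over $x \in \mathbb{R}$ against the probability measure $\gamma(\cdot \mid \phi)$. Since $\gamma(\cdot \mid \phi) \in \mathcal{P}_0$, the left-hand side collapses to $\theta(\phi)$, while Tonelli's theorem (applicable because all integrands are built from nonnegative indicators, the nonnegative loss $c$, and the nonnegative measures $\psi(\cdot \mid \phi)$, $\gamma(\cdot \mid \phi)$) lets me exchange integration orders on the right-hand side.

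Next, I would apply the primal DP constraint $\int_x \indicate{x \in A} \diff \gamma(x \mid \phi) \leq e^\varepsilon \int_x \indicate{x + \varphi \in A} \diff \gamma(x \mid \phi + \varphi) + \delta$ to the first summand of the resulting inequality. This splits that summand into the penalty term $\delta \int_{(\varphi, A) \in \mathcal{E}'(\phi)} \diff \psi(\varphi, A \mid \phi)$ and an $e^\varepsilon$-weighted expression that mirrors the third summand of the original constraint but carries $\gamma(\cdot \mid \phi + \varphi)$ in place of $\gamma(\cdot \mid \phi)$ and $\psi(\cdot \mid \phi)$ in place of $\psi(\cdot \mid \phi - \varphi)$. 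Integrating the resulting pointwise inequality over $\phi \in \Phi$ produces exactly the primal objective and the dual penalty term, together with two $e^\varepsilon$-scaled contributions of opposite sign that must cancel.

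The main obstacle is this cancellation. I would perform the translation $\phi' = \phi - \varphi$ inside the term inherited from the dual constraint, so that $\gamma(\cdot \mid \phi)$ becomes $\gamma(\cdot \mid \phi' + \varphi)$ and $\psi(\cdot \mid \phi - \varphi)$ becomes $\psi(\cdot \mid \phi')$. Crucially, the domain $\{(\varphi, A) : (-\varphi, A) \in \mathcal{E}'(\phi)\}$ encodes the conditions $\phi \in \Phi$, $\phi - \varphi \in \Phi$, and $\varphi \in [-\Delta f, \Delta f]$, which after the substitution become $\phi' \in \Phi$, $\phi' + \varphi \in \Phi$, and $\varphi \in [-\Delta f, \Delta f]$, i.e.\ exactly $(\varphi, A) \in \mathcal{E}'(\phi')$. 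The technical subtlety is that interchanging the $\phi$- and $(\varphi, A)$-integrals is nontrivial because the measure $\psi(\cdot \mid \phi - \varphi)$ depends on the outer variable. This is precisely where the second condition in the definition of $\Psi$ is used: passing to the Radon-Nikodym derivative $g(\varphi, A, \phi) = \diff \psi(\varphi, A \mid \phi) / \diff \psi_0(\varphi, A)$, which is jointly measurable by assumption, lets me rewrite the integral against $\psi(\cdot \mid \phi - \varphi)$ as one against the \emph{fixed} reference measure $\psi_0$, apply Tonelli, and exploit the translation invariance of Lebesgue measure in $\phi$.

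After the substitution the two $e^\varepsilon$-scaled terms agree up to sign and cancel, leaving the inequality
\begin{equation*}
\int_{\phi \in \Phi} \theta(\phi) \diff \phi \leq \delta \int_{\phi \in \Phi} \int_{(\varphi, A) \in \mathcal{E}'(\phi)} \diff \psi(\varphi, A \mid \phi) \diff \phi + \int_{\phi \in \Phi} \objweight(\phi) \left[\int_{x \in \mathbb{R}} c(x) \diff \gamma(x \mid \phi)\right] \diff \phi,
\end{equation*}
which rearranges to the claim of Proposition~\ref{prop:weakdual_general}. The only delicate verifications are the Tonelli applications and the measurability hypotheses that justify the change of variables; all of these are supplied directly by the definitions of $\Gamma$ and $\Psi$.
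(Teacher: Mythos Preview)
Your proposal is correct and follows essentially the same Lagrangian route as the paper: integrate the dual constraint against $\gamma(\cdot\mid\phi)$, swap integration orders via Fubini/Tonelli, perform the change of variables $\phi'=\phi-\varphi$ to align the two $e^\varepsilon$-weighted terms, and then invoke the primal DP constraint. The only notable difference is expository: the paper starts from the primal objective and works downward (splitting into three terms and simplifying each), whereas you start from the dual constraint and work upward, but the substance is identical. If anything, your explicit use of the Radon--Nikodym density from the second condition in $\Psi$ to justify the $\phi$--$(\varphi,A)$ interchange is more careful than the paper, which performs the change of variables directly on the iterated integral without spelling out this step.
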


{\color{black} As in the previous section, we defer the discussion of strong duality between \ref{problem:integral_nonlinear} and \ref{problem:dual_integral_nonlinear} to the end of this section.} To construct a tractable lower bound on~\ref{problem:dual_integral_nonlinear}, we impose that
\begin{subequations}\label{general_additional}
\begin{equation}\label{general_dual_23}
\psi(\cdot \mid \phi) = \psi(\cdot \mid \phi')
\;\; \text{and} \;\;
\theta(\phi) = \theta(\phi')
\qquad \forall k \in [K], \ \forall \phi, \phi' \in \Phi_k(\beta),
\end{equation}
that is, $\psi(\cdot | \phi)$ and $\theta(\phi)$ have to be piecewise constant over the intervals $\Phi_k(\beta)$, and we remove all variables $\psi(\varphi, A | \phi)$ indexed by $(\varphi, A) \in \mathcal{E}'(\phi) \setminus \mathcal{E}(L, \beta) $, that is, we impose that
\begin{equation}\label{general_dual_1}
\displaystyle \int_{(\varphi, A) \in \mathcal{E}'(\phi) \setminus \mathcal{E}(L, \beta)} \diff \psi(\varphi, A \mid \phi) = 0 \qquad \forall \phi \in \Phi,
\end{equation}
\end{subequations}
which implies that $\psi (\cdot | \phi)$ vanishes for all $\phi \in \Phi$ on $(\varphi, A)$ with $\varphi$ not divisible by $\beta$ or $A$ outside $\mathcal{F}(L,\beta)$. Constraint~\eqref{general_dual_1} is reminiscent of the constraints~\eqref{constraints:extra} and~\eqref{constraints:extratwo} from Section~\ref{sec:additive}, and it can be interpreted as the dual pendant of the restriction~\eqref{eq:general_bound_sup}.

\begin{proposition}\label{prop:generalized_finite_dual}
With the additional constraints~\eqref{general_additional}, \ref{problem:dual_integral_nonlinear} has the same optimal value as
\begin{align}\label{general_lb-L-beta}\tag{$\mathrm{D'}(L, \beta)$}
    \mspace{-25mu}
    \begin{array}{cl}
    \underset{\theta, \psi}{\mathrm{maximize}} &\displaystyle \beta \cdot \left[ \sum_{k \in [K]} \theta_k - \delta \cdot \sum_{k \in [K]} \sum_{(\varphi, A) \in \mathcal{E}'_{k}(L, \beta)} \psi_k(\varphi,A) \right] \\
    \mathrm{subject\; to}  & \bm{\theta} \in \mathbb{R}^{K}, \ \psi_k : \mathcal{E}'_{k}(L, \beta) \mapsto \mathbb{R}_{+}, \ k \in [K]\\
    & \displaystyle \theta_k \leq \sum_{(\varphi, A)\in \mathcal{E}'_{k}(L, \beta)} \indicate{I_i(\beta) \subseteq A} \cdot \psi_k(\varphi,A) - e^\varepsilon \cdot \sum_{(-\varphi, A) \in \mathcal{E}'_{k}(L, \beta)} \indicate{I_i(\beta) + \varphi \subseteq A} \cdot \psi_{k- \varphi/\beta}(\varphi,A) \\
    & \pushright{+ \underline{c}_i(\beta) \cdot \underline{\objweight}_k(\beta) \quad \forall k \in [K], \  \forall i \in [\pm (L + \Delta f / \beta)],}
    \end{array}\raisetag{3.5\baselineskip}
\end{align}
where $\mathcal{E}'_{k}(L,\beta) := [\setvarphi(\beta) \cap (\Phi - \underline{\Phi}_k(\beta))] \times \mathcal{F}(L, \beta)$ for $\underline{\Phi}_k(\beta):= \inf \{\phi : \phi \in\Phi_k(\beta)\}$, $\underline{c}_i(\beta) := \inf \{ c(x) : x \in I_i(\beta)\}$ and $\underline{\objweight}_k(\beta) := \inf \{ \objweight(\phi) : \phi \in\Phi_k(\beta)\}$, $k\in[K]$ and $i \in \mathbb{Z}$.
\end{proposition}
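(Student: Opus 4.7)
The plan is to mirror the two-step reduction from Section~\ref{sec:additive:duality} (compare Lemma~\ref{prop:lb-beta} and Proposition~\ref{prop:lb-L-beta}), but now adapted to the data dependent setting where the dual measure $\psi(\cdot \mid \phi)$ varies with the conditioning query output $\phi \in \Phi$. The goal is to show that the restrictions~\eqref{general_additional} collapse the uncountable objects in~\ref{problem:dual_integral_nonlinear} onto the bucket structure induced by $\{\Phi_k(\beta)\}_{k \in [K]}$ and $\{I_i(\beta)\}_{i \in \mathbb{Z}}$, leaving exactly the finite-dimensional linear program~\ref{general_lb-L-beta}.

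First, I would use~\eqref{general_dual_23} to define $\theta_k := \theta(\phi)$ and $\psi_k := \psi(\cdot \mid \phi)$ for any (and hence every) $\phi \in \Phi_k(\beta)$, $k \in [K]$. Because each $\Phi_k(\beta)$ has Lebesgue measure $\beta$, the outer integral over $\Phi$ in the objective of~\ref{problem:dual_integral_nonlinear} collapses into $\beta \cdot \sum_{k \in [K]}\bigl[\theta_k - \delta \sum \psi_k\bigr]$, which is the objective of~\ref{general_lb-L-beta} once the domain of $\psi_k$ is specified. The support restriction~\eqref{general_dual_1}, together with $\psi_k \in \mathcal{M}_+(\mathcal{E}'(\phi))$, implies that $\psi_k$ is concentrated on $\mathcal{F}(L,\beta)$ in the $A$-direction and on $\setvarphi(\beta)$ in the $\varphi$-direction; since the common $\psi_k$ must respect $\mathcal{E}'(\phi)$ for every $\phi \in \Phi_k(\beta)$, the worst-case restriction in the $\varphi$-coordinate yields $\setvarphi(\beta) \cap (\Phi - \underline{\Phi}_k(\beta))$, producing precisely the index set $\mathcal{E}'_k(L,\beta)$.

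Next, I would reduce the DP constraint of~\ref{problem:dual_integral_nonlinear} from an uncountable family (indexed by $\phi \in \Phi$ and $x \in \mathbb{R}$) to one indexed by $k \in [K]$ and $i \in [\pm(L + \Delta f/\beta)]$. Substituting the piecewise-constant representatives, the constraint reads
$$\theta_k \leq c(x) \cdot \objweight(\phi) + \sum_{(\varphi, A) \in \mathcal{E}'_k(L, \beta)} \indicate{x \in A} \psi_k(\varphi, A) - e^\varepsilon \sum_{(-\varphi, A) \in \mathcal{E}'_k(L, \beta)} \indicate{x + \varphi \in A} \psi_{k - \varphi/\beta}(\varphi, A)$$
for every $\phi \in \Phi_k(\beta)$ and $x \in \mathbb{R}$. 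The dependence on $\phi$ enters only through the non-negative $\objweight(\phi)$, so taking the infimum over $\phi \in \Phi_k(\beta)$ produces the tightest constraint and replaces $\objweight(\phi)$ by $\underline{\objweight}_k(\beta)$. Moreover, with $A \in \mathcal{F}(L,\beta)$ and $\varphi \in \setvarphi(\beta)$, both indicators are constant on each interval $I_i(\beta)$, so it suffices to enforce the constraint once per interval; the tightest constraint is obtained by replacing $c(x)$ with $\underline{c}_i(\beta)$ and turning $\indicate{x \in A}$ into $\indicate{I_i(\beta) \subseteq A}$ (and analogously for the shifted indicator). To truncate to $i \in [\pm(L + \Delta f/\beta)]$, I would argue that for $|i|$ outside this range both indicators vanish for every $(\varphi, A) \in \mathcal{E}'_k(L,\beta)$, so the constraint degenerates to $\theta_k \leq \underline{c}_i(\beta) \cdot \underline{\objweight}_k(\beta)$, which by the unboundedness condition in Assumption~\ref{assumptions_c}(b) is implied by the boundary constraints at $i = \pm(L + \Delta f/\beta)$.

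The main obstacle will be tracking the shifted bucket index $\psi_{k - \varphi/\beta}$ in the second sum: because the right-hand integral in the DP constraint of~\ref{problem:dual_integral_nonlinear} conditions on $\phi - \varphi$ rather than on $\phi$, the appropriate bucket is indexed by $k - \varphi/\beta$, which is well-defined as an integer precisely because~\eqref{general_dual_1} forces $\varphi \in \setvarphi(\beta)$ (so $\varphi/\beta \in \mathbb{Z}$), and which lies in $[K]$ exactly when $(-\varphi, A) \in \mathcal{E}'_k(L,\beta)$, \emph{i.e.}, when $\phi - \varphi \in \Phi$ for some (hence every) $\phi \in \Phi_k(\beta)$. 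Verifying that this index shift is compatible with the definition $\mathcal{E}'_k(L,\beta) = [\setvarphi(\beta) \cap (\Phi - \underline{\Phi}_k(\beta))] \times \mathcal{F}(L,\beta)$ and that the double counting in the sum over $k$ reproduces the symmetric coupling across neighbouring buckets is the most delicate bookkeeping step; once it is carried out, the resulting problem coincides term-by-term with~\ref{general_lb-L-beta}, establishing the claimed equality of optimal values.
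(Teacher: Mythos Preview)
Your proposal is correct and follows essentially the same two-step reduction as the paper's proof: first use~\eqref{general_dual_23} to collapse $(\theta,\psi)$ onto the bucket representatives $(\theta_k,\psi_k)_{k\in[K]}$, then use~\eqref{general_dual_1} together with the arguments of Lemma~\ref{prop:lb-beta} and Proposition~\ref{prop:lb-L-beta} to discretize in $x$ and truncate in $i$, tracking the index shift $\psi_{k-\varphi/\beta}$ exactly as the paper does. One small point: for the truncation to $i\in[\pm(L+\Delta f/\beta)]$ you invoke Assumption~\ref{assumptions_c}(b), but unboundedness alone does not guarantee that the degenerate constraints $\theta_k\le \underline{c}_i(\beta)\,\underline{\objweight}_k(\beta)$ for $|i|>L+\Delta f/\beta$ are dominated by the \emph{boundary} constraints at $i=\pm(L+\Delta f/\beta)$; the paper instead refers back to the proof of Proposition~\ref{prop:lb-L-beta}, which uses that the retained constraints in the range $[\pm(L+\Delta f/\beta)]\setminus[\pm L]$ carry an additional nonpositive term $-e^\varepsilon\sum(\cdot)$ and appeals to monotonicity of $\underline{c}_i(\beta)$ away from the origin.
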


The large-scale but finite-dimensional linear optimization problem~\ref{general_lb-L-beta} will serve as a building block to our cutting plane algorithm in Section~\ref{sec:algo}. In terms of optimal values, we have the relationship $\mathrm{D'}(L,\beta) \leq \mathrm{D'}$ for all $L \in \mathbb{N}$ and $\beta > 0$. Similar to Section~\ref{sec:additive},~\ref{problem:integral_nonlinear} and~\ref{problem:dual_integral_nonlinear} are sandwiched by the finite-dimensional linear optimization problems~\ref{general_L-UB} and~\ref{general_lb-L-beta}.

We close this section with an analysis of the convergence of the finite-dimensional linear optimization problems~\ref{general_L-UB} and~\ref{general_lb-L-beta}. Similarly to Section~\ref{sec:additive}, recall that by our earlier assumption, $\beta$ divides $\Delta f$, which allows us to equivalently represent $\beta$ as $\Delta f / k$ for some $k \in \mathbb{N}$.

\begin{theorem}\label{thm:generalized_strong_dual}
For any $\xi > 0$, there is $\Lambda' \in \mathbb{N}$ and $k' \in \mathbb{N}$ such that
\begin{equation*}
    \text{\hyperref[{general_L-UB}]{$\mathrm{P}'(\Lambda \cdot k, \Delta f / k)$}} - \text{\hyperref[{general_lb-L-beta}]{$\mathrm{D}'(\Lambda \cdot k, \Delta f /k)$}}\leq \xi
        \qquad \forall \Lambda \geq \Lambda', \ \forall k \geq k'.
\end{equation*}
\end{theorem}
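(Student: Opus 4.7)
The plan is to mimic the strategy of Theorem~\ref{thm:strong_duality}, adapted to the multi-measure coupling of the data dependent problem. Weak duality (Proposition~\ref{prop:weakdual_general}) provides the sandwich $\mathrm{P}'(L,\beta) \geq \mathrm{P}' \geq \mathrm{D}' \geq \mathrm{D}'(L,\beta)$ for every admissible pair $(L,\beta)$, so it suffices to exhibit, for every $\xi > 0$, a primal feasible solution to~\ref{general_L-UB} and a dual feasible solution to~\ref{general_lb-L-beta} whose objective values differ by at most $\xi$ once $L = \Lambda k$ and $\beta = \Delta f/k$ are chosen with $\Lambda$ and $k$ sufficiently large.

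On the primal side I would start from a $\gamma^\star \in \Gamma$ that is feasible in~\ref{problem:integral_nonlinear} and attains an objective within $\xi/3$ of the infimum, pick a representative $\phi_k \in \Phi_k(\beta)$ for each strip, and set $p_k$ to be the discretization of $\gamma^\star(\cdot \mid \phi_k)$ onto the intervals $\{I_i(\beta)\}_{i \in [\pm L]}$ with the removed tail mass reassigned to the boundary bins. Assumption~\ref{assumptions_c}\emph{(a)} bounds the induced per-strip loss discretization error by a quantity vanishing as $\beta \to 0$, and Assumption~\ref{assumptions_c}\emph{(b)} ensures that the tail mass can be made arbitrarily small without affecting feasibility, in the manner already exploited for~\ref{L-UB}. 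The DP constraints of~\ref{problem:integral_nonlinear} evaluated at $\phi = \phi_k$ and shifts $\varphi \in \{(m{-}k{-}1)\beta,\,(m{-}k)\beta,\,(m{-}k{+}1)\beta\}$ then imply the finite DP constraints of~\ref{general_L-UB} up to a perturbation in $e^\varepsilon$ that shrinks with $\beta$, because the family $\{\gamma^\star(\cdot \mid \phi)\}_{\phi \in \Phi}$ can be assumed (after a preliminary mollification) to be uniformly continuous in $\phi$ on the compact interval $\Phi$. A symmetric construction produces a dual feasible $(\bm{\theta}, \{\psi_k\})$ for~\ref{general_lb-L-beta} by averaging a near-optimal $(\theta^\star, \psi^\star)$ within each strip and discarding dual mass outside $\mathcal{E}(L,\beta)$; the existence of the common dominating measure $\psi_0$ built into the definition of $\Psi$ makes this averaging well-defined.

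The main obstacle, absent in the data independent case, is the coupling between adjacent measures $p_k$ and $p_m$ through $\mathcal{E}'_{km}(L,\beta)$, compounded by the fact that the number of strips $K = |\Phi|/\beta$ grows as $\beta$ shrinks. I would handle this by leveraging two properties: first, the coupling is strictly local, since only pairs with $|m-k| \leq 1$ contribute, so the perturbation bookkeeping per strip involves only $O(1)$ constraints and the uniform convergence argument of Theorem~\ref{thm:strong_duality} can be applied strip-by-strip; second, each strip contributes to the objective with weight $\beta \cdot \objweight_k(\beta)$ of order $O(\beta)$, and continuity of $\objweight$ on the bounded interval $\Phi$ guarantees $\sum_{k \in [K]} \beta \cdot \objweight_k(\beta) \to \int_\Phi \objweight(\phi) \diff \phi \leq 1$ as $\beta \to 0$. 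Weighting the per-strip gaps from the first property by the weights from the second, and then choosing $\Lambda'$ and $k'$ large enough to uniformly control the discretization and truncation errors, drives the overall primal-dual gap below $\xi$ and, as a by-product, establishes strong duality between~\ref{problem:integral_nonlinear} and~\ref{problem:dual_integral_nonlinear}.
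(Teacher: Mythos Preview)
Your proposal has a circularity problem. You plan to construct a primal feasible $p$ for $\mathrm{P}'(L,\beta)$ by discretizing a near-optimal $\gamma^\star$ for~\ref{problem:integral_nonlinear}, and a dual feasible $(\bm\theta,\{\psi_k\})$ for $\mathrm{D}'(L,\beta)$ by discretizing a near-optimal $(\theta^\star,\psi^\star)$ for~\ref{problem:dual_integral_nonlinear}. Even if both discretizations succeed, the gap between the two finite objectives will be approximately $(\mathrm{P}' - \mathrm{D}')$ plus two small discretization errors. Weak duality only gives $\mathrm{P}' \geq \mathrm{D}'$; the equality $\mathrm{P}' = \mathrm{D}'$ is precisely the strong duality that you list as a ``by-product'' of the theorem, so it cannot be used as an input. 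Without it, there is no reason for your primal and dual candidates to be close to one another.

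The paper sidesteps this by never touching the infinite-dimensional problems. It introduces a finite auxiliary LP $\mathrm{M}'(L,\beta)$ that is identical to $\mathrm{P}'(L,\beta)$ except that each $p_k$ may place mass on the extra indices $[\pm(L+\Delta f/\beta)]\setminus[\pm L]$. Then (i) $\mathrm{P}'(L,\beta) - \mathrm{M}'(L,\beta) \to 0$ as the support grows, since Assumption~\ref{assumptions_c}\emph{(b)} forces any optimizer to put negligible mass on the extra indices (a direct port of Lemmas~\ref{lemma:bound_tail}--\ref{lemma:convergence1}); and (ii) $\mathrm{M}'(L,\beta) - \mathrm{D}'(L,\beta) \to 0$ as $\beta \to 0$, because the LP dual of $\mathrm{D}'(L,\beta)$ shares the feasible set of $\mathrm{M}'(L,\beta)$ and differs only in the objective coefficients ($\underline{c}_i$ vs.\ $c_i$ and $\underline{\objweight}_k$ vs.\ $\objweight_k$), which converge uniformly by Heine--Cantor and the continuity of $\objweight$. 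Both steps rely only on \emph{finite}-dimensional LP strong duality, which is automatic. The fact that $\sum_k \beta\,\objweight_k(\beta) \leq 1$ is what keeps the aggregate error bounded despite $K \to \infty$---the one point where your intuition matches the actual argument.

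Two secondary issues: the coupling in~\ref{general_L-UB} is not ``strictly local'' in your sense, since $\mathcal{E}'_{km}(L,\beta)$ is nonempty whenever $|m-k| \leq \Delta f/\beta + 1$, so each strip couples to an unbounded number of neighbours as $\beta \to 0$ (what is true is that each pair contributes at most three shift values $\varphi$). And the mollification you invoke to make $\phi \mapsto \gamma^\star(\cdot\mid\phi)$ uniformly continuous is not obviously compatible with the DP constraints, which link $\gamma(\cdot\mid\phi)$ to $\gamma(\cdot\mid\phi+\varphi)$ for every admissible shift; smoothing in $\phi$ can destroy these relations, and you give no mechanism to restore them.
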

Intuitively, Theorem~\ref{thm:generalized_strong_dual} has a similar interpretation as Theorem~\ref{thm:strong_duality}, that is, both the discretization granularity $\Delta f / k$ needs to shrink \emph{and} the support $[- \Lambda \cdot \Delta f, (\Lambda + 1 / k) \cdot \Delta f)$ of the noise distributions needs to grow for the primal and dual approximations to converge. Here, we additionally observe that shrinking the granularity also results in a larger number of distributions to be optimized over. {\color{black} Similar to Theorem~\ref{thm:strong_duality} in the data independent setting, Theorem~\ref{thm:generalized_strong_dual} implies strong duality of the two infinite-dimensional problems \ref{problem:integral_nonlinear} and \ref{problem:dual_integral_nonlinear}.}

\section{Iterative Solution of the Bounding Problems}\label{sec:algo}

The bounding problems of Sections~\ref{sec:additive} and~\ref{sec:nonlinear} employ a uniform partitioning of the noise distribution $\gamma$. Motivated by our numerical experiments, which indicate that (near-)optimal noise distributions tend to combine steep peaks around $0$ with gradually declining tails, Section~\ref{sec:non-identical} extends our bounding problems to non-uniform partitions of $\gamma$. Non-uniform partitions allow us to compute noise distributions with similar expected losses in shorter computation times.

Unfortunately, even under a non-uniform partitioning the bounding problems cannot be solved monolithically with an off-the-shelf solver due to their exponential scaling in the problem parameters. Instead, Section~\ref{sec:cutting-plane} proposes a cutting plane technique that solves those bounding problems iteratively through an increasingly accurate  sequence of relaxations. At the heart of our cutting plane technique is the identification of the constraints that our incumbent solutions violate with the largest margins. While a na\"ive search for these constraints would require an exponential effort, our algorithm scales polynomially in the size of the problem description.


\subsection{Upper and Lower Bounding Problems with Non-Uniform Partitions}\label{sec:non-identical}

Recall that our bounding problems~\ref{L-UB} and~\ref{lb-L-beta} from Section~\ref{sec:additive} partition the support of the noise distribution $\gamma$ into $2L + 1$ uniform intervals $I_i(\beta) = [i \cdot \beta, (i + 1) \cdot \beta)$, $i \in [\pm L]$. Let $\bm{\pi} \in \{ -L, \ldots, L + 1 \}^{N+1}$ be an index vector satisfying
\begin{equation*}
    -L = \pi_1
    \; < \;
    \pi_2
    \; < \;
    \ldots
    \; < \;
    \pi_N
    \; < \;
    \pi_{N+1} = L + 1,
\end{equation*}
and let $\Pi_j(\beta) = [\pi_j \cdot \beta, \pi_{j+1}\cdot \beta)$, $j \in [N]$, denote the $j$-th interval induced by the consecutive elements of $\bm{\pi}\cdot \beta$. Consider a variant of our upper bounding problem~\ref{L-UB} that enforces equality of all decision variables $p(i)$ and $p(i')$, $i, i' \in [\pm L]$, that satisfy $\pi_j \leq i, i' < \pi_{j+1}$ for some $j \in [N]$. The revised upper bounding problem is equivalent to
\begin{align}\label{new_L-UB}\tag{$\mathrm{P}(\bm{\pi}, \beta)$}
\mspace{-30mu}
    \begin{array}{cll}
        \underset{p}{\mathrm{minimize}} & \displaystyle \sum\limits_{j \in [N]} c_j(\bm{\pi}, \beta) \cdot p(j)  \\
        \mathrm{subject\; to} &\displaystyle p : [N] \mapsto \mathbb{R}_+, \ \sum_{j \in [N]} p(j) = 1 \\
        &\displaystyle \sum\limits_{j \in [N]} p(j) \cdot \dfrac{|A \cap \Pi_{j}(\beta)|}{|\Pi_{j}(\beta)|} \leq e^\varepsilon \cdot \sum\limits_{j \in [N]} p(j)\cdot \dfrac{|(A - \varphi ) \cap \Pi_{j}(\beta)|}{|\Pi_{j}(\beta)|} + \delta  \quad \forall (\varphi, A) \in \mathcal{E}(L, \beta),
    \end{array}\raisetag{3.5\baselineskip}
\end{align}
where $c_j(\bm{\pi}, \beta) := |\Pi_{j}(\beta)|^{-1} \cdot \int_{x \in \Pi_{j}(\beta)} c(x) \diff x$. The new upper bound~\ref{new_L-UB} closely resembles the previous bound~\ref{L-UB}, except that the objective coefficients $c_j$ and the intervals $\Pi_{j}(\beta)$ in the DP constraints now reflect the new partitioning of $\gamma$.

In a similar fashion, we propose the revised lower bound
\begin{align}\label{new_LB}\tag{$\mathrm{D}(\bm{\pi}, \beta)$}
\mspace{-25mu}
    \begin{array}{cll}
        \underset{p}{\mathrm{minimize}} & \displaystyle \sum\limits_{j \in \mathfrak{N}} \underline{c}_j(\bm{\pi}, \beta) \cdot p(j)  \\
        \mathrm{subject\; to} &\displaystyle p : \mathfrak{N} \mapsto \mathbb{R}_+, \ \sum_{j \in  \mathfrak{N} } p(j) = 1 \\
        &\displaystyle \sum\limits_{j \in \mathfrak{N}} p(j) \cdot \dfrac{|A \cap \Pi_{j}(\beta)|}{|\Pi_{j}(\beta)|} \leq e^\varepsilon \cdot \sum\limits_{j \in \mathfrak{N}} p(j) \cdot \dfrac{|(A - \varphi ) \cap \Pi_{j}(\beta)|}{|\Pi_{j}(\beta)|} + \delta \quad \forall (\varphi, A) \in \mathcal{E}(L, \beta)
    \end{array}\raisetag{3.5\baselineskip}
\end{align}
where $\underline{c}_j(\bm{\pi}, \beta) := \inf \{c(x) : x \in \Pi_{j}(\beta) \}$ and the index set $\mathfrak{N} = \{ -\frac{\Delta f}{\beta} + 1, \ldots, N + \frac{\Delta f}{\beta} \}$ emerges from the previous index set $[N]$ by padding it at both ends with $\Delta f/ \beta$ additional elements whose associated interval indices are set to $\pi_{1 - t} = \pi_1 - t$ and $\pi_{N + 1 + t} = \pi_{N+1} + t$, $t = 1, \ldots, \Delta f / \beta$, with the intervals $\Pi_j (\beta)$ extended to $j \in \mathfrak{N} \setminus [N]$ in the obvious way. Again, the revised lower bound closely resembles the previous bound~\ref{lb-L-beta}, with minor changes in the objective coefficients $\underline{c}_j$ and the intervals $\Pi_{j}(\beta)$.
The revised bounding problems enjoy convergence properties akin to those from Section~\ref{sec:additive}.

\begin{corollary}\label{corr:nonuniform}
    For any $\beta > 0$ and $\bm{\pi}$ satisfying
    $-L = \pi_1 < \ldots < \pi_{N+1} = L+1$ for some $L \in \mathbb{N}$, we have \ref{new_L-UB} $\geq$ \ref{problem:integral_main} $=$   \ref{problem:integral_dual} $\geq$ \ref{new_LB}. Moreover, for any $\xi > 0$ there is $\Lambda' \in \mathbb{N}$ and $k' \in \mathbb{N}$ such that $\text{\ref{new_L-UB}} - \text{\ref{new_LB}} \leq \xi$ for any $\bm{\pi}$ whose induced partition $\{ \Pi_j (\beta) \}_{j \in [N]}$ is a refinement of a uniform partition $\{ I_i (\Delta f / k) \}_{i \in [\pm \Lambda \cdot k]}$ with $\Lambda \geq \Lambda'$ and $k \geq k'$.
\end{corollary}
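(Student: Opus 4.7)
My plan is to derive Corollary~\ref{corr:nonuniform} by adapting the machinery of Section~\ref{sec:additive:duality} to the non-uniform setting, and then reducing the convergence claim to Theorem~\ref{thm:strong_duality} via a refinement argument. For the upper-bound inequality $\ref{new_L-UB} \geq \ref{problem:integral_main}$, I would associate to each feasible $p$ of \ref{new_L-UB} the piecewise constant probability measure $\gamma_p(A) := \sum_{j \in [N]} p(j) \cdot |A \cap \Pi_j(\beta)|/|\Pi_j(\beta)|$, which is supported on $[-L\beta, (L+1)\beta)$, and verify that $\gamma_p$ is feasible in \ref{problem:integral_main} with matching objective. This combines the arguments behind Lemma~\ref{prop:finite} and Proposition~\ref{prop:bounded}: because each $\pi_j$ is integer-valued, every $\Pi_j(\beta)$ is a union of the uniform sub-intervals $I_i(\beta)$, so that the DP constraint for any Borel $A \in \mathcal{F}$ is implied by the finitely many constraints indexed by $\mathcal{E}(L,\beta)$. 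The middle equality $\ref{problem:integral_main} = \ref{problem:integral_dual}$ is immediate from the strong-duality content of Theorem~\ref{thm:strong_duality}.

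For $\ref{problem:integral_dual} \geq \ref{new_LB}$, my plan is to invoke finite-dimensional LP duality. Since \ref{new_LB} is a finite LP with a bounded optimum, it coincides with its LP dual, whose constraints take the form $\theta \leq \underline{c}_j(\bm{\pi},\beta) + \sum_{(\varphi, A) \in \mathcal{E}(L, \beta)} \psi(\varphi, A)\cdot [|A \cap \Pi_j(\beta)|/|\Pi_j(\beta)| - e^\varepsilon |(A - \varphi) \cap \Pi_j(\beta)|/|\Pi_j(\beta)|]$ for $j \in \mathfrak{N}$. I would embed each feasible $(\theta, \psi)$ into \ref{problem:integral_dual} by zero-extending $\psi$ outside $\mathcal{E}(L, \beta)$ and verifying the pointwise constraint $\theta \leq c(x) + \int \indicate{x \in A} \diff \psi - e^\varepsilon \int \indicate{x + \varphi \in A} \diff \psi$ for every $x \in \mathbb{R}$, using $c(x) \geq \underline{c}_j(\bm{\pi},\beta)$ for $x \in \Pi_j(\beta)$ together with the observation that every $A \in \mathcal{F}(L, \beta)$ is a union of uniform $\beta$-sub-intervals that each lie either entirely inside or entirely outside $\Pi_j(\beta)$. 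I expect this step to be the main technical hurdle, as one must reconcile the averaged ratios $|A \cap \Pi_j(\beta)|/|\Pi_j(\beta)|$ from the LP dual with the pointwise indicator terms of \ref{problem:integral_dual}, following the pattern of Proposition~\ref{prop:lb-L-beta}.

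For the convergence claim, my strategy is to translate the refinement assumption into tighter primal and dual discretizations. Since $\{\Pi_j(\beta)\}_{j \in [N]}$ refines $\{I_i(\Delta f/k)\}_{i \in [\pm \Lambda k]}$, every probability measure piecewise constant on the uniform partition is automatically piecewise constant on the finer $\bm{\pi}$-partition; because $c_i(\Delta f/k)$ is the length-weighted average of the $c_j(\bm{\pi}, \beta)$ on the sub-intervals contained in $I_i(\Delta f/k)$, the lifted solution attains the same objective, while feasibility for the additional DP constraints in \ref{new_L-UB} follows from uniformity of the density within each coarse cell. Hence $\ref{new_L-UB} \leq \ref{L-UB}$ evaluated at $L = \Lambda k$ and $\beta = \Delta f / k$. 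Conversely, aggregating any feasible $p'$ of \ref{new_LB} into $p(i) := \sum_{j : \Pi_j(\beta) \subseteq I_i(\Delta f/k)} p'(j)$ preserves feasibility in \ref{lb-L-beta} at the same parameter values (since the coarse DP constraints are a subset of the fine ones) and lowers the objective, because $\underline{c}_j(\bm{\pi},\beta) \geq \underline{c}_i(\Delta f/k)$ whenever $\Pi_j(\beta) \subseteq I_i(\Delta f/k)$. This yields $\ref{new_LB} \geq \ref{lb-L-beta}$. Combining with Theorem~\ref{thm:strong_duality}, which bounds the uniform gap by $\xi$ whenever $\Lambda \geq \Lambda'$ and $k \geq k'$, immediately gives $\ref{new_L-UB} - \ref{new_LB} \leq \xi$, as required.
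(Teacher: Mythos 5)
Your handling of $\text{\ref{new_L-UB}} \geq \text{\ref{problem:integral_main}}$, of $\text{\ref{problem:integral_main}} = \text{\ref{problem:integral_dual}}$ via Theorem~\ref{thm:strong_duality}, and of the convergence claim matches the paper's argument in substance: the paper likewise observes that \hyperref[L-UB]{$\mathrm{P}(\Lambda\cdot k,\Delta f/k)$} is the variant of \ref{new_L-UB} with added equality constraints, and your aggregation map $p(i) := \sum_{j:\,\Pi_j(\beta)\subseteq I_i(\Delta f/k)} p'(j)$ is precisely the ``similar argument'' the paper invokes for $\text{\ref{new_LB}} \geq \text{\hyperref[lb-L-beta]{$\mathrm{D}(\Lambda\cdot k,\Delta f/k)$}}$; it works because every event in $\mathcal{E}(\Lambda\cdot k,\Delta f/k)$ is a union of the $\Pi_j(\beta)$, so the averaged coefficients collapse to indicators there.

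The genuine gap is your proof of $\text{\ref{problem:integral_dual}} \geq \text{\ref{new_LB}}$. The LP dual of \ref{new_LB} only controls, for each coarse cell, the \emph{cell-averaged} quantity $\sum_{(\varphi,A)}\psi(\varphi,A)\,\bigl[|A\cap\Pi_j(\beta)|/|\Pi_j(\beta)| - e^\varepsilon\,|(A-\varphi)\cap\Pi_j(\beta)|/|\Pi_j(\beta)|\bigr]$, whereas feasibility in \ref{problem:integral_dual} requires the pointwise bound $\theta \leq c(x) + \sum\psi(\varphi,A)\bigl[\indicate{x\in A} - e^\varepsilon\,\indicate{x+\varphi\in A}\bigr]$ at every $x$. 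The observation you lean on---that each fine $\beta$-interval lies entirely inside or outside $\Pi_j(\beta)$---gives $|A\cap\Pi_j(\beta)|/|\Pi_j(\beta)| = \indicate{\Pi_j(\beta)\subseteq A}$ only when $A$ is a union of \emph{coarse} cells; the events $A\in\mathcal{F}(L,\beta)$ are unions of the fine cells and need not align with $\bm{\pi}$, so the indicator pattern varies within $\Pi_j(\beta)$. At a point $x$ lying in a fine subinterval outside $A$ (while $A\cap\Pi_j(\beta)\neq\emptyset$), or with $x+\varphi\in A$ while $(A-\varphi)$ covers only part of $\Pi_j(\beta)$, the pointwise bracket is strictly smaller than its cell average, and $c(x)\geq \underline{c}_j(\bm{\pi},\beta)$ cannot absorb the difference; so the zero-extended $(\theta,\psi)$ need not be feasible in \ref{problem:integral_dual}, and the step you yourself flag as the main hurdle does not go through as sketched. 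The paper does not dualize the displayed minimization at all for this inequality: it stays on the maximization side and obtains \ref{new_LB} as a further restriction of \ref{problem:integral_dual}/\ref{lb-L-beta} in the manner of Lemma~\ref{prop:lb-beta} and Proposition~\ref{prop:lb-L-beta}, so that the lower-bound property is inherited along the already-established chain rather than verified pointwise. To repair your route you would need to supply that dual-side derivation (or an equivalent argument) instead of the pointwise embedding.
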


Appendix~\ref{app:dependent_X} presents analogous bounding problems for the data dependent case.

\subsection{Cutting Plane Algorithm }\label{sec:cutting-plane}

Although the revised upper bounding problem \ref{new_L-UB} only contains $N$ decision variables, it remains challenging to solve monolithically since it comprises $\mathcal{O} (2^L \cdot \Delta f / \beta)$ DP constraints. To address this issue, Algorithm~\ref{alg:worst} solves a sequence of relaxations of \ref{new_L-UB} that only involve those constraints that are active at incumbent solutions. In particular, every iteration of Algorithm~\ref{alg:worst} determines a constraint $(\varphi^\star, A^\star)$ with maximum \textit{privacy shortfall}, which is the quantity that the DP constraints in \ref{new_L-UB} require to be non-positive:
\begin{align}\label{greedy_quant}
\displaystyle V(\varphi, A) &= \sum\limits_{j \in [N]} p(j) \cdot \dfrac{|A \cap \Pi_j(\beta)|}{|\Pi_j(\beta)|} - e^\varepsilon \cdot \sum\limits_{j \in [N]} p(j)\cdot \dfrac{|A \cap (\Pi_j(\beta) + \varphi)|}{|\Pi_j(\beta)|} - \delta
\quad \text{for } (\varphi, A) \in \mathcal{E} (L, \beta).
\end{align}

Since~\ref{new_L-UB} contains finitely many constraints, one readily recognizes that Algorithm~\ref{alg:worst} determines an optimal solution to~\ref{new_L-UB} in finitely many iterations.

\begin{algorithm}[tb]
\SetAlgoLined
\SetKwInOut{Input}{input}
\SetKwInOut{Output}{output}
\Input{$\bm{\pi}$, $\beta$, $\Delta f$} 
\Output{optimal solution $p^\star$ to problem~\ref{new_L-UB}}
Initialize $\mathcal{S} = \emptyset$\;
\Do{$\mathcal{S}$ has been updated}{
Let $p^\star$ be an optimal solution to the relaxation of~\ref{new_L-UB} that only contains the privacy constraints indexed by $(\varphi, A) \in \mathcal{S}    $. \\
Find a constraint $(\varphi^\star, A^\star)$ with maximum privacy shortfall under the incumbent solution $p^\star$.\\
\lIf{constraint $(\varphi^\star, A^\star)$ has positive privacy shortfall}
{update $\mathcal{S} = \mathcal{S} \cup (\varphi^\star, A^\star)$.\DontPrintSemicolon}
}
\Return $p^\star$.
\caption{\textit{Cutting plane algorithm for problem~\ref{new_L-UB}}}
\label{alg:worst}
\end{algorithm}

\begin{observation}\label{obs:cutting_plane_works}
    Algorithm~\ref{alg:worst} terminates after a finite number of iterations with an optimal solution $p^\star$ to problem~\ref{new_L-UB}.
\end{observation}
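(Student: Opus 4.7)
The plan is straightforward once we exploit the discrete structure of~\ref{new_L-UB}. I would verify two points: (i) the set $\mathcal{S}$ grows strictly monotonically across iterations, and (ii) the index set $\mathcal{E}(L,\beta)$ is finite, so the strict growth can continue only finitely often; at termination the incumbent $p^\star$ must then be both feasible and cost-minimizing in~\ref{new_L-UB}.

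First, I would observe that $\mathcal{E}(L,\beta) = \setvarphi(\beta)\times\mathcal{F}(L,\beta)$ has cardinality $(2\Delta f/\beta + 1)\cdot 2^{2L+1}$, since $\setvarphi(\beta)$ consists of the $2\Delta f/\beta + 1$ points $-\Delta f, -\Delta f+\beta, \ldots, \Delta f$ and $\mathcal{F}(L,\beta)$ is indexed by the power set of $[\pm L]$. So \ref{new_L-UB} is a finite-dimensional linear program with finitely many constraints. Next, I would argue that whenever Algorithm~\ref{alg:worst} performs a further iteration, the identified pair $(\varphi^\star, A^\star)$ has strictly positive privacy shortfall under $p^\star$. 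Since $p^\star$ solves the current relaxation and hence satisfies every constraint already indexed by $\mathcal{S}$ (each of which has non-positive shortfall), $(\varphi^\star, A^\star)\notin\mathcal{S}$. Thus $|\mathcal{S}|$ grows by exactly one per iteration, and this can happen at most $|\mathcal{E}(L,\beta)|$ times.

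For optimality at termination, I would note that the stopping condition explicitly requires $V(\varphi,A)\leq 0$ to hold for \emph{every} $(\varphi,A)\in\mathcal{E}(L,\beta)$ under $p^\star$, so $p^\star$ is feasible in the full problem~\ref{new_L-UB}. Since $p^\star$ is also optimal in a relaxation of~\ref{new_L-UB} obtained by dropping some constraints, its objective value is at most the optimal value of~\ref{new_L-UB}. Combining these gives optimality.

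The argument itself poses no real obstacle beyond bookkeeping; all the work in Section~\ref{sec:cutting-plane} is in the non-trivial step of \emph{efficiently} computing a most violated constraint in line~4 of the algorithm, which is a separation problem over an exponentially large family $\mathcal{F}(L,\beta)$. That efficiency question is orthogonal to the correctness claim in Observation~\ref{obs:cutting_plane_works}, whose proof only needs that \emph{some} most violated constraint can be identified in each iteration.
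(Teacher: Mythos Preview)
Your argument is correct and matches the paper's own justification, which is essentially a one-line remark that~\ref{new_L-UB} has finitely many constraints so finite termination with an optimal solution is immediate. You have simply spelled out the standard cutting plane bookkeeping (strict growth of $\mathcal{S}$, feasibility at termination, relaxation bound for optimality) that the paper leaves implicit.
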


A key step in Algorithm~\ref{alg:worst} is the identification of a constraint $(\varphi^\star, A^\star) \in \mathcal{E} (L, \beta)$ with maximum privacy shortfall. A na\"ive implementation of this step would require the evaluation of $\mathcal{O} (2^L \cdot \Delta f / \beta)$ privacy shortfalls in time $\mathcal{O} (N)$ each. Instead, we employ Algorithm~\ref{alg:worst-event} to identify a constraint $(\varphi^\star, A^\star)$ with maximum privacy shortfall in polynomial time.

\begin{proposition}\label{alg:correct}
For a fixed solution $p$, Algorithm~\ref{alg:worst-event} can be implemented so as to return a constraint of \ref{new_L-UB} with maximum privacy shortfall in time $\mathcal{O}(N^3)$.
\end{proposition}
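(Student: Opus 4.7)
The plan is to establish two things: first, a structural decomposition of the privacy shortfall that makes the inner maximization over $A$ tractable, and second, the complexity bookkeeping leading to the $\mathcal{O}(N^3)$ bound.

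\textbf{Decomposition of the shortfall.} Any $A \in \mathcal{F}(L, \beta)$ can be written as $A = \bigcup_{i \in \mathcal{L}} I_i(\beta)$ for some $\mathcal{L} \subseteq [\pm L]$. Since each atomic interval $I_i(\beta)$ has length $\beta$ and, by assumption, $\Delta f$ is divisible by $\beta$ (so $\varphi \in \setvarphi(\beta)$ is also a multiple of $\beta$), both the original cells $\Pi_j(\beta)$ and the shifted cells $\Pi_j(\beta) + \varphi$ have endpoints on the $\beta$-grid. Consequently $I_i(\beta)$ is contained in exactly one cell $\Pi_{j(i)}(\beta)$ of the original partition, and $I_i(\beta) - \varphi = I_{i - \varphi/\beta}(\beta)$ is either contained in exactly one cell of the original partition (when $i - \varphi/\beta \in [\pm L]$) or entirely outside it. Substituting this representation of $A$ into~\eqref{greedy_quant} and exchanging the order of summation yields
\begin{equation*}
V(\varphi, A) \;=\; \sum_{i \in \mathcal{L}} g_i(\varphi) \;-\; \delta,
\end{equation*}
where the marginal contribution $g_i(\varphi)$ of including atom $I_i(\beta)$ depends only on the masses of the (at most) two partition cells $\Pi_{j(i)}(\beta)$ and $\Pi_{j(i - \varphi/\beta)}(\beta)$.

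\textbf{Greedy inner maximization.} Because the terms $g_i(\varphi)$ decouple across $i$, the set $\mathcal{L}^\star(\varphi) := \{i \in [\pm L] : g_i(\varphi) > 0\}$ maximizes $V(\varphi, \cdot)$, giving the optimal event $A^\star(\varphi) = \bigcup_{i \in \mathcal{L}^\star(\varphi)} I_i(\beta)$. Algorithm~\ref{alg:worst-event} thus enumerates $\varphi \in \setvarphi(\beta)$, constructs $A^\star(\varphi)$ greedily, and returns the pair with the largest shortfall, which by construction attains the maximum over $\mathcal{E}(L, \beta)$.

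\textbf{Complexity.} I would precompute the map $i \mapsto j(i)$ for $i \in [\pm L]$ in $\mathcal{O}(L + N)$ time via a single linear sweep through the partition $\bm{\pi}$. For fixed $\varphi$, the function $i \mapsto g_i(\varphi)$ is piecewise constant, with breakpoints only at those $i$ where either $i$ or $i - \varphi/\beta$ crosses a partition boundary; summed over both partitions, there are at most $\mathcal{O}(N)$ such pieces. Aggregating the contribution of each piece (rather than iterating over individual atoms) delivers $A^\star(\varphi)$ and the corresponding shortfall in $\mathcal{O}(N)$ time. Enumerating $\varphi$ over $\setvarphi(\beta)$, whose cardinality $2\Delta f/\beta + 1$ is bounded by $\mathcal{O}(N^2)$ in the parameter regime where the partition resolves the sensitivity range, yields the claimed $\mathcal{O}(N^3)$ total time.

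\textbf{Main obstacle.} The decisive step is the atomic decomposition, which turns an exponential-size event search into a linear-size sign check. The principal subtlety in the complexity analysis is to avoid an unnecessary factor of $L$ (which can exceed $N$) by aggregating the $\mathcal{O}(N)$ piecewise-constant pieces of $g_i(\varphi)$ rather than iterating over all $2L + 1$ atoms; careful handling of the interleaving between the original and shifted partitions is the only nontrivial bookkeeping required.
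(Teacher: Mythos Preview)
Your decomposition and greedy argument for the inner maximization over $A$ are correct and match the paper's Lemma~B.1. The inner-loop complexity analysis (aggregating the $\mathcal{O}(N)$ piecewise-constant pieces of $g_i(\varphi)$ via a sweep rather than iterating over all $2L+1$ atoms) is also essentially the paper's Lemma~B.3.

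The gap is in the outer loop. You assert that Algorithm~\ref{alg:worst-event} enumerates $\varphi \in \setvarphi(\beta)$ and that $|\setvarphi(\beta)| = 2\Delta f/\beta + 1$ is $\mathcal{O}(N^2)$ ``in the parameter regime where the partition resolves the sensitivity range.'' Neither claim holds. First, the algorithm does \emph{not} enumerate all of $\setvarphi(\beta)$; its outer loop ranges only over the set $\{(\pi_j - \pi_{j'})\cdot\beta : j,j' \in [N]\} \cap [-\Delta f, \Delta f]$ together with $\{\pm\Delta f\}$, which has $\mathcal{O}(N^2)$ elements by construction. Second, there is no assumed relationship between $\Delta f/\beta$ and $N$: with a coarse non-uniform partition ($N$ small) and a fine grid ($\beta$ small), $\Delta f/\beta$ can be arbitrarily larger than $N^2$, so your bound would fail precisely where non-uniform partitions are useful.

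What is missing is the argument that restricting to these $\mathcal{O}(N^2)$ values of $\varphi$ is \emph{sufficient for correctness}. The paper's Lemma~B.2 supplies this: for fixed $p$, the function $\varphi \mapsto \max_{A \in \mathcal{F}(L,\beta)} V(\varphi, A)$ is piecewise affine in $\varphi$, with breakpoints only where some $\Pi_j(\beta)$ and $\Pi_{j'}(\beta) + \varphi$ begin or cease to overlap, i.e., at $\varphi = (\pi_j - \pi_{j'})\cdot\beta$. Hence the maximum over $\varphi \in \setvarphi(\beta)$ is attained at one of these $\mathcal{O}(N^2)$ breakpoints (or at the endpoints $\pm\Delta f$). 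Without this piecewise-affine observation, you have neither established correctness of the algorithm as written nor the $\mathcal{O}(N^3)$ bound.
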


To illustrate the intuition behind Algorithm~\ref{alg:worst-event} and Proposition~\ref{alg:correct}, fix any $\varphi \in \setvarphi (\beta)$ in problem~\ref{new_L-UB}. To construct the DP constraint $(\varphi, A) \in \mathcal{E} (L, \beta)$ with maximum privacy shortfall across all $A \in \mathcal{F} (L, \beta)$, we need to decide for each interval $I_i (\beta) = [ i \cdot \beta, (i + 1) \cdot \beta)$, $i \in [\pm L]$, whether or not to include the interval $I_i (\beta)$ in $A$. To this end, we first observe that we can include all intervals $I_i (\beta)$ for which $I_i (\beta) \cap (\Pi_{j'} (\beta) + \varphi) = \emptyset$ for all $j' \in [N]$ since the inclusion of those intervals in $A$ cannot decrease $V (\varphi, A)$. For the intervals $I_i (\beta)$ that satisfy both $I_i (\beta) \subseteq \Pi_j (\beta)$ and $I_i (\beta) \subseteq (\Pi_{j'} (\beta) + \varphi)$ for some $j, j' \in [N]$, on the other hand, we compare the magnitude of the positive coefficient $p (j)$ with that of the negative coefficient $- e^\varepsilon \cdot p (j')$ in $V (\varphi, A)$ to decide whether $I_i (\beta)$ should be included in $A$. Algorithm~\ref{alg:worst-event} and Proposition~\ref{alg:correct} refine this idea by \emph{(i)} iterating only over the intervals $\Pi_j (\beta)$, $j \in [N]$, as opposed to the larger set of intervals $I_i (\beta)$, $i \in [\pm L]$; \emph{(ii)} identifying the relevant interval pairs $(j, j') \in [N]^2$ in linear time $\mathcal{O} (N)$ as opposed to quadratic time $\mathcal{O} (N^2)$; and \emph{(iii)} restricting the search over $\varphi \in \setvarphi (\beta)$ with cardinality $\mathcal{O} (\Delta f / \beta)$ to the smaller set in the outer for-loop with cardinality $\mathcal{O} (N^2)$.


\begin{algorithm}[tb]
\SetAlgoLined
\SetKwInOut{Input}{input}
\SetKwInOut{Output}{output}
\Input{$\bm{\pi}, \ \beta, \ p$, \ $\Delta f$}
\Output{constraint $(\varphi^\star, A^\star)$ with maximum privacy shortfall $V(\varphi^\star, A^\star)$}
Initialize $V^\star = 0$\;
\For {$\varphi \in \left\{ (\pi_{j} - \pi_{j'})\cdot \beta \ : \ (\pi_{j} - \pi_{j'})\cdot \beta \in [- \Delta f, \Delta f] \text{\emph{ and }} j,j' \in [N] \right\} \cup \{-\Delta f, \Delta f \}$}
{
Initialize $A = \emptyset$ and $V = 0$\;
\For{$j = 1,\ldots, N$}
{
Let $A_j = \Pi_j(\beta) \setminus [-L\cdot \beta + \varphi, (L+1)\cdot\beta + \varphi)$ and update
$$
A = A \cup A_j, \quad
    V = V + |A_j|\cdot  \dfrac{p(j)}{|\Pi_{j}(\beta)|}.
$$
    \For{$j' = 1,\ldots, N$}
    {
     \If{$p(j)/|\Pi_{j}(\beta)| > e^\varepsilon \cdot p(j')/|\Pi_{j'}(\beta)|$}{
     \vspace{1mm}
        Let $A_{jj'} = \Pi_j(\beta) \cap (\Pi_{j'}(\beta) + \varphi)$ and update
    $$ 
    A = A \cup A_{jj'}, \quad V = V + |A_{jj'}| \cdot \left[\dfrac{p(j)}{|\Pi_{j}(\beta)|} - e^\varepsilon \cdot \dfrac{p(j')}{|\Pi_{j'}(\beta)|}\right].
    $$
    }
    }
}
\If{$V > V^\star$}{
Update $\varphi^\star = \varphi$,  $A^\star = A$ and $V^\star = V$.
}
}

\Return $(\varphi^\star, A^\star)$ and $V^\star(\varphi, A) = V^\star - \delta$.
\caption{\textit{Identification of a constraint in~\ref{new_L-UB} with maximum privacy shortfall}}
\label{alg:worst-event}
\end{algorithm}

{\color{black} As discussed in Section~\ref{sec:additive:formulation}, the constraints of~\ref{new_L-UB} can be interpreted as semi-infinite robust optimization constraints. Under this lens, Algorithm~\ref{alg:worst} follows the tradition of cutting plane schemes from the robust optimization literature (\emph{cf.}~\citealt{bienstock2008computing}, \citealt{mutapcic2009cutting}, \citealt{bertsimas2016reformulation} and \citealt{patzold2020approximate}). The key technical contribution of this section is to identify for a fixed decision $p$ the maximally violated constraints $(\varphi^\star, A^\star) \in \mathcal{E}(L,\beta)$ without inspecting exponentially many events $A \in \mathcal{F} (L, \beta)$.}

This section focused on the cutting plane algorithm for the upper bounding problem \ref{new_L-UB} in the data independent setting. Algorithms~\ref{alg:worst} and~\ref{alg:worst-event} immediately extend to the lower bounding problem \ref{new_LB} if we replace the objective coefficients $c_j$ with $\underline{c}_j$ and extend the domain of the decisions $p$ from $[N]$ to $\mathfrak{N}$. Both algorithms also readily extend to the data dependent setting (\emph{cf.}~Section~\ref{sec:nonlinear}), where a constraint with maximum privacy shortfall can be identified in time $\mathcal{O}(K^2 \cdot N)$. For the sake of brevity, we relegate the details of that algorithm variant to the GitHub repository accompanying this paper.

\section{Numerical Experiments}\label{sec:numerical}

Our numerical experiments are split into two parts. The first part compares the privacy-accuracy trade-off of our optimization-based approach with popular DP mechanisms from the literature, and it examines the runtime of our cutting plane algorithm as well as the convergence of our bounding problems. Since this part focuses on the quality of our bounds as well as their computation times, we use synthetic instances that give us complete control over all parameters. The second part of our experiments investigates whether the improved accuracy on synthetic instances carries over to a better in-sample and out-of-sample performance in machine learning problems involving standard benchmark instances. To this end, we study differentially private variants of the na\"ive Bayes classifier and a proximal coordinate descent method for logistic regression.

Our optimization algorithms are implemented in C++ and use the GUROBI 9.5.2 LP solver. The machine learning algorithms from the second part are implemented in Julia, and all data is processed using Python 3. All experiments are conducted on Intel Xeon 2.66GHz cluster processors with 16GB memory in single-core and single-thread mode. All sourcecodes and datasets, together with more detailed descriptions of our numerical experiments, are available open-source on the GitHub repository accompanying this work.

\subsection{Synthetic Experiments}

\begin{table}[tb]
        \small
        \setlength{\arrayrulewidth}{0.2mm}
        \begin{center}
        \resizebox{1.0\columnwidth}{!}{%
    \hskip-0.8cm
\renewcommand{\arraystretch}{1.82} 
\begin{tabular}{c |  r | c  c c c c c c c c c|}
\multicolumn{2}{c}{} & \multicolumn{10}{c}{\Large $\delta$} 
 \\
\cline{3-12}
\multicolumn{2}{c|}{} & \textbf{0.005} & \textbf{0.010} &\textbf{0.020} & \textbf{0.050} & \textbf{0.100} & \textbf{0.200} & \textbf{0.250} &\textbf{0.300} &\textbf{0.500} &\textbf{0.750}\\
\cline{2-12}
\multirow{10}{*}{\Large $\varepsilon$} & {\textbf{0.005}} & {\cellcolor[gray]{0.976} 1.87\%}& {\cellcolor[gray]{0.988} 1.21\%}& {\cellcolor[gray]{0.992} 1.20\%}& {\cellcolor[gray]{0.932} 6.10\%}& {\cellcolor[gray]{0.876} 18.53\%}& {\cellcolor[gray]{0.944} 3.55\%}& {\cellcolor[gray]{0.628} 49.59\%}& {\cellcolor[gray]{0.792} 23.18\%}& {\cellcolor[gray]{0.616} 49.96\%}& {\cellcolor[gray]{0.74} 33.34\%}
\\ \hhline{~~|}
\cline{3-12}
& {\textbf{0.010}} & {\cellcolor[gray]{0.952} 2.89\%}& {\cellcolor[gray]{0.984} 1.42\%}& {\cellcolor[gray]{0.924} 8.00\%}& {\cellcolor[gray]{0.968} 2.32\%}& {\cellcolor[gray]{0.892} 17.08\%}& {\cellcolor[gray]{0.948} 3.08\%}& {\cellcolor[gray]{0.636} 49.18\%}& {\cellcolor[gray]{0.8} 23.03\%}& {\cellcolor[gray]{0.62} 49.92\%}& {\cellcolor[gray]{0.736} 33.35\%}
\\
\cline{3-12}
& {\textbf{0.020}} & {\cellcolor[gray]{0.956} 2.84\%}& {\cellcolor[gray]{0.96} 2.83\%}& {\cellcolor[gray]{0.996} 0.56\%}& {\cellcolor[gray]{0.9} 15.09\%}& {\cellcolor[gray]{0.904} 14.19\%}& {\cellcolor[gray]{0.6} 67.44\%}& {\cellcolor[gray]{0.644} 48.39\%}& {\cellcolor[gray]{0.804} 22.74\%}& {\cellcolor[gray]{0.624} 49.83\%}& {\cellcolor[gray]{0.728} 33.37\%}
\\
\cline{3-12}
& {\textbf{0.050}} & {\cellcolor[gray]{0.98} 1.85\%}& {\cellcolor[gray]{0.972} 2.11\%}& {\cellcolor[gray]{0.964} 2.57\%}& {\cellcolor[gray]{0.884} 18.36\%}& {\cellcolor[gray]{0.928} 6.22\%}& {\cellcolor[gray]{0.604} 66.03\%}& {\cellcolor[gray]{0.652} 46.15\%}& {\cellcolor[gray]{0.808} 21.95\%}& {\cellcolor[gray]{0.632} 49.58\%}& {\cellcolor[gray]{0.724} 33.42\%}
\\
\cline{3-12}
& {\textbf{0.100}} & {\cellcolor[gray]{0.936} 4.78\%}& {\cellcolor[gray]{0.94} 4.18\%}& {\cellcolor[gray]{0.92} 8.68\%}& {\cellcolor[gray]{0.828} 19.37\%}& {\cellcolor[gray]{0.744} 33.32\%}& {\cellcolor[gray]{0.608} 63.85\%}& {\cellcolor[gray]{0.66} 42.87\%}& {\cellcolor[gray]{0.82} 20.85\%}& {\cellcolor[gray]{0.64} 49.17\%}& {\cellcolor[gray]{0.716} 33.50\%}
\\
\cline{3-12}
&{\textbf{0.200}} & {\cellcolor[gray]{0.908} 10.43\%}& {\cellcolor[gray]{0.912} 9.60\%}& {\cellcolor[gray]{0.916} 8.90\%}& {\cellcolor[gray]{0.888} 17.29\%}& {\cellcolor[gray]{0.824} 19.82\%}& {\cellcolor[gray]{0.612} 60.17\%}& {\cellcolor[gray]{0.692} 37.70\%}& {\cellcolor[gray]{0.832} 19.36\%}& {\cellcolor[gray]{0.648} 48.33\%}& {\cellcolor[gray]{0.712} 33.63\%}
\\
\cline{3-12}
& {\textbf{0.500}} & {\cellcolor[gray]{0.796} 23.10\%}& {\cellcolor[gray]{0.816} 21.41\%}& {\cellcolor[gray]{0.788} 25.45\%}& {\cellcolor[gray]{0.896} 16.36\%}& {\cellcolor[gray]{0.688} 38.70\%}& {\cellcolor[gray]{0.664} 42.56\%}& {\cellcolor[gray]{0.764} 29.52\%}& {\cellcolor[gray]{0.88} 18.48\%}& {\cellcolor[gray]{0.656} 45.85\%}& {\cellcolor[gray]{0.708} 33.79\%}
\\
\cline{3-12}
& {\textbf{1.000}} & {\cellcolor[gray]{0.68} 40.11\%}& {\cellcolor[gray]{0.684} 39.73\%}& {\cellcolor[gray]{0.676} 40.23\%}& {\cellcolor[gray]{0.672} 40.41\%}& {\cellcolor[gray]{0.772} 28.62\%}& {\cellcolor[gray]{0.748} 32.53\%}& {\cellcolor[gray]{0.78} 26.63\%}& {\cellcolor[gray]{0.812} 21.43\%}& {\cellcolor[gray]{0.668} 41.80\%}& {\cellcolor[gray]{0.732} 33.36\%}
\\
\cline{3-12}
&{\textbf{2.000}} & {\cellcolor[gray]{0.704} 33.96\%}& {\cellcolor[gray]{0.7} 34.18\%}& {\cellcolor[gray]{0.72} 33.45\%}& {\cellcolor[gray]{0.756} 31.63\%}& {\cellcolor[gray]{0.752} 32.46\%}& {\cellcolor[gray]{0.768} 28.70\%}& {\cellcolor[gray]{0.776} 27.12\%}& {\cellcolor[gray]{0.784} 25.67\%}& {\cellcolor[gray]{0.696} 34.35\%}& {\cellcolor[gray]{0.76} 30.51\%}
\\
\cline{3-12}
& {\textbf{5.000}} & {\cellcolor[gray]{0.836} 19.32\%}& {\cellcolor[gray]{0.84} 19.31\%}& {\cellcolor[gray]{0.844} 19.29\%}& {\cellcolor[gray]{0.848} 19.23\%}& {\cellcolor[gray]{0.852} 19.15\%}& {\cellcolor[gray]{0.86} 19.01\%}& {\cellcolor[gray]{0.864} 18.94\%}& {\cellcolor[gray]{0.868} 18.88\%}& {\cellcolor[gray]{0.872} 18.65\%}& {\cellcolor[gray]{0.856} 19.06\%}
\\
\cline{2-12}
\end{tabular}}
\end{center}
~\\[-6mm]
\caption{\textit{Suboptimality of the best performing benchmark mechanisms on synthetic data independent instances with $\Delta f = 1$, $\ell_1$-loss and various combinations of $\varepsilon$ and $\delta$.}}
\label{tab:tradefoff_main}
\end{table}

In our first experiment, we compare the privacy-accuracy trade-off of our optimization-based DP scheme with that of popular benchmark mechanisms from the literature. To this end, we consider the data independent noise optimization problem and select 100 combinations of $\varepsilon \in [0.005, 5]$ and $\delta \in [0.005, 0.75]$. We intentionally chose conservative combinations of $\varepsilon$ and $\delta$ (\emph{cf.}~Table~1 of \citet{zhao2019reviewing}); larger values of $\varepsilon$ yield results that are more favorable to our algorithm. We solve our upper and lower bounding problems \ref{new_L-UB} and \ref{new_LB} with $\bm{\pi}$ and $\beta$ set appropriately so that their relative optimality gaps, measured as $100\% \cdot (\text{\ref{new_L-UB}} - \text{\ref{new_LB}})/\text{\ref{new_LB}}$, are strictly less than $1\%$ (with a median gap of $0.28\%$ across our experiments). We then use the midpoint $O := (\text{\ref{new_L-UB}} - \text{\ref{new_LB}}) / 2$ of both bounds as a substitute to the optimal privacy-accuracy trade-off, and we measure the suboptimality of different benchmark mechanisms from the literature: the analytic Gaussian \citep{balle2018improving} and the truncated Laplace \citep{geng2020tight} mechanisms as upper bounds and the `near optimal lower bound' of \citet[Thm 8]{geng2015optimal} and \cite{geng2020tight} as lower bound. Table~\ref{tab:tradefoff_main} records the optimality gaps of the best performing upper and lower bounds $B_{\mathrm{UB}}$ and $B_{\mathrm{LB}}$ from the literature, which consistently turn out to be the truncated Laplace mechanism and the lower bound of \cite{geng2020tight}. The optimality gaps are reported as $100\% \cdot [(B_{\mathrm{UB}} - B_{\mathrm{LB}}) / \max \{ O, 1 \}]$. A breakdown into separate suboptimalities incurred by the upper and lower bounds is presented in Appendix~\ref{app_numerical}. The table shows that the suboptimality of the benchmark approaches increases with $\varepsilon$ and $\delta$, and the optimality gaps are significant in most of the considered privacy regimes.\footnote{The vigilant reader will observe that the suboptimality is not entirely monotone in $\varepsilon$ and $\delta$. This is due to two factors: the computation of the `near optimal lower bound' involves a non-monotonic parameter rounding, and we approximate the optimal privacy-accuracy trade-off with the midpoint $O$ of our upper and lower bounds.} We note that if we replace $\max \{ O, 1 \}$ with $O$ in the denominator of the optimality gap formula, then the gaps in Table~\ref{tab:tradefoff_main} increase to more than $700\%$ for $(\varepsilon, \delta) = (5, 0.75)$. {\color{black} We observe qualitatively similar results as in Table~\ref{tab:tradefoff_main} also for the $\ell_2$-loss; the corresponding table is relegated to the e-companion.}

\begin{figure}[tb]
    \centering
    \resizebox{1.12\columnwidth}{!}{
    \hskip -1.5cm
    \includegraphics[scale = 0.23]{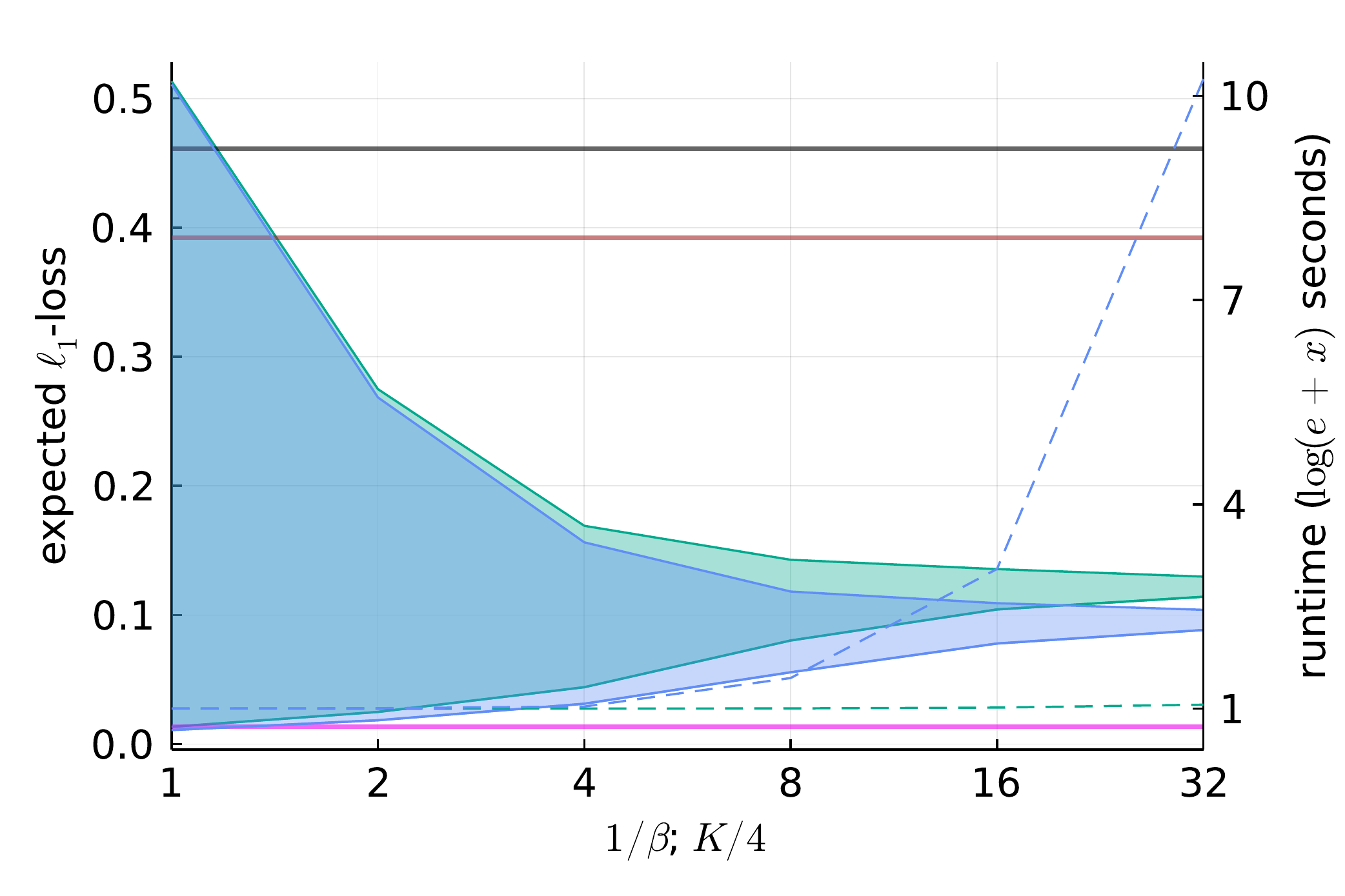}
    \includegraphics[scale = 0.23]{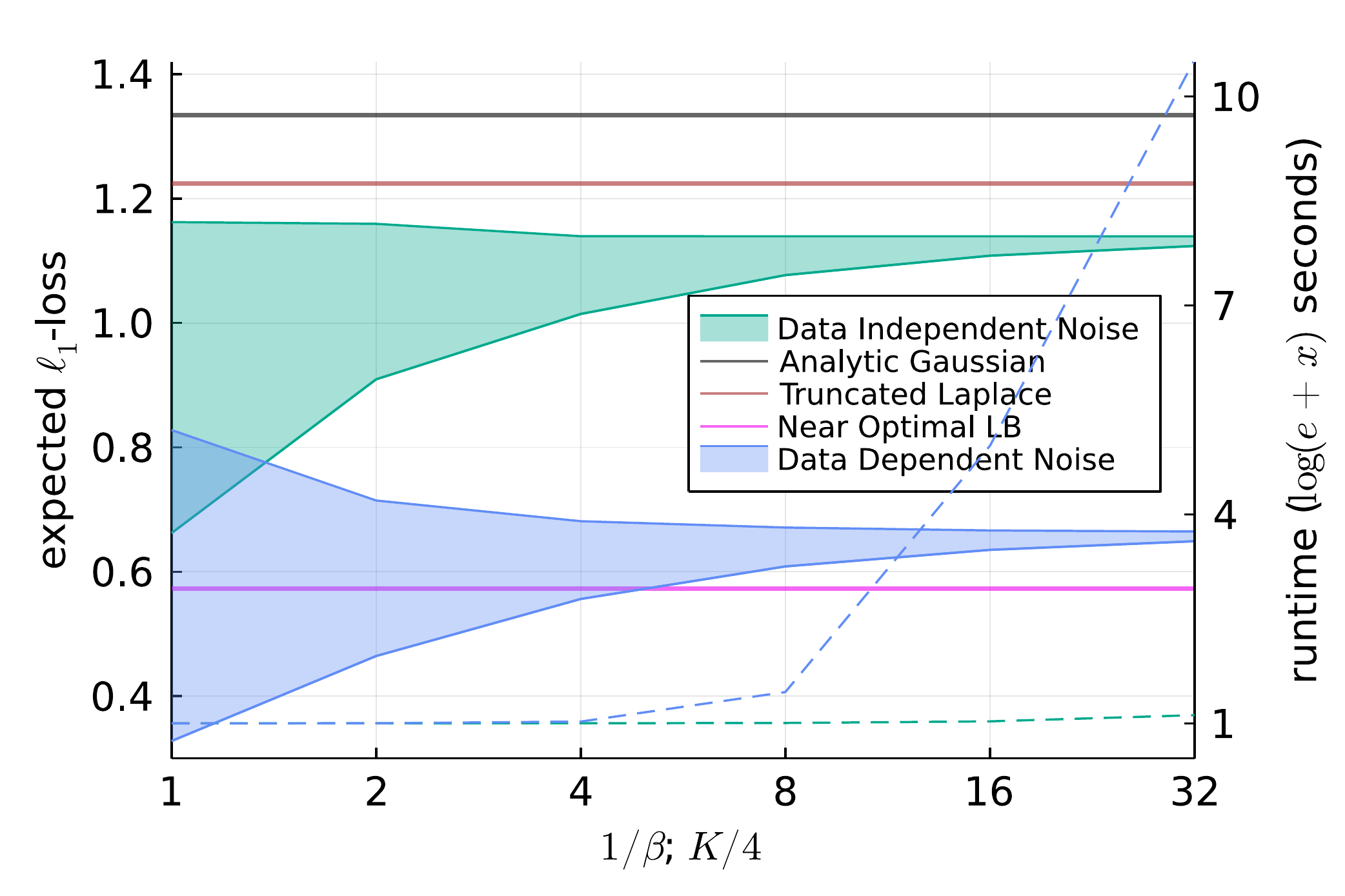}
    \includegraphics[scale = 0.23]{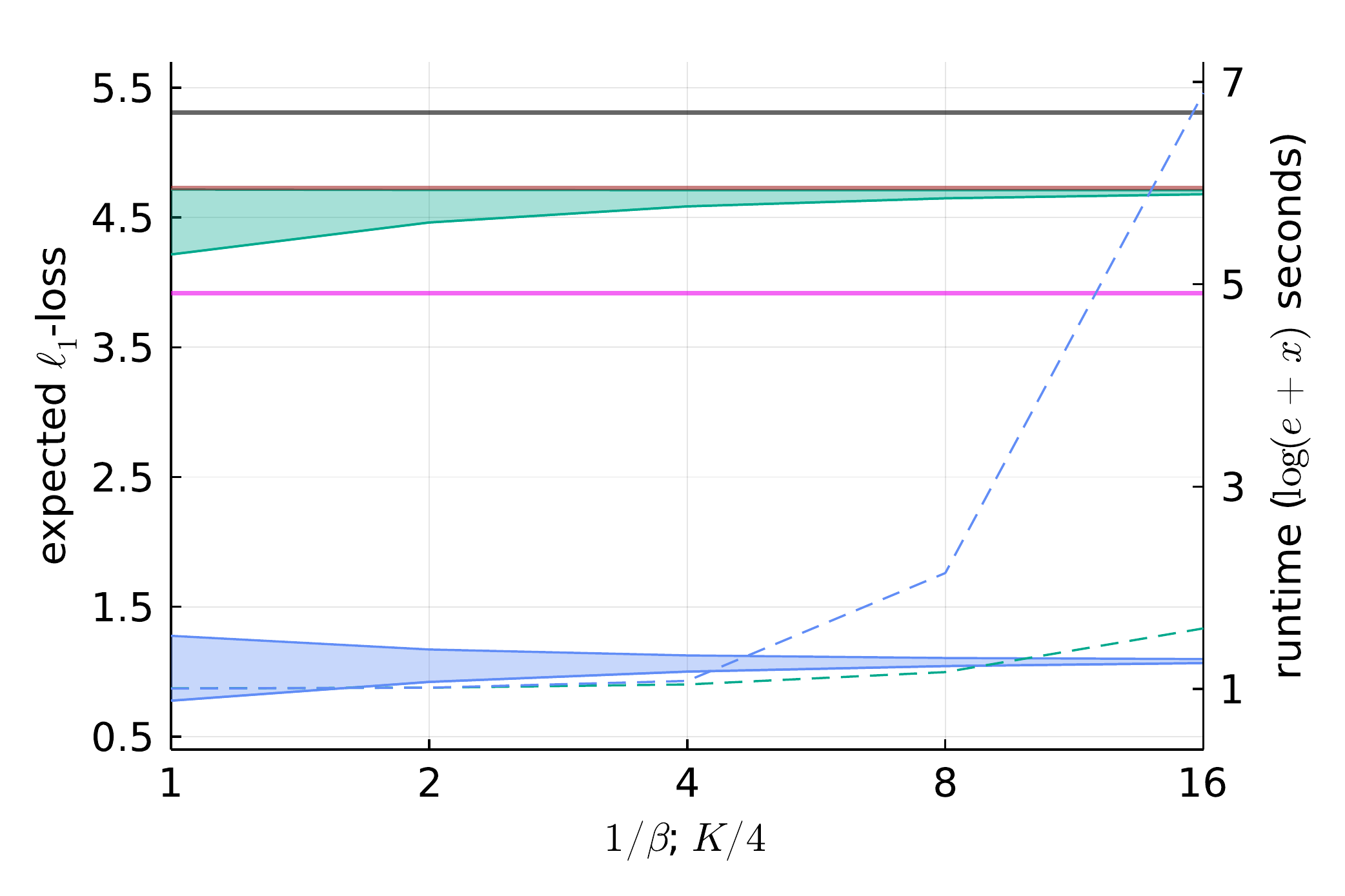}
    }
    \caption{\emph{Comparison of our optimization-based DP schemes with benchmark approaches from the literature on instances with $\Delta f = 2$, $\ell_1$-loss and $| \Phi | = 4$ in low-privacy ($\varepsilon = 5$ and $\delta = 0.25$; left), medium-privacy ($\varepsilon = 1$ and $\delta = 0.2$; middle) and high-privacy ($\varepsilon = 0.2$ and $\delta = 0.05$; right) regimes. All computation times are median values over $10$ repetitions.}}
    \label{fig:ublb}
\end{figure}

Our second experiment investigates the runtime and the convergence of our optimization-based upper and lower bounds in the data independent and data dependent settings. To this end, we consider three privacy regimes: a low-privacy setting with $(\varepsilon, \delta) = (5, 0.25)$, a medium-privacy setting with $(\varepsilon, \delta) = (1, 0.2)$, and a high-privacy setting with $(\varepsilon, \delta) = (0.2, 0.05)$. As in the previous experiment, we compare our optimization-based DP schemes with the analytic Gaussian and the truncated Laplace mechanisms as upper bounds and the `near optimal lower bound' as lower bound. In our DP schemes, we match the support of the truncated Laplace distribution (with the support bounds rounded to nearest integer values) and compute a hierarchy of refined upper and lower bounds by selecting $\beta \in \{ 1, 1/2, \ldots, 1/32 \}$ and---in the data dependent case---$K \in \{ 4, 8, \ldots, 128 \}$. The results are presented in Figure~\ref{fig:ublb}. The figure confirms that the truncated Laplace mechanism is asymptotically optimal among all data independent DP schemes in high-privacy settings. However, the figure also reveals that the truncated Laplace mechanism can be substantially outperformed by our optimization-based data independent noise mechanism in low- and medium-privacy regimes, while it is dominated by our optimization-based data dependent noise mechanism in high-privacy regimes. For low-privacy settings, the difference between our data independent and dependent mechanisms is negligible, but it becomes substantial in medium- and high-privacy regimes, where our data dependent mechanisms significantly outperform the data independent ones. The figure also reveals the computational price to be paid for optimal noise distributions. While the data independent problems were all solved within 2.2 secs, the data dependent problems are more challenging: across all instances, it took up to 5.82 secs (980.44 secs) to reduce the gap between our upper and lower bounds to 10\% (5\%).


\begin{figure}[tb]
    \centering
    \resizebox{1.1\columnwidth}{!}{
    \hskip -1.2cm
    \includegraphics[scale = 0.32]{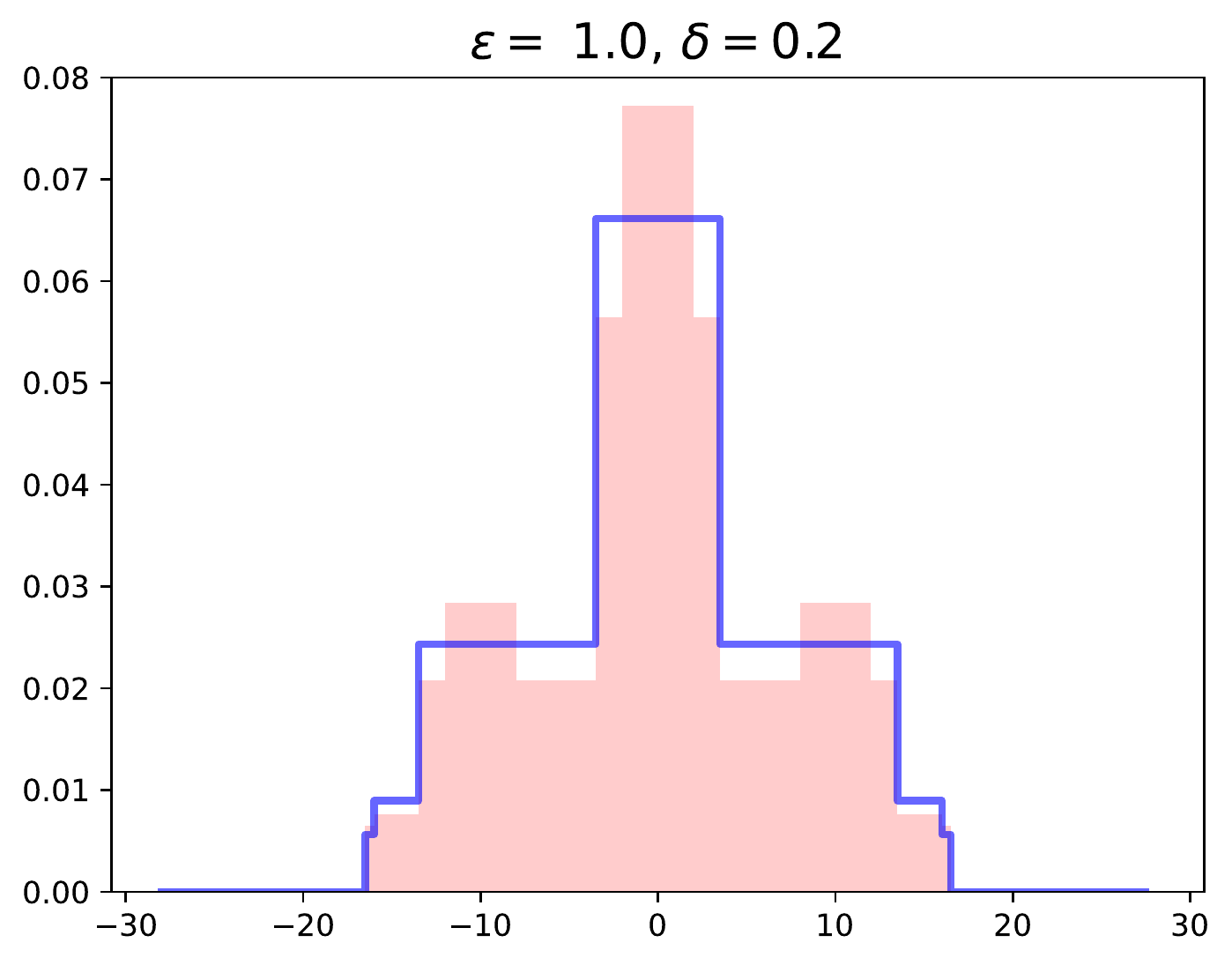}
    \includegraphics[scale = 0.32]{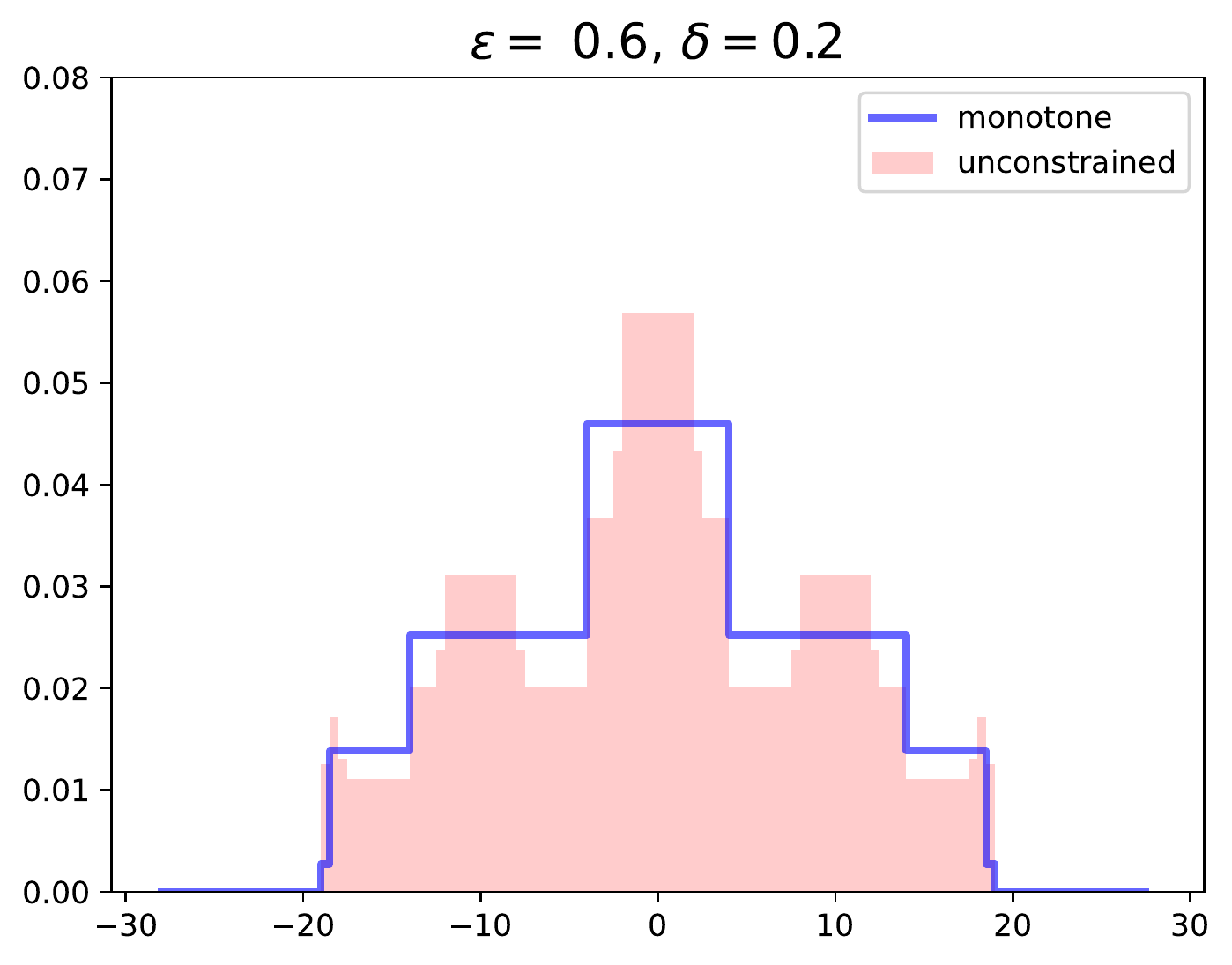}
    \includegraphics[scale = 0.32]{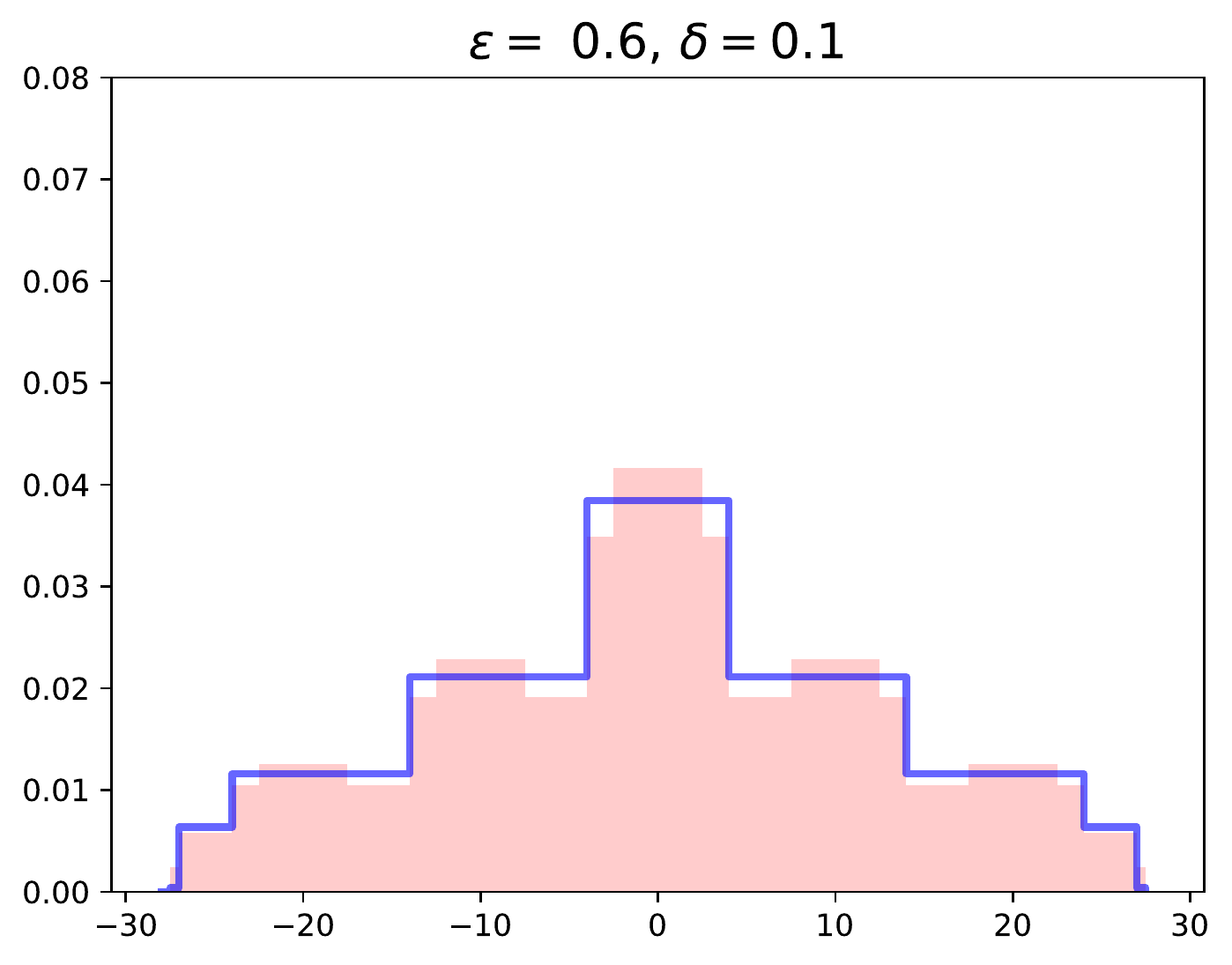}
    }
    \caption{\emph{Optimization-based noise distributions for synthetic data independent instances with $\Delta f = 10$, $\beta = 0.5$, $\ell_1$-loss and various combinations of $\varepsilon$ and $\delta$. The unconstrained distributions are shown in red shading, whereas the best monotone distributions are shown as blue lines.}}
    \label{fig:shape}
\end{figure} 

\begin{figure}[tb]
    \centering
    \resizebox{1.1\columnwidth}{!}{
    \hskip -1.2cm
    \includegraphics[scale = 0.32]{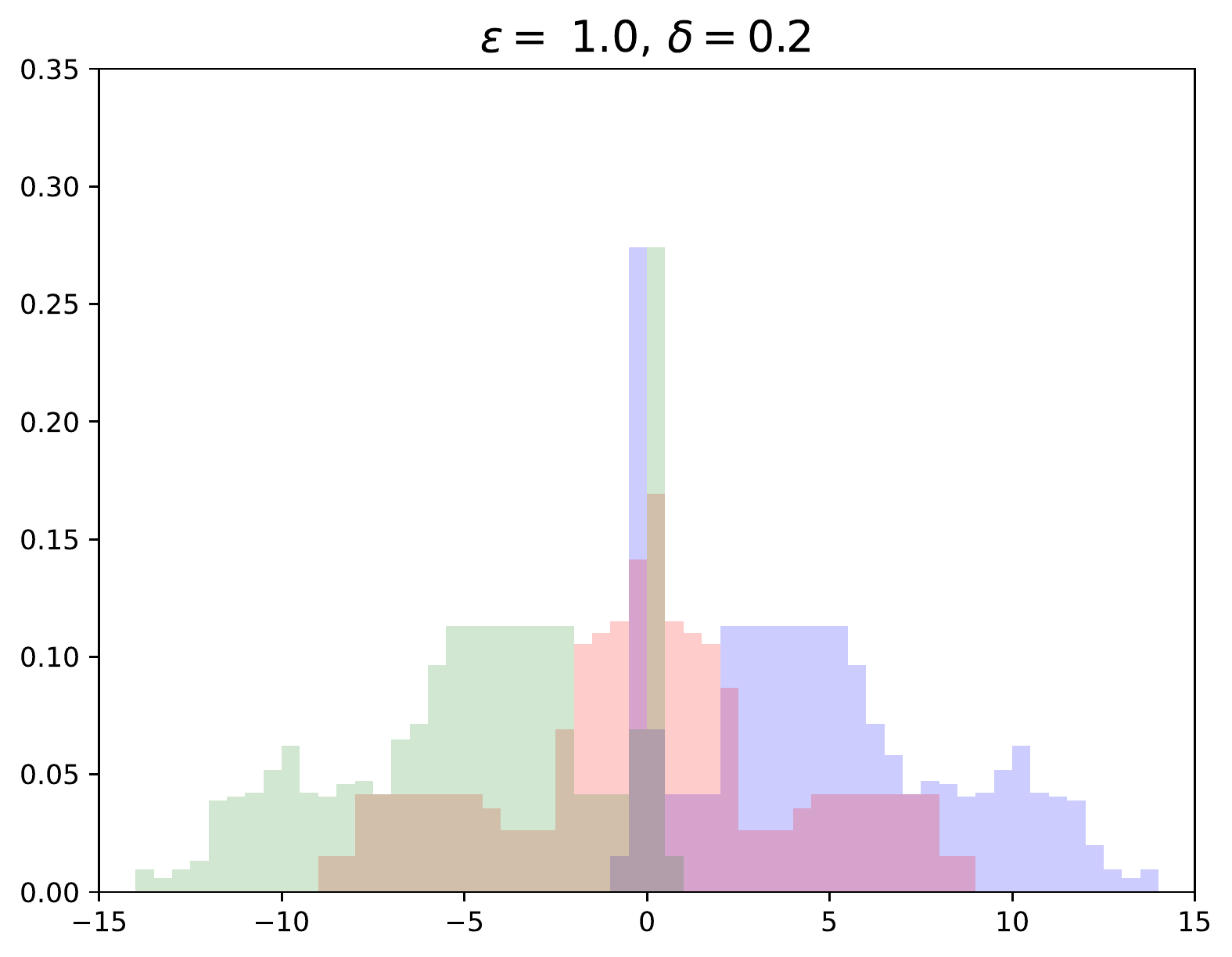}
    \includegraphics[scale = 0.32]{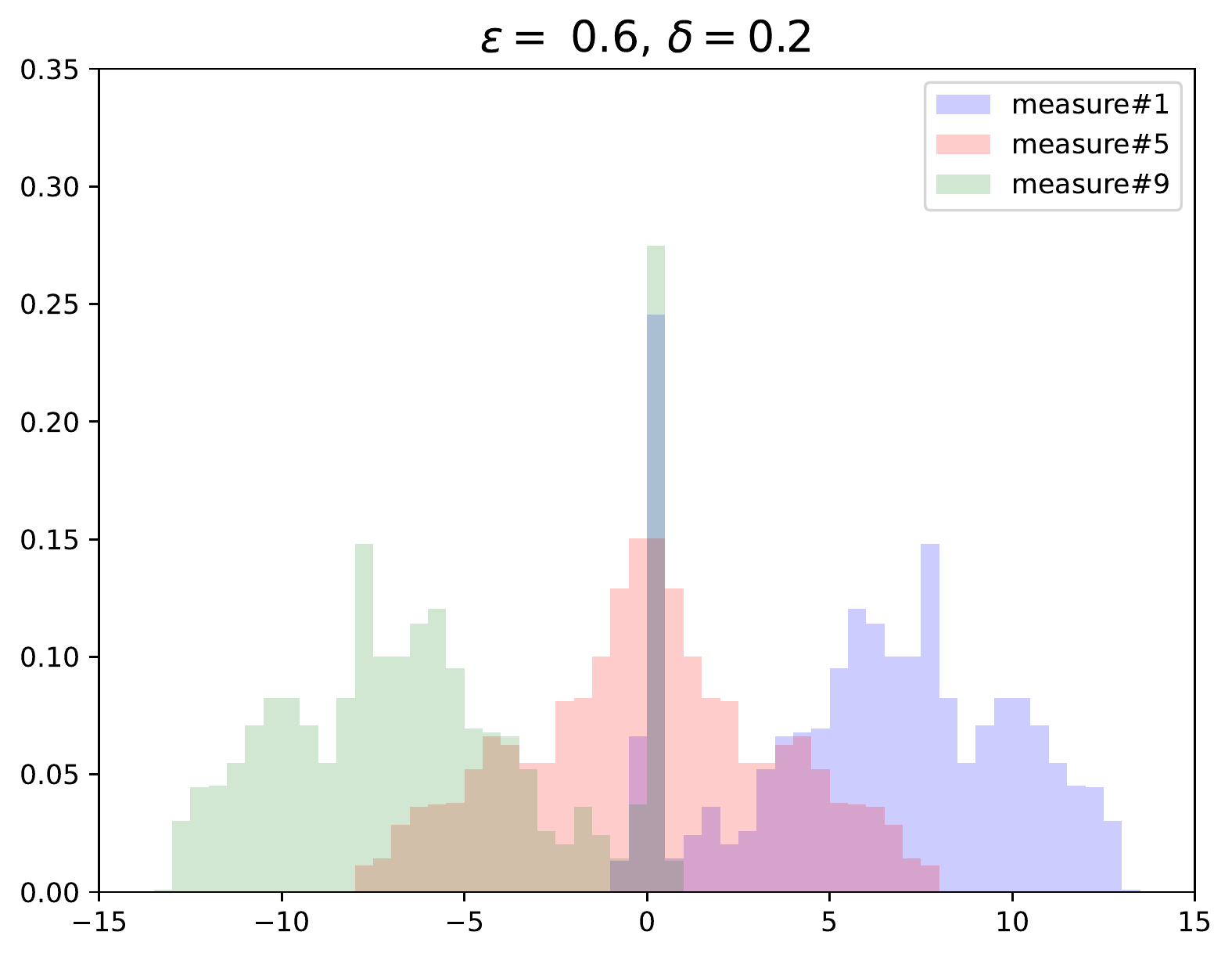}
    \includegraphics[scale = 0.32]{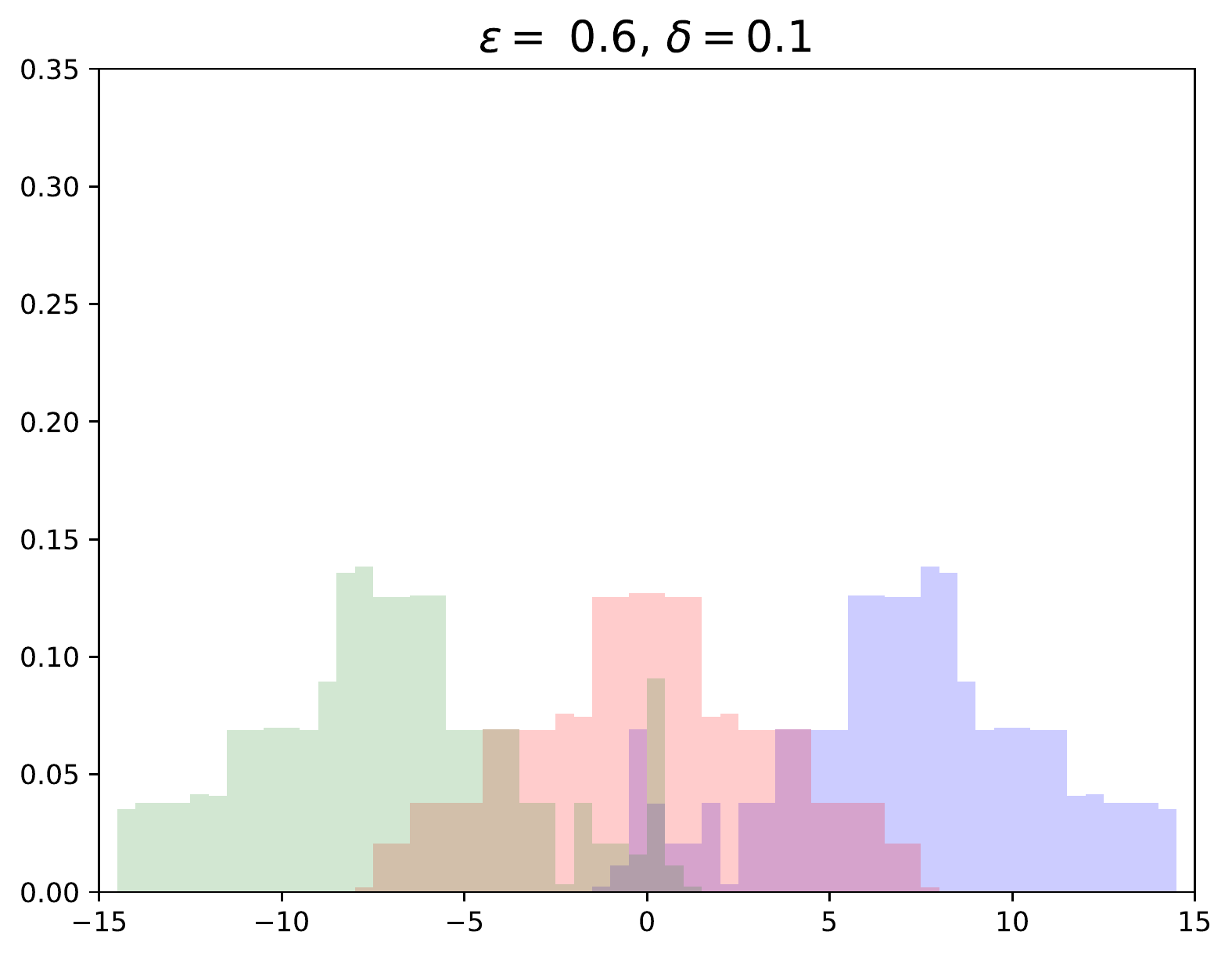}
    }
    \caption{\emph{Optimization-based noise distributions for synthetic data dependent instances with $\Delta f = 10$, $\beta = 0.5$, $\ell_1$-loss and various combinations of $\varepsilon$ and $\delta$. The set $\Phi = [0, 18)$ of query outputs has been partitioned into $9$ intervals of equal length, resulting in $9$ noise distributions.}}
    \label{fig:mult_shape}
\end{figure}

Our final synthetic experiment analyzes the shapes of the noise distributions $\gamma$ obtained by our data independent and data dependent noise optimization problems. For clarity of exposition, we present results for uniform partitions $\{ I_i (\beta) \}_{i \in [\pm L]}$ of the noise realizations as well as the range $\Phi$ of query outputs. We emphasize, however, that better results can normally be obtained by non-uniform partitions that combine finer discretizations around 0 with coarser discretizations of the tails. Figure~\ref{fig:shape} visualizes optimization-based data independent noise distributions for different privacy regimes. We make multiple observations. Firstly, the optimal noise distribution may not be monotone. Indeed, we verified in separate experiments that the lower bounds of the best monotone noise distributions can strictly exceed the upper bounds of the best non-monotone noise distributions (details are available in the GitHub repository). This is noteworthy as all of the benchmark DP mechanisms employ monotone noise distributions, and monotonicity assumptions are commonly made in the literature without scrutinizing their impact on optimality. Secondly, the shapes of the optimization-based noise distributions are non-trivial, and they appear to depend on the problem parameters in a non-trivial manner.  Finally, we note that the shapes of the optimization-based noise distributions differ with the loss function; in particular, asymmetric loss functions (such as the pinball loss) result in asymmetric noise distributions (details are relegated to the GitHub repository). We take our last two observations as an indication of the inherent complexity of optimal noise distributions, which emphasizes the need for optimization-based approaches as opposed to closed-form solutions. Figure~\ref{fig:mult_shape} reports optimization-based data dependent noise distributions for different privacy regimes. In addition to the previous remarks, which continue to apply to the data dependent setting, we additionally observe that the noise distributions corresponding to different intervals of the query output range $\Phi$ differ in non-trivial ways. Again, this confirms our belief that the superior privacy-accuracy trade-offs achieved by optimization-based noise distributions are unlikely to be matched by DP mechanisms relying on closed-form expressions or the tuning of a small number of hyperparameters.

\subsection{Differentially Private Na\"ive Bayes Classification}\label{sec:DPNB}

Given a dataset $(\bm{x}^i, y^i)_{i = 1}^n$ with feature vectors $\bm{x}^i$ comprising numerical features $x^i_v$, $v \in \mathcal{V}_{\text{num}}$, and/or categorical features $x^i_v$, $v \in \mathcal{V}_{\text{cat}}$, as well as categorical outputs $y^i \in \mathcal{C}$, the na\"ive Bayes classifier employs the class-conditional independence assumption to predict the output $c^\star$ corresponding to the feature values $\bm{x} = \bm{\chi}$ of a new sample as the label $c \in \mathcal{C}$ that maximizes
\begin{align*}
  \mathbb{P}[y = c \mid \bm{x}] = \dfrac{\mathbb{P}[y = c] \cdot \prod_{v \in \mathcal{V}_{\text{num}} \cup \mathcal{V}_{\text{cat}}} \mathbb{P}[x_v = \chi_v \mid y = c]}{\mathbb{P}[\bm{x}]}.
\end{align*}
The na\"ive Bayes classifier replaces the unknown probabilities $\mathbb{P}[y = c]$ and $\mathbb{P}[x_v = \chi_v \, | \, y = c]$, $v \in \mathcal{V}_{\text{cat}}$, with their empirical frequencies in the dataset $(\bm{x}^i, y^i)_{i = 1}^n$, and it makes a normality assumption to replace $\mathbb{P}[x_v = \chi_v \, | \, y = c]$, $v \in \mathcal{V}_{\text{num}}$, with an empirical density in the dataset $(\bm{x}^i, y^i)_{i = 1}^n$ using the empirical means $\mu_{ \{ x_v | y = c\}}$ and standard deviations $\sigma_{ \{ x_v | y = c\}}$. We refer to \cite{hastie2009elements} for a detailed description of the na\"ive Bayes classifier.

We follow \cite{vaidya2013differentially} and \cite{dpnb_used} to construct a differentially private na\"ive Bayes classifier. Assuming that the number of training samples $n$ is public knowledge, the only data-related information used by our classifier is the number $n_{\{ y = c \}}$ of training samples with label $c \in \mathcal{C}$, the number $n_{\{ x_v = \chi_v \wedge y = c\}}$ of training samples with label $c \in \mathcal{C}$ whose categorical feature $v \in \mathcal{V}_{\text{cat}}$ attains value $\chi_v$, as well as the conditional empirical means $\mu_{ \{ x_v | y = c\}}$ and standard deviations $\sigma_{ \{ x_v | y = c\}}$ of the numerical features $v \in \mathcal{V}_{\text{num}}$. The post processing property \citep[Proposition 2.1]{dwork2014algorithmic} then allows us to design a differentially private na\"ive Bayes classifier by perturbing these statistics according to their individual sensitivities and using the perturbed statistics to classify new samples. Since $n_{\{ y = c \}}$ and $n_{\{ x_v = \chi_v \wedge y = c\}}$, $v \in \mathcal{V}_{\text{cat}}$, are simple counting queries, they can change by at most $1$ among any two neighboring datasets. For the numerical features $v \in \mathcal{V}_{\text{num}}$, we assume given upper and lower bounds $u_v$ and $l_v$ for the feature values. In this case, the value of $x_v$ can differ by at most $u_v - l_v$ for any two neighboring datasets, which implies that the sensitivity of $\psi_{ \{ x_v | y = c\}}$ and $\sigma_{ \{ x_v | y = c\}}$ is bounded from above by $(u_v - l_v)/n_{\{ y = c \}}$ and $(u_v - l_v)/\sqrt{n_{\{ y = c \}}}$, respectively. We note that $\mu_{ \{ x_v | y = c\}}$ and $\sigma_{ \{ x_v | y = c\}}$ satisfy our surjectivity assumption (\emph{cf.}~Assumption~\ref{assumption_varphi} in Section~\ref{sec:additive} and Assumption~\ref{assumptions_f} in Section~\ref{sec:nonlinear}), whereas the assumption is violated by the count queries $n_{\{ y = c \}}$ and $n_{\{ x_v = \chi_v \wedge y = c\}}$. Thus, our optimization-based approaches provide feasible but potentially overly conservative noise distributions.

\begin{table}[tb]
    $\mspace{-40mu}$
    \resizebox{1.1 \columnwidth}{!}{%
     \begin{tabular}{lc@{$\;\;$}c@{$\;\;$}c@{$\;\;$}c|ccccc|ccccc}
\toprule
\multicolumn{5}{c}{UCI dataset descriptions} & \multicolumn{5}{c}{In-sample errors} & \multicolumn{5}{c}{Out-of-sample errors}  \\
\midrule
Dataset &  $n$ & $|\mathcal{V}_{\mathrm{num}}|$ & $|\mathcal{V}_{\mathrm{cat}}|$ & $|\mathcal{C}|$ &  GN  &  TLN & OPT  & NB  & \textit{Imp} & GN  & TLN & OPT  & NB & \textit{Imp}\\
\midrule
      \ul{post-operative}  &  86  & 1 & 7 & 2 &  36.94\% (33.41\%) & 32.42\% & \textbf{31.43\%} (29.61\%)  & 25.65\% & \textit{14.62\%} & 41.28\% (39.61\%) & 39.07\% & \textbf{38.30\%} (37.03\%) &  35.12\% & \textit{19.53\%} \\
      \ul{adult}  & 45,222 & 5 & 8 & 2 & 38.78\% (21.98\%) & 21.73\% & \textbf{20.49\%} (18.84\%) & 17.37\% & \textit{28.25\%} &  38.82\% (22.04\%)& 21.80\% & \textbf{20.56\%} (18.91\%) & 17.45\% & \textit{28.28\%}  \\
      \ul{breast-cancer} & 683 & 0 & 9 & 2 & 2.45\% (2.23\%) & 2.20\% & \textbf{2.17\%} (2.14\%) & 2.12\% & \textit{36.80\%} & 3.82\% (3.57\%) & 3.54\% & \textbf{3.51\%} (3.48\%)  & 3.46\%  & \textit{36.75\%} \\
      \ul{contraceptive} & 1,473 & 2 & 7 & 3 & 58.84\% {(52.84\%)} & 52.38\% & \textbf{51.32\%} (50.46\%) & 49.14\% & \textit{32.64\%} & 59.53\% {(54.22\%)} & 53.76\% & \textbf{52.89\%} (52.19\%)  & 51.08\%  & \textit{32.55\%} \\
      \ul{dermatology} & 366 & 2 & 32 & 6 & 1.68\% {(1.03\%)} & 0.91\% & \textbf{0.90\%} (0.60\%) & 0.49\% &\textit{1.07\%} & 35.96\% {(35.61\%)} & 35.53\% & \textbf{35.52\%} (35.28\%)  & 35.25\% & \textit{1.33\%}\\
      \ul{cylinder-bands} & 539 & 19 & 14 & 2  & 40.95\% {(35.63\%}) & 34.69\% & \textbf{33.98\%} (31.46\%)  & 23.43\% & \textit{6.35\%} &  41.83\% {(37.20\%)} & 36.42\% & \textbf{35.81\%} (33.70\%)  & 26.89\% & \textit{6.40\%} \\
      \ul{annealing} & 898 & 6 & 18 & 5 & 13.65\% {(7.93\%)} & 7.47\% & \textbf{7.39\%} (7.31\%)  & 7.32\%  & \textit{51.56\%} & 14.23\% {(8.38\%)} & 7.89\% & \textbf{7.80\%} (7.69\%) & 7.70\% & \textit{44.41\%} \\
      \ul{spect} & 160 & 0 & 22 & 2 & 26.46\% {(25.99\%)} & 25.89\% & \textbf{25.78\%} (25.77\%) & 25.67\% & \textit{47.48\%} & 29.34\% {(28.77\%)} & 28.66\% & \textbf{28.59\%} (28.57\%) & 28.50\% & \textit{43.74\%} \\ 
      \ul{bank} & 45,211 & 4 & 12 & 2 & 13.98\% {(12.40\%)} & 12.35\% & \textbf{12.32\%} (12.24\%) & 12.13\% & \textit{14.20\%} & 14.05\% {(12.48\%)} & 12.43\% & \textbf{12.40\%} (12.33\%)  & 12.22\% & \textit{14.26\%} \\
      \ul{abalone} & 4,177 & 7 & 1 & 2  & 39.25\% {(31.85\%)} & 31.36\% & \textbf{30.64\%} (28.75\%) & 26.46\% & \textit{14.77\%} & 39.42\% {(32.14\%)} & 31.66\% & \textbf{30.95\%} (29.09\%)  & 26.86\% & \textit{14.76\%}\\
      \ul{spambase} & 4,601 & 57 & 0 & 2 & 39.40\% {(37.09\%)} & 35.50\% & \textbf{34.99\%} (31.56\%) & 18.09\%  & \textit{2.91\%} & 39.26\% {(36.97\%)} & 35.41\% & \textbf{34.91\%} (31.54\%) & 18.26\% & \textit{2.90\%}\\
      \ul{ecoli} & 336 & 5 & 2 & 2 & 48.09\% {(31.40\%)} & 27.00\% & \textbf{26.51\%} (18.76\%) & 3.97\% & \textit{2.15\%} &48.33\% {(31.81\%)} & 27.44\% & \textbf{26.95\%} (19.28\%)  & 4.52\% & \textit{2.12\%} \\
      \ul{absent} & 740 & 12 & 8 & 2 & 45.95\% {(30.22\%)} & 28.24\% & \textbf{28.01\%} (26.56\%)  & 24.56\%  & \textit{2.60\%} &  47.14\% {(33.99\%)} & 32.26\% & \textbf{32.03\%} (30.73\%)  & 29.18\%  & \textit{2.59\%} \\
\bottomrule
\end{tabular}}
    \caption{\textit{In-sample and out-of-sample errors of our optimization-based differentially private na\"ive Bayes classifier as well as several DP mechanisms from the literature on UCI datasets. Bold printing highlights the smallest errors obtained across all data independent DP mechanisms.}}
    \label{tab:NB} 
\end{table}

Table~\ref{tab:NB} presents the in-sample and out-of-sample errors of our optimization-based differentially private na\"ive Bayes classifier using the $\ell_1$-loss in a data independent (``OPT'') as well as data dependent (``(OPT)'') setting on the most popular UCI classification datasets \citep{UCI}. The table also compares our results with diffentially private na\"ive Bayes classifiers employing a Gaussian noise (``GN''), an analytic Gaussian noise (``(GN)'') and a truncated Laplace noise (``TLN''), as well as the classical (non-private) na\"ive Bayes classifier (``NB''). We fix $(\varepsilon, \delta) = (1, 0.1)$ in all DP mechanisms. The reported errors are mean errors over 100 random splits of the datasets into training sets (80\% of the data) and test sets (20\% of the data) as well as, for each split, 1,000 simulations of all differentially private na\"ive Bayes implementations. The column `Imp' records the percentage of the gap between NB and the best method from the literature (which turns out to be TLN) that is closed by OPT. The table reveals that our optimization-based data independent and data dependent noise distributions consistently outperform the considered competitors. To see whether this outperformance is statistically significant, we computed the p-values of a t-test with a null hypothesis that the second best approach is as good as OPT. The t-test averages the 1,000 simulated errors for each training set-test set split and considers the differences of the 100 averages corresponding to different training set-test set splits \citep{salzberg1997comparing}. In all experiments, the p-values are less than $10^{-7}$, except for \underline{dermatology} where the p-value is $10^{-1}$. Further details on the experimental setting as well as the applied the t-tests are relegated to the GitHub repository.

\subsection{Differentially Private Proximal Coordinate Descent}

Given a dataset $(\bm{x}^i, y^i)_{i = 1}^n$ with feature vectors $\bm{x}^i \in \mathbb{R}^d$ comprising numerical and/or categorical features as well as binary outputs $y^i \in \{ -1, +1 \}$, the $\ell_1$-regularized logistic regression assumes that $\mathbb{P}[y \, | \, \bm{x} = \bm{\chi}] = [1 + \exp(-y \cdot \bm{h}^0{}^\top \bm{\chi})]^{-1}$ for some unknown hyperplane $\bm{h}^0 \in \mathbb{R}^d$, and it determines a hyperplane $\bm{h}^\star \in \mathbb{R}^d$ that minimizes the empirical logistic loss
\begin{equation*}
    \dfrac{1}{n} \cdot \sum_{i=1}^n \log(1 + \exp(- y^i \cdot \bm{h}^\top \bm{x}^i)) + \lambda \cdot  ||\bm{h}||_1,
\end{equation*}
where $\lambda > 0$ is a hyperparameter. Subsequently, the output of a new sample with feature values $\bm{x} = \bm{\chi}$ is predicted to be the label $y \in \{ -1, +1 \}$ that maximizes $[1 + \exp(-y \cdot \bm{h}^\top \bm{\chi})]^{-1}$.

To solve the logistic regression problem, proximal coordinate descent \citep{friedman2010regularization, richtarik2014iteration} starts at a randomly selected initial solution $\bm{h}^0$ and conducts $t = 1, \ldots, T$ iterations, $T \in \mathbb{N}$, each of which applies the proximal operator to a random subset $i^t_1, \ldots, i^t_K \in [d]$ of the components via
\begin{equation}\label{proximal_update}
    h^{t,k}_l = \mathrm{prox}_{\lambda |\cdot|} \left({h}^{t,k-1}_l - \dfrac{1}{n}\cdot \left[ \sum_{i=1}^n \dfrac{\exp(- y^i \cdot \bm{h}^{t,k}{}^\top \bm{x}^i)}{1 + \exp(- y^i \cdot \bm{h}^{t,k}{}^\top \bm{x}^i)} \cdot (-y^i \cdot {x}^i_l) \right] \right)
\end{equation}
if $l = i^t_k$, and $h^{t,k}_l = h^{t,k-1}_l$ otherwise, for all $k = 1, \ldots, K$. Here, we fix $\bm{h}^{t,0} = \bm{h}^{t-1}$, and $\mathrm{prox}_{\lambda |\cdot|}(\cdot)$ denotes the proximal operator \citep{parikh2014proximal} associated with the $\ell_1$-regularization:
\begin{align*}
    \mathrm{prox}_{\lambda |\cdot|} (w_j) := \underset{v \in \mathbb{R}}{\arg \min} \left\{\frac{1}{2} \cdot (w_j - v)^2 + \lambda \cdot |v| \right\} =
    \begin{cases}
    w_j - \lambda & \text{ if } w_j \geq \lambda \\
    w_j + \lambda& \text{ if } w_j \leq - \lambda\\
    0            & \text{ if } |w_j| \leq \lambda
    \end{cases}
\end{align*}
Upon completion of the $K$ applications of the proximal operator in iteration $t$, we set $\bm{h}^t = (1 / K) \cdot \sum_{k = 1}^K \bm{h}^{t,k}$ and continue with iteration $t + 1$. The algorithm terminates with $\bm{h}^T$ as an approximately optimal solution to the regularized logistic regression problem.

We follow \cite{mangold2022differentially} to construct a differentially private proximal coordinate descent method for the logistic regression problem. The only data-related information used by our algorithm is contained in the proximal updates~\eqref{proximal_update}. Assuming that the feature vectors $\bm{x}^i$ are normalized so that $\lVert \bm{x}^i \rVert_{\infty} \leq 1, \ i \in [n]$, we have
\begin{align*}
    \sum_{i=1}^n \underbrace{\dfrac{\exp(- y^i \cdot \bm{h}^{t,k}{}^\top \bm{x}^i)}{1 + \exp(- y^i \cdot \bm{h}^{t,k}{}^\top \bm{x}^i)}}_{\in (0,1)} \cdot \underbrace{(-y^i \cdot {x}^i_l)}_{\in [- 1, + 1]} \in (-n, n),
\end{align*}
and thus the sensitivity of this summation, which is determined by the maximum change achievable by modifying a single training sample $i \in [n]$, is $2$. The post processing property \citep[Proposition 2.1]{dwork2014algorithmic} then allows us to design a differentially private proximal coordinate descent method by perturbing the sum inside the proximal updates~\eqref{proximal_update} accordingly.

\begin{table}[tb]
    $\mspace{-40mu}$
    \resizebox{1.1 \columnwidth}{!}{%
     \begin{tabular}{lcc|ccccc|ccccc}
\toprule
\multicolumn{3}{c}{UCI dataset descriptions} & \multicolumn{5}{c}{In-sample errors} & \multicolumn{5}{c}{Out-of-sample errors}  \\
\midrule
Dataset &  $n$ & $d$ &  GN  &  TLN & OPT  & PCD  & \textit{Imp} & GN  & TLN & OPT  & PCD & \textit{Imp}\\
\midrule
      \ul{post-operative}  &  86  & 14 &  40.23\% (34.96\%) & 33.75\% & \textbf{33.37\%} {(30.67\%)} & 27.31\% & \textit{5.94\%} & 45.67\% (42.99\%) & 42.08\% & \textbf{41.86\%} {(41.22\%)} & 35.56\% & \textit{3.38\%}\\ 
      \ul{adult}  & 45,222 & 57 & 19.77\% (19.77\%) & 19.75\% & \textbf{19.73\%} {(19.67\%)} & 19.75\% & \textit{573.31\%} & 19.79\% (19.79\%) & 19.77\% & \textbf{19.75\%} {(19.69\%)} & 19.77\% & \textit{640.17\%} \\
      \ul{breast-cancer} & 683 & 26 &  4.60\% (4.36\%) & 4.36\% & \textbf{4.34\%} {(3.93\%)} & 4.31\% & \textit{28.05\%} & 4.75\% (4.52\%) & 4.51\% & \textbf{4.50\%} {(4.09\%)} & 4.46\% & \textit{22.25\%} \\
      \ul{contraceptive} & 1,473 & 18 & 38.52\% (38.30\%) & 38.29\% & \textbf{38.27\%} {(37.39\%)} & 38.21\% & \textit{19.28\%} & 39.90\% (39.71\%) & 39.70\% & \textbf{39.69\%} {(38.93\%)} & 39.64\% & \textit{20.17\%}\\
      \ul{dermatology} & 366 & 98  & 18.09\% (14.56\%) & 14.23\% & \textbf{14.14\%} {(7.95\%)} & 13.14\% & \textit{8.07\%} & 21.38\% (18.21\%) & 17.93\% & \textbf{17.85\%} {(11.96\%)} & 16.95\% & \textit{8.17\%}\\
      \ul{cylinder-bands}& 539 & 63 &30.27\% (28.74\%) & 28.65\% & \textbf{28.59\%} {(25.30\%)} & 28.20\% & \textit{12.96\%} & 32.65\% (31.36\%) & 31.29\% & \textbf{31.24\%} {(28.56\%)} & 30.90\% & \textit{14.65\%} \\
      \ul{annealing} & 898 & 42 & 16.70\% (16.27\%) & 16.30\% & \textbf{16.29\%} {(14.74\%)} & 16.20\% & \textit{6.16\%} & 17.52\% (17.10\%) & 17.13\% & \textbf{17.13\%} {(15.61\%)} & 17.03\% & \textit{0.83\%}\\
      \ul{spect} &160 &23  & 28.89\% (23.62\%) & 22.74\% & \textbf{22.57\%} {(18.60\%)} & 19.61\% & \textit{5.23\%} & 31.47\% (27.19\%) & 26.37\% & \textbf{26.24\%} {(23.58\%)} & 23.71\% & \textit{4.84\%}\\
      \ul{bank} & 45,211 & 44 &12.20\% (12.20\%) & 12.20\% & \textbf{12.20\%} {(11.69\%)} & 12.22\% & \textit{25.43\%} & 12.21\% (12.21\%) & 12.21\% & \textbf{12.21\%} {(11.71\%)} & 12.23\% & \textit{20.31\%} \\
      \ul{abalone} & 4,177 & 10 &27.45\% (27.44\%) & 27.44\% & \textbf{27.43\%} {(27.34\%)} & 27.43\% & \textit{87.76\%} & 27.53\% (27.52\%) & 27.52\% & \textbf{27.51\%} {(27.43\%)} & 27.51\% & \textit{79.42\%} \\
      \ul{spambase} & 4,601 & 58 & 39.25\% (39.26\%) & 39.23\% & \textbf{39.21\%} {(38.79\%)} & 39.26\% & \textit{65.85\%} & 39.60\% (39.60\%) & 39.57\% & \textbf{39.55\%} {(39.13\%)} & 39.61\% & \textit{41.07\%}\\
      \ul{ecoli} & 336 & 8 &9.38\% (7.29\%) & 7.09\% & \textbf{7.01\%} {(6.31\%)} & 6.52\% & \textit{14.81\%} & 9.88\% (7.73\%) & 7.53\% & \textbf{7.45\%} {(6.70\%)} & 6.96\% & \textit{13.67\%}\\
      \ul{absent} & 740 & 70 & 33.77\% (32.88\%) & 32.78\% & \textbf{32.74\%} {(29.45\%)} & 32.54\% & \textit{17.16\%} & 35.63\% (34.83\%) & 34.74\% & \textbf{34.68\%} {(31.95\%)} & 34.52\% & \textit{26.34\%}\\
      \ul{colon-cancer} & 62 & 2,000 & 18.72\% (10.85\%) & 9.20\% & \textbf{8.62\%} {(0.03\%)} & 0.00\% & \textit{6.34\%} & 32.09\% (31.63\%) & 31.62\% & \textbf{31.54\%} {(30.41\%)} & 30.67\% & \textit{7.77\%}\\
\bottomrule
\end{tabular}}
    \caption{\textit{In-sample and out-of-sample errors of our optimization-based differentially private logistic classifier as well as several DP mechanisms from the literature on UCI datasets. Bold printing highlights the lowest errors obtained across all data independent DP mechanisms.}}
    \label{tab:PCD} 
\end{table}

Table~\ref{tab:PCD} presents the in-sample and out-of-sample errors of our optimization-based differentially private proximal coordinate descent algorithm in a data independent (``OPT'') as well as data dependent (``(OPT)'') setting for $T = 100$ iterations and $K = \lceil d / 4 \rceil$ proximal updates per iteration, regularization parameter $\lambda = 10^{-8}$ and $(\varepsilon, \delta) = (1, 0.1)$. {\color{black} While our data independent algorithm minimizes the expected $\ell_1$-loss, our data dependent algorithm performs much better under a handcrafted loss function that resembles the $\ell_1$-loss in the vicinity of the origin but has steep slopes of -1,000 and 1,000 for large negative and positive values away from the origin, respectively. The large slopes penalize switching the sign of gradient, which tends to slow down convergence (recall from Figure~\ref{fig:mult_shape} that noise distributions associated with negative values tend to have large probability mass on the positive side and vice versa).} We use the same datasets as in Section~\ref{sec:DPNB}, but we \emph{(i)} convert non-binary output labels into binary ones via binning (if the output is ordinal) or distinguishing the majority class from all other classes (if the output is nominal) and \emph{(ii)} apply one-hot encoding for the nominal input features. Additionally, as the proximal coordinate descent method is commonly used for datasets where $d \gg n$, we also include the \ul{colon-cancer} dataset that is available in LIBSVM \citep{chang2011libsvm}. As in the previous section, we compare our optimization-based algorithms with differentially private proximal coordinate descent methods employing a Gaussian noise (``GN''), an analytic Gaussian noise (``(GN)'') and a truncated Laplace noise (``TLN''), as well as the classical (non-private) proximal coordinate descent scheme (``PCD''). As before, the reported errors are mean errors over 100 random splits of the datasets into training sets (80\% of the data) and test sets (20\% of the data) as well as, for each split, 1,000 simulations of all differentially private algorithm implementations. The column `Imp' records the percentage of the gap between PCD and the best method from the literature that is closed by OPT. As in the experiment from the previous section, our optimization-based data independent and data dependent noise distributions consistently outperform the considered competitors. The p-values of a t-test similar to the previous section were always smaller than $10^{-7}$, except for the \ul{annealing}, \ul{breast-cancer} and \ul{contraceptive} datasets, where the $p$-values are $10^{-4}$, and except for the \ul{bank} dataset, where there is no significance. Further details on the experimental setting can be found in the GitHub repository. Interestingly, we observe that on all datasets except for \ul{post-operative}, our data dependent differentially private classifier (OPT) outperforms the non-private classifier PCD. This result is due to our handcrafted loss function (see above), and it does not hold for $\ell_1$- and $\ell_2$-losses. Effects of this types have been observed previously in gradient-based learning algorithms (see, \textit{e.g.}, \citealt{neelakantan2015adding}), and they further highlight the benefits of an optimization-based approach towards DP, which readily caters for non-standard loss functions that can be tuned towards the task at hand.

\section{Conclusions}\label{sec:conclusions}

With the widespread adoption of analytics, privacy concerns have witnessed a remarkable resurgence in the public discourse. While DP has established itself as a predominant privacy paradigm in both academic research and industrial practice, the existing DP mechanisms almost exclusively focus on privacy to the detriment of accuracy. The few methodological studies on the privacy-accuracy trade-off focus on asymptotic performance and/or restricted classes of mechanisms.

In our view, the privacy-accuracy trade-off is most naturally studied through the lens of optimization theory, which gives rise to infinite-dimensional DP mechanism design problems that are similar to---but at the same time notedly distinct from---continuous linear programs and distributionally robust optimization problems. We developed a hierarchy of converging upper and lower bounds on these problems that result in DP mechanisms with rigorous privacy guarantees and deterministic bounds on their accuracy. Our numerical results demonstrate that our mechanisms can achieve significant improvements on both synthetic and real-world problems.

A key advantage of an optimization-based DP approach is its versatility. Our upper and lower bounds can be readily extended to incorporate monotonicity and symmetry constraints as well as a bounded range for the query output in the case of data dependent mechanisms. We can account for different loss functions, and multiple loss functions can be combined in a multi-objective framework. Our approach also allows us to incorporate tighter bounds on the probability of distinguishing events $A \in \mathcal{F}$, $\mathbb{P} [ \mathcal{A}(D) \in A ] > 0$ and $\mathbb{P} [ \mathcal{A}(D') \in A ] = 0$ for some $(D, D') \in \mathcal{N}$, that enable an adversary to exclude certain databases altogether \citep[Remark 1.3]{dwork2016concentrated}. 

We regard this work as a first step towards optimization-based DP, and it opens up several avenues for future research. Firstly, while our approach generalizes to multi-dimensional query functions $f$ via the composition theorem, better results can be expected if one directly optimizes over multi-dimensional noise distributions. Since a na\"ive implementation of our algorithms would scale exponentially in that dimension, this may necessitate the development of further approximations. {\color{black} Our GitHub supplement reports on some initial numerical experiments for the multi-dimensional case as well as a potential avenues to alleviate the curse of dimensionality.} Secondly, our work measures accuracy through a simple loss function. In many interesting practical applications, the query output may be the solution to an optimization problem, and in those cases accuracy may be best measured in terms of the expected performance of the perturbed output (\emph{e.g.}, the expected total discounted reward of the perturbed policy in the context of differentially private Markov decision processes). {\color{black} Thirdly}, DP as currently defined in the literature is tailored to the traditional noise distributions such as the Laplace and the Gaussian mechanisms. The versatility of an optimization-based view on DP allows us to explore other, potentially more general notions of differential privacy as well. {\color{black} Finally, it would be instructive to further study the connections between DP and robust optimization. Interesting avenues for further exploration include the development of uncertainty set-based mechanisms for DP that may avoid the curse of dimensionality \citep{bertsimas2004price}, as well as applying robust satisficing techniques~\citep{long2023robust} to offer group privacy guarantees. We refer to GitHub supplement for a more detailed discussion of those topics.}

\ACKNOWLEDGMENT{Financial support from the EPSRC grants EP/R045518/1 and EP/W003317/1 as well as NSF China grant 12301403 is gratefully acknowledged. The authors thank Damien Desfontaines, Daniel Kuhn and Juba Ziani for insightful discussions. The first author acknowledges the support of The Alan Turing Institute’s Enrichment Scheme. The authors thank the area editor Sam Burer as well as the anonymous review team for their constructive and insightful comments that substantially strengthened the paper.}


%
%
%

\bibliographystyle{informs2014} 
\bibliography{refinf.bib}

\newpage
\appendix
\addtocontents{toc}{\protect\setcounter{tocdepth}{1}}
\addcontentsline{toc}{section}{Appendices}

\newtheorem{lemmaA}{Lemma}
\renewcommand{\thelemmaA}{\Alph{section}.\arabic{lemmaA}}

\newtheorem{observationA}{Observation}
\renewcommand{\theobservationA}{\Alph{section}.\arabic{observationA}}
\setcounter{page}{1}

\begin{center}
\section*{\color{black} E-Companion}   
\end{center}

\section{Proofs of Section~\ref{sec:additive}}

{\color{black}
\subsection{Proof of Observation~\ref{observation_primal}}

Assumption~\ref{assumption_varphi} allows us to replace the DP constraint in~\eqref{problem:integral_main:preliminary_formulation} with
\begin{equation*}
    \mspace{-20mu}\begin{array}{r@{\quad}l@{\quad}l}
        &  \displaystyle \int_{x \in \mathbb{R}}\indicate{x \in A} \diff \gamma(x)  \leq e^\varepsilon \cdot \displaystyle \int_{x \in \mathbb{R}} \indicate{f(D') - f(D) + x \in A} \diff \gamma(x) + \delta & \displaystyle \forall (D, D') \in \mathcal{N}, \ \forall A \in \mathcal{F} \\[4mm]
        \displaystyle\Longleftrightarrow &
        \displaystyle \int_{x \in \mathbb{R}}\indicate{x \in A} \diff \gamma(x)  \leq e^\varepsilon \cdot \displaystyle \int_{x \in \mathbb{R}} \indicate{x + \varphi \in A} \diff \gamma(x) + \delta & \displaystyle \forall (\varphi, A) \in \mathcal{E}.
    \end{array}
\end{equation*}
Here, the first line holds since $\{ A : A \in \mathcal{F}\} = \{A + f(D) : A \in \mathcal{F}\}$ for any $D \in {\color{black}\mathcal{D}}$, whereas the second line is due to Assumption~\ref{assumption_varphi}. Replacing the latter representation of the DP constraints with those in~\eqref{problem:integral_main:preliminary_formulation} gives~\ref{problem:integral_main} and thus concludes the observation. \qed
}

\subsection{Proof of Lemma~\ref{prop:finite}}\label{app_proof_beta_discretiez}

The proof of Lemma~\ref{prop:finite} relies on three auxiliary results, which we state and prove first. Lemma~\ref{lemma_beginning} shows that under restriction~\eqref{restriction}, problem~\ref{problem:integral_main} can be expressed entirely in terms of the decision variables $p$. The resulting problem coincides with~\ref{problem:restricted_main_finite} in terms of the decision variables, but it still comprises a larger set of constraints. Lemma~\ref{lemma:worst-fixed} identifies for each query output difference $\varphi \in [-\Delta f, \Delta f]$ a constraint $(\varphi, A^\star (\varphi))$ that weakly dominates all constraints $(\varphi, A)$, $A \in \mathcal{F}$, and Lemma~\ref{lemma:linear} identifies a set of values for $\varphi$ such that the associated constraints $(\varphi, A^\star (\varphi))$ weakly dominate all constraints of the problem. Lemma~\ref{prop:finite} then combines these results to prove the equivalence between the problems~\ref{problem:integral_main} and~\ref{problem:restricted_main_finite} under restriction~\eqref{restriction}.

\begin{lemmaA}\label{lemma_beginning}
    Under restriction~\eqref{restriction},~\ref{problem:integral_main} has the same optimal value as
    \begin{align}\label{problem:restricted_main}
    \begin{array}{cll}
        \underset{p}{\mathrm{minimize}} & \displaystyle \sum\limits_{i \in \mathbb{Z}} c_i(\beta) \cdot p(i) & \\
        \mathrm{subject\; to} & \displaystyle p:\mathbb{Z} \mapsto \mathbb{R}_{+}, \ \sum_{i \in \mathbb{Z}} p(i) = 1 &  \\[1.5em]
        &\displaystyle  \sum\limits_{i \in \mathbb{Z}} p(i) \cdot \dfrac{|A \cap I_{i}(\beta)|}{\beta} \leq e^\varepsilon \cdot \sum\limits_{i \in \mathbb{Z}} p(i)  \cdot \dfrac{|(A - \varphi ) \cap I_{i}(\beta)|}{\beta} + \delta & \forall (\varphi, A) \in \mathcal{E}.
    \end{array}
\end{align}
\end{lemmaA}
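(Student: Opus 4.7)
The plan is to establish the equivalence via a direct substitution argument, treating restriction~\eqref{restriction} as a reparameterization of the measure $\gamma$ in terms of the weights $p$, so that~\ref{problem:integral_main} translates verbatim into~\eqref{problem:restricted_main}. The key observation is that under~\eqref{restriction}, $\gamma$ admits a Lebesgue density, namely $\frac{\mathrm{d}\gamma}{\mathrm{d}x}(x) = p(i)/\beta$ for every $x \in I_i(\beta)$, so $\gamma$ is simply a countable convex combination of uniform probability measures on the intervals $I_i(\beta)$ with weights $p(i)$.

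First, I would verify that the correspondence $p \mapsto \gamma$ defined via~\eqref{restriction} is a bijection between the set of sequences $\{p : \mathbb{Z} \mapsto \mathbb{R}_+ \;:\; \sum_i p(i) = 1\}$ and the subset of $\mathcal{P}_0$ whose elements satisfy~\eqref{restriction}. Non-negativity of $\gamma$ follows from that of $p$; and since $\gamma(\mathbb{R}) = \sum_i p(i) \cdot |I_i(\beta)|/\beta = \sum_i p(i)$, the probability normalization $\gamma \in \mathcal{P}_0$ is equivalent to $\sum_i p(i) = 1$. The converse direction uses $\gamma(I_i(\beta)) = p(i)$ to recover the weights uniquely from $\gamma$.

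Second, I would substitute this density into the objective and constraints of~\ref{problem:integral_main}. For the objective, this gives
\[
    \int_{x \in \mathbb{R}} c(x) \diff \gamma(x) \; = \; \sum_{i \in \mathbb{Z}} \frac{p(i)}{\beta} \int_{I_i(\beta)} c(x) \diff x \; = \; \sum_{i \in \mathbb{Z}} c_i(\beta) \cdot p(i),
\]
where the interchange of sum and integral is justified by Tonelli's theorem since $c \geq 0$ and $p \geq 0$. For the DP constraints, I would use that for any $A \in \mathcal{F}$ and $\varphi \in [-\Delta f, \Delta f]$,
\[
    \int_{x \in \mathbb{R}} \indicate{x \in A} \diff \gamma(x) = \gamma(A) = \sum_{i \in \mathbb{Z}} p(i) \cdot \frac{|A \cap I_i(\beta)|}{\beta},
\]
and analogously $\int \indicate{x+\varphi \in A} \diff \gamma(x) = \gamma(A - \varphi) = \sum_i p(i) \cdot |(A-\varphi) \cap I_i(\beta)|/\beta$, which reproduces the DP constraints of~\eqref{problem:restricted_main} exactly.

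I anticipate no substantial obstacle: the argument amounts to a change of variables that converts an optimization over an abstract measure into an equivalent optimization over its defining weights, with the same objective values on corresponding pairs. The only technical subtlety is the interchange of sums and integrals, which is immediate from non-negativity via Tonelli's theorem; this ensures in particular that the objective is well-defined (possibly $+\infty$) on both sides of the equivalence.
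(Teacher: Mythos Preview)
Your proposal is correct and follows essentially the same approach as the paper's proof: identify the density of $\gamma$ under restriction~\eqref{restriction}, then substitute it into the objective and the DP constraints of~\ref{problem:integral_main} to recover~\eqref{problem:restricted_main}. The only cosmetic differences are that the paper invokes Fubini (justified via $\gamma(A)\in[0,1]$) rather than Tonelli, and it omits your explicit bijection argument, but the substance is the same.
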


\begin{proof}
We use restriction~\eqref{restriction} to replace $\gamma$ in problem~\ref{problem:integral_main} with the new decision variables $p$. To this end, observe first that under restriction~\eqref{restriction}, $\gamma$ affords a density function via
\begin{align*}
    \displaystyle \gamma(A) =  \sum_{i \in \mathbb{Z}} p(i) \cdot \frac{|A\cap I_{i}(\beta)|}{\beta} = & \displaystyle \sum_{i \in \mathbb{Z}} p(i)  \cdot \int_{x \in \mathbb{R}} \frac{\indicate{x \in A} \cdot \indicate{x \in I_{i}(\beta)}}{\beta} \diff x \\
    =  & \int_{x \in \mathbb{R}} \indicate{x \in A} \cdot \left(\displaystyle\sum_{i \in \mathbb{Z}}  p(i) \cdot \dfrac{\indicate{x \in I_{i}(\beta)}}{\beta}\right) \diff x,
\end{align*}
$A \in \mathcal{F}$, where the last step holds by Fubini's theorem since $\gamma(A) \in [0,1]$. This derivation shows that $\sum_{i \in \mathbb{Z}}  p(i) \cdot \indicate{x \in I_{i}(\beta)}/\beta$ is the density function of $\gamma$.
Using this density function, the objective function of~\ref{problem:integral_main} can be rewritten as
\begin{align*}
    \displaystyle \int_{x \in \mathbb{R}} c(x) \diff \gamma(x) = \displaystyle \int_{x \in \mathbb{R}} c(x)\cdot \left(\sum_{i \in \mathbb{Z}}  p(i)\cdot \dfrac{\indicate{x \in I_{i}(\beta)}}{\beta} \right) \diff x & = \sum_{i \in \mathbb{Z}} \frac{p(i)}{\beta} \cdot \int_{x \in \mathbb{R}} c(x)\cdot  \indicate{x \in I_{i}(\beta)} \diff x\\
    & = \sum_{i \in \mathbb{Z}} c_i(\beta) \cdot p(i),
\end{align*}
where the second equality holds by Fubini's theorem, and the last step substitutes $c_i(\beta) = \beta^{-1}\cdot \int_{x \in I_{i}(\beta)} c(x) \mathrm{d}x$ for all $i \in \mathbb{Z}$. The final expression coincides with the objective of problem~\eqref{problem:restricted_main}.

In view of the DP constraints in problem~\ref{problem:integral_main}, we note that
\begin{align*}
    \displaystyle \int_{x \in \mathbb{R}} \indicate{x \in A} \diff \gamma(x)  = \gamma(A) = \sum_{i \in \mathbb{Z}} p(i) \cdot \frac{|A\cap I_{i}(\beta)|}{\beta}
\end{align*}
as well as 
\begin{align*}
    \displaystyle \int_{x \in \mathbb{R}} \indicate{x + \varphi \in A} \diff \gamma(x)  = \gamma(A - \varphi ) = \sum_{i \in \mathbb{Z}} p(i) \cdot \frac{|(A - \varphi)\cap I_{i}(\beta)|}{\beta},
\end{align*}
and the right-hand side expressions coincide with the expressions on either side of the DP constraints of problem~\eqref{problem:restricted_main}. This concludes the proof.
\end{proof}

To reduce the number of DP constraints in problem~\eqref{problem:restricted_main}, we first characterize the tightest privacy constraint $(\varphi, A) \in \mathcal{E}$ in~\eqref{problem:restricted_main} for a fixed decision $p$ and a fixed query output difference $\varphi \in [-\Delta f, \Delta f]$. To this end, we define the privacy shortfall as
\begin{align*}
V(\varphi, A) = \sum_{i \in \mathbb{Z}} p(i) \cdot |A \cap I_{i}(\beta)| - e^\varepsilon\cdot \sum_{i \in \mathbb{Z}} p(i) \cdot |(A - \varphi) \cap I_{i}(\beta)|.
\end{align*}
Note that $V (\varphi, A)$ coincides with the slack of the DP constraint $(\varphi, A)$ in problem~\eqref{problem:restricted_main}, shifted by $- \delta$ and scaled by $\beta$.\footnote{Our definition here differs slightly from a later definition of privacy shortfall in Section~\ref{sec:cutting-plane}. The context will always make it clear which definition is being used.} In particular, maximizers $(\varphi, A) \in \mathcal{E}$ of the privacy shortfall $V$ correspond to the tightest constraints in problem~\eqref{problem:restricted_main}.

\begin{observationA}\label{obs:linearity}
The privacy shortfall is linear over partitions of $A$, that is, we have $V(\varphi, A) = \sum_\ell V(\varphi, A_\ell)$ for any partition $\{A_\ell\}_\ell$ of $A$. 
\end{observationA}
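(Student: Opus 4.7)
The plan is to unpack the definition of $V(\varphi, A)$ and observe that each of its two terms is a $p$-weighted sum of Lebesgue measures of intersections of the form (set)$\cap I_i(\beta)$, where the (set) is either $A$ or its translate $A-\varphi$. Linearity of $V$ in $A$ therefore reduces entirely to linearity of these two families of Lebesgue measures over a partition of $A$, followed by an interchange of the sum over the partition index $\ell$ with the sum over $i$.

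First I would fix a (countable) partition $\{A_\ell\}_\ell$ of $A$ into pairwise disjoint Borel sets with $\bigcup_\ell A_\ell = A$. For each interval $I_i(\beta)$, the sets $\{A_\ell \cap I_i(\beta)\}_\ell$ are again pairwise disjoint with union $A\cap I_i(\beta)$, so countable additivity of the Lebesgue measure yields $|A\cap I_i(\beta)| = \sum_\ell |A_\ell \cap I_i(\beta)|$. Translation by $-\varphi$ preserves disjointness and unions, hence $\{A_\ell - \varphi\}_\ell$ partitions $A-\varphi$; applying countable additivity once more gives $|(A-\varphi)\cap I_i(\beta)| = \sum_\ell |(A_\ell - \varphi)\cap I_i(\beta)|$.

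Substituting these two identities into the definition of $V(\varphi,A)$ and then interchanging the order of summation between $\ell$ and $i$, I recover $\sum_\ell V(\varphi, A_\ell)$, which is the claim. The interchange is justified because both inner sums over $i$ are non-negative and dominated by $\sum_i p(i)=1$, so Tonelli's theorem applies to the positive and negative parts separately; alternatively, since the negative term is finite, ordinary rearrangement of absolutely convergent double sums suffices.

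I do not anticipate any obstacles: the argument is a direct consequence of $\sigma$-additivity of Lebesgue measure plus translation invariance of the partition property, and the only mild care needed is the justification for swapping the two sums, which is immediate from the boundedness noted above.
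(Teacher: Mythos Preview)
Your argument is correct: countable additivity of Lebesgue measure applied to each intersection $A\cap I_i(\beta)$ and $(A-\varphi)\cap I_i(\beta)$, followed by a Tonelli-type swap of the sums over $i$ and $\ell$, yields the claim directly. The paper does not actually supply a proof for this statement; it is recorded as an ``Observation'' and treated as immediate from the definition of $V(\varphi,A)$, so your write-up simply makes explicit what the authors regard as self-evident.
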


We next show that for any fixed $\varphi \in [-\Delta f, \Delta f]$, the largest privacy shortfall $\sup \{V(\varphi, A) : A \in \mathcal{F}\}$ is attained by a worst-case event $A^\star(\varphi) \in \mathcal{F}$ of a simple structure.

\begin{definition}\label{definition:partition}
For any $i \in \mathbb{Z}$, let $j := \lceil i - \varphi/\beta \rceil$ be the unique integer satisfying $\varphi + j \cdot \beta \in I_i(\beta)$, and define $I_i^1(\varphi, \beta)$ and $I_i^2(\varphi, \beta)$ as the following partition of $I_i(\beta)$:
\begin{enumerate}[(i)]
    \item $I_i^1(\varphi, \beta) := I_i(\beta) \cap (I_{j-1}(\beta) + \varphi) = [i\cdot \beta, \varphi + j \cdot \beta)$
    \item $I_i^2(\varphi, \beta) := I_i(\beta) \setminus I_i^1(\varphi, \beta) = I_i(\beta) \cap (I_{j}(\beta) + \varphi) = [\varphi + j\cdot \beta, (i+1)\cdot \beta)$. 
\end{enumerate}
\end{definition}
For any $i \in \mathbb{Z}$, Definition~\ref{definition:partition} implies that $I_i^1(\varphi,\beta) \cap I_{i'}(\beta) = \emptyset$ for all $i' \neq i$. Moreover, since $I_i^1(\varphi,\beta) \subseteq I_{j-1}(\beta) + \varphi$ it also follows that $(I_i^1(\varphi,\beta)-\varphi) \subseteq I_{j-1}(\beta)$ and therefore $(I_i^1(\varphi,\beta)-\varphi) \cap I_{j'}(\beta) = \emptyset$ for all $j' \neq j - 1$. Applying a similar reasoning also to $I_i^2(\varphi,\beta)$ allows us to simplify the expressions for the privacy shortfall over subsets of $I_i^1(\varphi,\beta)$ and $I_i^2(\varphi,\beta)$.

\begin{observationA}\label{obs:discretized_interval}
For any $i \in \mathbb{Z}$ and $\varphi \in [- \Delta f, \Delta f]$ we have
\begin{equation*}
    V(\varphi, A) =
    \begin{cases}
        |A| \cdot (p(i) - e^\varepsilon \cdot p(j-1)) & \text{for all } A \subseteq I_i^1(\varphi,\beta), \\
        |A| \cdot(p(i) - e^\varepsilon \cdot p(j)) & \text{for all } A \subseteq I_i^2(\varphi,\beta),
    \end{cases}
\end{equation*}
where $j = \lceil i - \varphi / \beta \rceil$.
\end{observationA}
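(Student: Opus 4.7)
The plan is to prove both cases by a direct substitution into the definition
$$V(\varphi, A) = \sum_{i' \in \mathbb{Z}} p(i') \cdot |A \cap I_{i'}(\beta)| \;-\; e^\varepsilon \sum_{i' \in \mathbb{Z}} p(i') \cdot |(A - \varphi) \cap I_{i'}(\beta)|,$$
exploiting the fact that Definition~\ref{definition:partition} pins down \emph{exactly one} nonzero term in each of the two sums whenever $A$ is contained in $I_i^1(\varphi,\beta)$ or $I_i^2(\varphi,\beta)$.

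For the first case, fix $A \subseteq I_i^1(\varphi,\beta)$. Since $I_i^1(\varphi,\beta) \subseteq I_i(\beta)$ and the intervals $\{I_{i'}(\beta)\}_{i' \in \mathbb{Z}}$ partition $\mathbb{R}$, the only nonvanishing summand in the first sum is $i' = i$, contributing $p(i) \cdot |A|$. For the second sum, I would invoke the containment $I_i^1(\varphi,\beta) \subseteq I_{j-1}(\beta) + \varphi$ noted immediately after Definition~\ref{definition:partition}, which gives $A - \varphi \subseteq I_{j-1}(\beta)$. Again by disjointness of the partition, only the $i' = j-1$ term survives, contributing $e^\varepsilon \cdot p(j-1) \cdot |A|$. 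Combining these two identities yields the claimed expression.

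The second case is symmetric: for $A \subseteq I_i^2(\varphi,\beta)$, the first sum still collapses to $p(i) \cdot |A|$ since $I_i^2(\varphi,\beta) \subseteq I_i(\beta)$, while the containment $I_i^2(\varphi,\beta) \subseteq I_j(\beta) + \varphi$ (also from Definition~\ref{definition:partition}) implies $A - \varphi \subseteq I_j(\beta)$, so only the $i' = j$ term survives in the second sum and contributes $e^\varepsilon \cdot p(j) \cdot |A|$.

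There is no genuine obstacle here; the statement is a bookkeeping consequence of Definition~\ref{definition:partition}, whose entire purpose is to carve $I_i(\beta)$ into the two maximal sub-intervals on which the shifted set $I_i(\beta) - \varphi$ lies within a single element of the partition. The only point requiring care is verifying that $j = \lceil i - \varphi/\beta \rceil$ is indeed the unique integer with $\varphi + j \cdot \beta \in I_i(\beta)$ (this follows from $I_i(\beta) = [i\beta, (i+1)\beta)$ and the half-open convention), which in turn justifies identifying the surviving indices as $j-1$ and $j$ in the two respective cases.
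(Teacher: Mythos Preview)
Your proposal is correct and follows essentially the same approach as the paper: the paper's justification (given in the discussion preceding the observation and in the caption of Figure~\ref{figure:value}) likewise argues that $I_i^1(\varphi,\beta) \subseteq I_i(\beta)$ and $I_i^1(\varphi,\beta) \subseteq I_{j-1}(\beta) + \varphi$ force exactly one term to survive in each sum of $V(\varphi,A)$, and then treats the second case analogously.
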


\begin{figure}[tb]
    \centering
    \begin{tikzpicture}[scale=0.8, every node/.style={scale=0.8},every node/.append style={thick,rounded corners =1mm}]
    \draw[-, very thick, color = black] (2,2) -- (19,2);
    \node[xshift=3cm,yshift = -0.2cm,draw,fill=gray!25,text width=5cm,align=center] 
    {
    $j = \lceil i - \varphi /\beta \rceil$ satisfies $i \cdot \beta \leq \varphi + j\cdot \beta < (i+1)\cdot \beta$};
    \node[scale = 0.5] at (0.8,2){\textbullet};
    \node[scale = 0.5] at (1.2,2){\textbullet};
    \node[scale = 0.5] at (1.6,2){\textbullet};
    \node[scale = 0.5] at (19.4,2){\textbullet};
    \node[scale = 0.5] at (19.8,2){\textbullet};
    \node[scale = 0.5] at (20.2,2){\textbullet};
    \draw[black, thick] (3,2.2) -- (3,1.8);
    \draw[black, thick] (9,2.2) -- (9,1.8);
    \draw[black, thick] (15,2.2) -- (15,1.8);
    \draw[darkred, thick] (6,2.2) -- (6,1.8);
    \draw[darkred, thick] (12,2.2) -- (12,1.8);
    \draw[darkred, thick] (18,2.2) -- (18,1.8);
    \node[scale = 1] at (3,2.5){$(i-1)\cdot \beta$};
    \node[scale = 1] at (9,2.5){$i\cdot \beta$};
    \node[scale = 1] at (15,2.5){$(i+1)\cdot \beta$};
    \node[darkred, scale = 1] at (6,1.5){$\varphi + (j-1)\cdot \beta$};
    \node[darkred, scale = 1] at (12,1.5){$\varphi + j\cdot \beta$};
    \node[darkred, scale = 1] at (18,1.5){$\varphi + (j+1)\cdot \beta$};
    \draw [thick, decorate,decoration={brace,amplitude=10pt},xshift=0pt,yshift=0pt] (3,3) -- (8.92,3) node [black,midway, yshift = 0.8cm] {$I_{i-1}(\beta)$};
    \draw [thick, decorate,decoration={brace,amplitude=10pt},xshift=0pt,yshift=0pt] (9,3) -- (15,3) node [black,midway, yshift = 0.8cm] {$I_{i}(\beta)$};
    \draw [thick, darkred,decorate,decoration={brace,amplitude=10pt,mirror},xshift=0pt,yshift=0pt] (6,0) -- (11.92,0) node [darkred,midway, yshift = -0.8cm] {$I_{j-1}(\beta) + \varphi$};
    \draw [thick, darkred,decorate,decoration={brace,amplitude=10pt,mirror},xshift=0pt,yshift=0pt] (12,0) -- (18,0) node [darkred,midway, yshift = -0.8cm] {$I_{j}(\beta) + \varphi$};
    \draw [blue,decorate,decoration={brace,amplitude=10pt,mirror},xshift=0pt,yshift=0pt] (9,1) -- (11.92,1) node [blue,midway, yshift = -0.8cm] {$I_i^1(\varphi,\beta)$};
    \draw [blue,decorate,decoration={brace,amplitude=10pt,mirror},xshift=0pt,yshift=0pt] (12,1) -- (15,1) node [blue,midway, yshift = -0.8cm] {$I_i^2(\varphi,\beta)$};
    \draw[fill=blue!25] (10,1.8) -- (11,1.8) -- (11,2.2) -- (10,2.2) -- (10,1.8);
    \node[blue, scale = 1] at (10.5,2){$A$};
    \end{tikzpicture}
    \caption{\textit{The interval $I_i^1(\varphi,\beta)$ satisfies $I_i^1(\varphi,\beta) \subseteq I_i(\beta)$ as well as $I_i^1(\varphi,\beta) \subseteq I_{j-1}(\beta) + \varphi$. Therefore, for any $A \subseteq I_i^1(\varphi,\beta)$ we have $|A\cap I_i(\beta)| = |A|$ and $|A\cap I_{i'}(\beta)| = 0$ for all $i' \neq i$; similarly, $|(A-\varphi) \cap I_{j-1}(\beta)| = |A|$ and $|(A-\varphi) \cap I_{j'-1}(\beta)| = 0$ for all $j' \neq j$. This shows the first case in Observation~\ref{obs:discretized_interval}; the second case can be verified analogously.}}
    \label{figure:value}
\end{figure}

Figure~\ref{figure:value} illustrates the intuition underlying Observation~\ref{obs:discretized_interval}. We now show that there is always a worst-case event $A^\star(\varphi)$ \mbox{that constitutes a union of intervals $I_i^1(\varphi,\beta)$ and $I_i^2(\varphi,\beta)$, $i \in \mathbb{Z}$.}

\begin{lemmaA}\label{lemma:worst-fixed}
For any $\varphi \in [-\Delta f, \Delta f]$, there is an event
\begin{align}\label{set:worst_A}
A^\star(\varphi) = \bigcup_{i \in \mathcal{I}_1} I_i^1(\varphi,\beta) \cup \bigcup_{i \in \mathcal{I}_2} I_i^2(\varphi,\beta) \in \mathcal{F}
\qquad \text{for some } \mathcal{I}_1, \mathcal{I}_2 \subseteq \mathbb{Z}
\end{align}
that attains the largest privacy shortfall $\sup \{V(\varphi, A) : A \in \mathcal{F}\}$.
\end{lemmaA}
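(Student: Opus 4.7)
The plan is to use the partition $\{I_i^1(\varphi,\beta),\,I_i^2(\varphi,\beta)\}_{i\in\mathbb{Z}}$ of $\mathbb{R}$ supplied by Definition~\ref{definition:partition} together with the piecewise-constant-coefficient representation of the privacy shortfall from Observation~\ref{obs:discretized_interval}, thereby reducing the maximization of $V(\varphi,\cdot)$ to a separable family of independent binary decisions, one per partition cell.

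First, I would define $A_i^s := A\cap I_i^s(\varphi,\beta)$ for $s\in\{1,2\}$ and $i\in\mathbb{Z}$, so that $\{A_i^s\}_{i,s}$ is a countable Borel partition of any given $A\in\mathcal{F}$. Applying Observation~\ref{obs:linearity} to this partition,
\begin{equation*}
V(\varphi,A)=\sum_{i\in\mathbb{Z}}\bigl[V(\varphi,A_i^1)+V(\varphi,A_i^2)\bigr].
\end{equation*}
Observation~\ref{obs:discretized_interval} then gives $V(\varphi,A_i^s)=|A_i^s|\cdot\kappa_i^s$, where $\kappa_i^1:=p(i)-e^\varepsilon p(j-1)$ and $\kappa_i^2:=p(i)-e^\varepsilon p(j)$ with $j=\lceil i-\varphi/\beta\rceil$. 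Crucially, each $\kappa_i^s$ is a constant that does not depend on the chosen subset $A_i^s\subseteq I_i^s(\varphi,\beta)$, so the contribution of cell $(i,s)$ to $V(\varphi,A)$ is linear in $|A_i^s|\in[0,|I_i^s(\varphi,\beta)|]$.

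Second, since the total shortfall decomposes additively with these cell-wise linear contributions, it is maximized by optimizing each cell independently: take $|A_i^s|=|I_i^s(\varphi,\beta)|$ whenever $\kappa_i^s\ge 0$ and $|A_i^s|=0$ otherwise. Setting $\mathcal{I}_s:=\{i\in\mathbb{Z}:\kappa_i^s\ge 0\}$ for $s\in\{1,2\}$ and defining $A^\star(\varphi)$ exactly as in~\eqref{set:worst_A} yields a Borel set (a countable union of intervals) satisfying $V(\varphi,A^\star(\varphi))\ge V(\varphi,A)$ for every $A\in\mathcal{F}$, which both establishes the claimed structural form and shows that the supremum is attained.

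The main technical point to verify is the legitimacy of the countable manipulations: I need Observation~\ref{obs:linearity} to extend from finite to countable partitions, and I need the series defining $V(\varphi,A^\star(\varphi))$ to be well-defined without $\infty-\infty$ issues. Both follow from the absolute summability of the two series appearing in the definition of $V(\varphi,A)$, which in turn is guaranteed by $\sum_i p(i)=1$ and $|A\cap I_i(\beta)|\le\beta$; each series is bounded by $\beta$, and the piecewise optimization over $\mathcal{I}_s$ preserves this bound, so no ordering issues arise.
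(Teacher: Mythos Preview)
Your proposal is correct and follows essentially the same approach as the paper: decompose $V(\varphi,A)$ over the partition $\{I_i^s(\varphi,\beta)\}_{i,s}$ via Observation~\ref{obs:linearity}, apply Observation~\ref{obs:discretized_interval} to reduce each cell's contribution to $|A_i^s|\cdot\kappa_i^s$, and then optimize each cell separately by taking the full interval when the coefficient is nonnegative and the empty set otherwise. Your explicit treatment of the countable summability issue is a welcome addition that the paper leaves implicit.
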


\begin{proof}
By Definition~\ref{definition:partition}, $I_i^1(\varphi,\beta)$ and $I_i^2(\varphi,\beta)$ partition $I_i(\beta)$ for any $i \in \mathbb{Z}$, and thus $\{I_i^1(\varphi,\beta) \cup  I_i^2(\varphi,\beta)\}_{i \in \mathbb{Z}}$ partitions $\mathbb{R}$. Observation~\ref{obs:linearity} and the sub-additivity of the supremum operator then imply that
\begin{align*}
    \displaystyle \underset{A \in \mathcal{F}}{\sup} \{ V(\varphi, A)\} & =  \underset{A \in \mathcal{F}}{\sup} \left\{ \sum_{i \in \mathbb{Z}} V(\varphi, A\cap I_i^1(\varphi,\beta)) + \sum_{i \in \mathbb{Z}} V(\varphi, A\cap I_i^2(\varphi,\beta))  \right\} \\ 
    & \leq\sum_{i \in \mathbb{Z}} \underset{A \subseteq I_i^1(\varphi,\beta)}{\sup} \{ V(\varphi,A) \} + \sum_{i \in \mathbb{Z}} \underset{A \subseteq I_i^2(\varphi,\beta)}{\sup} \{ V(\varphi,A) \}.
\end{align*}
We will show that each supremum on the right-hand side of the inequality is attained and then construct $A^\star(\varphi) = \bigcup_{i \in \mathbb{Z}} A^1_i (\varphi) \cup \bigcup_{i \in \mathbb{Z}} A^2_i (\varphi)$, where $A_i^1(\varphi), A_i^2(\varphi) \in \mathcal{F}$ are defined as
\begin{align*}
    & A^1_i (\varphi) \in \underset{A \subseteq I_i^1(\varphi,\beta)}{\arg \max} \{ V(\varphi,A) \}
    \quad \text{and} \quad
    A^2_i (\varphi) \in \underset{A \subseteq I_i^2(\varphi,\beta)}{\arg \max} \{ V(\varphi,A) \}.
\end{align*}
The statement then follows from the fact that $A^\star(\varphi) \in \mathcal{F}$ by construction.

In view of $A^1_i (\varphi)$, Observation~\ref{obs:discretized_interval} implies that $V (\varphi, A)$ is maximized by
\begin{align*}
    A_i^1(\varphi) = \begin{cases}
         I_i^1(\varphi,\beta) & \text{ if } p(i) - e^\varepsilon \cdot p(j-1) > 0, \\
         \emptyset & \text{ otherwise.}
    \end{cases}
\end{align*}
Applying a similar reasoning to $A^2_i (\varphi)$, we observe that
\begin{align*}
    A_i^2(\varphi) = \begin{cases}
         I_i^2(\varphi,\beta) & \text{ if } p(i) - e^\varepsilon \cdot p(j) > 0, \\
         \emptyset & \text{ otherwise}.
    \end{cases}
\end{align*}
The statement of the lemma thus follows.
\end{proof}

The next result shows that $V(\varphi, A^\star(\varphi))$ is maximized by $\varphi^\star = k \cdot \beta$ for some $k \in \mathbb{Z}$.

\begin{lemmaA}\label{lemma:linear}
The function $\varphi \mapsto V(\varphi, A^\star(\varphi))$ is affine over each interval $[k \cdot \beta, (k+1)\cdot \beta]$, $k \in \mathbb{Z}$.
\end{lemmaA}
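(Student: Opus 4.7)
The plan is to combine the explicit description of the worst-case event $A^\star(\varphi)$ from Lemma~\ref{lemma:worst-fixed} with the explicit formulas for $I_i^1(\varphi,\beta)$ and $I_i^2(\varphi,\beta)$ in Definition~\ref{definition:partition}. The key observation is that the index $j = \lceil i - \varphi/\beta\rceil$, which drives everything, is \emph{constant in $\varphi$} on $[k\beta, (k+1)\beta)$: since $i - \varphi/\beta \in (i-k-1, i-k]$ throughout this half-open interval, we have $j = i - k$ for every $i \in \mathbb{Z}$. Plugging $j = i-k$ into Definition~\ref{definition:partition} then yields the explicit, and crucially $i$-independent, lengths
\begin{equation*}
    |I_i^1(\varphi,\beta)| = \varphi - k\beta
    \quad \text{and} \quad
    |I_i^2(\varphi,\beta)| = (k+1)\beta - \varphi.
\end{equation*}

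Next, I would revisit the construction of $A^\star(\varphi)$ in the proof of Lemma~\ref{lemma:worst-fixed}, which expresses the maximum privacy shortfall as a sum over $i \in \mathbb{Z}$ of the terms $|I_i^1(\varphi,\beta)| \cdot \max\{p(i) - e^\varepsilon p(j-1), 0\}$ and $|I_i^2(\varphi,\beta)| \cdot \max\{p(i) - e^\varepsilon p(j), 0\}$. Substituting $j = i-k$ and factoring out the $i$-independent interval lengths gives, for $\varphi \in [k\beta, (k+1)\beta)$,
\begin{equation*}
V(\varphi, A^\star(\varphi)) = (\varphi - k\beta) \cdot C_1 + ((k+1)\beta - \varphi) \cdot C_2,
\end{equation*}
where $C_1 := \sum_{i \in \mathbb{Z}} \max\{p(i) - e^\varepsilon p(i-k-1), 0\}$ and $C_2 := \sum_{i \in \mathbb{Z}} \max\{p(i) - e^\varepsilon p(i-k), 0\}$ depend on $k$ and $p$ but not on $\varphi$. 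Both series are finite because each summand is bounded above by $p(i)$ and $\sum_{i \in \mathbb{Z}} p(i) = 1$. The resulting expression is manifestly affine in $\varphi$ on $[k\beta, (k+1)\beta)$.

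Finally, I would verify that the affine formula also agrees with $V(\varphi, A^\star(\varphi))$ at the right endpoint $\varphi = (k+1)\beta$. At this point the index jumps to $j = i-k-1$, so Definition~\ref{definition:partition} now yields $|I_i^1(\varphi,\beta)| = 0$ and $|I_i^2(\varphi,\beta)| = \beta$; evaluating the expression from Lemma~\ref{lemma:worst-fixed} directly gives $\beta \cdot C_1$, which coincides with the value of the above affine expression at $\varphi = (k+1)\beta$. Hence $\varphi \mapsto V(\varphi, A^\star(\varphi))$ is affine on the closed interval $[k\beta, (k+1)\beta]$.

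The main technical subtlety, and what I expect to require the most care, is precisely this endpoint reconciliation: although the structural description of $A^\star(\varphi)$ changes discontinuously as $\varphi$ crosses the grid point $(k+1)\beta$ (because $j$ jumps from $i-k$ to $i-k-1$), the value of the privacy shortfall must remain continuous and coincide with the one-sided affine extension. Checking this by a direct substitution into the formulas of Definition~\ref{definition:partition} and Lemma~\ref{lemma:worst-fixed} closes the argument.
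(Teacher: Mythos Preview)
Your proposal is correct and follows essentially the same approach as the paper: compute $|I_i^1(\varphi,\beta)|$ and $|I_i^2(\varphi,\beta)|$ as affine functions of $\varphi$ on $[k\beta,(k+1)\beta)$ (the paper writes these as $(\varphi \bmod \beta)$ and $\beta-(\varphi \bmod \beta)$), observe that the index $j$ and hence the selection of contributing terms is constant on this interval, and then separately verify the right endpoint. Your use of $\max\{\cdot,0\}$ in place of the paper's index sets $\mathcal{I}_1,\mathcal{I}_2$ is a cosmetic difference only.
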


\begin{proof}
For any $k \in \mathbb{Z}$, the construction of $A^\star(\varphi)$ in the proof of Lemma~\ref{lemma:worst-fixed} implies that the sets $\mathcal{I}_1$ and $\mathcal{I}_2$ in the statement of the lemma coincide for all $\varphi \in [k \cdot \beta, (k+1)\cdot \beta) = I_k(\beta)$. Therefore, for all $\varphi \in I_k(\beta)$ we have
\begin{align}\label{eq_alpha}
\mspace{-20mu}
    V(\varphi, A^\star(\varphi)) &= \sum_{i \in \mathcal{I}_1} V(\varphi, I_i^1(\varphi,\beta)) + \sum_{i \in \mathcal{I}_2} V(\varphi, I_i^2(\varphi,\beta)) \notag \\
    & = \sum_{i \in \mathcal{I}_1} |I_i^1(\varphi,\beta)|\cdot (p(i) - e^\varepsilon \cdot p(j-1)) + \sum_{i \in \mathcal{I}_2} |I_i^2(\varphi,\beta)|\cdot (p(i) - e^\varepsilon \cdot p(j)) \notag\\
    & =  (\varphi \ \mathrm{mod} \ \beta) \cdot \sum_{i \in \mathcal{I}_1} (p(i) - e^\varepsilon \cdot p(j-1)) + (\beta - (\varphi \ \mathrm{mod} \ \beta)) \cdot  \sum_{i \in \mathcal{I}_2} (p(i) - e^\varepsilon \cdot p(j)) \notag \\
    & = (\varphi \ \mathrm{mod} \ \beta) \cdot \left[\sum_{i \in \mathcal{I}_1} (p(i) - e^\varepsilon \cdot p(j-1)) - \sum_{i \in \mathcal{I}_2} (p(i) - e^\varepsilon \cdot p(j))  \right] + \beta \cdot \sum_{i \in \mathcal{I}_2} (p(i) - e^\varepsilon \cdot p(j)),
\end{align}
where $j = \lceil i - \varphi / \beta \rceil$ as specified by Definition~\ref{definition:partition}. Here, the first equality follows from Lemma~\ref{lemma:worst-fixed} and Observation~\ref{obs:linearity}, the second equality is due to Observation~\ref{obs:discretized_interval}, the third equality holds since
\begin{equation*}
    |I_i^1(\varphi, \beta)| = \varphi + \lceil i - \varphi / \beta \rceil \cdot\beta - i\cdot \beta = \varphi - \beta \cdot \lfloor \varphi / \beta \rfloor = (\varphi \ \mathrm{mod} \ \beta)
\end{equation*}
and $|I_i^2(\varphi, \beta)| = \beta - |I_i^1(\varphi, \beta)| = \beta -  (\varphi \ \mathrm{mod} \ \beta)$. In the final expression, all terms except for $(\varphi \ \mathrm{mod} \ \beta)$ are independent of $\varphi$, and $\varphi \mapsto (\varphi \ \mathrm{mod}  \ \beta)$ is affine over $\varphi \in I_k(\beta)$.
We thus conclude that $\varphi \mapsto V(\varphi, A^\star(\varphi))$ is affine over $\varphi \in I_k(\beta)$. 

To conclude the proof, we show that the result holds for the closure of $I_k(\beta)$, that is, $\varphi \mapsto V(\varphi, A^\star(\varphi))$ is not discontinuous at $\overline{\varphi} = (k+1)\cdot \beta$. In other words, we show that
\begin{align*}
    \underset{\varphi \rightarrow \overline{\varphi}}{\lim} \  V(\varphi, A^\star(\varphi))=  V(\overline{\varphi}, A^\star(\overline{\varphi})).
\end{align*}
To this end, we first note that $(\overline{\varphi} \ \mathrm{mod} \ \beta) = 0$ as well as $j = \lceil i - \overline{\varphi}/\beta \rceil = i - k - 1$, and hence \eqref{eq_alpha} implies that
\begin{align*}
     V(\overline{\varphi}, A^\star(\overline{\varphi})) = \beta \cdot \sum_{i \in \mathcal{I}_2} (p(i) - e^\varepsilon \cdot p(j)) & = \beta \cdot \sum_{i \in \mathcal{I}_2} (p(i) - e^\varepsilon \cdot p(i - k - 1)) \\
     & = \beta \cdot \sum_{i \in \mathbb{Z}} \max\{p(i) - e^\varepsilon \cdot p(i - k - 1), 0\}.
\end{align*}
Since we also have $j = \lceil i - \varphi / \beta \rceil = i - k$ for all $k \cdot \beta \leq \varphi < (k+1) \cdot \beta$, it follows that
\begin{align*}
    \mspace{-60mu}
    \underset{\varphi \rightarrow \overline{\varphi}}{\lim} V(\varphi, A^\star(\varphi))
    = & \underset{\varphi \rightarrow \overline{\varphi}}{\lim} \Big( (\varphi \ \mathrm{mod} \ \beta) \cdot \Big[\sum_{i \in \mathcal{I}_1} (p(i) - e^\varepsilon \cdot p(j-1)) - \sum_{i \in \mathcal{I}_2} (p(i) - e^\varepsilon \cdot p(j))  \Big] + \beta \cdot \sum_{i \in \mathcal{I}_2} (p(i) - e^\varepsilon \cdot p(j)) \Big) \\
    = & \Big[\sum_{i \in \mathcal{I}_1} (p(i) - e^\varepsilon \cdot p(j-1)) - \sum_{i \in \mathcal{I}_2} (p(i) - e^\varepsilon \cdot p(j))  \Big] \cdot \underset{\varphi \rightarrow \overline{\varphi}}{\lim} \{ (\varphi \ \mathrm{mod} \ \beta)  \}+ \beta \cdot \sum_{i  \in \mathcal{I}_2} (p(i) - e^\varepsilon \cdot p(j))\\
    = & \beta \cdot \sum_{i \in \mathcal{I}_1} (p(i) - e^\varepsilon \cdot p(j-1)) = \beta \cdot \sum_{i \in \mathcal{I}_1} (p(i) - e^\varepsilon \cdot p(i-k-1)) \\
    = & \beta \cdot \sum_{i \in \mathbb{Z}} \max\{p(i) - e^\varepsilon \cdot p(i - k - 1), 0\} = V(\overline{\varphi}, A^\star(\overline{\varphi})),
\end{align*}
where the first equality follows from~\eqref{eq_alpha}, the second equality exploits the linearity of limits, the third equality holds since $\underset{\varphi \rightarrow \overline{\varphi}}{\lim} (\varphi \ \mathrm{mod} \ \beta) = \beta$, and the final equalities follow from substituting $j = i-k$ and using the construction of $\mathcal{I}_1$, which includes all indices $i\in \mathbb{Z}$ for which the incremental privacy shortfall $p(i) - e^\varepsilon \cdot p(i - k - 1)$ is positive. This shows that $\varphi \mapsto V(\varphi, A^\star(\varphi))$ is not discontinuous at $\overline{\varphi} = (k+1)\cdot \beta$ and therefore concludes the proof.
\end{proof}

We can now prove Lemma~\ref{prop:finite} by showing that problem~\eqref{problem:restricted_main} in the statement of Lemma~\ref{lemma_beginning} has the same optimal value as problem~\ref{problem:restricted_main_finite}. \\[-2mm]

\noindent \textbf{Proof of Lemma~\ref{prop:finite}.} $\;$
First notice that the DP constraints of~\ref{problem:restricted_main_finite} can be written as
\begin{align*}
    \displaystyle \sum\limits_{i \in \mathbb{Z}} p(i)\cdot \dfrac{|A \cap I_i(\beta)|}{\beta} \leq e^\varepsilon \cdot \sum\limits_{i \in \mathbb{Z}} p(i) \cdot \dfrac{|(A - \varphi ) \cap I_i(\beta)|}{\beta} + \delta \quad \forall (\varphi, A) \in \mathcal{E}(\beta)
\end{align*}
since for any $(\varphi, A) \in \mathcal{E}(\beta)$ we have $|A \cap I_i(\beta)| / \beta = \indicate{I_i(\beta) \subseteq A}$ as well as $|(A-\varphi) \cap I_i(\beta)| / \beta = \indicate{I_i(\beta) + \varphi \subseteq A}$ by definition. This shows that~\ref{problem:restricted_main_finite} is a relaxation of problem~\eqref{problem:restricted_main} since $\mathcal{E}(\beta) \subset \mathcal{E}$. Hence, if~\ref{problem:restricted_main_finite} is infeasible, then so is problem~\eqref{problem:restricted_main}, and the result follows. To complete the proof, we show that any $p$ feasible in~\ref{problem:restricted_main_finite} is also feasible in problem~\eqref{problem:restricted_main}.

Fix any feasible solution $p$ to~\ref{problem:restricted_main_finite}, and assume to the contrary that $p$ violates a DP constraint $(\varphi, A) \in \mathcal{E}$ in problem~\eqref{problem:restricted_main}. In that case, Lemmas~\ref{lemma:worst-fixed} and~\ref{lemma:linear} imply that there is a constraint $(\varphi^\star, A^\star (\varphi^\star)) \in \mathcal{E} (\beta)$ with a weakly higher privacy shortfall than $(\varphi, A)$. Indeed, Lemma~\ref{lemma:linear} and our earlier assumption that $\Delta f$ is divisible by $\beta$ imply that $\varphi^\star$ can without loss of generality be chosen such that $\varphi^\star \in [-\Delta f, \Delta f] \cap \{k\cdot\beta\}_{k \in \mathbb{Z}} =  \setvarphi(\beta)$. Since such $\varphi^\star$ is a multiple of $\beta$, we have $I_i^1(\varphi,\beta) = \emptyset$ and $I_i^2(\varphi,\beta) = I_i(\beta)$ for all $i \in \mathbb{Z}$. Hence, Lemma~\ref{lemma:worst-fixed} implies that $A^\star(\varphi^\star)$ can be chosen such that $A^\star(\varphi^\star) = \bigcup_{i \in \mathcal{I}_2} I_i(\beta)$ for some $\mathcal{I}_2 \subseteq \mathbb{Z}$, which in turn implies that $A^\star(\varphi) \in \mathcal{F}(\beta)$. Thus, there must be a violated DP constraint $(\varphi^\star, A^\star (\varphi^\star))$ such that $\varphi^\star \in \setvarphi(\beta)$ and $A^\star(\varphi) \in \mathcal{F}(\beta)$, that is, $(\varphi^\star, A^\star (\varphi^\star)) \in \mathcal{E} (\beta)$. This contradicts our earlier assumption that $p$ is feasible in~\ref{problem:restricted_main_finite}, and thus the result follows.
\qed

\subsection{Proof of Proposition~\ref{prop:bounded}}

Appending the additional constraint~\eqref{eq:bound_sup} to problem~\ref{problem:restricted_main_finite} yields
\begin{align*}
    \begin{array}{cll}
        \underset{p}{\mathrm{minimize}} & \displaystyle \sum\limits_{i \in [\pm L]} c_i(\beta) \cdot p(i) & \\[1.5em]
        \mathrm{subject\; to} &\displaystyle p: [\pm L] \mapsto \mathbb{R}_{+}, \ \sum_{i \in [\pm L]} p(i) = 1 & \\[1.5em]
        &\displaystyle \sum\limits_{i \in [\pm L]} \indicate{I_i(\beta) \subseteq A} \cdot p(i) \leq e^\varepsilon \cdot \sum\limits_{i \in [\pm L]} \indicate{I_i(\beta) + \varphi \subseteq A} \cdot p(i) + \delta & \forall (\varphi, A) \in \mathcal{E}(\beta).\\
    \end{array}
\end{align*}
The result follows if we show that any DP constraint $(\varphi, A) \in \mathcal{E}(\beta)\setminus\mathcal{E}(L, \beta)$ is redundant in the above problem. To this end, fix any $p$ that satisfies the first constraint. For any $(\varphi, A) \in \mathcal{E}(\beta)$, define $A_L := A \cap [-L\cdot \beta, (L+1)\cdot \beta)$ so that $(\varphi, A_L) \in \mathcal{E}(L, \beta)$. We show that if $p$ satisfies the DP constraint $(\varphi, A_L)$, then it also satisfies the DP constraint $(\varphi, A)$. Indeed, we observe that
\begin{align*}
    & \sum_{i \in [\pm L]} \indicate{I_i(\beta)\subseteq A } \cdot p(i)  -  e^\varepsilon \cdot \sum_{i \in [\pm L]}  \indicate{(I_i(\beta)+ \varphi) \subseteq A}\cdot p(i) \\
    = & \sum_{i \in [\pm L]} \indicate{I_i(\beta)\subseteq A_L } \cdot p(i)  -  e^\varepsilon \cdot \sum_{i \in [\pm L]}  \indicate{(I_i(\beta)+ \varphi) \subseteq A} \cdot p(i) \\
    \leq & \sum_{i \in [\pm L]} \indicate{I_i(\beta)\subseteq A_L } \cdot p(i)  -  e^\varepsilon \cdot \sum_{i \in [\pm L]}  \indicate{(I_i(\beta)+ \varphi) \subseteq A_L} \cdot p(i),
\end{align*}
where the equality follows from
$$\sum_{i \in [\pm L]} \indicate{I_i(\beta)\subseteq A } \cdot p(i) = \sum_{i \in [\pm L]} \indicate{I_i(\beta)\subseteq A_L} \cdot p(i)  + \sum_{i \in [\pm L]} \underbrace{\indicate{I_i(\beta)\subseteq A \setminus A_L } \cdot p(i) }_{=0}$$
which holds since no $i \in [\pm L]$ can satisfy $I_{i}(\beta) \subseteq A \setminus A_L$. The inequality in the third row, on the other hand, follows from $A_L \subseteq A$. Thus, the DP constraints $\mathcal{E}(\beta)\setminus \mathcal{E}(L, \beta)$ are redundant since they are weakly dominated by the DP constraints $(\varphi, A_L) \in \mathcal{E}(L, \beta)$.
\qed

\subsection{Proof of Proposition~\ref{prop:weakdual}}

As $(\theta,\psi) \in  \mathbb{R}\times \mathcal{M}_{+}(\mathcal{E})$ is feasible in \ref{problem:integral_dual} and $\delta > 0$, we have that $\int_{(\varphi, A) \in \mathcal{E}} \diff \psi(\varphi, A) < \infty$ from the objective function of \ref{problem:integral_dual}, which shows that $\mathcal{E}$ is $\sigma$-finite with measure $\psi$. Moreover, $\gamma$ is a probability measure on $\mathbb{R}$, and hence it is also $\sigma$-finite. We now observe that
\begin{align*}
    & \displaystyle \int_{x \in \mathbb{R}} c(x) \diff \gamma(x) \\
    \geq & \int_{x \in \mathbb{R}} \left[ \theta - \int_{(\varphi, A)\in\mathcal{E}} \indicate{x \in A} \diff \psi(\varphi, A) + e^\varepsilon \cdot \int_{(\varphi, A) \in \mathcal{E}} \indicate{x + \varphi \in A}\diff \psi(\varphi, A) \right] \diff \gamma(x)\\
    = & \theta -\int_{(\varphi, A)\in\mathcal{E}} \left[ \int_{x \in \mathbb{R}} \indicate{x \in A}\diff \gamma(x) - e^\varepsilon \cdot \int_{x \in \mathbb{R}} \indicate{x+\varphi \in A} \diff \gamma(x)\right]\diff \psi(\varphi, A)  \\
    \geq & \theta- \int_{(\varphi, A) \in \mathcal{E}} \delta \diff \psi(\varphi, A) .
\end{align*}
Here, the first inequality follows from the constraints of problem~\ref{problem:integral_dual}. The equality follows from Fubini's theorem, which is applicable since the indicator functions are integrable on $\mathbb{R}\times \mathcal{E}$ with the associated product measure and the fact that $\int_{x \in \mathbb{R}} \diff \gamma(x) = 1$. The second inequality follows from the constraints of problem~\ref{problem:integral_main} as well as the fact that $\psi$ is a non-negative measure.
\qed

\subsection{Proof of Lemma~\ref{prop:lb-beta}}

We first show that under the additional constraint~\eqref{constraints:extra}, the DP constraints in~\ref{problem:integral_dual} reduce to
\begin{align}
    \theta \leq c(x) + \int_{(\varphi, A) \in \mathcal{E}(\beta)} \indicate{I_i(\beta) \subseteq A}\mathrm{d}\psi(\varphi, A) - e^\varepsilon \cdot \int_{(\varphi, A) \in \mathcal{E}(\beta)}\indicate{I_i(\beta) + \varphi \subseteq A} \mathrm{d}\psi(\varphi, A)\nonumber \\
    \pushright{\forall i \in \mathbb{Z}, \ \forall x \in I_i(\beta).} \label{eq:the_secret_life_of_dp_constraints}
\end{align}
We then argue that for every $i \in \mathbb{Z}$, all constraints~\eqref{eq:the_secret_life_of_dp_constraints} indexed by $(i, x)$, $x \in I_i(\beta)$, are simultaneously satisfied if and only if the DP constraint indexed by $i$ is satisfied in~\ref{LB-beta}. The result then follows since both the decision variables and the objective function in~\ref{problem:integral_dual} coincide with their counterparts in~\ref{LB-beta}, restricted to the elements $(\varphi, A) \in \mathcal{E} (\beta)$ as stipulated by~\eqref{constraints:extra}.

In view of the first step, fix any $x \in \mathbb{R}$, and select $i \in \mathbb{Z}$ such that $x \in I_i(\beta)$. Under the additional constraint~\eqref{constraints:extra}, the first integral in the DP constraint of~\ref{problem:integral_dual} indexed by $x$ reduces to
\begin{align*}
    \displaystyle \int_{(\varphi, A) \in \mathcal{E}} \indicate{x \in A}\mathrm{d}\psi(\varphi, A) &= \displaystyle \int_{(\varphi, A) \in \mathcal{E}(\beta)} \indicate{x \in A}\mathrm{d}\psi(\varphi, A)  + \underbrace{\int_{(\varphi, A) \in \mathcal{E} \setminus \mathcal{E}(\beta)} \indicate{x \in A}\mathrm{d}\psi(\varphi, A) }_{= 0} \\
    & =\displaystyle \int_{(\varphi, A) \in \mathcal{E}(\beta)} \indicate{I_i(\beta) \subseteq A}\mathrm{d}\psi(\varphi, A).
\end{align*}
Here, the last integral in the first row vanishes due to~\eqref{constraints:extra}, whereas the second equality holds since for all $(\varphi, A) \in \mathcal{E}(\beta)$, the requirement that $A \in \mathcal{F} (\beta)$ implies that $x \in A$ only if $I_i (\beta) \subseteq A$. Note that the integral in the second row above coincides with the first integral in~\eqref{eq:the_secret_life_of_dp_constraints} indexed by $(i, x)$. A similar argument shows that under the additional constraint~\eqref{constraints:extra}, the second integral in the DP constraint of~\ref{problem:integral_dual} indexed by $x$ reduces to the second integral in~\eqref{eq:the_secret_life_of_dp_constraints} indexed by $(i, x)$. In summary, under the additional constraint~\eqref{constraints:extra} the DP constraints in~\ref{problem:integral_dual} indeed reduce to~\eqref{eq:the_secret_life_of_dp_constraints}.

As for the second step, note that for any fixed $i \in \mathbb{Z}$, the constraints~\eqref{eq:the_secret_life_of_dp_constraints} indexed by $(i, x)$, $x \in I_i(\beta)$, only differ in their additive terms $c (x)$. Thus, for any fixed $i \in \mathbb{Z}$, all constraints~\eqref{eq:the_secret_life_of_dp_constraints} indexed by $(i, x)$, $x \in I_i(\beta)$, are satisfied if and only if they are satisfied for the smallest value $c(x)$, $x \in I_i(\beta)$, which is precisely what the DP constraint in~\ref{LB-beta} indexed by $i$ stipulates.
\qed

\subsection{Proof of Proposition~\ref{prop:lb-L-beta}}

First observe that the additional constraint~\eqref{constraints:extratwo} reduces the uncountable set $\mathcal{E} (\beta)$ in the definition of the decision variables as well as the objective function and the constraints of~\ref{LB-beta} to the finite subset $\mathcal{E} (L, \beta)$, which allows us to replace the measure $\psi \in \mathcal{M}_+ (\mathcal{E} (\beta))$ in~\ref{LB-beta} with the discrete map $\psi: \mathcal{E}(L, \beta) \mapsto \mathbb{R}_{+}$ in~\ref{lb-L-beta} as well as replace all integrals in~\ref{LB-beta} with sums in~\ref{lb-L-beta}.

The result now follows if we show that under the additional constraint~\eqref{constraints:extratwo}, all DP constraints in~\ref{LB-beta} indexed by $i \in \mathbb{Z} \setminus [\pm (L + \Delta f / \beta)]$ are weakly dominated by DP constraints indexed by $i \in [\pm (L + \Delta f / \beta)]$. Indeed, observe that the DP constraints indexed by $i \in \mathbb{Z} \setminus [\pm (L + \Delta f / \beta)]$ simplify to
\begin{subequations}
\begin{gather}
\theta \leq \underline{c}_i(\beta) \quad \forall i \in \mathbb{Z} \setminus [\pm (L + \Delta f / \beta )] \label{eliminated}\\
\intertext{since $\indicate{I_i(\beta) \subseteq A} = \indicate{I_i(\beta) + \varphi \subseteq A} = 0$ for all $(\varphi, A) \in \mathcal{E}(L, \beta)$ whenever $i \in \mathbb{Z} \setminus [\pm (L + \Delta f / \beta )]$. In contrast, the constraints indexed by $i \in [\pm (L + \Delta f / \beta)] \setminus [\pm L]$ simplify to}
\theta \leq - e^\varepsilon \cdot \sum_{(\varphi, A) \in \mathcal{E}(L, \beta)}\indicate{I_i(\beta) + \varphi \subseteq A} \cdot \psi(\varphi, A) + \underline{c}_i(\beta)  \label{eliminates}
\end{gather}
\end{subequations}
since $\indicate{I_i(\beta) \subseteq A} = 0$ for all $(\varphi, A) \in \mathcal{E}(L, \beta)$ whenever $i \in  [\pm (L + \Delta f / \beta)] \setminus [\pm L]$. Note that $\underline{c}_i(\beta)$ inherits monotonicity from $c_i(\beta)$, that is, we have $\underline{c}_i(\beta) \leq \underline{c}_{i + 1}(\beta)$ for all $i \geq 0$ as well as $\underline{c}_i(\beta) \leq \underline{c}_{i - 1}(\beta)$ for all $i \leq 0$. This property, along with the non-negativity of $\psi$, shows that the constraints~\eqref{eliminated} are implied by constraints~\eqref{eliminates}, and the result thus follows.
\qed

\subsection{Proof of Theorem~\ref{thm:strong_duality}}

The proof of Theorem~\ref{thm:strong_duality} relies on the feasibility and monotonicity of the upper bounding problems~\ref{L-UB}, which they inherit from the upper bounding problems~\ref{problem:restricted_main_finite}. We prove these results first in Sections~\ref{subsection:1} and~\ref{subsection:2}, and we subsequently prove Theorem~\ref{thm:strong_duality} in Section~\ref{subsection:3}.

\subsubsection{Monotonicity and Feasibility of~\ref{problem:restricted_main_finite}}\label{subsection:1}

The upper bound~\ref{problem:restricted_main_finite} employs a discretization that is parametrized by $\beta$. We first show that the optimal value of this problem is monotonically non-decreasing in $\beta$ in the following sense.

\begin{lemmaA}\label{lemma:monotonic}
For any $\varepsilon > 0$, $\delta > 0$ and $\beta > 0$, the optimal value of problem~\ref{problem:restricted_main_finite} satisfies $\text{\ref{problem:restricted_main_finite}} \geq \text{\hyperref[{problem:restricted_main_finite}]{$\mathrm{P}(\beta / k)$}}$ for all $k \in \mathbb{N}$.
\end{lemmaA}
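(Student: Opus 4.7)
The plan is to show that $\mathrm{P}(\beta/k)$ is effectively a relaxation of $\mathrm{P}(\beta)$ by exhibiting, for every feasible $p$ of $\mathrm{P}(\beta)$, a feasible solution $p'$ of $\mathrm{P}(\beta/k)$ that attains the same objective value. Since $k \in \mathbb{N}$, every interval $I_i(\beta)$ decomposes as the disjoint union of the $k$ sub-intervals $I_{ik}(\beta/k), I_{ik+1}(\beta/k), \ldots, I_{ik+k-1}(\beta/k)$, which motivates the lift $p'(j) := p(\lfloor j/k \rfloor)/k$ for all $j \in \mathbb{Z}$. Clearly $p' : \mathbb{Z} \to \mathbb{R}_+$ and $\sum_{j \in \mathbb{Z}} p'(j) = \sum_{i \in \mathbb{Z}} p(i) = 1$, so $p'$ is a well-defined probability mass function.

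I would next verify that the objective value is preserved. Additivity of the Lebesgue integral over the partition $I_i(\beta) = \bigcup_{j = ik}^{ik+k-1} I_j(\beta/k)$ yields the identity $c_i(\beta) = (1/k) \sum_{j=ik}^{ik+k-1} c_j(\beta/k)$, from which a direct re-indexing gives $\sum_{j \in \mathbb{Z}} c_j(\beta/k) \, p'(j) = \sum_{i \in \mathbb{Z}} c_i(\beta) \, p(i)$. So the objectives coincide.

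The main subtlety is verifying that $p'$ satisfies all DP constraints of $\mathrm{P}(\beta/k)$, which are indexed by the strictly larger set $\mathcal{E}(\beta/k) \supset \mathcal{E}(\beta)$; thus one cannot simply push the $\mathrm{P}(\beta)$ constraints through term by term. My preferred route is to invoke Lemma~\ref{prop:finite}: by construction, both $\mathrm{P}(\beta)$ and $\mathrm{P}(\beta/k)$ are equivalent to restrictions of the infinite-dimensional problem~\ref{problem:integral_main} to piecewise constant measures of the form~\eqref{restriction} at the respective granularities. A short computation shows that the measure $\gamma$ on $\mathbb{R}$ induced by $p$ at granularity $\beta$ coincides, on every Borel set, with the measure induced by $p'$ at granularity $\beta/k$, because $p'$ uniformly redistributes the mass $p(i)$ across the $k$ sub-intervals partitioning $I_i(\beta)$. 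Hence feasibility of $p$ in $\mathrm{P}(\beta)$ implies feasibility of $\gamma$ in~\ref{problem:integral_main}, which by the reverse direction of Lemma~\ref{prop:finite} applied at granularity $\beta/k$ implies feasibility of $p'$ in $\mathrm{P}(\beta/k)$, completing the argument.
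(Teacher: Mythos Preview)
Your proposal is correct and takes essentially the same approach as the paper: both construct the lift $p'(j)=p(\lfloor j/k\rfloor)/k$ (the paper writes it in double-index form $p''(i,l)=p'(i)/k$) and observe that the objective value is preserved. The only cosmetic difference is that the paper dispatches feasibility of the lifted solution with a ``one readily verifies,'' whereas you argue it by noting that $p$ and $p'$ induce the identical piecewise-constant measure $\gamma$ and then appeal to Lemma~\ref{prop:finite} at the two granularities---a perfectly valid route that leans on the feasible-set correspondence established in that lemma's proof rather than merely its optimal-value statement.
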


\begin{proof}
The result trivially holds if~\ref{problem:restricted_main_finite} is infeasible. Assume therefore that~\ref{problem:restricted_main_finite} is feasible and fix an arbitrary feasible solution $p'$ in~\ref{problem:restricted_main_finite}. For any $k \in \mathbb{N}$, problem~\hyperref[{problem:restricted_main_finite}]{$\mathrm{P}(\beta / k)$} can be written as
\begin{align}\label{problem:finer_beta}\tag{$\mathrm{P}(\beta / k)$}
\begin{array}{cl}
    \underset{p}{\mathrm{minimize}} & \displaystyle \sum\limits_{i \in \mathbb{Z}} \sum\limits_{l \in [k]} c_{il}(\beta) \cdot p(i,l)  \\
    \mathrm{subject\; to} &\displaystyle p: \mathbb{Z} \times [k] \mapsto \mathbb{R}_{+}, \ \sum\limits_{i \in \mathbb{Z}} \sum\limits_{l \in [k]} p(i,l) = 1 \\
    &\displaystyle \sum\limits_{i \in \mathbb{Z}} \sum\limits_{l \in [k]} \indicate{I_{il}(\beta) \subseteq A} \cdot p(i,l) \leq e^\varepsilon \cdot \sum\limits_{i \in \mathbb{Z}} \sum\limits_{l \in [k]} \indicate{I_{il}(\beta) + \varphi \subseteq A} \cdot p(i,l) + \delta \\
    & \pushright{\forall (\varphi, A) \in \mathcal{E}(\beta / k),}
    \end{array}
\end{align}
where $I_{il}(\beta) := [(i + (l-1) / k) \cdot \beta, (i + l / k) \cdot \beta)$ and $c_{il}(\beta) := (\beta / k)^{-1} \cdot \int_{x \in I_{il}(\beta)} c(x) \diff x$. One readily verifies that $p'' (i, l) = p'(i) / k$, $i \in \mathbb{Z}$ and $l \in [k]$, is feasible in problem~\ref{problem:finer_beta} and attains the same objective value as $p'$ in~\ref{problem:restricted_main_finite}. The statement thus follows.
\end{proof}

We next show that problem~\ref{problem:restricted_main_finite} is feasible. 
{\color{black}
\begin{lemmaA}\label{lemma:staircase}
For any $\delta > 0$, there is $M \in \mathbb{R}$ such that $\text{\text{\hyperref[{problem:restricted_main_finite}]{$\mathrm{P}(\Delta f / k)$}}} \leq M$ for all $\varepsilon > 0$ and $k \in \mathbb{N}$.
\end{lemmaA}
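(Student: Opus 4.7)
My plan is to exhibit, for every $k \in \mathbb{N}$, an explicit feasible solution to $\mathrm{P}(\Delta f/k)$ whose objective value is uniformly bounded in both $\varepsilon$ and $k$. The natural candidate is a discrete uniform distribution whose support consists of a symmetric block of intervals around the origin, with the number of blocks chosen so that the total support length depends only on $\delta$ and $\Delta f$---not on $\varepsilon$ or $k$. The guiding intuition is that a sufficiently wide uniform distribution is already $(0,\delta)$-DP, and therefore $(\varepsilon,\delta)$-DP for every $\varepsilon>0$, because the total-variation mass displaced by any shift $\varphi \in [-\Delta f, \Delta f]$ is at most $\delta$.

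Concretely, I would set $\beta := \Delta f/k$, define $N := \lceil k/\delta \rceil$, and take $p(i) = 1/(2N+1)$ for $i \in [\pm N]$ and $p(i) = 0$ otherwise. This is plainly a probability distribution on $\mathbb{Z}$. To verify the DP constraints, fix $(\varphi, A) \in \mathcal{E}(\beta)$ with $\varphi = m\beta$ for some $m \in \{-k,\ldots,k\}$; after substituting $j = i + m$ on the right-hand side, the constraint reduces to
\[
\sum_{j \,:\, I_j(\beta) \subseteq A} \bigl( p(j) - e^\varepsilon \, p(j-m) \bigr) \,\leq\, \delta.
\]
Under the uniform structure of $p$, the worst-case $A$ includes exactly those $j \in [\pm N]$ with $j-m \notin [\pm N]$. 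A short counting argument shows there are at most $\min(|m|, 2N+1) \leq k$ such indices, each contributing $1/(2N+1)$, so the supremum of the left-hand side is at most $k/(2N+1) \leq \delta$ by our choice of $N$, establishing feasibility for every $\varepsilon > 0$.

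It remains to bound the objective uniformly. Since both $N\beta$ and $(N+1)\beta$ are bounded above by $\Delta f/\delta + 2\Delta f$, the support of $p$ lies in a compact interval $[-T,T]$ whose radius $T$ depends only on $\delta$ and $\Delta f$ (and in particular not on $k$ or $\varepsilon$). Assumption~\ref{assumptions_c}\emph{(a)} then implies that $c$ is bounded on $[-T,T]$ by some finite constant $M$, and since the objective value $\sum_{i \in [\pm N]} c_i(\beta)\cdot p(i)$ equals the average of $c$ over the support, it is at most $M$ uniformly in $k$ and $\varepsilon$. The only mildly delicate step is the worst-case privacy-shortfall analysis in the feasibility check, but the combinatorial simplicity afforded by the uniform choice of $p$ reduces it to the counting argument above, so no substantive obstacle is anticipated.
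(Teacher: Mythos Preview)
Your proof is correct, and the underlying idea---a sufficiently wide uniform distribution is already $(0,\delta)$-DP, hence $(\varepsilon,\delta)$-DP for every $\varepsilon>0$---is the same as the paper's. The difference is purely structural. The paper constructs a single uniform solution for the coarsest discretization $\mathrm{P}^{0}(\Delta f)$ (that is, $k=1$ and $\varepsilon=0$), verifies the DP constraint only for the single shift $\varphi=\pm\Delta f$, and then invokes the monotonicity Lemma~\ref{lemma:monotonic} together with the trivial inequality $\mathrm{P}^{0}(\Delta f)\ge \mathrm{P}(\Delta f)$ to propagate the bound to all $k$ and all $\varepsilon>0$. You instead construct a feasible solution directly at every granularity $k$, choosing $N=\lceil k/\delta\rceil$ so that the support width $(2N+1)\beta$ stays uniformly bounded by roughly $2\Delta f/\delta$. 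Your route is slightly more self-contained since it does not rely on Lemma~\ref{lemma:monotonic}; the paper's route is slightly shorter since it checks feasibility at only one discretization level and a single shift value.
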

}
{\color{black}
\begin{proof}
    Fix $\delta > 0$ and denote by~\hyperref[{problem:restricted_main_finite}]{$\mathrm{P}^{0}(\Delta f)$} the variant of \hyperref[{problem:restricted_main_finite}]{$\mathrm{P}(\Delta f)$} that replaces $\varepsilon$ with $0$. We show that there exists $M \in \mathbb{R}$ such that $\text{\hyperref[{problem:restricted_main_finite}]{$\mathrm{P}^{0}(\Delta f)$}} \leq M$. The statement then follows since for any $\varepsilon > 0$ and $k \in \mathbb{N}$, we have $\text{\hyperref[{problem:restricted_main_finite}]{$\mathrm{P}^{0}(\Delta f)$}} \geq \text{\hyperref[{problem:restricted_main_finite}]{$\mathrm{P}(\Delta f)$}} \geq \text{\hyperref[{problem:restricted_main_finite}]{$\mathrm{P}(\Delta f / k)$}}$, where the first inequality is direct and the second inequality is due to Lemma~\ref{lemma:monotonic}.
    
    Consider the following solution to \hyperref[{problem:restricted_main_finite}]{$\mathrm{P}^{0}(\Delta f)$}:
    \begin{align*}
        p(i) \; = \;
        \begin{cases}
            \dfrac{1}{2  \lceil 1 / (2\delta) \rceil} & \text{ if } i \in \{- \lceil 1 / (2\delta)\rceil, \ldots,  \lceil 1 / (2\delta)\rceil - 1\}, \\
            0 & \text{ otherwise}
        \end{cases} 
        \quad \forall i \in \mathbb{Z}
    \end{align*}
    To confirm that $p$ is feasible in \hyperref[{problem:restricted_main_finite}]{$\mathrm{P}^{0}(\Delta f)$}, first note that by construction, $p$ is a valid probability distribution.
    To see that $p$ also satisfies the DP constraints, note that in problem \hyperref[{problem:restricted_main_finite}]{$\mathrm{P}^{0}(\Delta f)$}, these constraints simplify to
    \begin{align*}
        \sum_{i \in \mathbb{Z}} \indicate{I_i(\Delta f) \subseteq A} \cdot p(i) - \sum_{i \in \mathbb{Z}} \indicate{I_i(\Delta f) + \varphi \subseteq A} \cdot p(i) \leq \delta \quad \forall \varphi \in \{ -\Delta f , 0 , \Delta f\}, \ A \in \mathcal{F}(\Delta f).
    \end{align*}
    Since the constraints associated with $\varphi = 0$ are vacuously satisfied, it is sufficient to investigate the cases where $\varphi = \pm \Delta f$. Consider the constraints associated with $\varphi = \Delta f$:
    \begin{align*}
        & \sum_{i \in \mathbb{Z}} (\indicate{I_i(\Delta f) \subseteq A} - \indicate{I_{i+1}(\Delta f) \subseteq A})  \cdot p(i) \leq \delta \mspace{125mu} \forall A \in \mathcal{F}(\Delta f) \\
        \Longleftrightarrow \quad &
        \sup_{A \in \mathcal{F}(\Delta f)} \left[ \sum_{i \in \mathbb{Z}} (\indicate{I_i(\Delta f) \subseteq A} - \indicate{I_{i+1}(\Delta f) \subseteq A}) \cdot p(i) \right] \leq \delta
    \end{align*}
    The supremum in the second row is attained, among others, by the worst-case event $A^\star = I_{\lceil 1 / (2\delta) \rceil - 1}(\Delta f) \in \mathcal{F}(\Delta f)$. Indeed, one readily observes that $i = \lceil 1 / (2\delta) \rceil - 1$ is the only index for which $\indicate{I_{i}(\Delta f) \subseteq A} = 1$ and $\indicate{I_{i+1}(\Delta f) \subseteq A} = 0$. For $A = A^\star$, however, the DP constraint reduces to $p(\lceil 1 / (2\delta) \rceil - 1) = 1 / (2 \lceil 1 / (2 \delta) \rceil) \leq \delta$, which is satisfied by construction. We thus conclude that $p$ satisfies all DP constraints of \hyperref[{problem:restricted_main_finite}]{$\mathrm{P}^{0}(\Delta f)$} associated with $\varphi = \Delta f$ and $A \in \mathcal{F} (\Delta f)$. An analogous argument for $\varphi = -\Delta f$ shows that $p$ indeed satisfies all DP constraints of \hyperref[{problem:restricted_main_finite}]{$\mathrm{P}^{0}(\Delta f)$}.

    The solution $p$ attains a finite objective value in \hyperref[{problem:restricted_main_finite}]{$\mathrm{P}^{0}(\Delta f)$}, finally, since
    \begin{align*}
        \sum_{i \in \mathbb{Z}} c_i(\Delta f) \cdot p(i) = \dfrac{1}{2  \lceil 1 / (2\delta) \rceil} \cdot \sum_{i = - \lceil 1 / (2\delta)\rceil}^{\lceil 1 / (2\delta)\rceil - 1} c_i(\Delta f) =: M <\infty.
    \end{align*}
    Since~\hyperref[{problem:restricted_main_finite}]{$\mathrm{P}^{0}(\Delta f)$} is a minimization problem, $ \text{\hyperref[{problem:restricted_main_finite}]{$\mathrm{P}^{0}(\Delta f)$}} \leq M$ thus follows.
\end{proof}}

Lemma~\ref{lemma:staircase} implies that \hyperref[{problem:restricted_main_finite}]{$\mathrm{P}(\Delta f / k )$} is feasible for any fixed $\varepsilon, \delta > 0$ and $k \in \mathbb{N}$.

\subsubsection{Monotonicity and Feasibility of~\ref{L-UB}}\label{subsection:2}

We first show that problem~\ref{L-UB} is monotonically non-increasing in $L$ and monotonically non-decreasing in $\beta$ in the following sense.

\begin{lemmaA}\label{lemma:monotonic_finite}
For any $\varepsilon > 0$, $\delta > 0$, $L' \in \mathbb{N}$ and $\beta > 0$, the optimal value of problem~\hyperref[{L-UB}]{$\mathrm{P}(L', \beta)$} satisfies $\text{\hyperref[{L-UB}]{$\mathrm{P}(L', \beta)$}} \geq \text{\hyperref[{L-UB}]{$\mathrm{P}(L, \beta / k)$}}$ for all $k \in \mathbb{N}$ and $L \geq L' \cdot k + k - 1$.
\end{lemmaA}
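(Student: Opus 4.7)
The plan is to mimic the argument used for Lemma~\ref{lemma:monotonic}: starting from an arbitrary feasible solution $p'$ of \hyperref[{L-UB}]{$\mathrm{P}(L', \beta)$}, I would construct a feasible solution $p$ of \hyperref[{L-UB}]{$\mathrm{P}(L, \beta/k)$} that attains the same objective value. The case where \hyperref[{L-UB}]{$\mathrm{P}(L', \beta)$} is infeasible is trivial, so fix a feasible $p'$. The natural construction is to split each coarse interval $I_i(\beta)$, $i \in [\pm L']$, into the $k$ fine subintervals indexed by $ik, ik+1, \ldots, ik+k-1$ and to distribute $p'(i)$ uniformly over them, i.e.\ set $p(ik + l - 1) = p'(i)/k$ for $l \in [k]$ and $p(j) = 0$ otherwise.

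Three routine checks then dispose of everything except the DP constraints. First, the indices used lie in $\{-L'k, \ldots, L'k+k-1\}$, which is contained in $[\pm L]$ exactly because of the hypothesis $L \geq L'k + k - 1$. Second, $\sum_{j \in [\pm L]} p(j) = \sum_{i \in [\pm L']} \sum_{l \in [k]} p'(i)/k = 1$. Third, the objective is preserved: since $I_{ik+l-1}(\beta/k)$ partitions $I_i(\beta)$ over $l \in [k]$, a change-of-variables in the defining integrals gives $\sum_{l \in [k]} c_{ik+l-1}(\beta/k) = k \cdot c_i(\beta)$, so
\begin{equation*}
\sum_{j \in [\pm L]} c_j(\beta/k)\, p(j) \; = \; \sum_{i \in [\pm L']} \frac{p'(i)}{k} \sum_{l \in [k]} c_{ik+l-1}(\beta/k) \; = \; \sum_{i \in [\pm L']} c_i(\beta)\, p'(i).
\end{equation*}

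The main obstacle is verifying all the DP constraints of \hyperref[{L-UB}]{$\mathrm{P}(L, \beta/k)$}, which is awkward to do head-on because the finer grid admits strictly more unions $A$ and more shift magnitudes $\varphi$ than the coarser one. The clean way around this is to lift to the underlying measure. Plugging $p'$ and $p$ into the piecewise-constant restriction~\eqref{restriction} at granularities $\beta$ and $\beta/k$ respectively, both yield the \emph{same} Borel measure $\gamma$: the density associated with $p'$ is already piecewise constant on the finer partition because $p(ik+l-1)/(\beta/k) = (p'(i)/k)/(\beta/k) = p'(i)/\beta$ for every $l \in [k]$. Feasibility of $p'$ in \hyperref[{L-UB}]{$\mathrm{P}(L', \beta)$}, combined with Proposition~\ref{prop:bounded} and Lemma~\ref{prop:finite}, implies that $\gamma$ is feasible in~\ref{problem:integral_main}. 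Reading the same two results in the other direction with granularity $\beta/k$ and support parameter $L$ shows that $p$, being the representation of $\gamma$ on the fine grid with support contained in $[\pm L]$, is feasible in \hyperref[{L-UB}]{$\mathrm{P}(L, \beta/k)$}. This yields the desired inequality $\text{\hyperref[{L-UB}]{$\mathrm{P}(L', \beta)$}} \geq \text{\hyperref[{L-UB}]{$\mathrm{P}(L, \beta/k)$}}$ and concludes the proof.
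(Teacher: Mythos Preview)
Your proof is correct and uses the same core construction as the paper: uniformly split the mass $p'(i)$ on each coarse interval $I_i(\beta)$ across its $k$ fine subintervals. The paper packages this via an intermediate problem $\mathrm{P}'(L,\beta/k)$ obtained by zeroing out the first $k-1$ fine indices so that the support matches $[-L'\beta,(L'+1)\beta)$, then asserts that $p''(i,l)=p'(i)/k$ is ``readily'' feasible there, and finally invokes monotonicity in $L$. Your route for the DP constraints---lifting both $p'$ and $p$ to the \emph{same} underlying measure $\gamma$ and invoking Lemma~\ref{prop:finite} and Proposition~\ref{prop:bounded} in both directions---is a cleaner justification of exactly the step the paper leaves implicit, and it lets you handle all $L\geq L'k+k-1$ at once without the intermediate problem or the separate monotonicity argument.
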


\begin{proof}
The result trivially holds if~\hyperref[{L-UB}]{$\mathrm{P}(L', \beta)$} is infeasible. We thus assume that~\hyperref[{L-UB}]{$\mathrm{P}(L', \beta)$} is feasible, and we fix any $k \in \mathbb{N}$ as well as $L = L' \cdot k + k - 1$. We proceed in two steps. We first derive an upper bound $\mathrm{P'}(L, \beta / k)$ to~\text{\hyperref[{L-UB}]{$\mathrm{P}(L, \beta / k)$}} in which the noise distribution has the same support as in~\hyperref[{L-UB}]{$\mathrm{P}(L', \beta)$}. We then show that~\hyperref[{L-UB}]{$\mathrm{P}(L', \beta)$} bounds $\mathrm{P'}(L, \beta / k)$ from above. The result then follows from the fact that \text{\hyperref[{L-UB}]{$\mathrm{P}(L, \beta / k)$}} is monotonically non-increasing in $L$ for any fixed $\beta$ and $k$.

In view of the first step, we construct the upper bound~$\mathrm{P'}(L, \beta / k)$ to problem~\text{\hyperref[{L-UB}]{$\mathrm{P}(L, \beta / k)$}} by adding to~\hyperref[{L-UB}]{$\mathrm{P}(L, \beta / k)$} the constraint that $p(i) = 0$ for $i = -(L'\cdot k + k - 1), \ldots, - (L' \cdot k + 1)$, that is, we remove the first $k - 1$ elements from the domain of $p$. This ensures that despite its finer interval granularity of $\beta / k$, the support of the noise distribution in problem~$\mathrm{P'}(L, \beta / k)$ is the same as in the more coarsely discretized problem~\hyperref[{L-UB}]{$\mathrm{P}(L', \beta)$}, namely $[-L' \cdot \beta, (L' + 1) \cdot \beta)$.

As for the second step, note that the upper bound~$\mathrm{P'}(L, \beta / k)$ can be formulated as 
\begin{align*}
	\begin{array}{cl}
		\underset{p}{\mathrm{minimize}} & \displaystyle \sum\limits_{i \in \mathbb{Z}} \sum\limits_{l \in [k]} c_{il}(\beta) \cdot p(i,l)  \\[1.5em]
		\mathrm{subject\; to} &\displaystyle p: [\pm L'] \times [k] \mapsto \mathbb{R}_{+}, \ \sum\limits_{i \in \mathbb{Z}} \sum\limits_{l \in [k]} p(i,l) = 1\\[1.5em]
        &\displaystyle \sum\limits_{i \in [\pm L']} \sum\limits_{l \in [k]} \indicate{I_{il}(\beta) \subseteq A} \cdot p(i,l) \leq e^\varepsilon \cdot \sum\limits_{i \in [\pm L']} \sum\limits_{l \in [k]} \indicate{I_{il}(\beta) + \varphi \subseteq A}\cdot p(i,l) + \delta \\
		& \pushright{\forall (\varphi, A) \in \mathcal{E}(\beta / k)},
	\end{array}
\end{align*}
where $I_{il}(\beta) = [(i + (l-1) / k) \cdot \beta, (i + l / k) \cdot \beta)$ and $c_{il}(\beta) = (\beta / k)^{-1} \cdot \int_{x \in I_{il}(\beta)} c(x) \diff x$. Fix any feasible solution $p'$ in problem~\hyperref[{L-UB}]{$\mathrm{P}(L', \beta)$} . One readily observes that the solution $p''(i,l) = p'(i) / k$, $i \in [\pm L']$ and $l \in [k]$, is feasible in~$\mathrm{P}'(L, \beta / k)$ and attains the same objective value as $p'$ in~\hyperref[{L-UB}]{$\mathrm{P}(L', \beta)$}. We thus conclude that \hyperref[{L-UB}]{$\mathrm{P}(L', \beta)$} bounds $\text{\hyperref[{L-UB}]{$\mathrm{P}(L, \beta / k)$}}$ from above, as desired.
\end{proof}

%
%

We next bound the maximum constraint violation of a solution $p'$ in~\ref{problem:restricted_main_finite} that is obtained by truncating any feasible solution $p$ in~\ref{problem:restricted_main_finite} to a bounded domain. This will later enable us to determine values of $L$ that ensure the feasibility of~\ref{L-UB} for any fixed $\beta$.

\begin{lemmaA}\label{lemma:tau}
Let $p$ be an arbitrary feasible solution to problem~\ref{problem:restricted_main_finite} and fix $L \in \mathbb{N}$ such that $\sum_{i \in [\pm L]} p(i) \geq 1 - \tau$ for some $\tau > 0$. Construct another candidate solution $p'$ to~\ref{problem:restricted_main_finite} where $p'(i) = 0$ for all $i \in \mathbb{Z} \setminus [\pm L]$, $p'(L) = \sum_{i \geq L} p(i)$ as well as $p'(-L) = \sum_{i \leq -L} p(i)$, and $p'(i) = p(i)$ otherwise. Then $p'$ violates the DP constraints of problem~\ref{problem:restricted_main_finite} by at most $(1+e^\varepsilon) \cdot \tau$, that is,
\begin{align*}
    \underset{(\varphi, A) \in \mathcal{E}(\beta)}{\sup} \left\{ \sum\limits_{i \in \mathbb{Z}} \indicate{I_i(\beta) \subseteq A} \cdot p'(i) - e^\varepsilon \cdot \sum\limits_{i \in \mathbb{Z}}  \indicate{(I_i(\beta)  + \varphi) \subseteq A} \cdot p'(i) - \delta\right\} \leq (1+e^\varepsilon) \cdot \tau.
\end{align*}
\end{lemmaA}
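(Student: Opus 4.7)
The plan is to exploit the feasibility of $p$ and bound the violation at $p'$ via a pointwise perturbation argument. Writing the DP constraint slack as
\[
S(\varphi, A; q) \; := \; \sum_{i \in \mathbb{Z}} \indicate{I_i(\beta) \subseteq A} \cdot q(i) \; - \; e^\varepsilon \cdot \sum_{i \in \mathbb{Z}} \indicate{(I_i(\beta) + \varphi) \subseteq A} \cdot q(i) \; - \; \delta
\]
for a generic measure $q$, feasibility of $p$ in~\ref{problem:restricted_main_finite} guarantees $S(\varphi, A; p) \leq 0$ uniformly on $\mathcal{E}(\beta)$. It therefore suffices to bound $S(\varphi, A; p') - S(\varphi, A; p)$ by $(1 + e^\varepsilon)\tau$ uniformly in $(\varphi, A) \in \mathcal{E}(\beta)$.

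The first step is to analyze the pointwise perturbation $\Delta_i := p'(i) - p(i)$. By construction $\Delta_i = 0$ for $i \in [\pm L]\setminus\{-L, L\}$; $\Delta_i = -p(i) \leq 0$ for $i \in \mathbb{Z} \setminus [\pm L]$; and $\Delta_L = \sum_{i > L} p(i) \geq 0$, $\Delta_{-L} = \sum_{i < -L} p(i) \geq 0$. The hypothesis $\sum_{i \in [\pm L]} p(i) \geq 1 - \tau$ then gives
\[
\sum_{i \in \mathbb{Z}} \max\{\Delta_i, 0\} \; = \; \Delta_{L} + \Delta_{-L} \; \leq \; \tau
\quad \text{and} \quad
-\sum_{i \in \mathbb{Z}} \min\{\Delta_i, 0\} \; = \; \sum_{|i| > L} p(i) \; \leq \; \tau,
\]
which are in fact equal since both $p$ and $p'$ are probability measures.

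The second step is to bound the two indicator-weighted sums appearing in $S(\varphi, A; p') - S(\varphi, A; p)$ \emph{separately} by a worst-case indicator assignment in each. Since $\indicate{\cdot} \in \{0,1\}$, one has $\sum_{i} \indicate{I_i(\beta) \subseteq A}\,\Delta_i \leq \sum_{i} \max\{\Delta_i, 0\} \leq \tau$, and analogously $-e^\varepsilon \sum_{i} \indicate{(I_i(\beta) + \varphi) \subseteq A}\,\Delta_i \leq -e^\varepsilon \sum_{i} \min\{\Delta_i, 0\} \leq e^\varepsilon \tau$. Adding these estimates and invoking $S(\varphi, A; p) \leq 0$ yields $S(\varphi, A; p') \leq (1 + e^\varepsilon)\tau$; taking the supremum over $(\varphi, A) \in \mathcal{E}(\beta)$ gives the claim.

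The argument is essentially mechanical once the sign decomposition of $\Delta$ is identified; the only subtle point is noticing that the two indicator-weighted sums need not be bounded jointly. The positive contributions are driven solely by the mass piled up at $\pm L$, while the negative contributions come solely from the discarded tail mass, and these two index sets are disjoint, so handling them with independent worst-case indicator choices is not wasteful and yields the sharp constant $1 + e^\varepsilon$. I do not foresee any significant obstacle.
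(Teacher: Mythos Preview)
Your proposal is correct and follows essentially the same route as the paper's proof: both write $p' = p + (p'-p)$, invoke feasibility of $p$ to dispose of the baseline term, and then bound the two indicator-weighted perturbation sums separately by $\tau$ and $e^\varepsilon \tau$ using the sign structure of $p'-p$. Your explicit positive/negative decomposition of $\Delta_i$ is slightly more detailed than the paper's terse annotations, but the argument is the same.
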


Note that the constant $L$ in the statement of Lemma~\ref{lemma:tau} exists since for any probability measure $\gamma \in \mathcal{P}_0$ and any $\tau > 0$, there is $L' \in \mathbb{N}$ such that $\gamma([-L, L]) \geq 1 - \tau$ for all $L \geq L'$. \\[-4mm]

\noindent \textbf{Proof of Lemma~\ref{lemma:tau}.} $\;$
Since $p$ is feasible in problem~\ref{problem:restricted_main_finite}, it satisfies 
$$\displaystyle \sum\limits_{i \in \mathbb{Z}} \indicate{I_i(\beta) \subseteq A} \cdot p(i) \leq e^\varepsilon \cdot \sum\limits_{i \in \mathbb{Z}} \indicate{(I_i(\beta)  + \varphi) \subseteq A} \cdot p(i) + \delta \quad \forall (\varphi, A) \in \mathcal{E}(\beta).$$
On the other hand, for any $(\varphi, A) \in \mathcal{E}(\beta)$, the constructed solution $p'$ satisfies
\begin{align*}
    \mspace{-5mu}
	& \displaystyle \sum\limits_{i \in \mathbb{Z}}  \indicate{I_i(\beta) \subseteq A} \cdot p'(i) - e^\varepsilon \cdot \sum\limits_{i \in \mathbb{Z}} \indicate{(I_i(\beta)  + \varphi) \subseteq A} \cdot p'(i) - \delta \\
    = & \displaystyle \sum\limits_{i \in \mathbb{Z}}  \indicate{I_i(\beta) \subseteq A} \cdot [p(i) + (p'(i) - p(i))] - e^\varepsilon \cdot \sum\limits_{i \in \mathbb{Z}} \indicate{(I_i(\beta)  + \varphi) \subseteq A} \cdot [p(i) + (p'(i) - p(i))] - \delta \\
    = & \displaystyle \underbrace{ \sum\limits_{i \in \mathbb{Z}} \indicate{I_i(\beta) \subseteq A} \cdot p(i) - e^\varepsilon \cdot \sum\limits_{i \in \mathbb{Z}}  \indicate{(I_i(\beta)  + \varphi) \subseteq A} \cdot p(i) - \delta}_{\leq 0 \text{ as $p$ is feasible in~\eqref{problem:restricted_main_finite}}} + \\
    & \displaystyle \underbrace{\sum\limits_{i \in \mathbb{Z}}  \indicate{I_i(\beta) \subseteq A}\cdot (p'(i) - p(i)) }_{\leq \tau} - e^\varepsilon \cdot \underbrace{\sum\limits_{i \in \mathbb{Z}}\indicate{(I_i(\beta)  + \varphi) \subseteq A} \cdot (p'(i) - p(i))}_{\geq  -\tau} \leq (1 + e^\varepsilon) \cdot \tau,
\end{align*}
which implies the statement. \qed
\\ 

We can now prove the feasibility of~\ref{L-UB}.

\begin{lemmaA}\label{lemma:feasible}
For any $\varepsilon > 0$, $\delta > 0$ and $\beta > 0$, there exists $L' \in \mathbb{N}$ such that problem~\ref{L-UB} is feasible for all $L \geq L'$.
\end{lemmaA}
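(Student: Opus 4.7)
The plan is to exhibit an explicit feasible solution to $\mathrm{P}(L,\beta)$ with bounded support by combining the staircase construction from Lemma~\ref{lemma:staircase} with the refinement technique from Lemma~\ref{lemma:monotonic}. Recall from the assumption preceding Lemma~\ref{prop:finite} that $\beta$ divides $\Delta f$, so we may write $\beta = \Delta f / k$ for some $k \in \mathbb{N}$.

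First I would take the piecewise-constant distribution $q$ from the proof of Lemma~\ref{lemma:staircase}, namely $q(i) = 1/(2 \lceil 1/(2\delta) \rceil)$ for $i \in \{-\lceil 1/(2\delta)\rceil, \ldots, \lceil 1/(2\delta)\rceil - 1\}$ and $q(i) = 0$ otherwise. The argument given there shows that $q$ is feasible in the $\varepsilon = 0$ variant of $\mathrm{P}(\Delta f)$, and since the right-hand side of every DP constraint is non-decreasing in $\varepsilon$ while the left-hand side is independent of $\varepsilon$, this $q$ also remains feasible in $\mathrm{P}(\Delta f)$ for every $\varepsilon > 0$. The key feature is that $q$ already has bounded support.

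Next, I would apply the refinement construction from the proof of Lemma~\ref{lemma:monotonic} to transport $q$ to a feasible solution $\tilde{q}$ of $\mathrm{P}(\beta) = \mathrm{P}(\Delta f / k)$: concretely, for $j \in \mathbb{Z}$ set $\tilde{q}(j) = q(\lfloor j/k \rfloor)/k$, which distributes each atom of $q$ uniformly across the $k$ sub-intervals of the corresponding length-$\Delta f$ interval. A direct index calculation shows that the support of $\tilde{q}$ is contained in $\{- k \lceil 1/(2\delta) \rceil, \ldots, k \lceil 1/(2\delta) \rceil - 1\}$. Setting $L' := k \lceil 1/(2\delta) \rceil$, the distribution $\tilde{q}$ then satisfies the support restriction~\eqref{eq:bound_sup} for every $L \geq L'$, and Proposition~\ref{prop:bounded} immediately yields feasibility of $\tilde{q}$ in $\mathrm{P}(L,\beta)$. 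No essential technical obstacle is anticipated; the only care needed is the bookkeeping that translates indices between the $\mathrm{P}(\Delta f)$ formulation and the refined $\mathrm{P}(\Delta f / k)$ formulation, ensuring that the support of $\tilde{q}$ falls within $[\pm L']$ under the standard indexing of $\mathrm{P}(L,\beta)$.
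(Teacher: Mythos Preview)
Your argument is correct and is in fact more direct than the paper's. The paper proceeds by taking a feasible solution $p$ of the $\delta/2$-variant $\mathrm{P}_{\delta/2}(\beta)$ (whose existence follows from Lemma~\ref{lemma:staircase}), truncating it to $[\pm L]$ via Lemma~\ref{lemma:tau}, and bounding the resulting constraint violation by $\delta/2$ so that the truncated solution lands in the feasible set of $\mathrm{P}(L,\beta)$. You instead notice that the \emph{specific} solution exhibited in the proof of Lemma~\ref{lemma:staircase} already has bounded support, and that refining it via the construction in Lemma~\ref{lemma:monotonic} preserves this property; no truncation or $\delta$-slack argument is needed. Your route yields an explicit $L' = k\lceil 1/(2\delta)\rceil$ and avoids Lemma~\ref{lemma:tau} altogether; the paper's route is slightly more robust in that it does not rely on the internal structure of the Lemma~\ref{lemma:staircase} solution and reuses the truncation lemma that is needed anyway in the proof of Lemma~\ref{lemma:convergence1}. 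One cosmetic remark: invoking Proposition~\ref{prop:bounded} at the end is unnecessary, since a solution feasible in $\mathrm{P}(\beta)$ with support in $[\pm L]$ is trivially feasible in $\mathrm{P}(L,\beta)$ because $\mathcal{E}(L,\beta)\subseteq\mathcal{E}(\beta)$.
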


\begin{proof}
{\color{black}
Denote by~\hyperref[{problem:restricted_main_finite}]{$\mathrm{P}_{\delta/2}(\beta)$} and \hyperref[L-UB]{$\mathrm{P}_{\delta/2}(L, \beta)$} the variants of~\ref{problem:restricted_main_finite} and~\ref{L-UB} that replace $\delta$ with $\delta / 2$, respectively. Fix any feasible solution $p$ in problem~\hyperref[{problem:restricted_main_finite}]{$\mathrm{P}_{\delta/2}(\beta)$}, whose existence is guaranteed by Lemma~\ref{lemma:staircase}, and choose $L \in \mathbb{N}$ large enough so that $\sum_{i \in [\pm L]} p(i) \geq 1 - (\delta/2) / (1 + e^\varepsilon)$. Lemma~\ref{lemma:tau} allows us to construct a solution $p'$ from $p$ that violates the DP constraints of \hyperref[L-UB]{$\mathrm{P}_{\delta/2}(L, \beta)$} by at most $\delta / 2$. By construction, $p'$ is thus feasible in \hyperref[L-UB]{$\mathrm{P}(L, \beta)$}, and the statement follows.
}
\end{proof}

\subsubsection{Proof of Theorem~\ref{thm:strong_duality}}\label{subsection:3}


To show our convergence result, we define the following auxiliary problem:
\begin{align}\label{c-dual-lb-L-beta}\tag{$\mathrm{M}(L,\beta)$}
    \begin{array}{cl}
        \underset{p}{\mathrm{minimize}} & \displaystyle \sum\limits_{i \in [\pm (L + \Delta f / \beta)]} c_i(\beta) \cdot p(i) \\[1.5em]
        \mathrm{subject\; to} &\displaystyle 
        p:[\pm (L + \Delta f / \beta)] \mapsto \mathbb{R}_{+}, \ \sum\limits_{i \in [\pm (L + \Delta f / \beta)]} p(i) = 1 \\[1.5em]
        &\displaystyle \sum\limits_{i \in [\pm (L + \Delta f / \beta)]} \indicate{I_i(\beta) \subseteq A} \cdot p(i) \leq e^\varepsilon \cdot \sum\limits_{i \in [\pm (L + \Delta f / \beta)]} \indicate{I_i(\beta) + \varphi \subseteq A} \cdot p(i) + \delta \\
        & \pushright{\forall (\varphi, A) \in \mathcal{E}(L, \beta).}
    \end{array}
\end{align}
In the following, we will show that \emph{(i)} problem~\ref{c-dual-lb-L-beta} differs from~\ref{L-UB} only in the domain of the decision variable $p$; \emph{(ii)} problem~\ref{c-dual-lb-L-beta} differs from the strong dual of~\ref{lb-L-beta} only in the objective coefficients; and \emph{(iii)} the relationship $\text{\ref{L-UB}} \geq \text{\ref{c-dual-lb-L-beta}} \geq \text{\ref{lb-L-beta}}$ holds for all $L \in \mathbb{N}$ and $\beta > 0$. Thus, instead of analyzing the convergence of~\ref{L-UB} and~\ref{lb-L-beta} directly, we can analyze separately the convergence of~\ref{L-UB} and~\ref{c-dual-lb-L-beta} (\emph{cf.}~Lemma~\ref{lemma:convergence1}) as well as of~\ref{c-dual-lb-L-beta} and the strong dual of~\ref{lb-L-beta} (\emph{cf.}~Lemma~\ref{lemma:convergence2}).


Since~\ref{c-dual-lb-L-beta} coincides with~\ref{L-UB} except for the additional decision variables $p(i)$, $i \in [\pm (L + \Delta f / \beta)] \setminus [\pm L]$, we have $\text{\ref{L-UB}} \geq \text{\ref{c-dual-lb-L-beta}}$. To show convergence of both problems (\emph{cf.}~Lemma~\ref{lemma:convergence1}), we need to ensure that these additional decision variables take sufficiently small values in optimal solutions to~\ref{c-dual-lb-L-beta}. This is guaranteed by the next result. 

{\color{black}
\begin{lemmaA}\label{lemma:bound_tail}
For any $\varepsilon > 0$, $\delta > 0$ and $\tau > 0$, there exists $L' \in \mathbb{N}$ such that for all $k \in \mathbb{N}$ and $L \geq L'\cdot k + k - 1$, any optimal solution $p^\star$ to~\hyperref[{c-dual-lb-L-beta}]{$\mathrm{M}(L, \Delta f/k)$} satisfies $\sum\limits_{i \in [\pm (L + k)] \setminus [\pm L]} p^\star(i) < \tau$.
\end{lemmaA}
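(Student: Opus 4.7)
The plan is to exploit Assumption~\ref{assumptions_c}(b): since $c(x) \to \infty$ as $|x| \to \infty$, any optimal solution that placed substantial mass on distant intervals would incur a prohibitively large objective, which would contradict a uniform upper bound on the optimal value of $\mathrm{M}(L, \Delta f/k)$.

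First, I would establish such a uniform upper bound $\bar{M}$ on $\mathrm{M}(L, \Delta f/k)$ that is valid for all admissible pairs $(k, L)$. This rests on two observations. \emph{(i)} We have $\mathrm{M}(L, \beta) \leq \mathrm{P}(L, \beta)$ for all $L$ and $\beta$, since $\mathrm{M}(L, \beta)$ is a relaxation of $\mathrm{P}(L, \beta)$: any feasible solution $p$ of $\mathrm{P}(L, \beta)$ extends to a feasible solution of $\mathrm{M}(L, \beta)$ with identical objective value by setting the additional variables $p(i)$, $i \in [\pm(L + \Delta f/\beta)] \setminus [\pm L]$, to zero (noting that $\indicate{I_i(\beta) \subseteq A} = 0$ for $i \notin [\pm L]$ whenever $A \in \mathcal{F}(L, \beta)$, so the new variables can only slacken the right-hand side of the DP constraints). \emph{(ii)} Lemma~\ref{lemma:feasible} guarantees that $\mathrm{P}(L_0, \Delta f)$ is feasible, and hence attains a finite optimal value $\bar{M}$, for some sufficiently large $L_0 \in \mathbb{N}$. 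Invoking Lemma~\ref{lemma:monotonic_finite} with $L' = L_0$ and $\beta = \Delta f$ then yields the chain
\begin{equation*}
    \mathrm{M}(L, \Delta f/k) \; \leq \; \mathrm{P}(L, \Delta f/k) \; \leq \; \mathrm{P}(L_0, \Delta f) \; = \; \bar{M}
    \quad \forall k \in \mathbb{N}, \; \forall L \geq L_0 \cdot k + k - 1.
\end{equation*}

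Next, invoking Assumption~\ref{assumptions_c}(b), I would select $L_1 \in \mathbb{N}$ such that $c(x) \geq \bar{M}/\tau + 1$ whenever $|x| \geq L_1 \cdot \Delta f$, and set $L' := \max\{L_0, L_1\}$. Then for any $k \in \mathbb{N}$ and $L \geq L' \cdot k + k - 1$, every point $x \in I_i(\Delta f/k)$ with $i \in [\pm(L+k)] \setminus [\pm L]$ satisfies $|x| \geq L \cdot \Delta f/k \geq L_1 \cdot \Delta f$, so $c_i(\Delta f/k) \geq \bar{M}/\tau + 1$.

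Combining these ingredients, the non-negativity of $c$ and $p^\star$ gives
\begin{equation*}
    \bar{M}
    \; \geq \; \mathrm{M}(L, \Delta f/k)
    \; \geq \; \sum_{i \in [\pm(L+k)] \setminus [\pm L]} c_i(\Delta f/k) \cdot p^\star(i)
    \; \geq \; \left(\tfrac{\bar{M}}{\tau} + 1\right) \cdot \sum_{i \in [\pm(L+k)] \setminus [\pm L]} p^\star(i),
\end{equation*}
which yields $\sum_{i \in [\pm(L+k)] \setminus [\pm L]} p^\star(i) \leq \bar{M}/(\bar{M}/\tau + 1) < \tau$, as required. The main mild subtlety is the book-keeping in the first paragraph: the condition $L \geq L_0 \cdot k + k - 1$ required by Lemma~\ref{lemma:monotonic_finite} must be compatible with the condition $L \geq L' \cdot k + k - 1$ appearing in the target statement, which is precisely what the choice $L' = \max\{L_0, L_1\}$ ensures.
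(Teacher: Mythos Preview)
Your proposal is correct and follows essentially the same approach as the paper: both arguments combine the uniform upper bound $\mathrm{M}(L,\Delta f/k)\le \mathrm{P}(L,\Delta f/k)\le \mathrm{P}(L_0,\Delta f)$ (obtained via Lemmas~\ref{lemma:feasible} and~\ref{lemma:monotonic_finite}) with Assumption~\ref{assumptions_c}(b) to force the objective coefficients on the tail indices above $\bar M/\tau$, whence a tail mass of at least $\tau$ would contradict optimality. The only cosmetic difference is that the paper phrases the final step as a contradiction, whereas you derive the strict inequality directly from $\bar M \ge (\bar M/\tau+1)\cdot\sum p^\star(i)$.
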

}
{\color{black}
\begin{proof}
Fix $\varepsilon > 0$, $\delta > 0$ and $\tau > 0$. By Lemma~\ref{lemma:feasible}, there is $L_1 \in \mathbb{N}$ such that problem~\hyperref[{L-UB}]{$\mathrm{P}(L, \Delta f)$} is feasible for all $L \geq L_1$. Select $L_2 \in \mathbb{N}$ large enough such that
\begin{align}\label{use_O}
    c(x) > \dfrac{\hyperref[{L-UB}]{\mathrm{P}(L_1, \Delta f)} }{\tau} \quad \forall x \in \mathbb{R}: \  \lvert x \rvert \geq L_2 \cdot \Delta f;
\end{align}
such values exist due to Assumption~\ref{assumptions_c}~\emph{(b)}. We claim that the statement of the lemma holds for $L' = \max \{ L_1, L_2 \}$. To see this, fix any $k \in \mathbb{N}$ and $L \geq L' \cdot k + k - 1$. 

Take any optimal solution $p^\star : [\pm (L + k)] \mapsto \mathbb{R}_{+}$ to problem~\hyperref[{c-dual-lb-L-beta}]{$M(L, \Delta f/k)$} and assume to the contrary that $\sum_{i \in [\pm (L + k)] \setminus [\pm L]} p^\star(i) \geq \tau$. We then observe that
\begin{align}\label{one_single_number_new}
\begin{split}
    \sum_{i \in [\pm (L + k)]} c_i(\Delta f / k) \cdot p^\star(i) &=  \sum_{i \in [\pm L]} c_i(\Delta f / k) \cdot p^\star(i)  + \sum_{i \in [\pm (L + k)] \setminus [\pm L]} c_i(\Delta f / k) \cdot p^\star(i)  \\
    & > \sum_{i \in [\pm (L + k)] \setminus [\pm L]} \dfrac{\hyperref[{L-UB}]{\mathrm{P}(L_1, \Delta f)} }{\tau} \cdot p^\star(i) \\
    & \geq \tau \cdot \dfrac{\hyperref[{L-UB}]{\mathrm{P}(L_1, \Delta f)} }{\tau} = \hyperref[{L-UB}]{\mathrm{P}(L_1, \Delta f)} \geq \hyperref[{L-UB}]{\mathrm{P}(L, \Delta f / k)} \geq \hyperref[{c-dual-lb-L-beta}]{\mathrm{M}(L, \Delta f / k)}.
\end{split}
\end{align}
Here, the first inequality holds since $c_i (\Delta f / k) \geq 0$ for all $i \in \mathbb{Z}$ as well as
\begin{align*}
    c_i (\Delta f / k) \; &= \; \left(\dfrac{\Delta f}{k}\right)^{-1} \cdot \int_{x \in I_i (\Delta f / k)} c(x) \mathrm{d}x \; > \; \left(\dfrac{\Delta f}{k}\right)^{-1} \cdot \int_{x \in I_i (\Delta f / k)} \dfrac{\hyperref[{L-UB}]{\mathrm{P}(L_1, \Delta f)} }{\tau}  \mathrm{d}x = \dfrac{\hyperref[{L-UB}]{\mathrm{P}(L_1, \Delta f)} }{\tau} 
\end{align*}
for all $i \in [\pm (L + k)]\setminus [\pm L]$. The second inequality in~\eqref{one_single_number_new} follows from our earlier assumption that $\sum_{i \in [\pm (L + k)] \setminus [\pm L]} p^\star(i) \geq \tau$. The third inequality in~\eqref{one_single_number_new} is due to Lemma~\ref{lemma:monotonic_finite} and the fact that $L \geq L_1 \cdot k + k - 1$, and the last inequality in~\eqref{one_single_number_new} holds by construction of problem $\hyperref[{c-dual-lb-L-beta}]{\mathrm{M}(L, \Delta f / k)}$. We thus conclude that $p^\star$ cannot be optimal in problem~\hyperref[{c-dual-lb-L-beta}]{$M(L, \Delta f/k)$}, which yields the desired contradiction.
\end{proof}
}

Lemma~\ref{lemma:bound_tail} allows us to prove the convergence between problems~\ref{c-dual-lb-L-beta} and~\ref{L-UB}. 

\begin{lemmaA}\label{lemma:convergence1}
For any $\varepsilon > 0$, $\delta > 0$ and $\xi > 0$, there exists $L' \in \mathbb{N}$ such that $\hyperref[{L-UB}]{\mathrm{P}(L, \Delta f / k)} -\hyperref[{c-dual-lb-L-beta}]{\mathrm{M}(L, \Delta f / k)} \leq \xi$ for all $k \in \mathbb{N}$ and all $L \geq L' \cdot k + k - 1$.
\end{lemmaA}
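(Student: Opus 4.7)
The plan is to construct, for any $\xi > 0$ and all sufficiently large $L$ of the form $L \geq L' \cdot k + k - 1$, a feasible solution $p'$ to \hyperref[{L-UB}]{$\mathrm{P}(L, \Delta f/k)$} whose objective value exceeds the optimum of \hyperref[{c-dual-lb-L-beta}]{$\mathrm{M}(L, \Delta f/k)$} by at most $\xi$; the inequality $\hyperref[{L-UB}]{\mathrm{P}(L, \Delta f/k)} \leq \mathrm{obj}(p') \leq \hyperref[{c-dual-lb-L-beta}]{\mathrm{M}(L, \Delta f/k)} + \xi$ then gives the claim. I start from an optimizer $p^\star$ of \hyperref[{c-dual-lb-L-beta}]{$\mathrm{M}(L, \Delta f/k)$} and set $\tau' = \sum_{i \in [\pm(L+k)] \setminus [\pm L]} p^\star(i)$. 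Lemma~\ref{lemma:bound_tail} guarantees that $\tau'$ can be driven below any pre-specified tolerance $\tau > 0$ by choosing $L \geq L_1' \cdot k + k - 1$ with $L_1'$ depending only on $\tau$.

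I would next truncate $p^\star$ by restricting it to $[\pm L]$. Because every $A \in \mathcal{F}(L,\beta)$ lies inside $\bigcup_{i \in [\pm L]} I_i(\beta)$, the indicators $\indicate{I_i(\beta) \subseteq A}$ vanish for all $i \notin [\pm L]$, so the truncation leaves the left-hand side of each DP constraint unchanged and shrinks the right-hand side by at most $e^\varepsilon \tau'$. Hence the truncated vector satisfies the DP constraints of \hyperref[{L-UB}]{$\mathrm{P}(L, \Delta f/k)$} with $\delta$ relaxed to $\delta + e^\varepsilon \tau'$, but only sums to $1 - \tau'$.

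To both restore the missing mass and absorb the extra DP violation, I would blend the renormalized truncation with a strictly $(\varepsilon, \delta/2)$-DP benchmark distribution $p_s$. Such a $p_s$ can be produced by applying Lemma~\ref{lemma:staircase} at noise level $\delta/2$ to obtain a staircase distribution on $[\pm L_s]$ at granularity $\Delta f$, and then refining it to granularity $\Delta f / k$ via the interval-splitting already used in the proof of Lemma~\ref{lemma:monotonic_finite}. The resulting $p_s$ is supported in $[\pm L]$ whenever $L \geq L_s \cdot k + k - 1$, and its objective value is bounded by a constant $M$ that is independent of $k$ and $L$. The candidate solution is then
$$p' \;=\; \frac{1-\lambda}{1-\tau'}\, p^\star|_{[\pm L]} \;+\; \lambda\, p_s,$$
with mixing weight $\lambda = C \tau'$ for $C = 2(e^\varepsilon + \delta)/\delta$. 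A direct algebraic check confirms that $p'$ is a probability distribution on $[\pm L]$ and that the combined DP slack satisfies $(1-\lambda)(\delta + e^\varepsilon \tau')/(1-\tau') + \lambda \delta / 2 \leq \delta$ for every $\tau' \in (0, 1/C)$, so $p'$ is feasible in \hyperref[{L-UB}]{$\mathrm{P}(L, \Delta f/k)$}.

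For the objective, since $\lambda = C \tau' > \tau'$ we have $(1-\lambda)/(1-\tau') \leq 1$, so the first component of $p'$ contributes at most $\hyperref[{c-dual-lb-L-beta}]{\mathrm{M}(L, \Delta f/k)}$ and the second at most $\lambda M = C \tau' M$. Choosing $\tau = \min\{1/C,\, \xi/(CM)\}$ upfront and then setting $L' = \max\{L_1', L_s\}$ yields $\mathrm{obj}(p') \leq \hyperref[{c-dual-lb-L-beta}]{\mathrm{M}(L, \Delta f/k)} + \xi$ for all $k \in \mathbb{N}$ and $L \geq L' \cdot k + k - 1$. The main obstacle is calibrating the mixing weight: $\lambda$ must be large enough in $\tau'$ to neutralize the truncation-induced DP breach $e^\varepsilon \tau'$, yet small enough that the $\lambda M$ penalty to the objective can be made $\leq \xi$ as $\tau' \to 0$, and this balance must hold \emph{uniformly in $k$}. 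This is why it is crucial that the benchmark objective bound $M$ can be chosen independently of $k$, a property inherited from Lemma~\ref{lemma:staircase} through the value-preserving refinement built into the proof of Lemma~\ref{lemma:monotonic_finite}.
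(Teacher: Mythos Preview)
Your proposal is correct and follows the same strategy as the paper: truncate an optimizer of $\mathrm{M}(L,\Delta f/k)$ to $[\pm L]$, use Lemma~\ref{lemma:bound_tail} to control the tail mass, and then blend with a strictly $(\varepsilon,\delta')$-feasible benchmark distribution (you take $\delta'=\delta/2$, the paper takes $\delta'=\delta-\alpha$) whose objective value is bounded uniformly in $k$ via Lemmas~\ref{lemma:staircase} and~\ref{lemma:monotonic_finite}. The only technical difference is in the truncation step: the paper relocates the tail mass to index $0$ and must invoke Assumption~\ref{assumptions_c}\emph{(b)} once more (condition~\eqref{use_unb}) to ensure this does not raise the objective, whereas you simply discard the tail and renormalize, carrying the factor $1/(1-\tau')$ through the DP inequality; your variant is marginally cleaner but otherwise equivalent, and both yield an $O(\tau')$ bound on the gap with the same calibration of the mixing weight against the benchmark slack.
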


Intuitively, Lemma~\ref{lemma:convergence1} shows that $\hyperref[{L-UB}]{\mathrm{P}(L, \beta)} -\hyperref[{c-dual-lb-L-beta}]{\mathrm{M}(L, \beta)} \rightarrow 0$ for any $\beta > 0$ as long as $L$ grows sufficiently quickly relative to $1 / \beta$. Recall that the size of the support of the noise distribution $\gamma$ is $L \cdot \beta$. Thus, $\hyperref[{L-UB}]{\mathrm{P}(L, \beta)} -\hyperref[{c-dual-lb-L-beta}]{\mathrm{M}(L, \beta)} \rightarrow 0$ for any $\beta > 0$ as long as the support of $\gamma$ grows large. \\[-4mm]

{\color{black}
\noindent \textbf{Proof of Lemma~\ref{lemma:convergence1}.} $\;$
Fix $\varepsilon > 0$, $\delta > 0$ and $\xi > 0$, select any $\alpha \in (0, \delta)$, set $\hat{\delta} = \delta - \alpha$ and denote by~\hyperref[{L-UB}]{$\mathrm{P}_{ \hat{\delta}}(L, \Delta f/k)$} the variant of~\hyperref[{L-UB}]{$\mathrm{P}(L, \Delta f / k)$} that replaces $\delta$ with $\hat{\delta}$. We invoke Lemma~\ref{lemma:feasible} to select $L_0 \in \mathbb{N}$ so that~\hyperref[{L-UB}]{$\mathrm{P}_{ \hat{\delta}}(L, \Delta f)$} is feasible for all $L \geq L_0$, and we denote by $M$ the optimal value of~\hyperref[{L-UB}]{$\mathrm{P}_{ \hat{\delta}}(L_0, \Delta f)$}. We next invoke Lemma~\ref{lemma:monotonic_finite} to conclude that~\hyperref[{L-UB}]{$\mathrm{P}_{ \hat{\delta}}(L, \Delta f/k)$} remains feasible with an optimal value bounded from above by $M$ for all $k\in \mathbb{N}$ and all $L \geq L_0 \cdot k + k - 1$. 

Set $\tau = \xi \cdot \alpha/M$. The remainder of the proof shows the statement in four steps. Step~1 constructs a solution $p_\tau$ to problem \hyperref[{L-UB}]{$\mathrm{P}(L, \Delta f / k)$} whose expected loss is bounded from above by the optimal value of \hyperref[{c-dual-lb-L-beta}]{$\mathrm{M}(L,\Delta f / k)$}, but that may violate the DP constraints in \hyperref[{L-UB}]{$\mathrm{P}(L, \Delta f / k)$} by up to $\tau$. Step~2 then constructs a convex combination $p^\star$ of $p_\tau$ and $p_{\hat{\delta}}$, where the latter is an optimal solution to problem \text{\hyperref[{L-UB}]{$\mathrm{P}_{ \hat{\delta}}(L, \Delta f/k)$}}. Step~3 shows that the convex combination $p^\star$ is feasible in \hyperref[{L-UB}]{$\mathrm{P}(L, \Delta f / k)$}, and Step~4 shows that the expected loss of $p^\star$ in \hyperref[{L-UB}]{$\mathrm{P}(L, \Delta f / k)$} is bounded from above by $\text{\hyperref[{c-dual-lb-L-beta}]{$\mathrm{M}(L,\Delta f / k)$}} + \xi$, as desired.

In view of Step~1, note that problem~\hyperref[{c-dual-lb-L-beta}]{$\mathrm{M}(L,\Delta f / k)$} is feasible by construction, and it is bounded since the objective coefficients are non-negative. Lemma~\ref{lemma:bound_tail} then ensures the existence of $L_1 \in \mathbb{N}$ such that any optimal solution to \hyperref[{c-dual-lb-L-beta}]{$\mathrm{M}(L,\Delta f / k)$}, $L \geq L_1 \cdot k + k - 1$, places a probability of strictly less than $\tau / (1 + e^\varepsilon)$ outside the index range $[\pm L]$. Select $L_2 \in \mathbb{N}$ large enough such that
\begin{align}\label{use_unb}
    c(x) \geq \underset{x' : \lvert x' \rvert \leq \Delta f}{\max} c(x') \quad \forall x \in \mathbb{R} : \ \lvert x \rvert \geq L_2 \cdot \Delta f;
\end{align}
such values exist due to Assumption~\ref{assumptions_c}~\emph{(b)}. We claim that $L' = \max\{L_0, L_1, L_2\}$ satisfies the statement of this lemma. Take any $k \in \mathbb{N}$ and any $L \geq L'\cdot k + k - 1$, and denote by $p_{\mathrm{M}}$ an optimal solution to problem~\hyperref[{c-dual-lb-L-beta}]{$\mathrm{M}(L,\Delta f / k)$}, which satisfies $\sum_{i \in [\pm (L + k)] \setminus[ \pm L]} p_{\mathrm{M}}(i) < \tau / (1 + e^\varepsilon)$ by the selection of $L'$. Construct a new truncated solution $p_\tau$ via
\begin{align*}
    \displaystyle p_\tau(i) = \begin{cases}
    0 & \text{if } i \in  [\pm (L + k)] \setminus [\pm L]\\
    p_{\mathrm{M}}(0) + \sum_{i' > L} p_{\mathrm{M}}(i') + \sum_{i' < -L} p_{\mathrm{M}}(i') & \text{if } i = 0 \\
    p_{\mathrm{M}}(i) & \text{otherwise}
    \end{cases} \qquad \forall i \in [\pm (L + k)].
\end{align*}
We then have 
\begin{align}\label{use_later_m_ub}
\mspace{-20mu}
    \sum_{i \in [\pm L]} c_i(\Delta f / k) \cdot p_\tau(i) =&  \sum_{i \in [\pm L]} c_i(\Delta f / k) \cdot p_{\mathrm{M}}(i) + c_0(\Delta f / k) \cdot \left(\sum_{ i' > L} p_{\mathrm{M}}(i') + \sum_{ i' < -L} p_{\mathrm{M}}(i')\right) \notag \\
    \leq & \sum_{i \in [\pm (L + k)]} c_i(\Delta f / k) \cdot p_{\mathrm{M}}(i),
\end{align}
where the inequality holds because~\eqref{use_unb} implies 
\begin{align*}
\mspace{-40mu}
    c_0(\Delta f / k) = (\Delta f / k)^{-1} \int_{x \in I_0(\Delta f /k)} c(x) \leq (\Delta f / k)^{-1} \int_{x \in I_0(\Delta f /k)} \  \underset{x':\lvert x' \rvert \leq \Delta f}{\max} c(x') = \max_{x':\lvert x' \rvert \leq \Delta f} c(x') \leq c_i (\Delta f / k)    
\end{align*}
for all $i$ satisfying $\lvert i \rvert > L$. This shows that the truncated solution $p_\tau$ achieves a weakly smaller objective value in problem~\hyperref[{L-UB}]{$\mathrm{P}(L, \Delta f / k)$} than the optimal value of~\hyperref[{c-dual-lb-L-beta}]{$\mathrm{M}(L,\Delta f / k)$}. However, a similar reasoning as in the proof of Lemma~\ref{lemma:tau} shows that $p_\tau$ can violate the DP constraints in~\hyperref[{L-UB}]{$\mathrm{P}(L, \Delta f / k)$} by up to $\tau$.

As for Step~2, we define $p_{\hat{\delta}}$ as an optimal solution to problem~\hyperref[{L-UB}]{$\mathrm{P}_{ \hat{\delta}}(L,\Delta f / k)$}, which exists as \hyperref[{L-UB}]{$\mathrm{P}_{ \hat{\delta}}(L,\Delta f / k)$} is feasible by the selection of $L$. We then construct the solution $p^\star$ via
\begin{align*}
    p^\star = \lambda \cdot p_\tau + (1 - \lambda) \cdot p_{\hat{\delta}} \quad \text{for } \lambda = \dfrac{\alpha}{\tau + \alpha }.
\end{align*}
The next two steps will show that $p^\star$ is feasible in problem~\hyperref[{L-UB}]{$\mathrm{P}(L, \Delta f / k)$} and that its expected loss is bounded from above by $\text{\hyperref[{c-dual-lb-L-beta}]{$\mathrm{M}(L,\Delta f / k)$}} + \xi$.

In view of Step~3, first notice that $p^\star(i) = 0$ for all $i \in [\pm (L + k)] \setminus [\pm L]$ since $p^\star$ is a convex combination of two solutions, neither of which places positive probability on indices $i \in [\pm (L + k)] \setminus [\pm L] $. Consider now any DP constraint $(\varphi, A)$ in problem~\hyperref[{L-UB}]{$\mathrm{P}(L, \Delta f / k)$}. We observe that
\begin{align*}
    & \sum_{i \in [\pm L]} \indicate{I_i(\Delta f / k) \subseteq A} \cdot p^\star(i) - e^\varepsilon \cdot \sum_{i \in  [\pm L]} \indicate{I_i(\Delta f / k) + \varphi \subseteq A}\cdot p^\star(i) \\
    = & \lambda \cdot \left( \sum_{i \in [\pm L]}  \indicate{I_i(\Delta f / k) \subseteq A} \cdot p_\tau(i) - e^\varepsilon \cdot \sum_{i \in [\pm L]}  \indicate{I_i(\Delta f / k) + \varphi \subseteq A} \cdot  p_\tau(i) \right) +\\
     & \qquad (1 - \lambda) \cdot  \left( \sum_{i \in [\pm L]}  \indicate{I_i(\Delta f / k) \subseteq A} \cdot p_{\hat{\delta}}(i) - e^\varepsilon \cdot \sum_{i \in [\pm L]}  \indicate{I_i(\Delta f / k) + \varphi \subseteq A}\cdot p_{\hat{\delta}}(i) \right) \\
    \leq & \lambda \cdot  (\delta + \tau) + (1- \lambda) \cdot  \hat{\delta} \\
    = &  \lambda \cdot (\delta + \tau) + (1- \lambda) \cdot (\delta - \alpha) = \lambda \cdot (\tau + \alpha) + \delta - \alpha = \delta,
\end{align*}
where the first equality uses the definition of $p^\star$, the inequality holds since $p_\tau$ violates the DP constraints in problem~\ref{L-UB} by up to $\tau$ and $p_{\hat{\delta}}$ is feasible in~\hyperref[{L-UB}]{$\mathrm{P}_{\hat{\delta}} (L, \Delta f / k)$}, and the final equalities follow from the definition of $\hat\delta$, rearranging terms and from the definition of $\lambda$, respectively. We thus conclude that $p^\star$ satisfies the DP constraint $(\varphi, A)$ in problem~\hyperref[{L-UB}]{$\mathrm{P}(L, \Delta f / k)$}, and since the choice of $(\varphi, A)$ was arbitrary, $p^\star$ must indeed be feasible in~\hyperref[{L-UB}]{$\mathrm{P}(L, \Delta f / k)$}.

As for Step~4, first note that the solution $p^\star$ achieves an objective value of $\sum_{i\in [\pm L ]} c_i(\Delta f / k) \cdot p^\star(i)$ in~\hyperref[{L-UB}]{$\mathrm{P}(L, \Delta f / k)$}, which itself satisfies the following due to~\eqref{use_later_m_ub}:
\begin{align}\label{ineq_to_use_obj}
    \sum_{i\in [\pm L ]} c_i(\Delta f / k) \cdot p^\star(i)  & = \lambda \sum_{i \in [\pm L]} c_i(\Delta f / k) \cdot p_\tau(i) + (1- \lambda)\cdot  \sum_{i\in [\pm L]} c_i(\Delta f / k) \cdot p_{\hat{\delta}}(i) \notag \\
    & \leq \lambda \sum_{i \in [\pm (L + k)]} c_i(\Delta f / k) \cdot p_{\mathrm{M}}(i) + (1-\lambda) \cdot \sum_{i \in [\pm L]} c_i(\Delta f / k) \cdot p_{\hat{\delta}}(i).
\end{align}
We can use~\eqref{ineq_to_use_obj} to bound the difference $\text{\hyperref[{L-UB}]{$\mathrm{P}(L, \Delta f / k)$}} - \text{\hyperref[{c-dual-lb-L-beta}]{$\mathrm{M}(L, \Delta f / k)$}}$ as follows: 
\begin{align*}
    \text{\hyperref[{L-UB}]{$\mathrm{P}(L, \Delta f / k)$}} - \text{\hyperref[{c-dual-lb-L-beta}]{$\mathrm{M}(L, \Delta f / k)$}} &\leq \sum_{i \in [\pm L]} c_i(\Delta f / k)  \cdot p^\star(i) - \sum_{i \in  [\pm (L + k)]} c_i(\Delta f / k) \cdot  p_{\mathrm{M}}(i)  \\
    & \leq (1 - \lambda) \left( \sum_{i \in  [\pm L]} c_i(\Delta f / k) \cdot p_{\hat{\delta}}(i) - \sum_{i \in [\pm (L + k)]} c_i(\Delta f / k) \cdot p_{\mathrm{M}}(i) \right) \\ 
    & \leq (1 - \lambda) M
    = \dfrac{\tau}{\tau + \alpha} \cdot M = \xi.
\end{align*}
Here, the first inequality holds since $p^\star$ is feasible in \hyperref[{L-UB}]{$\mathrm{P}(L, \Delta f / k)$} and $p_{\mathrm{M}}$ is optimal in \hyperref[{c-dual-lb-L-beta}]{$\mathrm{M}(L,\Delta f / k)$}. The second inequality employs~\eqref{ineq_to_use_obj}. The third inequality bounds the objective value of $p_{\hat{\delta}}$ from above by $M$ and the objective value of $p_{\mathrm{M}}$ from below by $0$, respectively. The two identities, finally, follow from substituting back the definitions of $\lambda$ and $\tau$, respectively.
\qed
}

We next prove the convergence between problems~\ref{c-dual-lb-L-beta} and~\ref{lb-L-beta}.

\begin{lemmaA}\label{lemma:convergence2}
For any $\varepsilon > 0$, $\delta > 0$, $\xi > 0$ and $\Lambda \in \mathbb{N}$, there exists $k' \in \mathbb{N}$ such that $\text{\hyperref[{c-dual-lb-L-beta}]{$\mathrm{M}(\Lambda\cdot k, \Delta f / k) $}} - \text{\hyperref[{lb-L-beta}]{$\mathrm{D}(\Lambda \cdot k,\Delta f / k)$}} \leq \xi$ for all $k \geq k'$.
\end{lemmaA}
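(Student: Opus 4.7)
\textbf{Proof plan for Lemma~\ref{lemma:convergence2}.} My plan is to identify $\hyperref[{lb-L-beta}]{\mathrm{D}(L,\beta)}$ as a finite-dimensional linear program whose LP-dual is essentially $\hyperref[{c-dual-lb-L-beta}]{\mathrm{M}(L,\beta)}$ but with the averaged objective coefficients $c_i(\beta)$ replaced by the infimum coefficients $\underline{c}_i(\beta)$, and then to show that these two coefficient vectors differ by at most $\xi$ uniformly once $\beta$ is small enough on the (compact) support. First, I would set up LP duality for $\hyperref[{lb-L-beta}]{\mathrm{D}(\Lambda\cdot k,\Delta f/k)}$: treating $\theta\in\mathbb{R}$ as a free variable and $\psi(\varphi,A)\ge 0$ as non-negative, and taking duals of the constraints indexed by $i\in[\pm(\Lambda k+k)]$, one obtains dual variables $p(i)\ge 0$ with equality constraint $\sum p(i)=1$ (from $\theta$) and inequality constraints $\sum_i \mathbb{1}\{I_i(\beta)\subseteq A\}\,p(i)\le e^{\varepsilon}\sum_i \mathbb{1}\{I_i(\beta)+\varphi\subseteq A\}\,p(i)+\delta$ for each $(\varphi,A)\in\mathcal{E}(\Lambda k,\Delta f/k)$ (from $\psi$). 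This dual is exactly $\hyperref[{c-dual-lb-L-beta}]{\mathrm{M}(\Lambda\cdot k,\Delta f/k)}$ with objective $\sum_i \underline{c}_i(\beta)\,p(i)$ in place of $\sum_i c_i(\beta)\,p(i)$; call it $\mathrm{M}'(\Lambda k,\Delta f/k)$. Since both $\hyperref[{lb-L-beta}]{\mathrm{D}(\Lambda\cdot k,\Delta f/k)}$ and $\mathrm{M}'(\Lambda k,\Delta f/k)$ are finite-dimensional LPs and both are feasible (the primal by $\theta=\min_i\underline{c}_i$, $\psi\equiv 0$; the dual by Lemma~\ref{lemma:feasible} together with $\underline{c}_i\le c_i$), standard finite-dimensional LP strong duality gives $\hyperref[{lb-L-beta}]{\mathrm{D}(\Lambda\cdot k,\Delta f/k)}=\mathrm{M}'(\Lambda k,\Delta f/k)$.

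Next I would quantify the gap between $\hyperref[{c-dual-lb-L-beta}]{\mathrm{M}(\Lambda\cdot k,\Delta f/k)}$ and $\mathrm{M}'(\Lambda k,\Delta f/k)$ purely through the discrepancy $c_i(\beta)-\underline{c}_i(\beta)$. The key observation is that in the setting of the lemma the support of the distributions in both problems is contained in the compact interval $[-(\Lambda+1)\Delta f,\,(\Lambda+1)\Delta f+\Delta f/k]\subseteq[-(\Lambda+1)\Delta f,\,(\Lambda+2)\Delta f]$, which is independent of $k$. By Assumption~\ref{assumptions_c}(a), $c$ is continuous and hence uniformly continuous on this compact set, so there is $\eta(k)\to 0$ such that $c_i(\Delta f/k)-\underline{c}_i(\Delta f/k)\le \eta(k)$ for all $i\in[\pm(\Lambda+1)k]$. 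Choosing $k'$ large enough so that $\eta(k)\le\xi$ for all $k\ge k'$ then yields, for any $p$ feasible in $\hyperref[{c-dual-lb-L-beta}]{\mathrm{M}(\Lambda\cdot k,\Delta f/k)}$ (equivalently in $\mathrm{M}'(\Lambda k,\Delta f/k)$, since the two share all constraints), the bound
\begin{equation*}
\sum_i c_i(\Delta f/k)\,p(i)\;\le\;\sum_i \underline{c}_i(\Delta f/k)\,p(i)+\xi\cdot\sum_i p(i)\;=\;\sum_i \underline{c}_i(\Delta f/k)\,p(i)+\xi.
\end{equation*}
Applying this to an optimal solution of $\mathrm{M}'(\Lambda k,\Delta f/k)$ shows $\hyperref[{c-dual-lb-L-beta}]{\mathrm{M}(\Lambda\cdot k,\Delta f/k)}\le \mathrm{M}'(\Lambda k,\Delta f/k)+\xi$, and combined with the strong-duality identity above this gives the desired inequality.

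The main obstacle, in my view, is simply making sure that the LP duality argument is clean: one has to verify that the transpose of the constraint matrix indeed recovers the $\mathrm{M}$-style constraints (including turning the equality $\sum p(i)=1$ correctly out of the free variable $\theta$ and the coefficient $1$ of $\theta$ in the objective), and that boundedness/feasibility hold so that no duality gap appears. The uniform-continuity step is essentially routine once one notices that scaling $L$ and $\beta$ as $L=\Lambda k$, $\beta=\Delta f/k$ keeps the support of the distributions bounded independently of $k$; this is exactly why the lemma fixes $\Lambda$ while sending $k\to\infty$, mirroring the role played in Lemma~\ref{lemma:convergence1} and ultimately in Theorem~\ref{thm:strong_duality}.
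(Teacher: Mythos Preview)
Your proposal is correct and follows essentially the same route as the paper: take the LP dual of $\mathrm{D}(\Lambda k,\Delta f/k)$, observe that it coincides with $\mathrm{M}(\Lambda k,\Delta f/k)$ except for the objective coefficients $\underline{c}_i$ in place of $c_i$, invoke strong LP duality, and then close the gap via uniform continuity of $c$ on the $k$-independent compact interval $[-(\Lambda+1)\Delta f,(\Lambda+2)\Delta f]$. The paper calls your $\mathrm{M}'$ problem $\overline{\mathrm{D}}(\Lambda k,\Delta f/k)$ and argues strong duality from feasibility of the primal $\mathrm{D}$ alone (the trivial point $\theta=\min_i\underline{c}_i$, $\psi\equiv 0$), rather than citing Lemma~\ref{lemma:feasible} for the dual side; your appeal to Lemma~\ref{lemma:feasible} is slightly off since that lemma fixes $\beta$ while here $\beta=\Delta f/k$ varies, but this is easily patched (and the remark ``together with $\underline{c}_i\le c_i$'' is not needed, since feasibility concerns only the constraints).
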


Intuitively, Lemma~\ref{lemma:convergence2} shows that $\hyperref[{c-dual-lb-L-beta}]{\mathrm{M}(L, \beta)} - \text{\hyperref[{lb-L-beta}]{$\mathrm{D}(L, \beta)$}} \rightarrow 0$ for any fixed size of the support of the noise distribution $\gamma$ as long as the discretization granularity $\beta$ vanishes to zero. \\[-4mm]

\noindent \textbf{Proof of Lemma~\ref{lemma:convergence2}.} $\;$
Fix $\varepsilon > 0$, $\delta > 0$, $\xi > 0$ and $\Lambda \in \mathbb{N}$. We will show that there is $k' \in \mathbb{N}$ such that $\text{\hyperref[{c-dual-lb-L-beta}]{$\mathrm{M}(\Lambda\cdot k, \Delta f / k) $}} - \text{\hyperref[{dual-lb-L-beta}]{$\overline{D}(\Lambda\cdot k, \Delta f / k) $}} \leq \xi$ for all $k \geq k'$, where~\hyperref[{dual-lb-L-beta}]{$\overline{D}(\Lambda\cdot k, \Delta f / k) $} denotes the dual to $\text{\hyperref[{lb-L-beta}]{$\mathrm{D}(\Lambda \cdot k,\Delta f / k)$}}$. Indeed, strong duality holds between $\text{\hyperref[{lb-L-beta}]{$\mathrm{D}(\Lambda \cdot k,\Delta f / k)$}}$ and~\hyperref[{dual-lb-L-beta}]{$\overline{D}(\Lambda\cdot k, \Delta f / k) $} since $(\theta, \psi)$ with $\theta = \min \{\underline{c}_i({\color{black}\Delta f / k}) : i \in [\pm (\Lambda \cdot k + {\color{black}k})] \}$ and $\psi (\varphi, A) = 0$ for all $\mathcal{E}(\Lambda \cdot k,{\color{black}\Delta f / k})$ is feasible in~\hyperref[{lb-L-beta}]{$\mathrm{D},\Delta f / k)$}. The dual problem \hyperref[{dual-lb-L-beta}]{$\overline{D}(\Lambda\cdot k, \Delta f / k) $} can be formulated as
\begin{align}\label{dual-lb-L-beta}\tag{$\overline{\mathrm{D}}(\Lambda \cdot k,\Delta f / k)$}
\mspace{-15mu}
    \begin{array}{cl}
        \underset{p}{\mathrm{minimize}} & \displaystyle \sum\limits_{i \in [\pm (\Lambda \cdot k + k)]} \underline{c}_i(\Delta f / k) \cdot p(i)  \\[1.5em]
        \mathrm{subject\; to} &\displaystyle p: [\pm (\Lambda \cdot k + k)]\mapsto \mathbb{R}_{+}, \  \sum\limits_{i \in [\pm (\Lambda \cdot k + k)]} p(i) = 1 \\[1.5em]
        &\displaystyle \sum\limits_{i \in [\pm (\Lambda \cdot k + k)]} \indicate{I_i(\Delta f / k) \subseteq A} \cdot p(i) \leq e^\varepsilon \cdot \sum\limits_{i \in [\pm (\Lambda \cdot k + k)]} \indicate{I_i(\Delta f / k) + \varphi \subseteq A}\cdot  p(i) + \delta \\
        & \pushright{\forall (\varphi, A) \in \mathcal{E}(\Lambda \cdot k, \Delta f / k).}
    \end{array}
\end{align}
Note that~\ref{dual-lb-L-beta} and~\hyperref[{c-dual-lb-L-beta}]{$\mathrm{M}(\Lambda\cdot k, \Delta f / k) $} only differ in their objective coefficients $\underline{c}_i({\color{black}\Delta f / k})$ and $c_i({\color{black} \Delta f / k})$, respectively. We now show that the difference between those two coefficient sets can be made arbitrarily small, uniformly across all $i \in [\pm (\Lambda \cdot k + k)]$, by increasing $k$. Indeed, the loss function $c$ is continuous by Assumption~\ref{assumptions_c}~\emph{(a)}, and it is therefore uniformly continuous over the (closure of the) finite interval $\bigcup \{ I_i (\Delta f / k) \, : \, i \in [\pm (\Lambda \cdot k + k)] \}$ by the Heine-Cantor theorem. (Note in particular that for any $k$, this interval is contained in the set $[- (\Lambda + 1) \cdot \Delta f, (\Lambda + 2) \cdot \Delta f)$ that is independent of $k$, which justifies our use of the Heine-Cantor theorem.) For the selected $\xi > 0$, we can thus find $k' \in \mathbb{N}$ such that for all $k \geq k'$, we have
\begin{align*}
    c_i(\Delta f / k) - \underline{c}_i(\Delta f / k)
    \; &= \;
    \left( \frac{\Delta f}{k} \right)^{-1} \cdot \int_{x \in I_i(\Delta f / k)} c(x) \diff x - \inf_{x \in I_i(\Delta f / k)} c(x) \\
    &\leq \;
    \sup_{x \in I_i(\Delta f / k)} c(x) - \inf_{x \in I_i(\Delta f / k)} c(x)
    \; \leq \; \xi,
\end{align*}
uniformly across all $i \in [\pm (\Lambda \cdot k + k)]$. Here, the identity replaces $c_i(\Delta f / k)$ and $\underline{c}_i(\Delta f / k)$ with their respective definitions, the first inequality exploits that $c (x) \leq \sup_{x \in I_i(\Delta f / k)} c(x)$ for all $x \in I_i(\Delta f / k)$, and the last inequality makes use of the uniform continuity of $c$.

To bound $\text{\hyperref[{c-dual-lb-L-beta}]{$\mathrm{M}(\Lambda \cdot k, \Delta f / k) $}} -  \text{\hyperref[{dual-lb-L-beta}]{$\overline{\mathrm{D}}(\Lambda \cdot k,\Delta f / k)$}}$, take any optimal solution $p^\star$ to problem~\hyperref[{dual-lb-L-beta}]{$\overline{\mathrm{D}}(\Lambda \cdot k, \Delta f / k)$} and notice that $p^\star$ is feasible in \hyperref[{c-dual-lb-L-beta}]{$\mathrm{M}(\Lambda \cdot k, \Delta f / k)$} since both problems only differ in their objective coefficients. We thus have
\begin{equation*}
	\text{\hyperref[{c-dual-lb-L-beta}]{$\mathrm{M}(\Lambda \cdot k, \Delta f / k) $}} -  \text{\hyperref[{dual-lb-L-beta}]{$\overline{\mathrm{D}}(\Lambda \cdot k,\Delta f / k)$}}
    \; \leq \;
    \sum_{i \in [\pm (\Lambda \cdot k + k)]} \left[ c_i(\Delta f / k) - \underline{c}_i(\Delta f / k) \right] \cdot p^\star(i)
    \; \leq \;
    \xi,
\end{equation*}
where the first inequality holds since $p^\star$ is optimal in \hyperref[{dual-lb-L-beta}]{$\overline{\mathrm{D}}(\Lambda \cdot k, \Delta f / k)$} but feasible (and possibly not optimal) in \hyperref[{c-dual-lb-L-beta}]{$\mathrm{M}(\Lambda \cdot k, \Delta f / k)$}, and the second inequality is due to our earlier uniform bound on $c_i(\Delta f / k) - \underline{c}_i(\Delta f / k)$ and the fact that $p^\star$ is a probability distribution.
\qed \\[-4mm]

\noindent \textbf{Proof of Theorem~\ref{thm:strong_duality}.} $\;$
    Fix $\xi > 0$ as well as any $\xi_{1}, \xi_{2} > 0$ satisfying $\xi_{1} +  \xi_{2} = \xi$. Since
\begin{align*}
\mspace{-40mu}
   \text{\hyperref[{L-UB}]{$\mathrm{P}(\Lambda\cdot k, \Delta f / k)$}} - \text{\hyperref[{lb-L-beta}]{$\mathrm{D}(\Lambda\cdot k, \Delta f / k)$}}  = \left[\text{\hyperref[{L-UB}]{$\mathrm{P}(\Lambda\cdot k, \Delta f / k)$}} - \text{\hyperref[{c-dual-lb-L-beta}]{$\mathrm{M}(\Lambda\cdot k, \Delta f / k)$}} \right] + \left[\text{\hyperref[{c-dual-lb-L-beta}]{$\mathrm{M}(\Lambda\cdot k, \Delta f / k)$}} - \text{\hyperref[{lb-L-beta}]{$\mathrm{D}(\Lambda\cdot k, \Delta f / k)$}} \right],
\end{align*}
for any $\Lambda \in \mathbb{N}$ and $k \in \mathbb{N}$, it suffices to show that there is $\Lambda' \in \mathbb{N}$ and $k' \in \mathbb{N}$ such that
\begin{equation}\label{eq:thm1:two_ineqs}
    \text{\hyperref[{L-UB}]{$\mathrm{P}(\Lambda\cdot k, \Delta f / k)$}} - \text{\hyperref[{c-dual-lb-L-beta}]{$\mathrm{M}(\Lambda\cdot k, \Delta f / k)$}} \leq \xi_{1} \quad \text{and}\quad \text{\hyperref[{c-dual-lb-L-beta}]{$\mathrm{M}(\Lambda\cdot k, \Delta f / k)$}} -  \text{\hyperref[{lb-L-beta}]{$\mathrm{D}(\Lambda\cdot k, \Delta f / k)$}} \leq \xi_{2}
\end{equation}
simultaneously hold for all $\Lambda \geq \Lambda'$ and $k \geq k'$.

In view of the first inequality in~\eqref{eq:thm1:two_ineqs}, we invoke Lemma~\ref{lemma:convergence1} to select $L' \in \mathbb{N}$ such that
$\text{\hyperref[{L-UB}]{$\mathrm{P}(L, \Delta f / k)$}} - \text{\hyperref[{c-dual-lb-L-beta}]{$\mathrm{M}(L, \Delta f / k)$}} \leq \xi_{1}$ for all $k \in \mathbb{N}$ and all $L \geq L' \cdot k + k - 1$. Fix $\Lambda' = L' + 1$ and notice that $\Lambda' \cdot k = (L' + 1) \cdot k \geq L' \cdot k + k - 1$. For our choice of $\Lambda'$, we thus have $\text{\hyperref[{L-UB}]{$\mathrm{P}(\Lambda' \cdot k, \Delta f / k)$}} - \text{\hyperref[{c-dual-lb-L-beta}]{$\mathrm{M}(\Lambda' \cdot k, \Delta f / k)$}} \leq \xi_{1}$ for all $k \in \mathbb{N}$. As for the second inequality in~\eqref{eq:thm1:two_ineqs}, we invoke Lemma~\ref{lemma:convergence2} to select $k' \in \mathbb{N}$ such that $\text{\hyperref[{c-dual-lb-L-beta}]{$\mathrm{M}(\Lambda' \cdot k, \Delta f / k) $}} -  \text{\hyperref[{lb-L-beta}]{$\mathrm{D}(\Lambda' \cdot k,\Delta f / k)$}} \leq \xi_{2}$ for all $k \geq k'$. So far, we have shown that there exists $\Lambda' \in \mathbb{N}$ and $k' \in \mathbb{N}$ such that 
$\text{\hyperref[{L-UB}]{$\mathrm{P}(\Lambda'\cdot k, \Delta f / k)$}} - \text{\hyperref[{lb-L-beta}]{$\mathrm{D}(\Lambda'\cdot k, \Delta f / k)$}} \leq \xi$
holds for all $k \geq k'$. Since we have $\text{\hyperref[{L-UB}]{$\mathrm{P}(\Lambda'\cdot k, \Delta f / k)$}} \geq \text{\hyperref[{L-UB}]{$\mathrm{P}(\Lambda\cdot k, \Delta f / k)$}}$ and $\text{\hyperref[{lb-L-beta}]{$\mathrm{D}(\Lambda'\cdot k, \Delta f / k)$}} \leq \text{\hyperref[{lb-L-beta}]{$\mathrm{D}(\Lambda\cdot k, \Delta f / k)$}}$ for all $\Lambda \geq \Lambda'$, we can conclude the proof.
\qed

{\color{black}
\subsection{Proof of Proposition~\ref{prop:impossibility}}

In the following, we present two counterexamples that jointly prove the statement of Proposition~\ref{prop:impossibility}. Both examples rely on the following strong dual of \ref{lb-L-beta}:
\begin{align}\label{dual_form_to_refer}
    \begin{array}{cl}
        \underset{p}{\mathrm{minimize}} & \displaystyle \sum\limits_{i \in [\pm (L + \Delta f / \beta)]} \underline{c}_i(\beta) \cdot p(i) \\[1.5em]
        \mathrm{subject\; to} &\displaystyle 
        p:[\pm (L + \Delta f / \beta)] \mapsto \mathbb{R}_{+}, \ \sum\limits_{i \in [\pm (L + \Delta f / \beta)]} p(i) = 1 \\[1.5em]
        &\displaystyle \sum\limits_{i \in [\pm (L + \Delta f / \beta)]} \indicate{I_i(\beta) \subseteq A} \cdot p(i) \leq e^\varepsilon \cdot \sum\limits_{i \in [\pm (L + \Delta f / \beta)]} \indicate{I_i(\beta) + \varphi \subseteq A} \cdot p(i) + \delta \\
        & \pushright{\forall (\varphi, A) \in \mathcal{E}(L, \beta)}
    \end{array}
\end{align}
This dual has been derived earlier in the proof of Lemma~\ref{lemma:convergence2}; the version above is adapted slightly to match the notation of our counterexamples.

\subsubsection{Example 1: Violation of Assumption~\ref{assumptions_c}~\emph{(a)}}\label{counter_continuity}

\begin{figure}[tb]
\begin{center}
\begin{tikzpicture}
    \draw[thick] (0,1) circle [radius=0.1];
    \draw[fill,thick] (0,0) circle [radius=0.1];
    \draw[dashed] (0,0) -- (0,-0.5);
    \node at (0,-0.7) {$0$};
    
    \draw[thick] (0.1,1) -- (3.9,1); 
    \draw[thick] (-0.1,1) -- (-3.9,1); 
    \draw[dashed] (4,0.9) -- (4,-0.5);
    \node at (4,-0.7) {$2 \lceil 1 / (2\delta)\rceil \Delta f$}; 
    \draw[dashed] (-4,0.9) -- (-4,-0.5);
    \node at (-4,-0.7) {$-2 \lceil 1 / (2\delta)\rceil \Delta f$}; 

    \draw[fill,thick] (4,1) circle [radius=0.1];
    \draw[fill,thick] (-4,1) circle [radius=0.1];

    \draw[thick] (4,2) circle [radius=0.1];
    \draw[thick] (4.1,2.05) -- (6, 3); 

    \draw[thick] (-4,2) circle [radius=0.1];
    \draw[thick] (-4.1,2.05) -- (-6, 3); 

    \node[scale = 0.4] at (6.2,3.1){\textbullet};
    \node[scale = 0.4] at (6.4,3.2){\textbullet};
    \node[scale = 0.4] at (6.6,3.3){\textbullet};
    \node[scale = 0.4] at (-6.2,3.1){\textbullet};
    \node[scale = 0.4] at (-6.4,3.2){\textbullet};
    \node[scale = 0.4] at (-6.6,3.3){\textbullet};

    \node[scale = 1] at (1.0,0.0) {$c(x) = 0$};
    \node[scale = 1] at (5,1.0) {$c(x) = 1$};
    \node[scale = 1, rotate = 27] at (5,2.9) {$c(x) = x$};

    \node[scale = 1, rotate = 333] at (-5,2.9) {$c(x) = -x$};


\end{tikzpicture}
\end{center}
    \caption{\color{black} Visualization of the loss function $c$ in Section~\ref{counter_continuity}.}
    \label{figure:counter-ex-2}
\end{figure} 

Fix any $\Delta f, \varepsilon, \delta > 0$ and consider the following loss function $c$: 
\begin{align*}
c(x) \; = \;    
    \begin{cases}
    0 & \text{ if } x = 0,\\
    1 & \text{ if } x \neq 0 \text{ and } \lvert x \rvert \leq 2 \cdot \lceil 1 / (2\delta) \rceil \cdot \Delta f, \\
    1 + \lvert x \rvert & \text{ otherwise}.
    \end{cases}
\end{align*}
Figure~\ref{figure:counter-ex-2} shows that $c$ satisfies Assumption~\ref{assumptions_c}~\emph{(b)} since for any $r \in \mathbb{R}$ we have $c(x) \geq r$ for all $x$ satisfying $\lvert x \rvert \geq \max\{2 \lceil 1 / (2\delta) \rceil \Delta f, \, r\}$. However, $c$ violates the continuity condition of Assumption~\ref{assumptions_c} at $x = 0$ and $x = \pm 2 \lceil 1 / (2\delta) \rceil \Delta f$. In the following, we will show that for this loss function, \ref{L-UB} and~\ref{lb-L-beta} differ by at least $\delta / 2$ for all $L \in \mathbb{N}$ and $\beta > 0$. We do so in two steps: We first show that $\text{\ref{problem:restricted_main_finite}} \geq 1$, and we subsequently argue that $\text{\ref{lb-L-beta}} \leq 1 - \delta / 2$. The statement then follows since $\text{\ref{L-UB}} \geq \text{\ref{problem:restricted_main_finite}}$.

To see that $\text{\ref{problem:restricted_main_finite}} \geq 1$, we note that the objective coefficients in $\text{\ref{problem:restricted_main_finite}}$ satisfy
\begin{align*}
    c_i(\beta) = \beta^{-1} \cdot \int_{x \in I_i(\beta)} c(x) \diff x \geq 1 \quad \forall i \in \mathbb{Z},
\end{align*}
where the inequality holds since $c (x) \geq 1$ almost everywhere (except at $x = 0$). The claim then follows from the fact that the objective function in $\text{\ref{problem:restricted_main_finite}}$ constitutes a convex combination of those objective coefficients.

To show that $\text{\ref{lb-L-beta}} \leq 1 - \delta/2$, we can without loss of generality assume that $L \geq 2 \cdot \lceil 1 / (2\delta) \rceil \cdot k$ for $k\in \mathbb{N}$ satisfying $\beta = \Delta f / k$. Indeed, the result then follows for all smaller values of $L$ due to the monotonicity of $\text{\ref{lb-L-beta}}$ in $L$. To see that $\text{\ref{lb-L-beta}} \leq 1 - \delta/2$, we construct a feasible solution $p$ to the strong dual~\eqref{dual_form_to_refer} of~\ref{lb-L-beta} that attains the objective value $1 - \delta/2$. To this end, define $\mathcal{I}^k := \{ -2 \cdot \lceil 1 / (2\delta) \rceil \cdot k,\ldots, 2 \cdot \lceil 1 / (2\delta) \rceil \cdot k - 1 \}$ and set
\begin{align*}
    p(i) \; = \;
    \begin{cases}
        \delta / 2 & \text{ if } i = 0, \\
        \dfrac{1 - \delta/ 2}{4 \lceil 1 / (2\delta)\rceil k - 1} & \text{ if } i \in \mathcal{I}^k\setminus \{ 0\}, \\
        0 & \text{ if } i \in [\pm (L + \Delta f / \beta)] \setminus \mathcal{I}^k
    \end{cases}
    \qquad \forall i \in \mathbb{Z}.
\end{align*}
Note that $p$ indeed constitutes a probability distribution over $i \in \mathcal{Z}$, and that $p (i) = 0$ for all $i \notin [\pm L]$ since $L \geq 2 \cdot \lceil 1 / (2\delta) \rceil \cdot k$. Moreover, we have
\begin{align*}
    \sum\limits_{i \in [\pm (L + \Delta f / \beta)] } \underline{c}_i(\beta) \cdot p(i) = \sum\limits_{i \in \mathcal{I}^k} \underline{c}_i(\beta) \cdot p(i) =  \underline{c}_0(\beta)\cdot p(0) + \sum_{i \in \mathcal{I}^k \setminus \{0\} } p(i)  = 1 - \delta/2,
\end{align*}
where the last identity follows from the fact that $\underline{c}_0(\beta) = \inf \{ c(x) : x \in I_0(\beta) \} = 0$ since $0 \in I_0(\beta)$ as well as $\underline{c}_i(\beta) = 1$ for all $i \in \mathcal{I}^k\setminus \{ 0\}$ since $\lvert x \rvert \leq 2 \cdot \lceil 1 /(2\delta) \rceil \cdot \Delta f$ for all 
$x \in I_{i}(\beta)$ and $i \in \mathcal{I}^k\setminus\{0\}$. \mbox{To see that $p$ satisfies the constraints in~\eqref{dual_form_to_refer}, fix any $(\varphi, A) \in \mathcal{E}(L, \beta)$ and observe that}
\begin{align*}
    & \displaystyle \sum\limits_{i \in [\pm (L + \Delta f / \beta)]} \indicate{I_i(\beta) \subseteq A} \cdot p(i) - e^\varepsilon \cdot \sum\limits_{i \in [\pm (L + \Delta f / \beta)]} \indicate{I_i(\beta) + \varphi \subseteq A} \cdot p(i) \\
    \leq & \displaystyle \sum\limits_{i \in [\pm (L + \Delta f / \beta)]} (\indicate{I_i(\beta) \subseteq A} - \indicate{I_i(\beta) + \varphi \subseteq A} ) \cdot p(i) \\
    \leq & \displaystyle \sum\limits_{i \in [\pm L ]} (\indicate{I_i(\beta) \subseteq A} - \indicate{I_i(\beta) + \varphi \subseteq A}) \cdot p(i) \\
    \leq & \displaystyle \sum\limits_{i \in [\pm L ]} \indicate{I_i(\beta) \subseteq A \land I_i(\beta) + \varphi \not\subseteq A} \cdot p(i) \\
    \leq & \displaystyle \sum\limits_{i \in [\pm L ]} \indicate{I_i(\beta) + \varphi \not\subseteq A} \cdot p(i) \\
    \leq & \displaystyle \sum\limits_{i \in [\pm L ]} \indicate{I_i(\beta) + \varphi \not\subseteq \bigcup_{i' \in [\pm L]} I_{i'}(\beta)} \cdot p(i)  \; \leq \; \displaystyle \max_{i_1 < \ldots < i_k}\{p(i_1) + \ldots + p(i_k)\}  \;  \leq \; \delta.
\end{align*}
Here, the first inequality holds since $\varepsilon \geq 0$, and the second inequality uses the fact that $\indicate{I_i(\beta) \subseteq A} - \indicate{I_i(\beta) + \varphi \subseteq A} \leq 0$ for all $i \in [\pm (L + k)] \setminus [\pm L]$. The third inequality holds because $\indicate{I_i(\beta) \subseteq A} - \indicate{I_i(\beta) + \varphi \subseteq A} = 1$ whenever $I_i(\beta) \subseteq$ and $I_i(\beta) + \varphi \not\subseteq A$, whereas $\indicate{I_i(\beta) \subseteq A} - \indicate{I_i(\beta) + \varphi \subseteq A} \in \{ -1, 0 \}$ otherwise. The fourth inequality disregards one of operands in the conjunction inside the indicator terms. The fifth inequality exploits the fact that $A \subseteq \bigcup_{i' \in [\pm L]} I_{i'}(\beta)$ for any $A \in \mathcal{F}(L, \beta)$. The sixth inequality holds since there are at most $k$ indices $i \in [ \pm L ]$ satisfying $I_i(\beta) + \varphi \not\subseteq \bigcup_{i' \in [\pm L]} I_{i'}(\beta)$ since $\varphi \in \{- \Delta f, - \Delta f + \beta, \ldots, \Delta f \}$ for $\beta = \Delta f / k$. 
The last inequality, finally, holds by construction of $p$, whose non-zero values are $\delta / 2$ (at $i = 0$) and $(1 - \delta / 2) / (4 \lceil 1 / (2\delta)\rceil k - 1)$ (at $i \in \mathcal{I}^k \setminus \{ 0 \}$), and the fact that $\delta / 2 + (k - 1) \cdot  (1 - \delta / 2) / (4 \lceil 1 / (2\delta)\rceil k - 1) = \delta$. 


\subsubsection{Example 2: Violation of Assumption~\ref{assumptions_c}~\emph{(b)}}\label{counter_zigzag}

\begin{figure}[tb]
\begin{center}
\begin{tikzpicture}
    \draw[thick] (0,1) -- (2,1); 
    \draw[thick] (4,1) -- (6,1); 
    \draw[thick] (8,1) -- (10,1);

    \draw[thick] (2.5,0) -- (3.5,0); 
    \draw[thick] (6.5,0) -- (7.5,0); 

    \draw[thick] (2,1) -- (2.5,0); 
    \draw[thick] (3.5,0) -- (4,1); 
    \draw[thick] (6,1) -- (6.5,0); 
    \draw[thick] (7.5,0) -- (8,1); 


    \draw[dashed] (2,1) -- (2,-0.1);

    \draw[dashed] (4,1) -- (4,-0.1);

    \draw[dashed] (6,1) -- (6,-0.1);

    \draw[dashed] (8,1) -- (8,-0.1);
    

    \node[scale = 0.5] at (-0.5,0.25){\textbullet};
    \node[scale = 0.5] at (-1,0.25){\textbullet};
    \node[scale = 0.5] at (-1.5,0.25){\textbullet};
    \node[scale = 0.5] at (10.5,0.5){\textbullet};
    \node[scale = 0.5] at (11,0.5){\textbullet};
    \node[scale = 0.5] at (11.5,0.5){\textbullet};

    \node[scale = 1] at (-1,1) {$c(x) = 1$};
    \node[scale = 1] at (11,0) {$c(x) = 0$};

    \draw [black,decorate,decoration={brace,amplitude=10pt,mirror},xshift=0pt,yshift=0pt] (0.1,-0.2) -- (1.9,-0.2) node [black,midway, yshift = -0.8cm] {$C_0$};
    \draw [black,decorate,decoration={brace,amplitude=10pt,mirror},xshift=0pt,yshift=0pt] (2.1,-0.2) -- (3.9,-0.2) node [black,midway, yshift = -0.8cm] {$D_0$};
    \draw [black,decorate,decoration={brace,amplitude=10pt,mirror},xshift=0pt,yshift=0pt] (4.1,-0.2) -- (5.9,-0.2) node [black,midway, yshift = -0.8cm] {$C_1$};
    \draw [black,decorate,decoration={brace,amplitude=10pt,mirror},xshift=0pt,yshift=0pt] (6.1,-0.2) -- (7.9,-0.2) node [black,midway, yshift = -0.8cm] {$D_1$};
    \draw [black,decorate,decoration={brace,amplitude=10pt,mirror},xshift=0pt,yshift=0pt] (8.1,-0.2) -- (9.9,-0.2) node [black,midway, yshift = -0.8cm] {$C_2$};

    \draw [black,decorate,decoration={brace,amplitude=10pt},xshift=0pt,yshift=0pt] (2,1.2) -- (2.5,1.2) node [black,midway, yshift = 0.65cm] {\small $D^{\mathrm{L}}_0$};
    \draw [black,decorate,decoration={brace,amplitude=10pt},xshift=0pt,yshift=0pt] (2.55,1.2) -- (3.45,1.2) node [black,midway, yshift = 0.65cm] {\small $D^{\mathrm{M}}_0$};
    \draw [black,decorate,decoration={brace,amplitude=10pt},xshift=0pt,yshift=0pt] (3.5,1.2) -- (4.0,1.2) node [black,midway, yshift = 0.65cm] {\small $D^{\mathrm{R}}_0$};

    \draw[dashed] (2.5,1) -- (2.5,0.1);
    \draw[dashed] (3.5,1) -- (3.5,0.1);
    \draw[dashed] (4,1) -- (4,0.1);

\end{tikzpicture}
\end{center}
    \caption{\color{black} Visualization of the loss function $c$ in Section~\ref{counter_zigzag}.}
    \label{figure:counter-ex}
\end{figure} 

Fix any $\Delta f, \varepsilon, \delta > 0$ and consider the following loss function $c$:
\begin{align*}
    c(x) \; = \;    
    \begin{cases}
    1 & \text{ if } x \in C_i \text{ for some $i \in \mathbb{Z}$},\\
    5 + 8i - \dfrac{8x}{\Delta f} & \text{ if } x \in D^{\mathrm{L}}_i \text{ for some $i \in \mathbb{Z}$},\\
    0 & \text{ if } x \in D^{\mathrm{M}}_i \text{ for some $i \in \mathbb{Z}$},\\
    \dfrac{8x}{\Delta f} - 8i - 7 & \text{ if } x \in D^{\mathrm{R}}_i \text{ for some $i \in \mathbb{Z}$}.
    \end{cases}
\end{align*}
Here, $C_i := [i\cdot \Delta f, (i + \frac{1}{2}) \cdot \Delta f )$, and $D_i := [ (i + \frac{1}{2}) \cdot \Delta f , (i + 1) \cdot \Delta f), \ i \in \mathbb{Z}$, are intervals that partition $\mathbb{R}$, and
\begin{align*}
    \begin{array}{ll}
        \displaystyle
        D^{\mathrm{L}}_i = \Big[\big(i + \frac{1}{2}\big)\cdot \Delta f,\; \big(i + \frac{5}{8}\big)\cdot \Delta f \Big) & \displaystyle\text{ with } \lvert D^{\mathrm{L}}_i \rvert = \frac{\Delta f}{8}, \\[1em]
        \displaystyle
        D^\mathrm{M}_i = \Big[\big(i + \frac{5}{8}\big)\cdot \Delta f,\; \big(i + \frac{7}{8}\big)\cdot \Delta f \Big) & \displaystyle\text{ with } \lvert D^{\mathrm{M}}_i \rvert = \frac{\Delta f}{4},\\[1em]
        \displaystyle
        D^{\mathrm{R}}_i = \Big[\big(i + \frac{7}{8}\big)\cdot \Delta f,\; \big(i + 1\big)\cdot \Delta f \Big) & \displaystyle \text{ with } \lvert D^{\mathrm{R}}_i \rvert = \frac{\Delta f}{8}
    \end{array}
\end{align*}
further partition each interval $D_i$ into a left (superscript `L'), middle (superscript `M') and right (superscript `R') sub-interval, respectively. Figure~\ref{figure:counter-ex} shows that $c$ satisfies Assumption~\ref{assumptions_c}~\emph{(a)} since $c$ is piecewise linear without discontinuities. However, $c$ violates the unboundedness condition of Assumption~\ref{assumptions_c} since its image is contained in the interval $[0,1]$. In the following, we will show that for this loss function, \ref{L-UB} and~\ref{lb-L-beta} differ by at least $(1 - \delta) / (1 + e^\varepsilon)$ for all $L \in \mathbb{N}$ and $\beta > 0$. We do so in two steps: We first show that $\text{\ref{problem:integral_main}} \geq (1 - \delta) / (1 + e^\varepsilon)$, and we subsequently argue that $\text{\ref{lb-L-beta}} \leq 0$. The statement then follows since $\text{\ref{L-UB}} \geq \text{\ref{problem:integral_main}}$. 

To see that $\text{\ref{problem:integral_main}} \geq (1 - \delta) / (1 + e^\varepsilon)$, define $C = \bigcup_{i \in \mathbb{Z}} C_i$ and $D = \bigcup_{i \in \mathbb{Z}}D_i$ such that $D - (\Delta f / 2) = C$. The DP constraint of~\ref{problem:integral_main} indexed by $(\varphi, A) \in \mathcal{E}$ with $\varphi = \Delta f / 2$ and $A = D$ implies that
\begin{align*}
    \int_{x \in \mathbb{R}} \indicate{x \in D} \diff \gamma(x) \leq e^\varepsilon \cdot \int_{x \in \mathbb{R}} \indicate{x \in D - (\Delta f / 2)} \diff \gamma(x) + \delta = e^\varepsilon \cdot \int_{x \in \mathbb{R}} \indicate{x \in C} \diff \gamma(x) + \delta.
\end{align*}
Let us refer to $\int_{x \in \mathbb{R}} \indicate{x \in C} \diff \gamma(x)$ as $\gamma(C)$ and $\int_{x \in \mathbb{R}} \indicate{x \in D} \diff \gamma(x)$ as $\gamma(D)$, respectively, so that the above inequality reads as $\gamma(D) \leq e^\varepsilon \cdot \gamma(C)+\delta$. Since $C$ and $D$ partition $\mathbb{R}$ and $\gamma(\mathbb{R}) = 1$, we have $\gamma(D) = 1 - \gamma(C)$ and thus $\gamma(C) \geq (1 - \delta) /(1 + e^\varepsilon)$. Since $c(x) = 1$ for $x \in C$ and $c(x) \geq 0$ for all $x \in \mathbb{R}$, any feasible solution to~\ref{problem:integral_main} must satisfy
\begin{align*}
    \int_{x \in \mathbb{R}} c(x) \diff \gamma(x) = \int_{x \in C} c(x) \diff \gamma(x) + \int_{x \in D} c(x) \diff \gamma(x) \geq  \int_{x \in C} \diff \gamma(x) = \dfrac{1 - \delta}{1 + e^\varepsilon},
\end{align*}
which shows that indeed $\text{\ref{problem:integral_main}} \geq (1-  \delta) / (1 + e^\varepsilon)$.

To show that $\text{\ref{lb-L-beta}} \leq 0$ for any $L \in \mathbb{N}$ and $\beta > 0$, we construct a feasible solution $p$ to the strong dual~\eqref{dual_form_to_refer} of~\ref{lb-L-beta} that attains the objective value $0$. We do so in two steps: We first show that there is an index $i^\star \in \mathbb{Z}$ with $\underline{c}_{i^\star}(\beta) = 0$, and we subsequently argue that the solution $p : [\pm (L + \Delta f / \beta)] \mapsto \mathbb{R}_{+}$ with $p (i) = \indicate{i = i^\star}$ is feasible in~\eqref{dual_form_to_refer}. In view of the first step, note that by construction of $c$, any interval of length $\Delta f$ has a non-empty intersection with some sub-interval $D^{\mathrm{M}}_{i}$, $i \in \mathbb{Z}$. Since the union $\bigcup_{i = L+1}^{L+\Delta f / \beta} I_i(\beta)$ is an interval of length $\Delta f$, there must thus exist an index $i^\star \in \{L+1, \ldots,L+\Delta f / \beta\}$ such that $I_{i^\star}(\beta) \cap D^{\mathrm{M}}_{i} \neq \emptyset$ for some $i \in \mathbb{Z}$. We then have $\underline{c}_{i^{\star}}(\beta) = \inf_{x \in I_{i^\star}(\beta)} \{ c (x) \} = 0$ since $c(x) = 0$ for $x \in D^{\mathrm{M}}_{i}$. As for the second step, we note that for any $(\varphi, A) \in \mathcal{E}(L, \beta)$, the left-hand side of the constraint in~\eqref{dual_form_to_refer} satisfies
\begin{align*}
    \displaystyle \sum\limits_{i \in [\pm (L + \Delta f / \beta)]} \indicate{I_i(\beta) \subseteq A} \cdot p(i) = \indicate{I_{i^\star}(\beta) \subseteq A} \; = \; 0 ,
\end{align*}
where the first equality holds by construction of $p$ and the second equality uses the fact that $I_{i^\star}(\beta) \not \subseteq A$ since $i^\star \in \{ L + 1,\ldots, L +\Delta f / \beta \}$ whereas $A \subseteq \bigcup_{i \in [\pm L]} I_i (\beta)$ because $A \in \mathcal{F}(L, \beta)$. Since the right-hand side of the constraint in~\eqref{dual_form_to_refer} is non-negative by construction, we thus conclude that $p$ is feasible in~\eqref{dual_form_to_refer}.
}

\section{Proofs of Section~\ref{sec:nonlinear}}

{\color{black}
\subsection{Proof of Observation~\ref{observation_primal_nonlinear}}

Assumption~\ref{assumptions_f} allows us to replace the DP constraint in~\eqref{generalized_problem} with
\begin{align*}
    &\displaystyle \int_{x \in \mathbb{R}}\indicate{x \in A} \diff \gamma(x\mid f(D))  \leq e^\varepsilon \cdot \displaystyle \int_{x \in \mathbb{R}} \indicate{f(D') - f(D) + x \in A} \diff \gamma(x \mid f(D')) + \delta \\
    & \pushright{\forall (D, D') \in \mathcal{N}, \ \forall A \in \mathcal{F}} \\
    \iff&\displaystyle \int_{x \in \mathbb{R}}\indicate{x \in A} \diff \gamma(x \mid \phi)  \leq  e^\varepsilon \cdot \displaystyle \int_{x \in \mathbb{R}} \indicate{x + \varphi \in A} \diff \gamma(x\mid\phi+\varphi ) + \delta \\
    & \pushright{\forall \phi \in \Phi, \ \forall (\varphi, A) \in \mathcal{E}'(\phi).}
\end{align*}
Here, the first line holds since $\{A : A \in \mathcal{F}\} = \{ A + f(D) : A \in \mathcal{F} \}$ for any $D \in {\color{black}\mathcal{D}}$, whereas the second line is due to Assumption~\ref{assumptions_f}. Replacing the latter representation of the DP constraints with those in~\eqref{generalized_problem} gives~\ref{problem:integral_nonlinear} and thus concludes the observation. \qed
}

\subsection{Proof of Proposition~\ref{prop:general_bounded}}\label{app:prop:general_bounded}

The proof of Proposition~\ref{prop:general_bounded} relies on three auxiliary lemmas, each of which is devoted to the inclusion of one of the restrictions~\eqref{additional_measure}, \eqref{restriction_general} and~\eqref{eq:general_bound_sup} into problem~\ref{problem:integral_nonlinear}. We state and prove these auxiliary lemmas first.

\begin{lemmaA}\label{prop:temp_ub}
With the additional constraint~\eqref{additional_measure}, \ref{problem:integral_nonlinear} has the same optimal value as
\begin{align}\label{problem:temp_ub}
    \begin{array}{cl}
    \underset{\gamma}{\mathrm{minimize}}  & \displaystyle \beta \cdot \sum_{k \in [K]} \objweight_k(\beta) \cdot \int_{x \in \mathbb{R}} c(x) \diff \gamma_k(x) \\
    \mathrm{subject\; to}    &    \gamma_k \in \mathcal{P}_0, \ k \in [K] \\
    & \displaystyle \int_{x \in \mathbb{R}}\indicate{x \in A} \diff \gamma_k(x)  \leq  e^\varepsilon \cdot \displaystyle \int_{x \in \mathbb{R}} \indicate{x + \varphi \in A} \diff \gamma_m(x) + \delta \\
    & \pushright{\qquad \qquad \qquad \qquad \forall k,m \in [K], \ \forall (\varphi,A) \in \mathcal{E}''_{km}(\beta)},
    \end{array}
\end{align}
where $\mathcal{E}''_{km}(\beta) := [[-\Delta f , \Delta f] \cap (\Phi_m(\beta) - \Phi_k(\beta))] \times \mathcal{F}$ and $\objweight_k(\beta) := \beta^{-1} \cdot \int_{\phi \in\Phi_k(\beta)} \objweight(\phi) \diff \phi$.
\end{lemmaA}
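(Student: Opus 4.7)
The plan is to establish a bijection between feasible solutions $\gamma$ of \ref{problem:integral_nonlinear} that additionally satisfy \eqref{additional_measure} and feasible tuples $(\gamma_k)_{k \in [K]}$ of \eqref{problem:temp_ub} such that the objective values agree. To this end, given any $\gamma \in \Gamma$ satisfying \eqref{additional_measure}, I would define $\gamma_k := \gamma(\cdot \mid \phi)$ for any $\phi \in \Phi_k(\beta)$; this is well-posed thanks to \eqref{additional_measure}, and the first condition in the definition of $\Gamma$ ensures $\gamma_k \in \mathcal{P}_0$. Conversely, given $(\gamma_k)_{k \in [K]} \subset \mathcal{P}_0$, I would set $\gamma(\cdot \mid \phi) := \gamma_k$ whenever $\phi \in \Phi_k(\beta)$; the induced map $\phi \mapsto \gamma(A \mid \phi) = \sum_{k \in [K]} \gamma_k(A) \cdot \indicate{\phi \in \Phi_k(\beta)}$ is a finite linear combination of indicators of Borel sets and hence measurable, so $\gamma \in \Gamma$ and \eqref{additional_measure} holds by construction.

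For the objective, I would split the outer integral along the partition $\Phi = \bigcup_{k \in [K]} \Phi_k(\beta)$ and pull the (now $\phi$-independent) inner integral outside:
\begin{align*}
    \int_{\phi \in \Phi} \objweight(\phi) \left[ \int_{x \in \mathbb{R}} c(x) \diff \gamma(x \mid \phi) \right] \diff \phi
    &= \sum_{k \in [K]} \left( \int_{\phi \in \Phi_k(\beta)} \objweight(\phi) \diff \phi \right) \cdot \int_{x \in \mathbb{R}} c(x) \diff \gamma_k(x) \\
    &= \beta \cdot \sum_{k \in [K]} \objweight_k(\beta) \cdot \int_{x \in \mathbb{R}} c(x) \diff \gamma_k(x),
\end{align*}
where the second identity substitutes the definition $\objweight_k(\beta) := \beta^{-1} \cdot \int_{\phi \in \Phi_k(\beta)} \objweight(\phi) \diff \phi$. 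This matches the objective of \eqref{problem:temp_ub}.

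For the DP constraints, I would argue that under \eqref{additional_measure} each constraint of \ref{problem:integral_nonlinear} indexed by $\phi \in \Phi_k(\beta)$ and $(\varphi, A) \in \mathcal{E}'(\phi)$ with $\phi + \varphi \in \Phi_m(\beta)$ reduces to the constraint of \eqref{problem:temp_ub} indexed by $(k, m, \varphi, A)$, and conversely each constraint of \eqref{problem:temp_ub} is literally identical to any such original constraint. The range of admissible shifts for the pair $(k, m)$ is then exactly $\{ \phi' - \phi \, : \, \phi \in \Phi_k(\beta), \, \phi' \in \Phi_m(\beta) \} \cap [-\Delta f, \Delta f] = [-\Delta f, \Delta f] \cap (\Phi_m(\beta) - \Phi_k(\beta))$, which matches the definition of $\mathcal{E}''_{km}(\beta)$. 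Combining the two correspondences yields the claimed equality of optimal values.

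I do not anticipate a serious obstacle: the argument is essentially bookkeeping, with the only minor subtleties being the measurability check for the reconstructed $\gamma$ and the careful matching of the parameter set $\mathcal{E}''_{km}(\beta)$ with the valid query-difference values arising from pairs $(\phi, \phi') \in \Phi_k(\beta) \times \Phi_m(\beta)$.
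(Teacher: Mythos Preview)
Your proposal is correct and follows essentially the same approach as the paper's proof: both arguments replace the family $\{\gamma(\cdot\mid\phi)\}_{\phi\in\Phi}$ by the finite collection $\{\gamma_k\}_{k\in[K]}$, split the outer integral along the partition $\{\Phi_k(\beta)\}_{k\in[K]}$ to rewrite the objective, and then match the DP constraints by identifying the unique pair $(k,m)$ with $(\phi,\phi+\varphi)\in\Phi_k(\beta)\times\Phi_m(\beta)$ and observing that the admissible shifts are precisely $[-\Delta f,\Delta f]\cap(\Phi_m(\beta)-\Phi_k(\beta))$. Your version is in fact slightly more careful than the paper's, since you explicitly verify the measurability condition in $\Gamma$ for the reconstructed kernel in the reverse direction, which the paper glosses over.
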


\begin{proof}
We use restriction~\eqref{additional_measure} to replace the uncountable family of measures $\{ \gamma (\cdot | \phi) \}_{\phi \in \Phi}$ in problem \ref{problem:integral_nonlinear} with the finite set of measures $\{ \gamma_k \}_{k \in [K]}$. Note that in this case, the requirement $\gamma \in \Gamma$ simplifies to $\gamma_k \in \mathcal{P}_0$ for all $k \in [K]$.

We can now equivalently reformulate the objective function of problem~\ref{problem:integral_nonlinear} as
\begin{align*}
    \int_{\phi \in \Phi} \objweight(\phi) \cdot \left[\int_{x \in \mathbb{R}} c(x) \diff \gamma(x \mid \phi)\right]\diff \phi  &= 
    \sum_{k \in [K]} \int_{\phi \in \Phi_k(\beta)} \objweight(\phi) \cdot \Big[\int_{x \in \mathbb{R}} c(x) \diff \underbrace{\gamma(x \mid \phi)}_{= \gamma_k(x)}\Big]\diff \phi \\
    & = \sum_{k \in [K]} \Big[ \underbrace{\int_{\phi \in \Phi_k(\beta)} \objweight(\phi) \diff \phi}_{=\beta \cdot \objweight_k(\beta)} \Big]\cdot \Big[\int_{x \in \mathbb{R}} c(x) \diff \gamma_k(x)\Big],
\end{align*}
which coincides with the objective function of~\eqref{problem:temp_ub}.

Next, fix any DP constraint $(\phi, \varphi, A)$ in~\ref{problem:integral_nonlinear}, and note that $(\phi, \phi + \varphi) \in \Phi_k(\beta) \times \Phi_m(\beta)$ for a unique pair $(k, m) \in [K]$. In terms of our new measures $\{ \gamma_k \}_{k \in [K]}$, the constraint becomes
\begin{equation*}
    \int_{x \in \mathbb{R}}\indicate{x \in A} \diff \gamma_k(x)  \leq  e^\varepsilon \cdot \displaystyle \int_{x \in \mathbb{R}} \indicate{x + \varphi \in A} \diff \gamma_m(x) + \delta,
\end{equation*}
and this constraint is indeed included in~\eqref{problem:temp_ub} since $\varphi \in [-\Delta f, \Delta f] \cap (\Phi_m(\beta) - \Phi_k(\beta))$. Similarly, one readily verifies that all DP constraints in~\eqref{problem:temp_ub} have corresponding constraints in problem~\ref{problem:integral_nonlinear}. We thus conclude that~\ref{problem:integral_nonlinear} and~\eqref{problem:temp_ub} are indeed equivalent under restriction~\eqref{additional_measure}, as desired.
\end{proof}

In contrast to~\ref{problem:integral_nonlinear}, problem~\eqref{problem:temp_ub} comprises finitely many probability measures $\{ \gamma_k \}_{k \in [K]}$. The next result shows that the restriction~\eqref{restriction_general} allows us to equivalently represent each measure $\gamma_k$ by countably many decision variables.

\begin{lemmaA}\label{prop:discretize}
With the additional constraint~\eqref{restriction_general}, problem~\eqref{problem:temp_ub} has the same optimal value as
\begin{align}\label{problem:general_restricted_main_finite}\tag{$\mathrm{P'}(\beta)$}
    \begin{array}{cl}
        \underset{p}{\mathrm{minimize}} & \displaystyle \beta \cdot\sum\limits_{k \in [K]} \objweight_k(\beta) \cdot  \sum\limits_{i \in \mathbb{Z}} c(i) \cdot p_{k}(i) \\[1.5em]
        \mathrm{subject\; to} &\displaystyle p_{k} : \mathbb{Z} \mapsto \mathbb{R}_{+}, \  \sum\limits_{i \in \mathbb{Z}} p_{k}(i) = 1, \ k \in [K] \\[1.5em]
        &\displaystyle \sum\limits_{i \in \mathbb{Z}} \indicate{I_i(\beta) \subseteq A} \cdot p_{k}(i) \leq  e^\varepsilon \cdot \sum\limits_{i \in \mathbb{Z}} \indicate{I_i(\beta) + \varphi \subseteq A} \cdot p_{m}(i) + \delta \\
        & \pushright{\qquad \quad \forall k,m \in [K], \ \forall (\varphi, A) \in \mathcal{E}'_{km}(\beta),}
    \end{array}
\end{align}
where {\color{black}$\mathcal{E}'_{km}(\beta) := [\setvarphi(\beta) \cap \{(m-k-1)\cdot \beta,\ (m-k)\cdot \beta, \ (m-k+1)\cdot \beta \}]\times \mathcal{F}(\beta)$}.
\end{lemmaA}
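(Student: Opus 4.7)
The plan is to adapt the proof of Lemma~\ref{prop:finite} from the data independent setting to the multi-measure setting, treating each pair $(k,m) \in [K]^2$ of measures in essentially the same way that the single measure $\gamma$ was handled before. The overall structure is the same: first, rewrite the objective in terms of the new discrete decision variables $\{p_k\}_{k \in [K]}$; second, show that the uncountably many DP constraints indexed by $(\varphi, A) \in \mathcal{E}''_{km}(\beta)$ can be reduced to the finite set indexed by $(\varphi, A) \in \mathcal{E}'_{km}(\beta)$.

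For the objective, under restriction~\eqref{restriction_general} each $\gamma_k$ admits a density of the form $x \mapsto \sum_{i \in \mathbb{Z}} p_k(i) \cdot \indicate{x \in I_i(\beta)}/\beta$; this follows from Fubini's theorem exactly as in the first step of the proof of Lemma~\ref{lemma_beginning}. Substituting this density into $\int_{x \in \mathbb{R}} c(x) \diff \gamma_k(x)$ and applying Fubini a second time yields $\sum_{i \in \mathbb{Z}} c_i(\beta) \cdot p_k(i)$ with $c_i(\beta) = \beta^{-1}\int_{x \in I_i(\beta)} c(x) \diff x$. Weighting by $\beta \cdot \objweight_k(\beta)$ and summing over $k \in [K]$ then recovers the objective of~\ref{problem:general_restricted_main_finite}.

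For the constraints, fix any pair $(k,m) \in [K]^2$ and observe that after substituting the density expressions, the corresponding DP constraint in~\eqref{problem:temp_ub} becomes
\begin{equation*}
\sum_{i \in \mathbb{Z}} p_k(i) \cdot \frac{|A \cap I_i(\beta)|}{\beta} \;\leq\; e^\varepsilon \cdot \sum_{i \in \mathbb{Z}} p_m(i) \cdot \frac{|(A - \varphi) \cap I_i(\beta)|}{\beta} + \delta \quad \forall (\varphi, A) \in \mathcal{E}''_{km}(\beta).
\end{equation*}
This is structurally identical to the DP constraint in the proof of Lemma~\ref{lemma_beginning}, except that the coefficients on the two sides are drawn from $p_k$ and $p_m$ respectively. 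Consequently, the worst-case event analysis of Lemmas~\ref{lemma:worst-fixed} and~\ref{lemma:linear} transfers with only notational changes: for fixed $\varphi$, the privacy shortfall is maximized by an event $A^\star(\varphi) \in \mathcal{F}(\beta)$ built as a union of sub-intervals $I_i^1(\varphi,\beta)$ and $I_i^2(\varphi,\beta)$, with the decision rule inside each sub-interval now comparing $p_k(i)$ against $e^\varepsilon \cdot p_m(j-1)$ and $e^\varepsilon \cdot p_m(j)$. The function $\varphi \mapsto V(\varphi, A^\star(\varphi))$ remains piecewise affine in $\varphi$, so its suprema are attained at $\varphi$ values that are multiples of $\beta$.

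The main obstacle will be the extra restriction $\varphi \in \Phi_m(\beta) - \Phi_k(\beta)$ that is specific to the data dependent setting. The difference set $\Phi_m(\beta) - \Phi_k(\beta)$ is an open interval of length $2\beta$ that contains only the single multiple $(m-k)\beta$ of $\beta$ in its interior; the two boundary values $(m-k\pm 1)\beta$ lie outside the open interval but may still need to be retained to capture suprema that are only attained in the limit. I would therefore invoke the same continuity/limit argument used at $(k+1)\beta$ in the proof of Lemma~\ref{lemma:linear} to confirm that no supremum is lost, and conclude that replacing the uncountable family $\mathcal{E}''_{km}(\beta)$ by the three-point set $\mathcal{E}'_{km}(\beta) = [\setvarphi(\beta) \cap \{(m-k-1)\beta, (m-k)\beta, (m-k+1)\beta\}] \times \mathcal{F}(\beta)$ yields an equivalent constraint system for every pair $(k,m)$, which gives the claimed equality of optimal values between~\eqref{problem:temp_ub} and~\ref{problem:general_restricted_main_finite}.
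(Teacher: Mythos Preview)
Your proposal is correct and follows essentially the same approach as the paper: rewrite objective and constraints via the piecewise-constant density (Lemma~\ref{lemma_beginning}), then reduce $A$ to $\mathcal{F}(\beta)$ via Lemma~\ref{lemma:worst-fixed} and $\varphi$ to multiples of $\beta$ via the piecewise-affine argument of Lemma~\ref{lemma:linear}. You also correctly isolate the only new wrinkle in the data-dependent case---that $\Phi_m(\beta)-\Phi_k(\beta)$ is the open interval $((m-k-1)\beta,(m-k+1)\beta)$, so the endpoints $(m-k\pm1)\beta$ must be retained via the continuity argument at $(k+1)\beta$ in Lemma~\ref{lemma:linear}---which is exactly how the paper justifies passing to $\setvarphi(\beta)\cap\mathrm{cl}(\Phi_m(\beta)-\Phi_k(\beta))$.
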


\begin{proof}
Similar arguments as in the proof of Lemma~\ref{lemma_beginning} allow us to replace each measure $\gamma_k$ with a countable set of decision variables $p_k : \mathbb{Z} \mapsto \mathbb{R}_+$, $k \in [K]$. The resulting reformulation of problem~\eqref{problem:temp_ub} under restriction~\eqref{restriction_general} reads as follows.
\begin{align*}
    \begin{array}{cl}
        \underset{p}{\mathrm{minimize}} & \displaystyle \beta \cdot\sum\limits_{k \in [K]} \objweight_k(\beta) \cdot  \sum\limits_{i \in \mathbb{Z}} c(i) \cdot p_{k}(i) \\[1.5em]
        \mathrm{subject\; to} &\displaystyle \sum\limits_{i \in \mathbb{Z}} p_{k}(i) = 1, \ p_{k} : \mathbb{Z} \mapsto \mathbb{R}_{+}, \  k \in [K] \\[1.5em]
        &\displaystyle \sum_{i \in \mathbb{Z}} p_{k}(i) \cdot \dfrac{|A \cap I_i(\beta)|}{\beta} \leq e^\varepsilon \cdot \sum_{i \in \mathbb{Z}} p_{m}(i) \cdot \dfrac{|(A - \varphi) \cap I_i(\beta)|}{\beta} + \delta \\
        & \pushright{\qquad \quad \forall k,m \in [K], \ \forall (\varphi, A) \in \mathcal{E}''_{km}(\beta)}
    \end{array}
\end{align*}
In contrast to~\ref{problem:general_restricted_main_finite}, this problem still employs the larger constraint set $(\varphi, A) \in \mathcal{E}''_{km}(\beta)$. Applying similar arguments as in the proof of Lemma~\ref{lemma:worst-fixed} shows that the constraints $(\varphi, A)$ satisfying $A \in \mathcal{F} \setminus \mathcal{F} (\beta)$ are weakly dominated by the constraints $(\varphi, A')$ satisfying $A' \in \mathcal{F} (\beta)$, and arguments similar to those in the proof of Lemma~\ref{lemma:linear} show that the constraints $(\varphi, A)$ satisfying $A \in \mathcal{F} (\beta)$ and $\varphi \in [-\Delta f, \Delta f] \cap (\Phi_m(\beta) - \Phi_k(\beta))$ are weakly dominated by the constraints $(\varphi', A)$ satisfying $\varphi' \in \setvarphi(\beta) \cap \mathrm{cl}(\Phi_m(\beta) - \Phi_k(\beta))$ whenever $[-\Delta f, \Delta f] \cap (\Phi_m(\beta) - \Phi_k(\beta))$ is nonempty.
We can thus replace in the above optimization problem the constraints $(\varphi, A) \in \mathcal{E}''_{km}(\beta)$ with the smaller set of constraints $(\varphi, A) \in \mathcal{E}'_{km}(\beta)$. In that case, however, the identities
\begin{equation*}
    \frac{|A \cap I_i(\beta)|}{\beta} = \indicate{I_i(\beta) \subseteq A}
    \quad \text{and} \quad
    \frac{|(A - \varphi)\cap I_i(\beta)|}{\beta} = \indicate{I_i(\beta) + \varphi \subseteq A}
\end{equation*}
hold for all $i \in \mathbb{Z}$ (\emph{cf.}~the proof of Lemma~\ref{prop:finite}), which concludes the proof.
\end{proof}

Problem~\ref{problem:general_restricted_main_finite} still comprises a countably infinite number of decision variables and uncountably many constraints. The next result shows that the restriction~\eqref{eq:general_bound_sup} allows us to reformulate \ref{problem:general_restricted_main_finite} as a finite-dimensional linear program.

\begin{lemmaA}\label{final_lemma}
With the additional constraint~\eqref{eq:general_bound_sup},~\ref{problem:general_restricted_main_finite} has the same optimal value as the finite-dimensional linear program~\ref{general_L-UB}.
\end{lemmaA}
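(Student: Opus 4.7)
The plan is to mimic the strategy used in the proof of Proposition~\ref{prop:bounded}, suitably extended to the two-index structure $(k,m) \in [K]^2$ that appears in the data dependent setting. First, I would append constraint~\eqref{eq:general_bound_sup} to problem~\ref{problem:general_restricted_main_finite} and observe that the domains of the decision variables immediately collapse from $p_k: \mathbb{Z} \mapsto \mathbb{R}_+$ to $p_k: [\pm L] \mapsto \mathbb{R}_+$ for each $k \in [K]$, so that the objective and the normalization constraints of the resulting problem already match those of~\ref{general_L-UB}. What remains is to show that the uncountable constraint set indexed by $(\varphi, A) \in \mathcal{E}'_{km}(\beta)$ can be replaced by the finite subset $(\varphi, A) \in \mathcal{E}'_{km}(L,\beta)$ without altering the feasible region.

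The key observation is that under~\eqref{eq:general_bound_sup}, the sums in the DP constraints effectively range only over $i \in [\pm L]$. Fix any $(k,m) \in [K]^2$ and any $(\varphi, A) \in \mathcal{E}'_{km}(\beta)$, and define $A_L := A \cap \bigcup_{i \in [\pm L]} I_i(\beta)$, so that $(\varphi, A_L) \in \mathcal{E}'_{km}(L, \beta)$ by construction. Then I would argue, analogously to the proof of Proposition~\ref{prop:bounded}, that for any $p$ satisfying~\eqref{eq:general_bound_sup},
\begin{align*}
&\sum_{i \in [\pm L]} \indicate{I_i(\beta) \subseteq A} \cdot p_k(i) - e^\varepsilon \cdot \sum_{i \in [\pm L]} \indicate{I_i(\beta) + \varphi \subseteq A} \cdot p_m(i) \\
&\qquad \leq \sum_{i \in [\pm L]} \indicate{I_i(\beta) \subseteq A_L} \cdot p_k(i) - e^\varepsilon \cdot \sum_{i \in [\pm L]} \indicate{I_i(\beta) + \varphi \subseteq A_L} \cdot p_m(i),
\end{align*}
where the LHS sum in the first term equals its counterpart on the RHS because no $i \in [\pm L]$ can satisfy $I_i(\beta) \subseteq A \setminus A_L$, while the second term on the RHS is weakly smaller than on the LHS because $A_L \subseteq A$ and $p_m \geq 0$. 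Consequently, every DP constraint $(\varphi, A) \in \mathcal{E}'_{km}(\beta) \setminus \mathcal{E}'_{km}(L, \beta)$ is weakly dominated by its truncated counterpart $(\varphi, A_L) \in \mathcal{E}'_{km}(L, \beta)$, and can thus be removed without affecting the feasible set. Combining these reductions yields precisely~\ref{general_L-UB}, establishing equivalence of optimal values.

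The argument is conceptually routine once the truncation $A \mapsto A_L$ is identified, and the main (minor) subtlety is simply carrying the two indices $k$ and $m$ through the bookkeeping and verifying that $(\varphi, A_L)$ remains admissible as an element of $\mathcal{E}'_{km}(L, \beta)$, which follows from the definition $\mathcal{F}(L, \beta) = \{ \bigcup_{i \in \mathcal{L}} I_i(\beta) : \mathcal{L} \subseteq [\pm L] \}$ inherited from Section~\ref{sec:additive} together with the fact that $\varphi$ lies in the same discrete set in both $\mathcal{E}'_{km}(\beta)$ and $\mathcal{E}'_{km}(L, \beta)$. I do not anticipate any real obstacle here; the lemma is essentially the data dependent analogue of Proposition~\ref{prop:bounded}, and the proof proceeds by an almost verbatim transplant of that earlier argument.
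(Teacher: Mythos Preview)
Your proposal is correct and follows essentially the same approach as the paper: the paper's proof explicitly appeals to ``similar arguments as in the proof of Proposition~\ref{prop:bounded}'' to show that each constraint $(\varphi, A) \in \mathcal{E}'_{km}(\beta) \setminus \mathcal{E}'_{km}(L, \beta)$ is weakly dominated by $(\varphi, A_L) \in \mathcal{E}'_{km}(L, \beta)$, which is precisely the truncation argument you spell out. Your observation that the $\varphi$-component of $\mathcal{E}'_{km}(\beta)$ and $\mathcal{E}'_{km}(L,\beta)$ coincide (so only the event set needs to be restricted) is exactly the minor bookkeeping point needed to complete the transplant.
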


\begin{proof}
Under restriction~\eqref{eq:general_bound_sup}, we can reduce each countable set of decisions $p_k : \mathbb{Z} \mapsto \mathbb{R}_+$ to a finite set $p_k : [\pm L] \mapsto \mathbb{R}_+$, $k \in [K]$. Moreover, similar arguments as in the proof of Proposition~\ref{prop:bounded} allow us to show that in the resulting problem, each constraint $(\varphi, A) \in \mathcal{E}'_{km}(\beta) \setminus \mathcal{E}'_{km}(L, \beta)$ is weakly dominated by the a constraint $(\varphi, A_L) \in \mathcal{E}'_{km}(L, \beta)$. This concludes the proof.
\end{proof}

\noindent \textbf{Proof of Proposition~\ref{prop:general_bounded}.} $\;$
The proof directly follows from Lemmas~\ref{prop:temp_ub},~\ref{prop:discretize} and~\ref{final_lemma}.
\qed

\subsection{Proof of Proposition~\ref{prop:weakdual_general}}\label{app:prop:weakdual_general}

Fix any $\gamma$ feasible in \ref{problem:integral_nonlinear} and any $(\theta, \psi)$ feasible in \ref{problem:dual_integral_nonlinear}. We then have
\begin{align*}
    & \displaystyle \int_{\phi \in \Phi} \objweight(\phi) \cdot \left[ \int_{x \in \mathbb{R}} c(x)  \diff \gamma(x \mid \phi) \right] \diff \phi \\
=  &   \displaystyle \int_{\phi \in \Phi}  \int_{x \in \mathbb{R}} \Big[ \objweight(\phi) \cdot c(x) \Big] \diff \gamma(x \mid \phi)  \diff \phi\\
\geq  &   \displaystyle \int_{\phi \in \Phi}  \int_{x \in \mathbb{R}} \Bigg[ \theta(\phi) -  \int_{(\varphi, A)\in \mathcal{E}'(\phi)} \indicate{x \in A} \diff \psi( \varphi, A \mid \phi) + \\
& \hspace{2.2cm} e^\varepsilon \cdot \int_{(-\varphi, A)\in \mathcal{E}'(\phi)} \indicate{x+\varphi \in A}\diff \psi(\varphi, A \mid \phi - \varphi) \Bigg] \diff \gamma(x \mid \phi)  \diff \phi \\
= & \displaystyle \underbrace{\int_{\phi \in \Phi} \int_{x \in \mathbb{R}} \theta(\phi) \diff \gamma(x \mid \phi) \diff \phi}_{\text{\emph{(i)}}} - \underbrace{\int_{\phi \in \Phi} \int_{x \in \mathbb{R}} \int_{(\varphi, A)\in \mathcal{E}'(\phi)} \indicate{x \in A} \diff \psi( \varphi, A \mid \phi) \diff \gamma(x \mid \phi) \diff \phi}_{\text{\emph{(ii)}}}  \\
& \quad + e^\varepsilon \cdot \underbrace{\int_{\phi \in \Phi} \int_{x \in \mathbb{R}} \int_{(-\varphi, A)\in \mathcal{E}'(\phi)} \indicate{x+\varphi \in A}\diff \psi(\varphi, A \mid \phi - \varphi) \diff \gamma(x \mid \phi) \diff \phi}_{\text{\emph{(iii)}}},
\end{align*}
where the inequality follows from the constraints in \ref{problem:dual_integral_nonlinear} and the fact that $\gamma(\cdot|\phi)$ is a non-negative measure, and the final equality is due to the linearity of the integration operator.

The above term \emph{(i)} simplifies to
\begin{align*}
    \int_{\phi \in \Phi} \int_{x \in \mathbb{R}} \theta(\phi) \diff \gamma(x \mid \phi) \diff \phi = \int_{\phi \in \Phi}\theta(\phi) \left[\int_{x \in \mathbb{R}}  \diff \gamma(x \mid \phi)\right] \diff \phi =  \int_{\phi \in \Phi} \theta(\phi) \diff \phi,
\end{align*}
where we used the fact that $\gamma(\cdot | \phi)$ is a probability measure for every $\phi \in \Phi$.

The above term \emph{(ii)} can be reformulated as
\begin{align*}
    & \int_{\phi \in \Phi} \int_{x \in \mathbb{R}} \int_{(\varphi, A)\in \mathcal{E}'(\phi)} \indicate{x \in A} \diff \psi( \varphi, A \mid \phi) \diff \gamma(x \mid \phi) \diff \phi \\
=    & \int_{\phi \in \Phi} \int_{(\varphi, A)\in \mathcal{E}'(\phi)} \left[\int_{x \in \mathbb{R}} \indicate{x \in A} \diff \gamma(x \mid \phi) \right] \diff \psi( \varphi, A \mid \phi) \diff \phi,
\end{align*}
where we used Fubini's theorem (whose applicability follows from similar arguments as in the proof of Proposition~\ref{prop:weakdual}) to change the order of integration.

Finally, the above term \emph{(iii)} can be rewritten as
\begin{align*}
    & \int_{\phi \in \Phi} \int_{x \in \mathbb{R}} \int_{(-\varphi, A)\in \mathcal{E}'(\phi)} \indicate{x+\varphi \in A}\diff \psi(\varphi, A \mid \phi - \varphi) \diff \gamma(x \mid \phi) \diff \phi \\
=   & \int_{\phi \in \Phi} \int_{(-\varphi, A)\in \mathcal{E}'(\phi)} \left[\int_{x \in \mathbb{R}} \indicate{x+\varphi \in A}  \diff \gamma(x \mid \phi) \right] \diff \psi(\varphi, A \mid \phi - \varphi) \diff \phi \\
=   & \int_{\phi \in \Phi} \int_{(\varphi, A)\in \mathcal{E}'(\phi)} \left[\int_{x \in \mathbb{R}} \indicate{x+\varphi \in A}  \diff \gamma(x \mid \phi + \varphi)\right] \diff \psi(\varphi, A \mid \phi) \diff \phi.
\end{align*}
where the first equality follows from Fubini's theorem. The second equality is due to a change of variables. Specifically, we use the definition of $\mathcal{E}'(\phi)$ to rewrite the region of integration as
\begin{equation*}
    \left\{(\phi, \varphi, A) \ :\ \phi \in \Phi, (-\varphi, A)\in \mathcal{E}'(\phi) \right\} = \left\{(\phi, \varphi, A) \in \Phi \times [-\Delta f, \Delta f] \times \mathcal{F} \ : \phi - \varphi \in  \Phi \right\}.
\end{equation*}
Introducing a new variable $\phi' = \phi - \varphi$, we observe that $\varphi + \phi' = \phi \in \Phi$. Hence the region of integration region can be expressed as
\begin{equation*}
    \left\{(\phi', \varphi, A) \in \Phi \times [-\Delta f, \Delta f] \times \mathcal{F} \ : \varphi \in  \Phi - \phi' \right\} = \left\{(\phi', \varphi, A) \ :\ \phi' \in \Phi, (\varphi, A)\in \mathcal{E}'(\phi') \right\}
\end{equation*}
if we replace $\phi$ with $\phi' + \varphi$ in the integrals. The second equality now holds if we relabel $\phi'$ as $\phi$.

Replacing the terms \emph{(i)}--\emph{(iii)} with their equivalent expressions derived above, we obtain
\begin{align*}
& \displaystyle \int_{\phi \in \Phi} \objweight(\phi) \cdot \left[ \int_{x \in \mathbb{R}} c(x)  \diff \gamma(x \mid \phi) \right] \diff \phi \\
\geq & \int_{\phi \in \Phi} \theta(\phi) \diff \phi \ -  \int_{\phi \in \Phi} \int_{(\varphi, A)\in \mathcal{E}'(\phi)} \Bigg[\int_{x \in \mathbb{R}} \indicate{x \in A} \diff \gamma(x \mid \phi) -  \\
 & \hspace{6cm} e^\varepsilon \cdot \int_{x \in \mathbb{R}} \indicate{x+\varphi \in A} \diff \gamma(x \mid \phi + \varphi) \Bigg] \diff \psi( \varphi, A \mid \phi) \diff \phi \\
 \geq & \int_{\phi \in \Phi} \left[\theta(\phi) -  \delta \cdot \int_{(\varphi, A) \in \mathcal{E}'(\phi)}\diff \psi (\varphi, A \mid \phi) \right] \diff \phi,
\end{align*}
where the final inequality is due to the constraints in \ref{problem:integral_nonlinear} and the fact that $\psi(\cdot| \phi)$ is a non-negative measure. The last expression coincides with the objective function of \ref{problem:dual_integral_nonlinear}, as desired.
\qed

\subsection{Proof of Proposition~\ref{prop:generalized_finite_dual}}\label{app:prop:generalized_finite_dual}

Our proof proceeds in two steps. We first use restriction~\eqref{general_dual_23} to reduce the number of decision variables in \ref{problem:dual_integral_nonlinear}, and we subsequently use restriction~\eqref{general_dual_1} to remove the integrals as well as reduce the number of constraints in \ref{problem:dual_integral_nonlinear}. This will yield the formulation in the statement of Proposition~\ref{prop:generalized_finite_dual}.

In view of the first step, we use restriction~\eqref{general_dual_23} to replace the measure $\theta : \Phi \mapsto \mathbb{R}$ with a vector $\bm{\theta} \in \mathbb{R}^K$ and $\psi$ with a finite family of unconditional measures $\psi_k \in \mathcal{M}_{+}(\mathcal{E}'(\phi))$, $k \in [K]$. Under those substitutions, problem \ref{problem:dual_integral_nonlinear} simplifies to the following formulation.
\begin{align*}
    \begin{array}{cl}
    \underset{\theta, \psi}{\text{maximize}}  & \displaystyle \beta \cdot \sum_{k \in [K]} \theta_k - \delta \cdot \sum_{k \in [K]} \int_{\phi \in \Phi_k(\beta)} \int_{(\varphi, A) \in \mathcal{E}'(\phi)} \diff \psi_k(\varphi, A) \diff \phi \\
    \text{subject to}    & \bm{\theta} \in \mathbb{R}^{K}, \ \psi_k \in \mathcal{M}_{+}(\mathcal{E}'(\phi)), \ k \in [K] \\
    & \displaystyle \theta_k \leq \int_{(\varphi, A)\in \mathcal{E}'(\phi)} \indicate{x \in A} \diff \psi_k( \varphi, A) - \\
    & \displaystyle \mspace{40mu} e^\varepsilon \cdot \sum_{m \in [K]} \int_{(-\varphi, A)\in \mathcal{E}'(\phi)} \indicate{x+\varphi \in A} \cdot \indicate{\phi - \varphi \in \Phi_m(\beta)} \diff \psi_m (\varphi, A) \\
    & \pushright{ + c(x) \cdot \objweight(\phi)  \quad \forall k \in [K], \ \forall \phi \in \Phi_k(\beta), \ \forall x \in \mathbb{R}}
    \end{array}
\end{align*}
Here, our reformulation uses the fact that $\{ \Phi_k(\beta) \}_{k \in [K]}$ partitions $\Phi$, and the reformulated first term of the objective function additionally exploits that $| \Phi_k(\beta) | = \beta$ for all $k \in [K]$.

As for the second step, we note that under restriction~\eqref{general_dual_1}, the second expression in the objective function simplifies to
\begin{align*}
    \displaystyle \sum_{k \in [K]} \int_{\phi \in \Phi_k(\beta)} \int_{(\varphi, A) \in \mathcal{E}'(\phi)} \diff \psi_k(\varphi, A) \diff \phi
    & = \sum_{k \in [K]} \int_{\phi \in \Phi_k(\beta)} \int_{(\varphi, A) \in \mathcal{E}_{k}'(L,\beta)} \diff \psi_k(\varphi, A) \diff \phi \\
    & = \beta \cdot \sum_{k \in [K]} \int_{(\varphi, A) \in \mathcal{E}_{k}'(L,\beta)} \diff \psi_k(\varphi, A),
\end{align*}
where the first equality is due to restriction~\eqref{general_dual_1} and the fact that $\mathcal{E}'(\phi) \cap \mathcal{E} (L, \beta) = \mathcal{E}'_{k}(L,\beta)$, and the second equality follows from taking the inner integral outside (as it is not parameterized by $\phi$) and from $|\Phi_k(\beta)| = \beta$. The resulting objective function coincides with that of problem~\ref{general_lb-L-beta}. For any fixed $k \in [K]$, $\phi \in \Phi_k(\beta)$ and $x \in \mathbb{R}$, the first integral in the constraints simplifies to
\begin{equation*}
    \int_{(\varphi, A)\in \mathcal{E}'(\phi)} \indicate{x \in A} \diff \psi_k( \varphi, A) = \int_{(\varphi, A)\in \mathcal{E}'_k (L, \beta)} \indicate{x \in A} \diff \psi_k( \varphi, A).
\end{equation*}
Likewise, the second integral simplifies to
\begin{align*}
    & \sum_{m \in [K]} \int_{(-\varphi, A)\in \mathcal{E}'(\phi)} \indicate{x+\varphi \in A} \cdot \indicate{\phi - \varphi \in \Phi_m(\beta)} \diff \psi_m (\varphi, A) \\
    = & \sum_{m \in [K]} \int_{(-\varphi, A)\in \mathcal{E}'_{k}(L, \beta)} \indicate{x+\varphi \in A} \cdot \indicate{\phi - \varphi \in \Phi_m(\beta)} \diff \psi_m (\varphi, A) \\
    = & \int_{(-\varphi, A)\in \mathcal{E}'_k(L, \beta)} \indicate{x+\varphi \in A} \diff \psi_{k - \varphi / \beta} (\varphi, A),
\end{align*}
where the first equality is due to restriction~\eqref{general_dual_1} and the second equality exploits the fact that for $\phi \in \Phi_k (\beta)$ and $-\varphi \in \setvarphi (\beta)$, we have $\phi - \varphi \in \Phi_m(\beta)$ if and only if $m = k - \varphi / \beta$. In summary, the constraints simplify to
\begin{align*}
    & \displaystyle \theta_k \leq \int_{(\varphi, A) \in \mathcal{E}'_{k}(L,\beta)} \indicate{x \in A} \diff \psi_k( \varphi, A) -e^\varepsilon \cdot \int_{(-\varphi, A)\in \mathcal{E}'_{k}(L,\beta)} \indicate{x + \varphi \in A}\diff \psi_{k - \varphi/\beta}(\varphi, A) \\ 
    & \pushright{+ c(x) \cdot \objweight(\phi)\quad \forall k \in [K], \ \forall \phi \in \Phi_k(\beta), \ \forall x \in \mathbb{R}.}
\end{align*}
Note that the index $\phi \in \Phi_k(\beta)$ only affects the last term in this constraint, and the constraint is thus equivalent to
\begin{align*}
    & \displaystyle \theta_k \leq \int_{(\varphi, A) \in \mathcal{E}'_{k}(L,\beta)} \indicate{x \in A} \diff \psi_k( \varphi, A) -e^\varepsilon \cdot \int_{(-\varphi, A)\in \mathcal{E}'_{k}(L,\beta)} \indicate{x + \varphi \in A}\diff \psi_{k - \varphi/\beta}(\varphi, A) \\ 
    & \pushright{+ c(x) \cdot \underline{\objweight}_k(\beta)\quad \forall k \in [K], \ \forall x \in \mathbb{R}.}
\end{align*}
We can equivalently express the index $x \in \mathbb{R}$ in this constraint with the double index $(i, x) \in \mathbb{Z} \times I_i(\beta)$, and arguments similar to those in the proof of Lemma~\ref{prop:lb-beta} show that the constraint subsequently simplifies to
\begin{align*}
    & \displaystyle \theta_k \leq \int_{(\varphi, A) \in \mathcal{E}'_{k}(L,\beta)} \indicate{I_i(\beta) \subseteq A} \diff \psi_k( \varphi, A) - e^\varepsilon \cdot \int_{(-\varphi, A)\in \mathcal{E}'_{k}(L,\beta)} \indicate{I_i(\beta) + \varphi \subseteq A}\diff \psi_{k - \varphi/\beta}(\varphi, A)  \\ 
    & \pushright{\displaystyle + \underline{c}_i(\beta) \cdot \underline{\objweight}_k(\beta) \quad \forall k \in [K], \ \forall i \in \mathbb{Z}.}
\end{align*}
Similar arguments as in the proof of Proposition~\ref{prop:lb-L-beta} allow us to further restrict the index $i \in \mathbb{Z}$ in the above constraint to $i \in [\pm (L + \Delta f / \beta)]$. The restriction~\eqref{general_dual_1} also allows us to replace the measures $\psi_k \in \mathcal{M}_{+}(\mathcal{E}'(\phi))$ with discrete measures $\psi_k: \mathcal{E}'_k(L,\beta) \mapsto \mathbb{R}_{+}$, $k \in [K]$, and replace the integrals in the objective function and the constraints with sums. This results in the formulation \ref{general_lb-L-beta} and thus concludes the proof.
\qed

\subsection{Proof of Theorem~\ref{thm:generalized_strong_dual}}\label{prop:thm:generalized_strong_dual}
We employ the same strategy as in the proof of Theorem~\ref{thm:strong_duality}. We define the auxiliary problem
\begin{align}\label{mid_L-UB}\tag{$\mathrm{M'}(\Lambda \cdot \ell,\Delta f / \ell)$}
    \begin{array}{cl}
        \underset{ p }{\mathrm{minimize}} & \displaystyle (\Delta f / \ell) \cdot  \sum\limits_{k \in [K]} \objweight_k(\Delta f / \ell) \cdot \Big[ \sum\limits_{i \in [\pm (\Lambda \cdot \ell + \ell)]} c_i(\Delta f / \ell) \cdot p_{k}(i) \Big] \\[1.5em]
         \mathrm{subject\; to} &\displaystyle  p_{k}: [\pm (\Lambda \cdot \ell + \ell)] \mapsto \mathbb{R}_{+}, \ \sum\limits_{i \in [\pm (\Lambda \cdot \ell + \ell)]} p_{k}(i) = 1, \ k \in [K] \\[1.5em]
        &\displaystyle \sum\limits_{i \in [\pm (\Lambda \cdot \ell + \ell)]} \indicate{I_i(\Delta f / \ell) \subseteq A} \cdot p_{k}(i) \leq e^\varepsilon \cdot \sum\limits_{i \in [\pm (\Lambda \cdot \ell + \ell)]} \indicate{I_i(\Delta f / \ell) + \varphi \subseteq A} \cdot p_{m}(i) + \delta \\
        & \pushright{\forall k,m \in [K], \ \forall (\varphi, A) \in \mathcal{E}'_{km}(\Lambda \cdot \ell,\Delta f / \ell)},
    \end{array}
\end{align}
and we show that the optimal values of \hyperref[{general_L-UB}]{$\mathrm{P'}(\Lambda \cdot \ell, \Delta f / \ell)$} and \hyperref[{general_lb-L-beta}]{$\mathrm{D'}(\Lambda \cdot \ell, \Delta f / \ell)$} converge to that of \ref{mid_L-UB} when $\ell$ increases (which, in return, refines the granularity $\Delta f / \ell$) and $\Lambda$ increases (which, in return, increases the support $[- \Lambda \cdot \Delta f,\; (\Lambda + 1/\ell)\cdot \Delta f)$). Note that the number $K$ of intervals in $\Phi$ depends on $\ell$ due to the partitioning $\Phi = \bigcup_{k \in [K]} \Phi_k(\Delta f / \ell)$.

To see that the optimal value of \hyperref[{general_L-UB}]{$\mathrm{P'}(\Lambda \cdot \ell, \Delta f / \ell)$} converges to that of \ref{mid_L-UB}, we first note that \ref{mid_L-UB} differs from \hyperref[{general_L-UB}]{$\mathrm{P'}(\Lambda \cdot \ell, \Delta f / \ell)$} only in the existence of the additional decision variables $p_k (i)$, $i \in [\pm (\Lambda \cdot \ell + \ell)] \setminus [\pm \Lambda \cdot \ell]$, which also implies that  $\text{\hyperref[{general_L-UB}]{$\mathrm{P'}(\Lambda \cdot \ell, \Delta f / \ell)$}} \geq \text{\ref{mid_L-UB}}$. Using similar arguments as in the proof of Lemma~\ref{lemma:bound_tail}, we can show that for any $\varepsilon > 0$, $\delta > 0$ and $\tau > 0$, there is $\Lambda' \in \mathbb{N}$ such that any optimal solution $p^\star$ to~\ref{mid_L-UB} satisfies $\sum_{i \in [\pm (\Lambda \cdot \ell + \ell)] \setminus [\pm \Lambda \cdot \ell]} p_k^\star(i) < \tau$ for all $k\in[K]$, $\ell \in \mathbb{N}$ and $\Lambda \geq \Lambda'$. Similar arguments as in the proof of Lemma~\ref{lemma:convergence1} then allow us to show that there is $\Lambda' \in \mathbb{N}$ such that $\text{\hyperref[{general_L-UB}]{$\mathrm{P'}(\Lambda' \cdot \ell, \Delta f / \ell)$}}- \text{\hyperref[{mid_L-UB}]{$\mathrm{M'}(\Lambda' \cdot \ell, \Delta f / \ell)$}} \leq \xi$ for all $\ell \in \mathbb{N}$. For the remainder of the proof, we fix such a value of $\Lambda'$.

To see that the optimal value of \hyperref[{general_lb-L-beta}]{$\mathrm{D'}(\Lambda' \cdot \ell, \Delta f / \ell)$} converges to that of \text{\hyperref[{mid_L-UB}]{$\mathrm{M'}(\Lambda' \cdot \ell, \Delta f / \ell)$}} , on the other hand, we note that~\text{\hyperref[{mid_L-UB}]{$\mathrm{M'}(\Lambda' \cdot \ell, \Delta f / \ell)$}} differs from the strong dual of \hyperref[{general_lb-L-beta}]{$\mathrm{D'}(\Lambda' \cdot \ell, \Delta f / \ell)$} essentially only in the objective coefficients, which change from $\underline{\objweight}_k(\Delta f / \ell)$ and $\underline{c}_i(\Delta f / \ell)$ in the strong dual of \hyperref[{general_lb-L-beta}]{$\mathrm{D'}(\Lambda' \cdot \ell, \Delta f / \ell)$} to $\objweight_k(\Delta f / \ell)$ and $c_i(\Delta f / \ell)$ in \text{\hyperref[{mid_L-UB}]{$\mathrm{M'}(\Lambda' \cdot \ell, \Delta f / \ell)$}}, respectively. Similar arguments as in the proof of Lemma~\ref{lemma:convergence2} show that for any $\varepsilon > 0$, $\delta > 0$ and $\xi > 0$, there exists $\ell' \in \mathbb{N}$ such that $\hyperref[{mid_L-UB}]{\mathrm{M'}(\Lambda' \cdot \ell,\Delta f / \ell)} - \hyperref[{general_lb-L-beta}]{\mathrm{D'}(\Lambda' \cdot \ell,\Delta f / \ell)} \leq \xi$ for all $\ell \geq \ell'$. Here, the uniform continuity of $c$ ensures the convergence of the terms $c_i(\Delta f / \ell)$ and $\underline{c}_{i}(\Delta f / \ell)$, while our earlier assumption that $\objweight$ is a continuous probability density function ensures the convergence of the terms $\objweight_k(\Delta f / \ell)$ and $\underline{\objweight}_k(\Delta f / \ell)$. Moreover, since $\{ \objweight_k(\Delta f / \ell) \}_\ell$ and $\{ \underline{\objweight}_k(\Delta f / \ell) \}_\ell$ are non-negative and sum up to at most $1$, the overall objective functions of both problem converge despite the growing number $K$ of subsets of $\Phi$.

So far, we have shown that there exists $\Lambda' \in \mathbb{N}$ and $\ell' \in \mathbb{N}$ such that $\text{\hyperref[{general_L-UB}]{$\mathrm{P'}(\Lambda' \cdot \ell, \Delta f / \ell)$}} - \text{\hyperref[{general_lb-L-beta}]{$\mathrm{D'}(\Lambda' \cdot \ell, \Delta f / \ell)$}} \leq \xi$ holds for all $\ell \geq \ell'$. Since we have $\hyperref[{general_L-UB}]{\mathrm{P'}(\Lambda' \cdot \ell, \Delta f / \ell)} \geq \hyperref[{general_L-UB}]{\mathrm{P'}(\Lambda \cdot \ell, \Delta f / \ell)}$ and $\hyperref[{general_lb-L-beta}]{\mathrm{D'}(\Lambda' \cdot \ell, \Delta f / \ell)} \leq \hyperref[{general_lb-L-beta}]{\mathrm{D'}(\Lambda \cdot \ell, \Delta f / \ell)}$ for all $\Lambda \geq \Lambda'$, we can conclude the proof. Further details of this proof are relegated to the GitHub supplement.

\section{Proofs of Section~\ref{sec:algo}}\label{app:sec:algo}

\subsection{Proof of Corollary~\ref{corr:nonuniform}}

\ref{new_L-UB} and \ref{new_LB} sandwich \ref{problem:integral_main} and \ref{problem:integral_dual} from above and below since \ref{new_L-UB} and \ref{new_LB} constitute restrictions of the earlier problems \ref{L-UB} and \ref{lb-L-beta} that satisfy the same inequality (\emph{cf.}~Lemmas~\ref{prop:finite} and~\ref{prop:lb-beta} as well as Propositions~\ref{prop:bounded} and~\ref{prop:lb-L-beta}).

In view of the second part of the statement, recall from Theorem~\ref{thm:strong_duality} that there is $\Lambda' \in \mathbb{N}$ and $k' \in \mathbb{N}$ such that $\text{\hyperref[{L-UB}]{$\mathrm{P}(\Lambda \cdot k, \Delta f / k)$}} - \text{\hyperref[{lb-L-beta}]{$\mathrm{D}(\Lambda \cdot k, \Delta f / k)$}} \leq \xi$ holds for all $\Lambda \geq \Lambda'$ and $k \geq k'$. Moreover, we have $\text{\ref{new_L-UB}} \leq \text{\hyperref[{L-UB}]{$\mathrm{P}(\Lambda \cdot k, \Delta f / k)$}}$ if $\{\Pi_j(\beta)\}_{j \in [N]}$ is a refinement of $\{I_i(\Delta f / k)\}_{i \in [\pm \Lambda \cdot k]}$. Indeed, $\text{\hyperref[{L-UB}]{$\mathrm{P}(\Lambda \cdot k, \Delta f / k)$}}$ is equivalent to a variant of \ref{new_L-UB} that includes the additional constraints $p(j) = p(j')$ for all $j, j' \in [N]$ satisfying $\Pi_j(\beta), \Pi_{j'}(\beta) \subseteq I_i(\Delta f / k)$ for some $i \in [\pm \Lambda \cdot k]$. A similar argument shows that $\text{\ref{new_LB}} \geq \text{\hyperref[{lb-L-beta}]{$\mathrm{D}(\Lambda \cdot k, \Delta f / k)$}}$, which concludes the proof.

\subsection{Proof of Proposition~\ref{alg:correct}}

We prove Proposition~\ref{alg:correct} via three auxiliary results. Lemma~\ref{lemma:algo_p1} proves that each inner loop over $j$ in Algorithm~\ref{alg:worst-event} determines a worst-case event $A \in \arg \max \{ V (\varphi, A) \, : \, A \in \mathcal{F} (L, \beta) \}$ for the query output difference $\varphi \in \setvarphi(\beta)$ fixed by the outer loop. Subsequently, Lemma~\ref{lemma:algo_p2} proves that each outer loop over $\varphi$ determines a maximally violated constraint $(\varphi, A)$, which concludes the correctness of Algorithm~\ref{alg:worst-event}. Finally, Lemma~\ref{lemma:algo_p3} shows that Algorithm~\ref{alg:worst-event} can be implemented such that it determines a maximally violated constraint $(\varphi, A)$ in time $\mathcal{O}(N^3)$.

\begin{lemmaA}\label{lemma:algo_p1}
    For any $\varphi \in \setvarphi(\beta)$ fixed by the outer loop of Algorithm~\ref{alg:worst-event}, the event $A$ constructed in the inner loop satisfies $A \in \arg \max \{ V (\varphi, A) \, : \, A \in \mathcal{F} (L, \beta) \}$.
\end{lemmaA}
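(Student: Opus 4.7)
The plan is to recognise the privacy shortfall $V(\varphi,A)$ as an additive, absolutely continuous set function in $A$ and to exploit this additivity to reduce the maximisation over $A \in \mathcal{F}(L,\beta)$ to an independent inclusion/exclusion decision on each atom of a natural partition induced by $\varphi$.

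First I would rewrite
\begin{equation*}
V(\varphi,A) \;=\; \int_{A} \rho_\varphi(x)\,\mathrm{d}x \;-\; \delta,
\qquad
\rho_\varphi(x) \;:=\; \sum_{j \in [N]} \frac{p(j)}{|\Pi_j(\beta)|}\indicate{x \in \Pi_j(\beta)} \;-\; e^\varepsilon \sum_{j' \in [N]} \frac{p(j')}{|\Pi_{j'}(\beta)|}\indicate{x \in \Pi_{j'}(\beta)+\varphi}.
\end{equation*}
Since $\{\Pi_j(\beta)\}_{j\in [N]}$ partitions $[-L\beta,(L+1)\beta)$, at most one term of each sum contributes to $\rho_\varphi(x)$ at any $x$. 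Hence $\rho_\varphi$ is piecewise constant, and its level sets are precisely the atoms obtained by intersecting the partitions $\{\Pi_j(\beta)\}_j$ and $\{\Pi_{j'}(\beta)+\varphi\}_{j'}$.

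Next I would identify these atoms with the sets constructed by Algorithm~\ref{alg:worst-event}. For each $j \in [N]$, the set $A_j = \Pi_j(\beta)\setminus[-L\beta+\varphi,(L+1)\beta+\varphi)$ collects the points of $\Pi_j(\beta)$ outside every shifted interval $\Pi_{j'}(\beta)+\varphi$, on which $\rho_\varphi \equiv p(j)/|\Pi_j(\beta)| \geq 0$. For each pair $(j,j') \in [N]^2$, the set $A_{jj'} = \Pi_j(\beta)\cap(\Pi_{j'}(\beta)+\varphi)$ is the overlap of a primary with a shifted interval, on which $\rho_\varphi \equiv p(j)/|\Pi_j(\beta)| - e^\varepsilon p(j')/|\Pi_{j'}(\beta)|$. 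These atoms are pairwise disjoint and cover $[-L\beta,(L+1)\beta)$, which contains every $A \in \mathcal{F}(L,\beta)$. By the additivity of $V(\varphi,\cdot)+\delta$, the maximiser over $\mathcal{F}(L,\beta)$ is obtained by including each atom iff the corresponding constant value of $\rho_\varphi$ is positive (atoms with value zero may be included or excluded without affecting the objective). Algorithm~\ref{alg:worst-event} does exactly this: it unconditionally adds each $A_j$ and adds $A_{jj'}$ only when $p(j)/|\Pi_j(\beta)| > e^\varepsilon\cdot p(j')/|\Pi_{j'}(\beta)|$, while the running sum $V$ accumulates the correct marginal contributions $|A_j|\cdot p(j)/|\Pi_j(\beta)|$ and $|A_{jj'}|\cdot[\,p(j)/|\Pi_j(\beta)| - e^\varepsilon p(j')/|\Pi_{j'}(\beta)|\,]$.

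The main obstacle will be verifying that the greedy pointwise construction actually produces an element of $\mathcal{F}(L,\beta)$, which requires that every atom $A_j$ and $A_{jj'}$ is itself a union of full intervals $I_i(\beta)$ with $i \in [\pm L]$. This is where the restriction $\varphi \in \setvarphi(\beta)$ (imposed by the outer loop) becomes essential: because $\varphi$ and every endpoint $\pi_j\cdot\beta$ are integer multiples of $\beta$, the shifted intervals $\Pi_{j'}(\beta)+\varphi$ align with the uniform grid $\{I_i(\beta)\}_i$, and the atoms $A_j$ and $A_{jj'}$ are therefore unions of $I_i(\beta)$'s contained in $\bigcup_{i\in[\pm L]}I_i(\beta) = [-L\beta,(L+1)\beta)$. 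Combining this with the pointwise optimality argument yields $A \in \arg\max\{V(\varphi,A) : A \in \mathcal{F}(L,\beta)\}$, as claimed.
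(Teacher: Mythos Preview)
Your proof is correct and follows essentially the same approach as the paper's: both decompose the privacy shortfall additively over pieces on which the integrand is constant, then argue that the greedy inclusion rule (include a piece iff its contribution is positive) is optimal and that the resulting set lies in $\mathcal{F}(L,\beta)$ because $\varphi$ is a multiple of $\beta$. The only presentational difference is that the paper carries out the decomposition at the level of the fine intervals $I_i(\beta)$, $i\in[\pm L]$, and then groups them into the cases handled by $A_j$ and $A_{jj'}$, whereas you work directly with the density $\rho_\varphi$ and the coarser atoms $A_j$, $A_{jj'}$ of the common refinement; your framing maps a bit more transparently onto the loop structure of the algorithm.
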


\begin{proof}
%
Fix an arbitrary $\varphi \in \setvarphi(\beta)$ and recall that for any $A \in \mathcal{F}(L,\beta)$, the privacy shortfall $V (\varphi, A)$ can be expressed as
\begin{align*}
    & \displaystyle \sum\limits_{j \in [N]} p(j) \cdot \dfrac{|A \cap \Pi_j(\beta)|}{|\Pi_j(\beta)|} - e^\varepsilon \cdot \sum\limits_{j \in [N]} p(j)\cdot \dfrac{|A \cap (\Pi_j(\beta) + \varphi)|}{|\Pi_j(\beta)|} \\
    \mspace{-3mu} = & \displaystyle \sum_{i \in [\pm L]} \indicate{I_{i}(\beta) \subseteq A} \cdot \left[ 
    \sum\limits_{j \in [N]} p(j) \cdot \dfrac{|I_{i}(\beta) \cap \Pi_{j}(\beta)|}{|\Pi_{j}(\beta)|} - e^\varepsilon \cdot \sum\limits_{j \in [N]} p(j)\cdot \dfrac{|(I_{i}(\beta) - \varphi ) \cap \Pi_{j}(\beta)|}{|\Pi_{j}(\beta)|} 
    \right] \\
    \mspace{-3mu}= & \displaystyle \sum_{i \in [\pm L]} \indicate{I_{i}(\beta) \subseteq A} \cdot \beta \cdot \left[ 
    \sum\limits_{j \in [N]} p(j) \cdot \dfrac{\indicate{I_{i}(\beta) \subseteq \Pi_{j}(\beta)}}{|\Pi_{j}(\beta)|} - e^\varepsilon \cdot \sum\limits_{j \in [N]} p(j)\cdot \dfrac{\indicate{I_{i}(\beta) \subseteq (\Pi_{j}(\beta) + \varphi)}}{|\Pi_{j}(\beta)|}
    \right],
\end{align*}
where we disregard the constant $-\delta$ since it does not affect the relative order of privacy shortfalls across the constraints $(\varphi, A)$. Here, the first identity exploits that $A = \bigcup_{i \in \mathcal{L}} I_i(\beta)$ for some $\mathcal{L} \subseteq [\pm L]$. The second identity holds since $|I_i(\beta)| = \beta$ and $I_i(\beta)$ is either entirely contained in or intersection free with $\Pi_{j}(\beta)$ and $\Pi_{j}(\beta) + \varphi$, $i \in [\pm L]$ and $j \in [N]$. Thus, $I_i(\beta)$ must be contained in the worst-case event $A$ whenever
\begin{align}\label{eq:quant_positive}
    \sum\limits_{j \in [N]} p(j) \cdot \dfrac{\indicate{I_{i}(\beta) \subseteq \Pi_{j}(\beta)}}{|\Pi_{j}(\beta)|} - e^\varepsilon \cdot \sum\limits_{j' \in [N]} p(j')\cdot \dfrac{\indicate{I_{i}(\beta) \subseteq \Pi_{j'}(\beta) + \varphi}}{|\Pi_{j'}(\beta)|}
\end{align}
is strictly positive; $I_i(\beta)$ can be (but does not have to be) included in $A$ if the above quantity is zero; and it must not be contained in $A$ if the above quantity is negative. In the remainder, we fix $i \in [\pm L]$ and distinguish between two cases: \emph{(i)} there is no $j' \in [N]$ satisfying $I_i(\beta) \subseteq \Pi_{j'}(\beta) + \varphi$; and \emph{(ii)} there is $j' \in [N]$ satisfying $I_i(\beta) \subseteq \Pi_{j'}(\beta) + \varphi$.

In case \emph{(i)}, we can include $I_i(\beta)$ in the worst-case event $A$ since the second term in~\eqref{eq:quant_positive} vanishes, whereas the first term is always non-zero by construction. Note that the events $A_j$, $j \in [N]$, constructed in the first part of the inner loop of Algorithm~\ref{alg:worst-event} comprise precisely all intervals $I_i(\beta)$ falling under case \emph{(i)}.
 
In case \emph{(ii)}, the existence of $j' \in [N]$ satisfying $I_i(\beta) \subseteq \Pi_{j'}(\beta) + \varphi$ implies that  \eqref{eq:quant_positive} equals to $p(j) / |\Pi_{j}(\beta)| - e^\varepsilon \cdot p(j') / |\Pi_{j'}(\beta)|$ for some $j \in [N]$, and $I_i(\beta)$ should be included in the worst-case event if this quantity is positive. Note that the events $A_{jj'}$, $j, j' \in [N]$, constructed in the second part of the inner loop of Algorithm~\ref{alg:worst-event} comprise precisely all intervals $I_i(\beta)$ falling under case \emph{(ii)} that satisfy $p(j) / |\Pi_{j}(\beta)| - e^\varepsilon \cdot p(j') / |\Pi_{j'}(\beta)| > 0$.
\end{proof}

\begin{lemmaA}\label{lemma:algo_p2}
    Algorithm~\ref{alg:worst-event} returns a constraint $(\varphi, A)$ with maximum privacy shortfall.
\end{lemmaA}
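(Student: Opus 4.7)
By Lemma~\ref{lemma:algo_p1}, for each $\varphi$ enumerated by the outer loop of Algorithm~\ref{alg:worst-event} the inner loop produces an event $A(\varphi) \in \arg\max_{A' \in \mathcal{F}(L, \beta)} V(\varphi, A')$. Consequently, the algorithm returns a pair $(\varphi^\star, A(\varphi^\star))$ with $\varphi^\star \in \arg\max_{\varphi \in \Upsilon} v(\varphi)$, where $v(\varphi) := \max_{A' \in \mathcal{F}(L, \beta)} V(\varphi, A')$ and
$$\Upsilon := \{(\pi_j - \pi_{j'})\beta \ : \ (\pi_j - \pi_{j'})\beta \in [-\Delta f, \Delta f],\; j,j' \in [N]\} \cup \{-\Delta f, \Delta f\}$$
denotes the outer-loop search set. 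The lemma therefore reduces to showing that $\max_{\varphi \in \Upsilon} v(\varphi) = \max_{\varphi \in \setvarphi(\beta)} v(\varphi)$.

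The plan is to argue that $v$ is affine on each closed interval $[\varphi_1, \varphi_2]$ whose endpoints are consecutive elements of $\Upsilon$. Fix such a pair $\varphi_1 < \varphi_2$ and any $\varphi \in (\varphi_1, \varphi_2)$. The proof of Lemma~\ref{lemma:algo_p1} expresses the worst-case event $A^\star(\varphi)$ as the disjoint union of the tail pieces $A_j(\varphi) = \Pi_j(\beta) \setminus [-L\beta + \varphi, (L+1)\beta + \varphi)$ for $j \in [N]$ together with the intersection pieces $A_{jj'}(\varphi) = \Pi_j(\beta) \cap (\Pi_{j'}(\beta) + \varphi)$ for those $(j, j') \in [N]^2$ satisfying the $\varphi$-independent inequality $p(j)/|\Pi_j(\beta)| > e^\varepsilon \cdot p(j')/|\Pi_{j'}(\beta)|$. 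Hence the combinatorial pattern identifying which pieces contribute to $A^\star(\varphi)$ is fixed, and only the Lebesgue measures $|A_j(\varphi)|$ and $|A_{jj'}(\varphi)|$ vary with $\varphi$. Each such measure is the length of an intersection of intervals whose endpoints shift at unit speed in $\varphi$, and is therefore affine on any sub-interval over which the relative ordering of the boundaries $\pi_k \beta$ and $\pi_{k'}\beta + \varphi$ is preserved. These orderings can change only when $\varphi$ crosses a value of the form $(\pi_k - \pi_{k'})\beta$; those lying strictly inside $(-\Delta f, \Delta f)$ belong to $\Upsilon$, while those falling outside $[-\Delta f, \Delta f]$ are irrelevant by construction and are sealed off by the inclusion of $\pm \Delta f$ in $\Upsilon$. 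Plugging the affine expressions for $|A_j(\varphi)|$ and $|A_{jj'}(\varphi)|$ into $V(\varphi, A^\star(\varphi))$ yields an affine function of $\varphi$ on $(\varphi_1, \varphi_2)$, and by continuity of $V(\cdot, A)$ for fixed $A$ the affine extension reaches the closure $[\varphi_1, \varphi_2]$.

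Because $v$ is affine on $[\varphi_1, \varphi_2]$, its maximum over the finite subset $\setvarphi(\beta) \cap [\varphi_1, \varphi_2]$ is attained at one of the endpoints $\varphi_1, \varphi_2 \in \Upsilon$. Iterating over all consecutive pairs in $\Upsilon$ and invoking the explicit inclusion of $\pm \Delta f$ in $\Upsilon$ to bracket the extremes of $\setvarphi(\beta)$ then gives $\max_{\varphi \in \setvarphi(\beta)} v(\varphi) = \max_{\varphi \in \Upsilon} v(\varphi)$, proving the lemma.

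The main obstacle is the careful case analysis needed to verify that every combinatorial transition in the structure of $A^\star(\varphi)$ occurs at an element of $\Upsilon$. In particular, the non-emptiness conditions for the intersection pieces $A_{jj'}(\varphi)$ and the boundary behaviour of the tail pieces $A_j(\varphi)$ involve $\pi_{j+1}$ and $\pi_{j'+1}$, which may equal $\pi_{N+1} = L+1$; transitions involving these outer indices lie outside the naive range $[N]^2$ and require an explicit argument—leveraging the $\varphi$-independence of the inclusion rule and the presence of $\pm\Delta f$ in $\Upsilon$—to confirm that they do not disrupt the piecewise-affine structure on which the proof hinges.
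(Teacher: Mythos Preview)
Your proposal is correct and follows essentially the same approach as the paper: both reduce the claim to showing that $v(\varphi)=\max_{A\in\mathcal{F}(L,\beta)}V(\varphi,A)$ is piecewise affine with breakpoints contained in the outer-loop set $\Upsilon$, by expressing $v(\varphi)$ through the tail pieces $A_j(\varphi)$ and intersection pieces $A_{jj'}(\varphi)$ and noting that their Lebesgue measures are affine between consecutive elements of $\Upsilon$. The paper dispatches the boundary subtleties you flag in your final paragraph with a terse ``one readily verifies'', so your treatment is, if anything, slightly more explicit about where the residual case analysis lies.
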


\begin{proof}
    Recall that the DP constraints of~\ref{new_L-UB} are indexed by $(\varphi, A) \in \mathcal{E}(L, \beta) = \setvarphi(\beta) \times \mathcal{F}(L, \beta)$ and that Lemma~\ref{lemma:algo_p1} proved that for any fixed $\varphi \in \setvarphi(\beta)$, the inner loop of Algorithm~\ref{alg:worst-event} constructs a worst-case event $A$ associated with $\varphi$. We show in this proof that it is sufficient to consider the values $\varphi \in \setvarphi(\beta, \bm{\pi}) \subseteq \setvarphi(\beta)$, where $$\setvarphi(\beta, \bm{\pi}) := \{ (\pi_{j} - \pi_{j'})\cdot \beta \ : \ (\pi_{j} - \pi_{j'})\cdot \beta \in [- \Delta f, \Delta f] \text{ and } j,j' \in [N] \} \cup \{-\Delta f, \Delta f \},$$ which is what the outer loop of Algorithm~\ref{alg:worst-event} does.

    Our earlier arguments of this section have shown that for any $\varphi \in \setvarphi(\beta)$, the maximum privacy shortfall $\max \{ V (\varphi, A) \, : \, A \in \mathcal{F}(L, \beta) \}$ satisfies
    \begin{align*}
        \sum_{j \in [N]} |A_j(\varphi)| \cdot \dfrac{p(j)}{|\Pi_{j}(\beta)|} + \sum_{j, j' \in [N]} |A_{jj'}(\varphi)| \cdot \left[ \dfrac{p(j)}{|\Pi_{j}(\beta)|} - e^\varepsilon \cdot \dfrac{p(j')}{|\Pi_{j'}(\beta)|} \right]^{+} - \delta,
    \end{align*}
    where $[x]^{+} = \max\{0, \, x\}$ and the only quantities varying with $\varphi$ are
    \begin{align*}
        \mspace{-5mu}
        A_j(\varphi) = \Pi_j(\beta) \setminus [-L\cdot \beta + \varphi, (L+1)\cdot\beta + \varphi)
        \quad \text{and} \quad
        A_{jj'}(\varphi) = \Pi_{j}(\beta) \cap (\Pi_{j'}(\beta) + \varphi),
        \;\; j, j' \in [N].
    \end{align*}
    One readily verifies that both $| A_j(\varphi) |$ and $| A_{jj'}(\varphi) |$, $j, j' \in [N]$, are affine between any two consecutive points in $\setvarphi(\beta, \bm{\pi})$. In other words, the maximum privacy shortfall is piecewise affine with breakpoints $\setvarphi(\beta, \bm{\pi})$ or a subset thereof, which implies that its maximum must be attained at one of the points $\varphi \in \setvarphi(\beta, \bm{\pi})$. This concludes the proof.
\end{proof}

\begin{lemmaA}\label{lemma:algo_p3}
Algorithm~\ref{alg:worst-event} can be implemented such that it terminates in time $\mathcal{O}(N^3)$.
\end{lemmaA}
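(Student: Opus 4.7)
The plan is to bound separately the number of outer-loop iterations and the cost incurred in each of them. For the outer loop, we note that by construction
$$\lvert \setvarphi(\beta, \bm{\pi}) \rvert \; \leq \; N^{2} + 2 \; = \; \mathcal{O}(N^{2}),$$
so the outer \textbf{for}-loop over $\varphi$ executes at most $\mathcal{O}(N^{2})$ times. The crucial step will then be to implement the body of the outer loop (\emph{i.e.}, the two nested \textbf{for}-loops over $j$ and $j'$) in time $\mathcal{O}(N)$, which yields the overall bound $\mathcal{O}(N^{2}) \cdot \mathcal{O}(N) = \mathcal{O}(N^{3})$.

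To achieve an $\mathcal{O}(N)$ runtime per $\varphi$, the first observation is that the algorithm only needs the Lebesgue measures $\lvert A_{j} \rvert$ and $\lvert A_{jj'} \rvert$ of the constructed intervals, not the intervals themselves. Since $A_{j}$ and $A_{jj'}$ are expressed as set operations on the elementary intervals $\Pi_{j}(\beta) = [\pi_{j} \cdot \beta, \pi_{j+1} \cdot \beta)$ and $\Pi_{j'}(\beta) + \varphi$, each of these lengths can be computed in time $\mathcal{O}(1)$ once the endpoints $\pi_{j} \cdot \beta$ are stored (which happens once, at initialization). In particular, the outer \textbf{for}-loop over $j$ contributes only $\mathcal{O}(N)$ work for the updates associated with the sets $A_{j}$.

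The main obstacle, and the key ingredient of the proof, is the inner \textbf{for}-loop over $j'$, whose na\"ive execution costs $\mathcal{O}(N)$ per value of $j$ and thus $\mathcal{O}(N^{2})$ per $\varphi$. Here we exploit the geometric fact that $\{\Pi_{j}(\beta)\}_{j \in [N]}$ and $\{\Pi_{j'}(\beta) + \varphi\}_{j' \in [N]}$ each partition an interval of length $(2L+1) \cdot \beta$, with the second partition being a translation of the first. Hence the pairs $(j, j') \in [N]^{2}$ for which $\Pi_{j}(\beta) \cap (\Pi_{j'}(\beta) + \varphi) \neq \emptyset$ form a monotone "staircase": as $j$ increases by one, the set of indices $j'$ contributing a non-empty intersection advances monotonically and the total number of such pairs across all $j \in [N]$ is at most $2N - 1 = \mathcal{O}(N)$. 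The plan is therefore to replace the na\"ive double loop by a standard two-pointer sweep that, while iterating $j$ from $1$ to $N$, maintains a pointer to the smallest $j'$ with $\Pi_{j'}(\beta) + \varphi$ still intersecting $\Pi_{j}(\beta)$ and enumerates all relevant $j'$ in amortized $\mathcal{O}(1)$ per pair. For each such pair we perform only $\mathcal{O}(1)$ work (one comparison, one length computation, one update of $A$ and $V$), so the total cost of the two nested loops is $\mathcal{O}(N)$ per $\varphi$.

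Combining these three observations yields the claimed bound: $\mathcal{O}(N^{2})$ values of $\varphi$, each handled in $\mathcal{O}(N)$, for a total runtime of $\mathcal{O}(N^{3})$.
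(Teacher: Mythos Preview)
Your proof is correct and follows essentially the same strategy as the paper: bound the outer loop by $\mathcal{O}(N^{2})$ and show that the inner double loop over $(j,j')$ can be executed in $\mathcal{O}(N)$ per $\varphi$ by exploiting the monotone structure of the two interval partitions. The paper phrases the $\mathcal{O}(N)$ sweep as a Bentley--Ottmann-style merge of the two sorted breakpoint lists, whereas you describe it as a two-pointer staircase traversal; these are equivalent implementations of the same idea.
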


\begin{proof}
Since all individual steps in Algorithm~\ref{alg:worst-event} take constant time, the runtime of the algorithm is determined by the numbers of iterations in the outer and inner loops. In the na\"ive implementation of the main text, both loops comprise $\mathcal{O}(N^2)$ iterations, and thus the overall complexity of that implementation is $\mathcal{O}(N^4)$. We show in this proof that the inner loop can be implemented such that it comprises $\mathcal{O}(N)$ iterations only, which implies the statement of the lemma.

Note that the inner loop in Algorithm~\ref{alg:worst-event} constructs all events $A_j$, $j \in [N]$, in time $\mathcal{O}(N)$, and thus we only need to consider the construction of the events $A_{jj'}$, $j, j' \in [N]$. Instead of the na\"ive implementation from the main text, which probes all pairs of subsets $(j, j') \in [N]^2$, we consider the following variant of the Bentley-Ottmann algorithm used to identify crossings in a set of line segments \citep[\S 2]{BKOS00:comp_geometry}: We merge the two lists of tuples $\{ (\pi_j \cdot \beta, 1) : j \in [N + 1] \}$ and $\{ (\pi_{j'} \cdot \beta + \varphi, 2) : j' \in [N + 1] \}$ in order of non-decreasing first component; since each list is already sorted, this can be achieved in time $\mathcal{O} (N)$. We initialize the two index counters $j_1 = j_2 = 0$ and loop through the entire merged list of tuples once in sorted order. Whenever we encounter an element $(\pi_j, 1)$, we update $j_1 \leftarrow \pi_j$; whenever we encounter an element $(\pi_{j'}, 2)$, we update $j_2 \leftarrow \pi_{j'}$. After each update, we consider the intersection $A_{jj'} = \Pi_{j_1} (\beta) \cap (\Pi_{j_2} (\beta) + \varphi)$ for possible inclusion in the worst-case event A whenever $(j_1, j_2) \neq (0, 0)$. Since the merged list of tuples has length $2N + 2$, the overall algorithm evidently runs in time $\mathcal{O} (N)$.
\end{proof}

\noindent \textbf{Proof of Proposition~\ref{alg:correct}} $\;$
The proof immediately follows from Lemmas~\ref{lemma:algo_p1}, \ref{lemma:algo_p2} and \ref{lemma:algo_p3}.
\qed

\subsection{Bounding~\ref{problem:integral_nonlinear} and~\ref{problem:dual_integral_nonlinear} with Non-Uniform Partitions}\label{app:dependent_X}

We next extend the non-uniform upper and lower bounding problems \ref{new_L-UB} and \ref{new_LB} of Section~\ref{sec:non-identical} to the data dependent case. We obtain the following upper bound on problem~\ref{problem:integral_nonlinear},
\begin{align}\label{nonlinear_new_L-UB}\tag{$\mathrm{P'}(\bm{\pi}, \beta)$}
\mspace{-30mu}
    \begin{array}{cll}
        \underset{p}{\mathrm{minimize}} & \displaystyle \beta\cdot \sum_{k \in [K]} \omega_k(\beta)
        \cdot \Big[\sum\limits_{j \in [N]} c_j(\bm{\pi}, \beta) \cdot p_k(j) \Big]  \\
        \mathrm{subject\; to} &\displaystyle p_k : [N] \mapsto \mathbb{R}_+, \ \sum_{j \in [N]} p_k(j) = 1, \ k \in [K] \\[1.5em]
        &\displaystyle \sum\limits_{j \in [N]} p_k(j) \cdot \dfrac{|A \cap \Pi_{j}(\beta)|}{|\Pi_{j}(\beta)|} \leq e^\varepsilon \cdot \sum\limits_{j \in [N]} p_m(j)\cdot \dfrac{|(A - \varphi ) \cap \Pi_{j}(\beta)|}{|\Pi_{j}(\beta)|} + \delta  \\
        &  \pushright{\forall k,m \in [K], \ \forall (\varphi, A) \in \mathcal{E}'_{km}(L, \beta)},
    \end{array}
\end{align}
as well as the following lower bound on problem~\ref{problem:dual_integral_nonlinear},
\begin{align}\label{nonlinear_new_LB}\tag{$\mathrm{D'}(\bm{\pi}, \beta)$}
\mspace{-30mu}
    \begin{array}{cll}
        \underset{p}{\mathrm{minimize}} & \displaystyle \beta\cdot \sum_{k \in [K]} \omega_k(\beta)
        \cdot \Big[\sum\limits_{j \in \mathfrak{N}} \underline{c}_j(\bm{\pi}, \beta) \cdot p_k(j) \Big]  \\
        \mathrm{subject\; to} &\displaystyle p_k : \mathfrak{N} \mapsto \mathbb{R}_+, \ \sum_{j \in \mathfrak{N}} p_k(j) = 1, \ k \in [K] \\[1.5em]
        &\displaystyle \sum\limits_{j \in \mathfrak{N}} p_k(j) \cdot \dfrac{|A \cap \Pi_{j}(\beta)|}{|\Pi_{j}(\beta)|} \leq e^\varepsilon \cdot \sum\limits_{j \in \mathfrak{N}} p_m(j)\cdot \dfrac{|(A - \varphi ) \cap \Pi_{j}(\beta)|}{|\Pi_{j}(\beta)|} + \delta  \\
        &  \pushright{\forall k,m \in [K], \ \forall (\varphi, A) \in \mathcal{E}'_{km}(L, \beta)}.
    \end{array}
\end{align}

Algorithm~\ref{alg:worst-event_dep} extends the ideas of Algorithm~\ref{alg:worst-event} to the data dependent setting; as before, extending the domain of the decisions $p$ from $[N]$ to $\mathfrak{N}$ allows us to employ the same algorithm to solve the lower bounding problem \ref{nonlinear_new_LB} as well. Similar arguments as in the previous section show that Algorithm~\ref{alg:worst-event_dep} terminates in time $\mathcal{O}(K^2 \cdot N)$. We explain in the GitHub supplement of this paper how the bounding problems \ref{nonlinear_new_L-UB} and \ref{nonlinear_new_LB}, as well as Algorithm~\ref{alg:worst-event_dep}, can be generalized to account for non-uniform partitions of the set of possible query outputs $\Phi$ as well; this reduces computation times when the granularity parameter $\beta$ is small.

\begin{algorithm}[tb]
\SetAlgoLined
\SetKwInOut{Input}{input}
\SetKwInOut{Output}{output}
\Input{$\bm{\pi}, \ \beta, \ p, \ \Delta f$}
\Output{constraint $(\varphi^\star, A^\star)$ with maximum privacy shortfall $V(\varphi^\star, A^\star)$}
Initialize $V^\star = 0$\;
\For {$k, m \in [K]$}
{
\For {$\varphi \in \{ (m - k - 1)\cdot \beta, \ (m-k)\cdot \beta,\  (m-k+1)\cdot \beta \} \cap [- \Delta f, \Delta f]$}
{
Initialize $A = \emptyset$ and $V = 0$\;
\For{$j = 1,\ldots, N$}
{
Let $A_j = \Pi_j(\beta) \setminus [-L\cdot \beta + \varphi, (L+1)\cdot\beta + \varphi)$ and update
$$
A = A \cup A_j, \quad
    V = V + |A_j|\cdot  \dfrac{p_k(j)}{|\Pi_{j}(\beta)|}.
$$
    \For{$j' = 1,\ldots, N$}
    {
     \If{$p_k(j)/|\Pi_{j}(\beta)| > e^\varepsilon \cdot p_m(j')/|\Pi_{j'}(\beta)|$}{
     \vspace{1mm}
        Let $A_{jj'} = \Pi_j(\beta) \cap (\Pi_{j'}(\beta) + \varphi)$ and update
    $$ 
    A = A \cup A_{jj'}, \quad V = V + |A_{jj'}| \cdot \left[\dfrac{p_k(j)}{|\Pi_{j}(\beta)|} - e^\varepsilon \cdot \dfrac{p_m(j')}{|\Pi_{j'}(\beta)|}\right].
    $$
    }
    }
}
\If{$V > V^\star$}{
Update $\varphi^\star = \varphi$,  $A^\star = A$ and $V^\star = V$.
}
}
}

\Return $(\varphi^\star, A^\star)$ and $V^\star(\varphi, A) = V^\star - \delta$.
\caption{\textit{Identification of a constraint in~\ref{nonlinear_new_L-UB} with maximum privacy shortfall}}
\label{alg:worst-event_dep}
\end{algorithm} 

\clearpage

\section{Additional Numerical Experiments}\label{app_numerical}

Table~\ref{tab:tradefoff_main} summarized the suboptimality of the best performing benchmark mechanisms by summing the upper and lower bound gaps. Tables~\ref{tab:tradefoff_ub} and~\ref{tab:tradefoff_lb} report these gaps separately. 
\begin{table}[!h]
        \small
        \setlength{\arrayrulewidth}{0.2mm}
        \begin{center}
        \resizebox{1.0\columnwidth}{!}{%
    \hskip-0.8cm
\renewcommand{\arraystretch}{2}
\begin{tabular}{c |  r | c  c c c c c c c c c|}
\multicolumn{2}{c}{} & \multicolumn{10}{c}{\Large $\delta$} 
 \\
\cline{3-12}
\multicolumn{2}{c|}{} & \textbf{0.005} & \textbf{0.010} &\textbf{0.020} & \textbf{0.050} & \textbf{0.100} & \textbf{0.200} & \textbf{0.250} &\textbf{0.300} &\textbf{0.500} &\textbf{0.750}\\
\cline{2-12}
\multirow{10}{*}{\Large $\varepsilon$} & {\textbf{0.005}} & {\cellcolor[gray]{0.98} 0.17\%}& {\cellcolor[gray]{0.985} 0.12\%}& {\cellcolor[gray]{0.995} 0.03\%}& {\cellcolor[gray]{0.975} 0.26\%}& {\cellcolor[gray]{0.97} 0.26\%}& {\cellcolor[gray]{0.84} 0.65\%}& {\cellcolor[gray]{0.845} 0.64\%}& {\cellcolor[gray]{0.85} 0.63\%}& {\cellcolor[gray]{0.92} 0.49\%}& {\cellcolor[gray]{0.615} 8.37\%}
\\ \hhline{~~|}
\cline{3-12}
& {\textbf{0.010}} & {\cellcolor[gray]{0.94} 0.41\%}& {\cellcolor[gray]{0.955} 0.33\%}& {\cellcolor[gray]{0.87} 0.58\%}& {\cellcolor[gray]{0.9} 0.54\%}& {\cellcolor[gray]{0.91} 0.54\%}& {\cellcolor[gray]{0.835} 0.68\%}& {\cellcolor[gray]{0.895} 0.55\%}& {\cellcolor[gray]{0.89} 0.55\%}& {\cellcolor[gray]{0.885} 0.55\%}& {\cellcolor[gray]{0.655} 5.26\%}
\\
\cline{3-12}
& {\textbf{0.020}} & {\cellcolor[gray]{0.96} 0.28\%}& {\cellcolor[gray]{0.935} 0.41\%}& {\cellcolor[gray]{0.99} 0.04\%}& {\cellcolor[gray]{0.865} 0.58\%}& {\cellcolor[gray]{0.875} 0.58\%}& {\cellcolor[gray]{0.825} 0.73\%}& {\cellcolor[gray]{0.855} 0.61\%}& {\cellcolor[gray]{0.765} 1.24\%}& {\cellcolor[gray]{0.86} 0.61\%}& {\cellcolor[gray]{0.605} 8.53\%}
\\
\cline{3-12}
& {\textbf{0.050}} & {\cellcolor[gray]{0.915} 0.50\%}& {\cellcolor[gray]{0.965} 0.27\%}& {\cellcolor[gray]{0.945} 0.38\%}& {\cellcolor[gray]{0.93} 0.45\%}& {\cellcolor[gray]{0.83} 0.73\%}& {\cellcolor[gray]{0.8} 0.89\%}& {\cellcolor[gray]{0.815} 0.80\%}& {\cellcolor[gray]{0.755} 1.48\%}& {\cellcolor[gray]{0.81} 0.87\%}& {\cellcolor[gray]{0.595} 8.73\%}
\\
\cline{3-12}
& {\textbf{0.100}} & {\cellcolor[gray]{0.95} 0.38\%}& {\cellcolor[gray]{0.925} 0.45\%}& {\cellcolor[gray]{0.88} 0.58\%}& {\cellcolor[gray]{0.905} 0.54\%}& {\cellcolor[gray]{0.795} 0.90\%}& {\cellcolor[gray]{0.775} 1.14\%}& {\cellcolor[gray]{0.77} 1.15\%}& {\cellcolor[gray]{0.735} 1.88\%}& {\cellcolor[gray]{0.76} 1.25\%}& {\cellcolor[gray]{0.59} 9.12\%}
\\
\cline{3-12}
&{\textbf{0.200}} & {\cellcolor[gray]{0.82} 0.78\%}& {\cellcolor[gray]{0.805} 0.87\%}& {\cellcolor[gray]{0.78} 1.02\%}& {\cellcolor[gray]{0.79} 0.92\%}& {\cellcolor[gray]{0.785} 0.96\%}& {\cellcolor[gray]{0.75} 1.66\%}& {\cellcolor[gray]{0.725} 1.92\%}& {\cellcolor[gray]{0.71} 2.61\%}& {\cellcolor[gray]{0.745} 1.80\%}& {\cellcolor[gray]{0.575} 10.24\%}
\\
\cline{3-12}
& {\textbf{0.500}} & {\cellcolor[gray]{0.74} 1.81\%}& {\cellcolor[gray]{0.73} 1.91\%}& {\cellcolor[gray]{0.72} 2.09\%}& {\cellcolor[gray]{0.715} 2.58\%}& {\cellcolor[gray]{0.705} 2.71\%}& {\cellcolor[gray]{0.7} 3.35\%}& {\cellcolor[gray]{0.695} 3.92\%}& {\cellcolor[gray]{0.69} 4.08\%}& {\cellcolor[gray]{0.685} 4.33\%}& {\cellcolor[gray]{0.54} 12.91\%}
\\
\cline{3-12}
& {\textbf{1.000}} & {\cellcolor[gray]{0.675} 4.70\%}& {\cellcolor[gray]{0.67} 4.72\%}& {\cellcolor[gray]{0.68} 4.66\%}& {\cellcolor[gray]{0.665} 4.84\%}& {\cellcolor[gray]{0.66} 5.10\%}& {\cellcolor[gray]{0.65} 5.82\%}& {\cellcolor[gray]{0.645} 6.03\%}& {\cellcolor[gray]{0.64} 6.35\%}& {\cellcolor[gray]{0.62} 8.09\%}& {\cellcolor[gray]{0.51} 16.38\%}
\\
\cline{3-12}
&{\textbf{2.000}} & {\cellcolor[gray]{0.635} 7.99\%}& {\cellcolor[gray]{0.63} 8.02\%}& {\cellcolor[gray]{0.625} 8.09\%}& {\cellcolor[gray]{0.61} 8.39\%}& {\cellcolor[gray]{0.6} 8.70\%}& {\cellcolor[gray]{0.585} 9.60\%}& {\cellcolor[gray]{0.58} 10.18\%}& {\cellcolor[gray]{0.57} 10.86\%}& {\cellcolor[gray]{0.535} 13.05\%}& {\cellcolor[gray]{0.5} 19.74\%}
\\
\cline{3-12}
& {\textbf{5.000}} & {\cellcolor[gray]{0.565} 12.25\%}& {\cellcolor[gray]{0.56} 12.28\%}& {\cellcolor[gray]{0.555} 12.33\%}& {\cellcolor[gray]{0.55} 12.50\%}& {\cellcolor[gray]{0.545} 12.78\%}& {\cellcolor[gray]{0.53} 13.37\%}& {\cellcolor[gray]{0.525} 13.67\%}& {\cellcolor[gray]{0.52} 13.97\%}& {\cellcolor[gray]{0.515} 14.86\%}& {\cellcolor[gray]{0.505} 16.74\%}
\\
\cline{2-12}
\end{tabular}}
\end{center}
~\\[-6mm]
\caption{\textit{Upper bound suboptimality of the best performing benchmark mechanisms on synthetic data independent instances with $\Delta f = 1$, $\ell_1$-loss and various combinations of $\varepsilon$ and $\delta$.}}
\label{tab:tradefoff_ub}
\end{table}
\noindent\vspace*{-0.2cm}
\begin{table}[!h]
        \small
        \setlength{\arrayrulewidth}{0.2mm}
        \begin{center}
        \resizebox{1.0\columnwidth}{!}{%
    \hskip-0.8cm
\renewcommand{\arraystretch}{2}
\begin{tabular}{c |  r | c  c c c c c c c c c|}
\multicolumn{2}{c}{} & \multicolumn{10}{c}{\Large $\delta$} 
 \\
\cline{3-12}
\multicolumn{2}{c|}{} & \textbf{0.005} & \textbf{0.010} &\textbf{0.020} & \textbf{0.050} & \textbf{0.100} & \textbf{0.200} & \textbf{0.250} &\textbf{0.300} &\textbf{0.500} &\textbf{0.750}\\
\cline{2-12}
\multirow{10}{*}{\Large $\varepsilon$} &  {\textbf{0.005}} &{\cellcolor[gray]{0.97} 1.71\%}& {\cellcolor[gray]{0.99} 1.09\%}& {\cellcolor[gray]{0.98} 1.18\%}& {\cellcolor[gray]{0.885} 5.84\%}& {\cellcolor[gray]{0.765} 18.27\%}& {\cellcolor[gray]{0.925} 2.90\%}& {\cellcolor[gray]{0.535} 48.94\%}& {\cellcolor[gray]{0.7} 22.55\%}& {\cellcolor[gray]{0.52} 49.47\%}& {\cellcolor[gray]{0.655} 24.97\%}
\\ \hhline{~~|}
\cline{3-12}
& {\textbf{0.010}} & {\cellcolor[gray]{0.935} 2.48\%}& {\cellcolor[gray]{0.985} 1.09\%}& {\cellcolor[gray]{0.855} 7.42\%}& {\cellcolor[gray]{0.965} 1.77\%}& {\cellcolor[gray]{0.79} 16.55\%}& {\cellcolor[gray]{0.945} 2.40\%}& {\cellcolor[gray]{0.545} 48.63\%}& {\cellcolor[gray]{0.705} 22.48\%}& {\cellcolor[gray]{0.525} 49.36\%}& {\cellcolor[gray]{0.625} 28.09\%}
\\
\cline{3-12}
& {\textbf{0.020}} & {\cellcolor[gray]{0.93} 2.56\%}& {\cellcolor[gray]{0.94} 2.42\%}& {\cellcolor[gray]{0.995} 0.52\%}& {\cellcolor[gray]{0.81} 14.50\%}& {\cellcolor[gray]{0.825} 13.61\%}& {\cellcolor[gray]{0.5} 66.71\%}& {\cellcolor[gray]{0.555} 47.78\%}& {\cellcolor[gray]{0.71} 21.50\%}& {\cellcolor[gray]{0.53} 49.23\%}& {\cellcolor[gray]{0.66} 24.84\%}
\\
\cline{3-12}
& {\textbf{0.050}} & {\cellcolor[gray]{0.975} 1.35\%}& {\cellcolor[gray]{0.96} 1.84\%}& {\cellcolor[gray]{0.955} 2.18\%}& {\cellcolor[gray]{0.77} 17.92\%}& {\cellcolor[gray]{0.895} 5.50\%}& {\cellcolor[gray]{0.505} 65.14\%}& {\cellcolor[gray]{0.565} 45.35\%}& {\cellcolor[gray]{0.735} 20.47\%}& {\cellcolor[gray]{0.54} 48.72\%}& {\cellcolor[gray]{0.665} 24.69\%}
\\
\cline{3-12}
& {\textbf{0.100}} & {\cellcolor[gray]{0.91} 4.40\%}& {\cellcolor[gray]{0.92} 3.73\%}& {\cellcolor[gray]{0.845} 8.10\%}& {\cellcolor[gray]{0.76} 18.83\%}& {\cellcolor[gray]{0.62} 32.42\%}& {\cellcolor[gray]{0.51} 62.71\%}& {\cellcolor[gray]{0.57} 41.72\%}& {\cellcolor[gray]{0.75} 18.97\%}& {\cellcolor[gray]{0.55} 47.92\%}& {\cellcolor[gray]{0.67} 24.38\%}
\\
\cline{3-12}
&{\textbf{0.200}} & {\cellcolor[gray]{0.835} 9.65\%}& {\cellcolor[gray]{0.84} 8.73\%}& {\cellcolor[gray]{0.85} 7.88\%}& {\cellcolor[gray]{0.795} 16.37\%}& {\cellcolor[gray]{0.755} 18.86\%}& {\cellcolor[gray]{0.515} 58.51\%}& {\cellcolor[gray]{0.59} 35.79\%}& {\cellcolor[gray]{0.785} 16.74\%}& {\cellcolor[gray]{0.56} 46.53\%}& {\cellcolor[gray]{0.685} 23.39\%}
\\
\cline{3-12}
& {\textbf{0.500}} & {\cellcolor[gray]{0.72} 21.29\%}& {\cellcolor[gray]{0.74} 19.50\%}& {\cellcolor[gray]{0.69} 23.36\%}& {\cellcolor[gray]{0.82} 13.78\%}& {\cellcolor[gray]{0.585} 35.99\%}& {\cellcolor[gray]{0.58} 39.20\%}& {\cellcolor[gray]{0.645} 25.60\%}& {\cellcolor[gray]{0.815} 14.41\%}& {\cellcolor[gray]{0.575} 41.52\%}& {\cellcolor[gray]{0.725} 20.88\%}
\\
\cline{3-12}
& {\textbf{1.000}} & {\cellcolor[gray]{0.605} 35.41\%}& {\cellcolor[gray]{0.61} 35.01\%}& {\cellcolor[gray]{0.595} 35.57\%}& {\cellcolor[gray]{0.6} 35.56\%}& {\cellcolor[gray]{0.68} 23.52\%}& {\cellcolor[gray]{0.63} 26.71\%}& {\cellcolor[gray]{0.73} 20.59\%}& {\cellcolor[gray]{0.8} 15.08\%}& {\cellcolor[gray]{0.615} 33.71\%}& {\cellcolor[gray]{0.775} 16.98\%}
\\
\cline{3-12}
&{\textbf{2.000}} & {\cellcolor[gray]{0.64} 25.97\%}& {\cellcolor[gray]{0.635} 26.16\%}& {\cellcolor[gray]{0.65} 25.36\%}& {\cellcolor[gray]{0.695} 23.24\%}& {\cellcolor[gray]{0.675} 23.76\%}& {\cellcolor[gray]{0.745} 19.10\%}& {\cellcolor[gray]{0.78} 16.94\%}& {\cellcolor[gray]{0.805} 14.81\%}& {\cellcolor[gray]{0.715} 21.29\%}& {\cellcolor[gray]{0.83} 10.77\%}
\\
\cline{3-12}
& {\textbf{5.000}} & {\cellcolor[gray]{0.86} 7.07\%}& {\cellcolor[gray]{0.865} 7.03\%}& {\cellcolor[gray]{0.87} 6.96\%}& {\cellcolor[gray]{0.875} 6.74\%}& {\cellcolor[gray]{0.88} 6.37\%}& {\cellcolor[gray]{0.89} 5.64\%}& {\cellcolor[gray]{0.9} 5.27\%}& {\cellcolor[gray]{0.905} 4.91\%}& {\cellcolor[gray]{0.915} 3.79\%}& {\cellcolor[gray]{0.95} 2.33\%}
\\
\cline{2-12}
\end{tabular}}
\end{center}
~\\[-6mm]
\caption{\textit{Lower bound suboptimality of the best performing benchmark mechanisms on synthetic data independent instances with $\Delta f = 1$, $\ell_1$-loss and various combinations of $\varepsilon$ and $\delta$.}}
\label{tab:tradefoff_lb}
\end{table} 
\newpage
{\color{black}
Table~\ref{tab:tradefoff_main_ell2} reports results similar to those of Table~\ref{tab:tradefoff_main} for the $\ell_2$-loss. Moreover, our GitHub repository includes results analogous to those of Tables~\ref{tab:tradefoff_ub} and~\ref{tab:tradefoff_lb}. We note that if we replace $\max \{ O, 1 \}$ with $O$ in the denominator of the optimality gap formula, then the gaps in Table~\ref{tab:tradefoff_main_ell2} increase to more than $850\%$ for $(\varepsilon, \delta) = (5, 0.75)$.}
\begin{table}[!h]
        \small
        \setlength{\arrayrulewidth}{0.2mm}
        \begin{center}
        \resizebox{1.0\columnwidth}{!}{%
    \hskip-0.8cm
\renewcommand{\arraystretch}{1.82} 
\begin{tabular}{c |  r | c  c c c c c c c c c|}
\multicolumn{2}{c}{} & \multicolumn{10}{c}{\Large $\delta$} 
 \\
\cline{3-12}
\multicolumn{2}{c|}{} & \textbf{0.005} & \textbf{0.010} &\textbf{0.020} & \textbf{0.050} & \textbf{0.100} & \textbf{0.200} & \textbf{0.250} &\textbf{0.300} &\textbf{0.500} &\textbf{0.750}\\
\cline{2-12}
\multirow{10}{*}{\Large $\varepsilon$} &  {\textbf{0.005}} & {\cellcolor[gray]{0.972} 2.76\%}& {\cellcolor[gray]{0.984} 1.83\%}& {\cellcolor[gray]{0.988} 1.71\%}& {\cellcolor[gray]{0.896} 9.12\%}& {\cellcolor[gray]{0.764} 26.58\%}& {\cellcolor[gray]{0.6} 83.29\%}& {\cellcolor[gray]{0.624} 66.77\%}& {\cellcolor[gray]{0.708} 34.72\%}& {\cellcolor[gray]{0.7} 36.59\%}& {\cellcolor[gray]{0.832} 18.06\%}
\\ \hhline{~~|}
\cline{3-12}
& {\textbf{0.010}} & {\cellcolor[gray]{0.952} 4.29\%}& {\cellcolor[gray]{0.976} 2.33\%}& {\cellcolor[gray]{0.876} 12.41\%}& {\cellcolor[gray]{0.96} 3.82\%}& {\cellcolor[gray]{0.78} 24.47\%}& {\cellcolor[gray]{0.608} 83.05\%}& {\cellcolor[gray]{0.628} 66.55\%}& {\cellcolor[gray]{0.712} 34.34\%}& {\cellcolor[gray]{0.672} 39.28\%}& {\cellcolor[gray]{0.864} 15.73\%}
\\
\cline{3-12}
& {\textbf{0.020}} & {\cellcolor[gray]{0.956} 4.04\%}& {\cellcolor[gray]{0.948} 4.39\%}& {\cellcolor[gray]{0.996} 0.32\%}& {\cellcolor[gray]{0.796} 23.37\%}& {\cellcolor[gray]{0.8} 22.76\%}& {\cellcolor[gray]{0.604} 83.06\%}& {\cellcolor[gray]{0.632} 66.06\%}& {\cellcolor[gray]{0.72} 33.60\%}& {\cellcolor[gray]{0.676} 39.25\%}& {\cellcolor[gray]{0.816} 19.53\%}
\\
\cline{3-12}
& {\textbf{0.050}} & {\cellcolor[gray]{0.992} 1.66\%}& {\cellcolor[gray]{0.98} 2.03\%}& {\cellcolor[gray]{0.968} 2.86\%}& {\cellcolor[gray]{0.744} 27.96\%}& {\cellcolor[gray]{0.888} 10.67\%}& {\cellcolor[gray]{0.612} 82.06\%}& {\cellcolor[gray]{0.636} 62.81\%}& {\cellcolor[gray]{0.724} 31.56\%}& {\cellcolor[gray]{0.68} 39.05\%}& {\cellcolor[gray]{0.836} 18.03\%}
\\
\cline{3-12}
& {\textbf{0.100}} & {\cellcolor[gray]{0.944} 4.79\%}& {\cellcolor[gray]{0.964} 3.81\%}& {\cellcolor[gray]{0.88} 11.48\%}& {\cellcolor[gray]{0.732} 29.04\%}& {\cellcolor[gray]{0.656} 47.70\%}& {\cellcolor[gray]{0.616} 78.46\%}& {\cellcolor[gray]{0.64} 60.47\%}& {\cellcolor[gray]{0.748} 27.93\%}& {\cellcolor[gray]{0.688} 38.74\%}& {\cellcolor[gray]{0.828} 18.13\%}
\\
\cline{3-12}
&{\textbf{0.200}} & {\cellcolor[gray]{0.884} 11.32\%}& {\cellcolor[gray]{0.892} 9.61\%}& {\cellcolor[gray]{0.9} 8.40\%}& {\cellcolor[gray]{0.784} 24.36\%}& {\cellcolor[gray]{0.772} 25.68\%}& {\cellcolor[gray]{0.62} 73.75\%}& {\cellcolor[gray]{0.648} 55.80\%}& {\cellcolor[gray]{0.788} 24.00\%}& {\cellcolor[gray]{0.692} 38.23\%}& {\cellcolor[gray]{0.82} 19.51\%}
\\
\cline{3-12}
& {\textbf{0.500}} & {\cellcolor[gray]{0.804} 21.38\%}& {\cellcolor[gray]{0.84} 17.35\%}& {\cellcolor[gray]{0.768} 25.81\%}& {\cellcolor[gray]{0.668} 41.55\%}& {\cellcolor[gray]{0.66} 46.75\%}& {\cellcolor[gray]{0.644} 58.92\%}& {\cellcolor[gray]{0.716} 34.29\%}& {\cellcolor[gray]{0.852} 15.86\%}& {\cellcolor[gray]{0.704} 34.77\%}& {\cellcolor[gray]{0.824} 18.67\%}
\\
\cline{3-12}
& {\textbf{1.000}} & {\cellcolor[gray]{0.696} 37.40\%}& {\cellcolor[gray]{0.684} 38.99\%}& {\cellcolor[gray]{0.664} 43.91\%}& {\cellcolor[gray]{0.652} 55.20\%}& {\cellcolor[gray]{0.756} 27.47\%}& {\cellcolor[gray]{0.728} 31.26\%}& {\cellcolor[gray]{0.808} 20.88\%}& {\cellcolor[gray]{0.872} 12.62\%}& {\cellcolor[gray]{0.752} 27.77\%}& {\cellcolor[gray]{0.844} 17.07\%}
\\
\cline{3-12}
&{\textbf{2.000}} & {\cellcolor[gray]{0.74} 28.11\%}& {\cellcolor[gray]{0.736} 28.60\%}& {\cellcolor[gray]{0.76} 26.62\%}& {\cellcolor[gray]{0.792} 23.70\%}& {\cellcolor[gray]{0.776} 24.67\%}& {\cellcolor[gray]{0.848} 16.97\%}& {\cellcolor[gray]{0.856} 15.78\%}& {\cellcolor[gray]{0.868} 12.87\%}& {\cellcolor[gray]{0.812} 20.87\%}& {\cellcolor[gray]{0.86} 15.76\%}
\\
\cline{3-12}
& {\textbf{5.000}} & {\cellcolor[gray]{0.908} 7.85\%}& {\cellcolor[gray]{0.912} 7.83\%}& {\cellcolor[gray]{0.916} 7.78\%}& {\cellcolor[gray]{0.904} 7.90\%}& {\cellcolor[gray]{0.92} 7.74\%}& {\cellcolor[gray]{0.924} 7.47\%}& {\cellcolor[gray]{0.928} 7.35\%}& {\cellcolor[gray]{0.932} 7.23\%}& {\cellcolor[gray]{0.936} 7.19\%}& {\cellcolor[gray]{0.94} 7.14\%}
\\
\cline{2-12}
\end{tabular}}
\end{center}
~\\[-6mm]
\caption{\textit{\color{black} Suboptimality of the best performing benchmark mechanisms on synthetic data independent instances with $\Delta f = 1$, $\ell_2$-loss and various combinations of $\varepsilon$ and $\delta$.}}
\label{tab:tradefoff_main_ell2}
\end{table}

\end{document}